\documentclass[11pt,english,american]{article}
\usepackage[T1]{fontenc}
\usepackage[utf8]{inputenc}
\usepackage{amsmath}
\usepackage{amsthm}
\usepackage{amssymb}
\PassOptionsToPackage{normalem}{ulem}
\usepackage{ulem}
\usepackage{svg}
\usepackage{enumitem}
\usepackage{hyperref}
\usepackage{subcaption, cleveref, enumitem}
\usepackage{times}
\usepackage{verbatim}
\usepackage{changepage}
\usepackage{thmtools}

\makeatletter
\theoremstyle{plain}
\newtheorem{theorem}{\protect\theoremname}
\theoremstyle{plain}

\theoremstyle{definition}
\newtheorem{definition}[theorem]{\protect\definitionname}
\theoremstyle{plain}
\newtheorem{conjecture}{Conjecture}

\newif\ifdraft

\ifdraft
\usepackage[colorinlistoftodos,prependcaption,textsize=tiny]{todonotes}
\usepackage{xargs}
\newcommand{\awr}[1]{\todo[color=blue!100!black!50]{AW: #1}}
\newcommand{\akr}[1]{\todo[color=red!100!black!50]{AK: #1}}
\newcommand{\dkr}[1]{\todo[color=green,size=\tiny]{DK: #1}}
\newcommand{\aw}[1]{\textcolor{blue}{#1}}
\newcommand{\ak}[1]{\textcolor{red}{#1}}
\newcommand{\dk}[1]{\textcolor{green!60!black}{#1}}

\else
    \newcommand{\awr}[1]{}
    \newcommand{\akr}[1]{}
    \newcommand{\dkr}[1]{}
    \newcommand{\aw}[1]{\textcolor{black}{#1}}
    \newcommand{\ak}[1]{\textcolor{black}{#1}}
    \newcommand{\dk}[1]{\textcolor{black}{#1}}
\fi

\usepackage{fullpage}

\newcommand{\lc}{$L\& C^*$}

\makeatother

\usepackage{babel}
\addto\captionsamerican{\renewcommand{\definitionname}{Definition}}
\addto\captionsamerican{\renewcommand{\lemmaname}{Lemma}}
\addto\captionsamerican{\renewcommand{\theoremname}{Theorem}}
\addto\captionsenglish{\renewcommand{\definitionname}{Definition}}
\addto\captionsenglish{\renewcommand{\lemmaname}{Lemma}}
\addto\captionsenglish{\renewcommand{\theoremname}{Theorem}}
\addto\captionsenglish{}

\providecommand{\definitionname}{Definition}
\providecommand{\lemmaname}{Lemma}
\providecommand{\theoremname}{Theorem}

\begin{document}
\global\long\def\OPT{\mathrm{OPT}}%
\global\long\def\B{\mathcal{B}}%
\global\long\def\epssmall{\epsilon_{\mathrm{small}}}%
\global\long\def\epslarge{\epsilon_{\mathrm{large}}}%
\global\long\def\epsthin{\epsilon_{\mathrm{thin}}}%
\global\long\def\R{\mathbb{R}}%
\global\long\def\C{\mathcal{C}}%
\global\long\def\eps{\epsilon}%
\global\long\def\N{\mathbb{N}}%
\global\long\def\epscsmall{\epsilon_{\mathrm{thin}}^{\mathrm{container}}}%
\global\long\def\epsclarge{\epsilon_{\mathrm{thick}}^{\mathrm{container}}}%
\global\long\def\Cl{\mathcal{C}_{\mathrm{thick}}}%
\global\long\def\SM{\mathcal{S}}%
\global\long\def\stripthickness{\eps \cdot \epsclarge}%
\global\long\def\RR{\mathcal{R}}%
\global\long\def\Ls{\mathsf{L}}%

\newcommand{\lcor}{\ensuremath{\mathsf{L}}}
\newcommand{\AM}{\mathcal{A}}
\newcommand{\TT}{\mathcal{T}}
\newcommand{\optr}{$\OPT_{\text{r-}L\&C^*}$}
\newcommand{\CC}{\mathcal{C}}
\newcommand{\DP}{\mathsf{DP}}
\newcommand{\Ihor}{I_H}
\newcommand{\Iver}{I_V}
\newcommand{\Irem}{I_R}
\newcommand{\II}{\mathcal{I}}
\newcommand{\optcont}{\mathrm{OPT}_{\text{cont}}}

\newcommand{\Rs}{\mathsf{R}}
\newtheorem{lemma}[theorem]{\protect\lemmaname}
\newtheorem{corollary}[theorem]{\protect\lemmaname}
\theoremstyle{definition}
\newcommand{\BB}{\mathcal{R}}
\newcommand{\ILTlong}{I_{LT_{\text{long}}}}
\newcommand{\ILTshort}{I_{LT_{\text{short}}}}
\newcommand{\epstiny}{\epsilon_{\mathrm{thin}}}
\newcommand{\optthin}{I'_T}
\newcommand{\hugeitem}{i^*}
\newcommand{\optshr}{\OPT_{\mathrm{shrink}}}
\newcommand{\lft}{\text{left}}
\newcommand{\rht}{\text{right}}
\newcommand{\bottom}{\text{bottom}}

\global\long\def\epsskew{\epsilon_{\mathrm{skew}}}%

\title{Approximation Schemes and Structural Barriers for the Two-Dimensional Knapsack Problem with Rotations}
\author{Debajyoti Kar\thanks{Department of Computer Science and Automation, Indian Institute of Science, India. Supported by Google PhD Fellowship and Microsoft Research India PhD Award. Email: \texttt{debajyotikar@iisc.ac.in}}
\and Arindam Khan\thanks{Department of Computer Science and Automation, Indian Institute of Science, India. Research partly supported by Google India Research Award, SERB Core Research Grant
(CRG/2022/001176) on “Optimization under Intractability and Uncertainty”, Ittiam Systems CSR grant, and the Walmart
Center for Tech Excellence at IISc (CSR Grant WMGT-23-0001). Email: \texttt{arindamkhan@iisc.ac.in}}
\and Andreas Wiese \thanks{Technical University of Munich, Munich, Germany. Email: \texttt{andreas.wiese@tum.de}}}
\date{}
\maketitle

\abstract{We study the two-dimensional (geometric) knapsack problem
with rotations (2DKR), in which we are given a square knapsack and
a set of rectangles with associated profits. The objective is to find
a maximum profit subset of rectangles that can be packed without overlap
in an axis-aligned manner, possibly by rotating some rectangles by $90^{\circ}$.
The best-known polynomial time algorithm for the problem has an approximation
ratio of $3/2+\eps$ for any constant $\eps>0$, with an improvement to $4/3+\eps$ in the cardinality
case, due to G{á}lvez, Grandoni, Heydrich, Ingala, Khan, and Wiese
(FOCS 2017, TALG 2021). Obtaining a PTAS for the problem, even in
the cardinality case, has remained a major open question in the setting of multidimensional
packing problems, as mentioned in the survey by Christensen, Khan, Tetali,
and Pokutta (Computer Science Review, 2017).

In this paper, we present a PTAS for the cardinality case of 2DKR.
In contrast to the setting without rotations, we show that there are
$(1+\eps)$-approximate solutions in which all items are packed greedily
inside a constant number of rectangular {\em containers}.
Our result is based on a new resource
contraction lemma, which might be of independent interest. 
With our techniques, we also obtain a $(1+\epsilon)$-approximation algorithm in the weighted
case when all given items are \emph{skewed}, i.e., each of them has sufficiently small height or sufficiently small width.
In contrast, for the general weighted case, we prove that this simple type of packing is not sufficient to obtain a better approximation ratio than $1.5$.
However, we break this structural barrier and design a $(1.497+\eps)$-approximation
algorithm for 2DKR in the weighted case. Our arguments also improve the best-known
approximation ratio for the (weighted) case {\em without rotations} to $13/7+\eps \approx 1.857+\epsilon$.

Finally, we establish a lower bound of $n^{\Omega(1/\eps)}$ on the running time of any $(1+\eps)$-approximation algorithm for our problem with or without rotations -- even in the cardinality setting, assuming the $k$-\textsc{Sum} Conjecture.
In particular, this shows that an approximation scheme for the case of rectangles of two-dimensional geometric knapsack requires much more running time than for the case of squares.

\thispagestyle{empty}
\newpage

\setcounter{page}{1}

\section{Introduction}

The (geometric) 2D-Knapsack (2DK) problem is a fundamental problem
in combinatorial optimization, computational geometry, and approximation
algorithms. In 2DK, we are given a square knapsack $K:=[0,N]\times[0,N]$
for some given value $N\in\N$ and a set of $n$ (rectangular) items $I$. Each
item $i\in I$ corresponds to an axis-aligned open rectangle with
a width $w(i)\in\N$, a height $h(i)\in\N$, and an associated profit
$p(i)\in\N$. The objective is to select a subset $S\subseteq I$
of the given items and place them non-overlappingly inside the knapsack.
Formally, we need to define a bottom-left corner $(\lft(i),\bottom(i))\in K$ for each item $i\in S$ such that the
resulting rectangle $(\lft(i),\lft(i)+w(i))\times(\bottom(i),\bottom(i)+h(i))$ is contained in $K$ and it does not
intersect with any rectangle corresponding to any other item in $S$.
The objective is to maximize the total profit $p(S):=\sum_{i\in S}p(i)$.
Starting from the classical works of Gilmore and Gomory in the 1960s
\cite{gilmore1965multistage}, 2DK and its variants have found extensive
applications in practice, particularly in domains such as logistics,
cutting-stock, and scheduling\aw{, see} \cite{ali2022line}
and references therein.

In many practical applications \cite{coffman1980performance, bengtsson1982packing, vasko1989practical,  dell2002lower} (e.g.,  cutting stock, container loading, VLSI/PCB layout), it is important that the items are packed (or cut) parallel to the two coordinate axes. However,
it is often beneficial to rotate some of the items by $90^\circ$ (also called orthogonal rotations), as we might be able to select more items in this way.
For example, commercial software tools such as CutList Optimizer \cite{CutListOptimizer2025} and optiCutter \cite{optiCutter2025} provide an option to allow orthogonal rotations.
Also from a theoretical perspective, orthogonal rotations \aw{in geometric packing problems have} been extensively studied; see, e.g.,
\cite{fujita2002two, epstein2003two, miyazawa2004packing, JansenS05}.
In fact, this setting was already suggested by 
 Coffman, Garey, Johnson, and Tarjan \cite{coffman1980performance} \aw{in 1980}.

In this paper, we study the rotational variant of 2DK (i.e., we allow each item to be rotated by $90^{\circ}$ prior to placing it inside the knapsack)
which we call the  {\em two-dimensional geometric knapsack with rotations
(2DKR)} problem.
Note that there are $2^n$ options for rotating the $n$ given items, which potentially allows much more possible solutions.
{As the setting without rotations, it is strongly NP-hard \cite{leung1990packing}.
Despite a lot of research \cite{Jansen2004, adamaszek2015knapsack, galvez2021approximating, Galvez00RW21, DBLP:conf/compgeom/BuchemDW24, GKW19}, there is still a significant
gap in our understanding of this problem. On the one hand, there is
a $(1+\eps)$-approximation \ak{algorithm} \aw{with} {\em quasi-polynomial \aw{running} time} when the input
numbers are quasi-polynomially bounded integers~\cite{adamaszek2015knapsack}, 
which suggests that the problem might admit a \ak{polynomial time approximation scheme~(PTAS)}. On the other hand,
the best-known polynomial time approximation ratio is only $3/2+\eps$,
and there is an improvement to $4/3+\eps$ in the cardinality case,
i.e., when $p(i)=1$ for each item $i\in I$~\cite{galvez2021approximating}.
When we allow pseudo-polynomial running time, the best-known factors
are a bit better, i.e., $4/3+\eps$ and $5/4+\eps$,
respectively~\cite{Galvez00RW21}.

\aw{However,} it is open whether a PTAS exists for 2DKR, even in
the cardinality case! The survey by Christensen, Khan, Pokutta, and
Tetali \cite{CKPT17} lists this as one out of ten major open problems.
Also, \cite{galvez2021approximating} explicitly highlighted
that “the main problem that remains open is to find a PTAS, if any,
for 2DK and 2DKR. This would be interesting even in the cardinality
case.'' However, despite progress on several special cases and variations
\cite{Galvez00RW21, GKW19, abs-2102-05854, khanMSW21, MariPP23, DBLP:conf/compgeom/BuchemDW24, KhanLMSW25},
after the results in~\cite{GGHIKW17} from 2017, there has been no progress on polynomial time approximation algorithms
for 2DK or 2DKR, even in the cardinality case.

\subsection{Our contribution}

We make the first progress in nearly ten years on polynomial time
algorithms for the two-dimensional geometric knapsack problem \aw{for rectangles}. Our
first result is that we resolve the long-standing open problem of
finding a PTAS for the cardinality case of 2DKR.
\begin{theorem}
\label{theorem:PTAS-cardinality-case} There is a PTAS for the cardinality
case of 2DKR.
\end{theorem}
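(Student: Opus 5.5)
The plan is to prove a structural theorem and then run a container-guessing algorithm. Specifically, we will show that for any $\eps>0$ there is a solution of cardinality at least $(1-O(\eps))|\OPT|$, using rotations where needed, whose items are packed greedily into $O_\eps(1)$ axis-aligned rectangular containers. Each container is of one of three types: a \emph{large} container holding a single item; a \emph{horizontal} (resp.\ \emph{vertical}) container in which items are stacked on top of each other (resp.\ side by side); or an \emph{area} container in which only items that are tiny in both dimensions relative to the container are packed, e.g.\ by NFDH. Given this structure, the algorithm is standard. We guess the $O_\eps(1)$ containers, whose sizes can be rounded to polynomially many candidate values. We then assign items, choosing a rotation for each, via a dynamic program or LP-based assignment of the kind used by G\'alvez et al.\ for container packings, losing only a further $(1+\eps)$ factor. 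In the cardinality case we may moreover discard all but the $O(1/\eps)$ largest-count classes and prefer smaller items, which simplifies the assignment step.

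For the structural theorem we start from $\OPT$ and apply the usual classification. Choose $\delta$ by a shifting argument so that items with a side in $[\delta^{c}N,\delta N)$ carry only an $\eps$-fraction of the items. Items are then large, small (both sides below $\delta^{c}N$), horizontal, or vertical. After rotating, every skewed item can be assumed horizontal, i.e.\ wide and short. There are only $O_\delta(1)$ large items, and each receives its own container. Small items are handled by an area argument: in the cardinality case we may replace any set of items by smaller ones, so small items can be reinserted in free area-containers after losing an $\eps$ fraction.

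The key step, and the main obstacle, is packing the skewed items. Without rotations, the known corridor decomposition yields L-shaped or ring-shaped regions that cannot be turned into containers without losing a constant factor. The intended fix is a \emph{resource contraction} lemma: from any corridor-based packing of the skewed items, we can remove an $\eps$-fraction of items, chosen as the most ``area-expensive'' ones, which in the cardinality case costs only $\eps|\OPT|$. Rotating the remaining items then lets them fit inside a knapsack shrunk by a factor $1-\eps$ in one dimension. With this slack, the vertical arm of an L can be rotated and merged with horizontal stacks. Each corridor is cut into $O_\eps(1)$ box pieces, and the freed strip of height $\eps N$ accommodates the items intersected by the cuts. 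To prove the lemma, I would first try a fractional argument. Sort the skewed items in a corridor by width and group them linearly, as in Kenyon--R\'emila. Removing one group of $\eps$ fraction of the items frees height enough to absorb the rounding. We then argue that the leftover configuration is a fractional packing of stacks that can be realized integrally in $O_\eps(1)$ containers, with a final $\eps$ loss to round.

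Finally, we combine the large, skewed and small items into a single container packing of total loss $O(\eps)|\OPT|$ and rescale $\eps$. The expected difficulty is entirely in the contraction lemma. In particular, the items removed to create slack must be few in number even though they may be spread across many corridor pieces, and here the cardinality (uniform profit) assumption is essential.
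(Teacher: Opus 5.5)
Your top-level reduction is sound: a contraction lemma of this kind, combined with Lemma~\ref{lem:resource-augmentation} and the GAP-based algorithm of Lemma~\ref{lem:computer-container-packing}, would give the PTAS. The gap is that the contraction lemma is never proved, and the route you sketch does not address the real obstacle.

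When a corridor is cut into $O_\eps(1)$ boxes, you lose more than the $O(1)$ items crossed by the cuts, which are free to discard in the cardinality case. At every bend, the items in a thin staircase region must also be removed. There can be arbitrarily many of them; only their total thickness (about $\epsthin N$) is controlled. Kenyon--R\'emila linear grouping rounds item sizes once slack exists, but it does nothing about horizontal and vertical items interlocking at a bend. Without rotations, container packings cannot beat factor $2$ even in the cardinality case. So rotation must be used to move such leftovers into space outside their corridors. Your plan presupposes that space (``the freed strip of height $\eps N$ accommodates\dots''), although producing it is exactly what the lemma must do, so the argument is circular. Deleting the $\eps|\OPT|$ largest-area items frees scattered area, not a strip. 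For example, a vertical stack spanning the full height is not shortened by deleting some of its items. The paper only obtains contraction by a tiny constant $\mu_\eps$, so your factor $1-\eps$ is also unsupported.

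The missing ingredients, in the paper's order:
\begin{enumerate}
\item Turn corridors into boxes. Put the staircase items into a set $I'_T$ which, rotated to horizontal, fits into an $N\times\epsthin N$ strip.
\item Split the boxes into thick and thin by a shifting argument.
\item Shrink only the thick boxes in their thick dimension, losing $O(1)$ items plus an $\eps$-fraction per box. Push everything down and then left.
\item Use the chain argument (Lemma~\ref{lem:chain-of-boxes}) to show that a top or right boundary strip of thickness $\eps\,\epsclarge N$ becomes empty.
\item Repack the thin boxes and $I'_T$ into that strip.
\end{enumerate}
This requires both $|\mathcal{C}|\,\epscsmall\le\eps\,\epsclarge/3$ and $\epsthin\le\eps\,\epsclarge/6$. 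These two conditions are circular: $|\mathcal{C}|$ grows as $\epsthin$ shrinks, and the shifting pushes $\epsclarge$ down as $|\mathcal{C}|$ grows. The paper breaks the circle by processing corridors with geometrically growing strips $\epsthin(1+\eps)^j h$. This gives $|\mathcal{C}|\le(\log\frac{1}{\epsthin})^{O_{\eps,\epslarge}(1)}$ rather than $\mathrm{poly}(1/\epsthin)$ (Lemma~\ref{lem:transformation-container-packing-new}).

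Two smaller points:
\begin{itemize}
\item You cannot assume all skewed items of $\OPT$ are horizontal. Their orientation in $\OPT$ is fixed, and rotation only helps when repacking into new space.
\item Small items need an explicit source of free area, e.g.\ shrinking containers and bounding the waste inside them. The gap between the small and large thresholds must also be chosen as a function of the final number of containers, not as a fixed power $\delta^c$.
\end{itemize}
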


\begin{figure}
  \centering

  \begin{subfigure}[b]{0.45\textwidth}
  \centering
  \hspace*{-0.7cm}
    \includegraphics[width=1.2\linewidth]{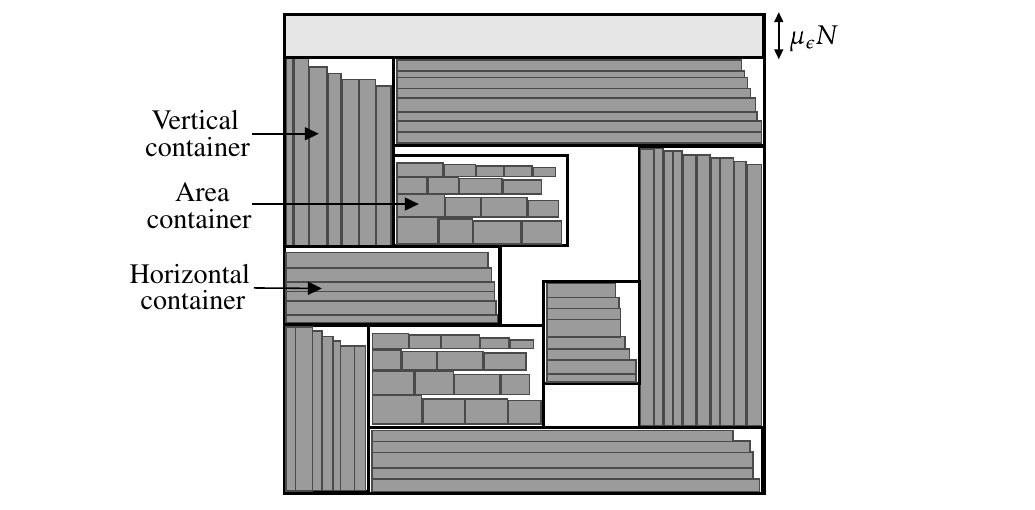}
    \label{fig:res-contraction}
  \end{subfigure}
  \hfill
  \begin{subfigure}[b]{0.45\textwidth}
  \centering
  \hspace*{-0.9cm}
    \includegraphics[width=1.2\linewidth]{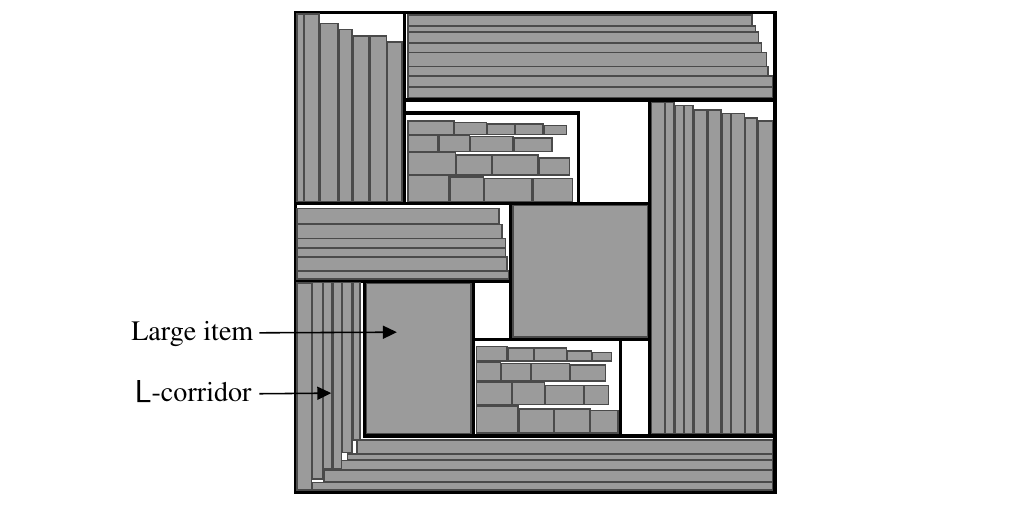}
    \label{fig:an-lc-packing}
  \end{subfigure}

  \caption{Left: A container packing with an empty strip of height $\mu_\eps N$. Right: a packing
  with rectangular containers and one additional container with the shape of an $\Ls$.}
  \label{fig:structured-packings}
\end{figure}

Our algorithm is based on the structural result that there always
exists a $(1+\eps)$-approximate packing in which the items are placed
via simple greedy algorithms inside $O_{\eps}(1)$ rectangular boxes\footnote{The notation $O_\eps(f (n))$ denotes that the implicit constant hidden in big-$O$ notation may depend on $\eps$.},
see Figure~\ref{fig:structured-packings}.
This is in stark contrast to the setting without rotations
for which it is known that such packings cannot give a better approximation
ratio than 2, already in the cardinality case~\cite{galvez2021approximating}.
We remark that the mentioned pseudo-polynomial time $(5/4+\eps)$-approximation algorithm for our setting \cite{Galvez00RW21} uses much more complicated
types of containers, e.g., L-shapes and spirals.
Our result shows
that this is not necessary, as our containers are just rectangular
boxes, and we even obtain a better approximation ratio and polynomial
running time. The backbone of our algorithm is a novel \emph{resource contraction
lemma} that might be of independent interest. We show that for any
$\eps>0$, we can pack at least $(1-\eps)|\OPT|-O_{\eps}(1)$
items inside a slightly smaller knapsack of height $(1-\mu_{\eps})N$
and width $N$ for some $\mu_\eps>0$, where $\OPT$ denotes an optimal solution. This is sufficient for a $(1+O(\eps))$-approximation
since we can compute an optimal solution easily in polynomial time by enumeration if $|\OPT|$ is $O_{\eps}(1)$.
Our packing is based on $O_{\eps}(1)$ boxes that are greedily packed; hence, we can compute a near-optimal
packing of this type easily using standard techniques via a reduction to the generalized assignment problem (GAP) \cite{galvez2021approximating}.

To prove our resource contraction lemma, we start with a $(1+\eps)$-approximate
solution in which the items are placed inside $O_{\eps}(1)$ thin
corridors (see \Cref{fig:corridor-decomposition}) whose existence was proven in~\cite{adamaszek2015knapsack}.
In~\cite{galvez2021approximating}, a routine was employed that intuitively
sacrifices an $\eps$-fraction of these items and places almost all
remaining items inside a constant number of boxes; the remaining items
have very small area in total and they \emph{would} fit inside a thin
strip of width $\epsthin N$ for some parameter $\epsthin>0$ that
can be chosen arbitrarily small. However, these remaining items are
\emph{not} placed inside the knapsack (yet) since we do not necessarily
have a free strip of the required size. In particular, it is not obvious how 
to generate \aw{the corresponding} empty space without increasing the approximation ratio
significantly.

The number of \aw{the resulting} boxes naturally depends on $\epsthin$ and in \cite{galvez2021approximating}
this number depended \emph{polynomially }on $1/\epsthin$. We present
a crucial improvement: we reduce the number of boxes
such that their number depends only \emph{polylogarithmically }on
$1/\epsthin$. \ak{This improvement is pivotal for 
\aw{our second step}.
}
Among the resulting boxes, we temporarily remove those that are relatively
thin. We argue that we can shrink the relatively thick boxes by dropping
a constant number of items per box (which we can easily afford in
the cardinality case) and losing only a factor of $1+\eps$ on the
number of the remaining items. We show that if we push the resulting
boxes maximally to the bottom and to the left, then we free up a thin
strip of width $\Omega(\epsthin N)$ either on the top or on the right of
the knapsack. In this strip, we pack the mentioned unplaced items
(that fit inside the strip of width $\epsthin N$) as well as the
items from the temporarily discarded thin boxes.  In particular,
our improved dependence of the number of (thin) boxes on $\epsthin$
allows us to find a choice for $\epsthin$ for which the thin boxes
fit in the free space \aw{since their total number is sufficiently small}.

A natural next step is the weighted setting of 2DKR. An important special case arises when all input items are \emph{skewed}, i.e., no input item is relatively large in both dimensions.
For related problems like bin packing \cite{KhanS23} or strip packing \cite{GalvezGAJKR23}, tight approximation algorithms are known for the case of skewed items, and it was noted in  \cite{KhanS23} that the ``inherent difficulty of these problems [2DK and another related problem called Maximum Independent Set of Rectangles] lies in instances containing skewed items''.
With our techniques, we can resolve the case of skewed items of 2DKR as well, i.e., we obtain
a polynomial time $(1+\eps)$-approximation algorithm if each item is sufficiently thin in at least one dimension, depending on~$\eps$. This result and our PTAS for the cardinality case can be found in \Cref{sec:ptas-cardinality}.

\begin{theorem}
\label{theorem:skewed-packing}
For each constant $\eps>0$ there is a value $\epsskew>0$ such that
there is a polynomial time $(1+\eps)$-approximation algorithm for
2DKR if each input item $i\in I$ satisfies $h(i)\le\epsskew N$ or
$w(i)\le\epsskew N$.
\end{theorem}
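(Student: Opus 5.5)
The plan is to prove a resource contraction statement for skewed instances and then reduce to the container-based algorithm via GAP, exactly as outlined for the cardinality case. Concretely, we show that for every $\eps>0$ there are $\mu_\eps>0$ and $\epsskew>0$ such that, if all items are skewed, some solution of profit at least $(1-O(\eps))p(\OPT)$ fits in the knapsack $[0,N]\times[0,(1-\mu_\eps)N]$. Moreover, this solution is packed greedily into $O_\eps(1)$ rectangular containers. Once this holds, the algorithm is standard. We guess the $O_\eps(1)$ container sizes from a polynomial-size candidate set and guess how the containers are placed. The items are then assigned to containers, each with a rotation choice, by the GAP-based routine of \cite{galvez2021approximating}, at a further $(1+\eps)$ loss.

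The skewness assumption replaces the step where, in the cardinality case, we drop $O_\eps(1)$ items per box. Items whose width and height are both large do not occur. We start from the corridor decomposition of \cite{adamaszek2015knapsack}: a $(1+\eps)$-approximate packing into $O_\eps(1)$ thin corridors, which for skewed items loses only an $\eps$-fraction of the profit. We then apply the improved corridor-to-box routine, which yields a number of boxes polylogarithmic in $1/\epsthin$. This leaves a set of unplaced items of small total area that would fit into a strip of width $\epsthin N$.

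Each thick box contains only items that are thin in the box's long direction, say horizontal items of height at most $\epsskew N$ stacked in a box of height $\Omega(\epsclarge N)$. Such a box can be shrunk by a factor $(1-\eps)$ in height. We discard a cheapest set of items occupying a height of about $\eps\cdot$ box height plus one item height. Since the items are stacked, a standard averaging argument shows this costs at most an $O(\eps)$-fraction of the box's profit, provided $\epsskew\ll\eps\cdot\epsclarge$. Weights are harmless here because we remove a profit-proportional slice rather than a constant number of items. Boxes packed with vertical items are shrunk in width analogously.

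Next we push all shrunk thick boxes to the bottom and to the left. This frees a strip of width $\Omega(\stripthickness\, N)$ either at the top or at the right; we rotate the instance so that it lies at the top. In this strip we place the leftover items and the contents of the thin boxes.

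The main obstacle is this last step: choosing $\epsthin$ and $\epscsmall$, by a shifting or pigeonhole argument over a sequence of geometrically decreasing candidate values, so that three conditions hold simultaneously.
\begin{itemize}
\item The freed strip is thick compared to $\epsthin N$.
\item The total size of the thin boxes, whose number is only polylogarithmic in $1/\epsthin$, fits within the strip.
\item The items they contain are compatible with the strip's orientation.
\end{itemize}
The third condition is where rotations help. A vertical thin box's items are all narrow, so after rotating each item by $90^\circ$ they become flat and can be stacked in a horizontal strip. If thickness or area bounds cannot be matched for some candidate value, we try the next value in the sequence. A bad value can then be charged to an $\eps$-fraction of the profit, which finishes the existence proof. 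Finally, we take $\mu_\eps$ to be the thickness of the strip that remains unused.
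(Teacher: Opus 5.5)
\textbf{There is a genuine gap: the constantly many discarded items.} Your outline follows the paper's architecture: the improved corridor-to-box routine, thick/thin classification by shifting, shrinking and pushing the thick containers, and repacking $I'_T$ and the thin containers into the freed strip. What it never accounts for is the constantly many items this pipeline discards outright. In the weighted setting these can carry almost all of $p(\OPT)$.

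Lemma~\ref{lem:corr-decomposition-cardinality} does \emph{not} give a packing into corridors that loses only an $\eps$-fraction. It leaves a set $I^{\text{cross}}_{\text{corr}}$ of $O_{\eps,\epslarge}(1)$ items that lie in no corridor. Lemma~\ref{lem:transformation-container-packing-new} then throws away a further set $I''$ of $O_{\eps,\epslarge,\epsthin}(1)$ items: those hit by the geometric lines, by the paths $p_j$, and by the equidistant lines of Lemma~\ref{lem:box-to-container}. Its guarantee is only $p(I')+p(I'')\ge(1-\eps)p(\OPT_{\mathrm{hor}}\cup\OPT_{\mathrm{ver}})$.

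Skewness does not make these items cheap. Your profit-proportional slicing is valid, but only inside containers, where items are stacked. In the boxes cut out of a corridor, items lie side by side and can be as tall as the region itself. For example, a subcorridor of height about $\epssmall N$ holding one very profitable item is cut by the first line at height $\epsthin h$, so no averaging argument saves it.

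\textbf{Small and intermediate items are also missing.} Items with both sides at most $\epsskew N$ violate the hypothesis of the corridor lemma and may carry most of the profit. They need the area-freeing argument of Lemmas~\ref{lem:final-skwed-packing} and~\ref{lem:pack-small-items-in-free-space}. The intermediate items must be removed via Lemma~\ref{lem:delete-intermediate-items}.

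\textbf{The missing idea is to use the part of the freed strip you declare unused.} After $I'_T$ and the thin containers are placed, a strip of height $2\epsthin N$ remains empty. All discarded items $\OPT''_{\text{skew}}$ are skewed. Rotate each so its short side (at most $\epssmall N$) is vertical and stack them in a single horizontal container in that strip.

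This requires $\epssmall\cdot|\OPT''_{\text{skew}}|\le\epsthin$. It is achievable because $|\OPT''_{\text{skew}}|$ depends only on $\eps,\epslarge,\epsthin$, not on $\epssmall$. The paper therefore fixes $\epsthin$ from $\eps$ and $\epslarge$, and chooses the function $f$ in Lemma~\ref{lem:delete-intermediate-items} so that $\epssmall$ is small compared with $\epsthin/|\OPT''_{\text{skew}}|$. It then sets $\epsskew:=\lambda_\eps\le\epssmall$.

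Without this ordering of parameters and this reserved strip, your argument only yields profit $(1-O(\eps))p(\OPT)-p(\OPT''_{\text{skew}})$. That is not a $(1+\eps)$-approximation.
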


It seems natural to extend our approach to the general weighted case
of 2DKR and, for example, try to improve the mentioned polynomial
time $(3/2+\eps)$-approximation algorithm~\cite{galvez2021approximating}
for this setting. Perhaps surprisingly, we prove that this is impossible even if there is only a single large item and all other items are skewed!
We show that for the weighted case one needs $\Omega(\delta\log N)$ rectangular containers of the above type to obtain a $(3/2-\delta)$-approximation
for any $\delta>0$. However, to obtain polynomial running time, we
can afford only a constant number of such containers.
To prove this lower bound, we define a corresponding family of instances
that all have one huge item whose width equals $N$ (i.e., the width
of the knapsack) and whose height \emph{almost }equals $N$.

However, we show that this is (essentially) the only setting in which constantly
many boxes are not sufficient to obtain a better factor than $3/2$.
Therefore, \aw{in order to break the barrier of $3/2$,}
we define a different type of packing in this problematic case. Our packing
uses $O_{\eps}(1)$ boxes and one special container that has the shape
of an $\Ls$, similar to in~\cite{galvez2021approximating}. However,
if rotations are allowed then it is more difficult to pack items inside
this $\Ls$-container
since each item might be placed in its vertical arm
or in its horizontal arm. To address this, we argue that we need the
$\Ls$-container only in settings where its vertical arm is much shorter
than its horizontal arm and where each item in these arms is relatively
long compared to the respective arm. Another difficulty is that inside
the vertical arm, some items may need to be rotated. In particular,
the height of the $\Ls$ might be so small that items inside are almost squares!
We address this by sacrificing a factor of 2 on those items. With a careful analysis, involving the {\em corridor decomposition technique}  and a {\em resource contraction lemma for the weighted case} \cite{galvez2021approximating},
we show that overall we obtain an approximation
ratio that is strictly better than $3/2$. Our results for the (general) weighted case of 2DKR can be found in \Cref{sec:weighted-case}}.
\begin{theorem}
\label{theorem:rotation-weighted} There exists a polynomial time $(190/127+\epsilon)<(1.497+\eps)$-approximation
algorithm for 2DKR.
\end{theorem}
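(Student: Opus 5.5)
We prove the theorem by combining three candidate packings, taking the best one, and bounding the loss through a weighted case analysis against a fixed optimal solution $\OPT$. The first step is to apply the corridor decomposition of~\cite{adamaszek2015knapsack} to $\OPT$. Together with the box-partitioning routine from the proof of our resource contraction lemma, this puts all skewed items, up to an $\eps$-fraction of profit, into $O_\eps(1)$ boxes with $\mathrm{polylog}(1/\epsthin)$ dependence. We classify items as \emph{large} (both dimensions $>\epsskew N$), \emph{skewed}, and \emph{small}. Only $O_\eps(1)$ large items fit, so they can be guessed.

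\textbf{Case 1: no huge item.} Suppose no item in $\OPT$ has height and width both at least $(1-\delta)N$ for a suitable constant $\delta$. We then show, as in the proof of Theorem~\ref{theorem:skewed-packing}, that we can free a strip of width $\Omega(\epsthin N)$. This is done by shifting boxes and large items and dropping the cheapest of a constant number of ``slices'', which leaves essentially $(1-O(\eps))p(\OPT)$ in a container packing. The weighted difficulty is that we cannot drop $O(1)$ arbitrary items per box. We therefore use the weighted resource contraction lemma of~\cite{galvez2021approximating}: it loses at most a factor $3/2$ in the worst case, and a strictly better factor unless a single item dominates.

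\textbf{Case 2: one huge item $\hugeitem$ exists.} Assume w.l.o.g.\ that $\hugeitem$ is placed at the bottom-left (rotations allow this normalisation). The remaining items then lie in an $\Ls$-shaped region whose vertical arm is much thinner than its horizontal arm, or in a thin strip. We build three candidate solutions:
\begin{enumerate}[label=(\roman*)]
\item \emph{$\hugeitem$ plus an $\Ls$-container.} The arms contain items long relative to their arm. Near-square items in the short arm are handled by sacrificing a factor $2$ on them, i.e.\ keeping the better of the rotated and unrotated halves. The $\Ls$ is solved by the DP of~\cite{galvez2021approximating}, adapted to choose an orientation per item.
\item \emph{Drop $\hugeitem$.} Pack everything else via Case 1, which gives nearly $p(\OPT)-p(\hugeitem)$.
\item \emph{Keep $\hugeitem$ and one arm.} Pack the items of one arm of the $\Ls$ in a single box.
\end{enumerate}
Writing the profit of $\OPT$ as $p(\hugeitem)$ plus the profits of each item class in the arms (long/short, rotated/nonrotated), each candidate gives a linear lower bound on what we obtain.

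The final step is to solve the resulting small linear program. It maximizes $p(\OPT)/\max(\text{candidates})$ over these profit parameters and should give the value $190/127$. We expect the main obstacle to be the $\Ls$-container step, specifically showing that with rotations we can still restrict attention to items that are long relative to their arm, with only the almost-square items losing the factor $2$. This is what keeps the candidate inequalities tight enough for the LP optimum to fall below $3/2$. We would first try to prove this by a shifting/area argument in the short vertical arm, charging the short items there to the lighter of two strips.
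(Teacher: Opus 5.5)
There is a genuine gap, and it sits exactly where the constant $190/127$ comes from. In the paper the bottleneck is not the huge-item case, which gives $37/25$ resp.\ $130/87$. It is the case in which no item has both sides at least $(1/2+\eps)N$, and exactly one item of $\OPT_{LF}$ (which also contains the large items) has one side longer than $(1/2+\eps)N$ and the other longer than $\eps N$, i.e., $|\OPT_{LF_\ell}|=1$.

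Your Case~1 yields nothing below $3/2$ there.
\begin{itemize}
\item The strip-freeing argument of the cardinality/skewed setting does not transfer. It discards $O_\eps(1)$ items per box and needs every container in a blocking chain to shrink. With weights, a few large items may carry a constant fraction of the profit and can be neither dropped nor shrunk.
\item The fallback is misquoted. The weighted resource contraction lemma (Lemma~\ref{lem:resource-contraction}) keeps only $\frac{1}{2}p(I)$, and ``strictly better unless a single item dominates'' is not something it provides.
\item Even if you place all thin items in the freed strip and combine with the two corridor-processing packings (Lemmas~\ref{lem:lcpacking-focs}(i) and~\ref{lem:new-processing}), every bound equals $\frac{2}{3}p(\OPT)$ when $p(\OPT_{LF_\ell})=p(\OPT_{SF})=p(\OPT_{ST})=\frac{1}{3}p(\OPT)$. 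The dominating item there is not huge, so your Case~2 does not cover it either.
\item With your threshold $(1-\delta)N$, items with both sides in $[(1/2+\eps)N,(1-\delta)N)$ land in Case~1. The paper's no-huge arguments rely on no item exceeding $(1/2+\eps)N$ in both dimensions.
\end{itemize}
What is missing is the paper's toolkit for this case:
\begin{itemize}
\item deleting a random horizontal-or-vertical strip (Lemma~\ref{lem:randomstrip});
\item the split on $|\OPT_{LF_\ell}|$: with at least two such items, drop the cheapest and put $\OPT_{ST}$ in its place; with exactly one, keep it and pack a $\frac{1}{20}$-fraction of $\OPT_{SF}$ below it together with all thin items (Lemma~\ref{lem:one-by-twenty-sf});
\item the five-inequality combination giving $127/190$.
\end{itemize}

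Your Case~2 also has the wrong structure. An \lc-packing requires $W_{\Ls}=N$, $H_{\Ls}\le N/2$ and arms of thickness at most $\eps N$. Together with Property~(4) of Definition~\ref{def:l-packing}, this is what makes every item usable in at most one arm, with a forced orientation in the horizontal arm (Lemma~\ref{lem:partition-items}). The vertical arm is handled by guessing one common orientation for all its items (Property~(5)). Letting the DP choose orientations per item is exactly what breaks the DP of \cite{galvez2021approximating}.

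Hence the $\Ls$ is not wrapped around $\hugeitem$. In the thin-side-arm case, the paper keeps the horizontal and small side-arm items in containers beside $\hugeitem$, moves this block to the top, and builds the \lc-packing inside the $N\times(N-h(\hugeitem))$ strip below it (Lemma~\ref{lem:arbitrary-knapsack}, factor $22/43$). Your ``w.l.o.g.\ bottom-left'' normalization is not free; it needs this restructuring of the four arms.

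You also miss three further ingredients:
\begin{itemize}
\item the split on whether $N-w(\hugeitem)>2\eps^2(N-h(\hugeitem))$, in which case container packings with $\hugeitem$ untouchable give $37/25$;
\item the Jansen--Porkolab packing of the large and vertical items in the arms;
\item the direct proof of candidate~(ii) (Lemma~\ref{lem:delete-big-item}), instead of an appeal to Case~1.
\end{itemize}
Finally, your candidate bounds are never written down, so ``the LP should give $190/127$'' is not a derivation. In the paper the huge-item branches yield $37/25$ and $130/87$, so your analysis places the bottleneck in the wrong case.
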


Furthermore, with our reasoning we can slightly improve and, at the same time,
simplify the best-known polynomial time result for 2DK (i.e., {\em without
rotations}), which has an approximation ratio of $17/9+\eps \approx 1.89+\eps$ \cite{galvez2021approximating} (see Appendix \ref{sec:non-rotation-improved} \aw{for our result}).
\begin{restatable}{theorem}{nonrotationimproved}
\label{theorem:non-rotation-improved-1} There exists a polynomial time
$(13/7+\epsilon)<(1.858+\eps)$-approximation algorithm for 2DK.
\end{restatable}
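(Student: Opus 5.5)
Since $\OPT$ is the best of several candidate packings, I will compare each candidate against a corridor-structured near-optimal solution, in the spirit of the $17/9$ analysis of \cite{galvez2021approximating}, and take the maximum profit. I would start from the corridor decomposition of \cite{adamaszek2015knapsack}: up to a factor $1+\eps$, the items of $\OPT$ are split into large items ($O_\eps(1)$ many, both sides large), skewed items packed in $O_\eps(1)$ thin corridors, and small items of negligible area.

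The first candidate is the processing of \cite{galvez2021approximating}. Delete one of two alternating families of corridor pieces (horizontal or vertical) and rearrange the rest into $O_\eps(1)$ boxes. This yields a box packing of profit at least $p(\text{large}) + \max\{p(\text{hor}), p(\text{ver})\} \ge p(\text{large}) + \tfrac12 p(\text{skewed})$, minus lower-order terms.

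The second candidate comes from our resource contraction lemma for skewed items, i.e.\ the technique behind \Cref{theorem:skewed-packing}. It shows that all skewed items except an $\eps$-fraction fit in a slightly shrunken knapsack using $O_\eps(1)$ boxes. Hence there is a packing of profit $(1-\eps)\,p(\text{skewed})$. The third family of candidates is the $\Ls$-container packing: a profitable subset of items packed in an $\Ls$ along two sides of the knapsack, together with the large items in the remaining region. For 2DK no rotation issues arise, so the $\Ls$-packing DP of \cite{galvez2021approximating} applies directly. Following their analysis, it gives profit at least about $p(\text{large}) + \tfrac34 p(\text{long skewed}) $ for suitable long/short thresholds, or $p(\text{large})$ plus a fraction of the short skewed items.

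The key new ingredient, which gives the simplification and improvement, is that candidate two replaces the weaker "skewed-only" bounds of the old analysis. I would set $p(\OPT)$ normalized to 1, with $L = p(\text{large})$, $A = p(\text{long skewed})$ and $B = p(\text{short skewed})$, so that $L + A + B = 1$. From the candidates I collect linear inequalities: $\mathrm{ALG} \ge A + B$, $\mathrm{ALG} \ge L + (A+B)/2$, and $\mathrm{ALG} \ge L + \tfrac34 A + cB$ (or the corresponding $\Ls$-bound). A small LP, or simply a convex combination of the inequalities, should give $\mathrm{ALG} \ge 7/13$. For example, weights $\lambda_1, \lambda_2, \lambda_3$ summing to 1 must make the combined coefficients of $L$, $A$ and $B$ each at least $7/13$. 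I would solve for these weights explicitly.

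The main obstacle is pinning down exactly which $\Ls$-packing bound holds and its constants; the LP value hinges on these constants. First I would try to reuse the $\Ls$-lemma of \cite{galvez2021approximating} verbatim. If that does not give $13/7$, I would refine it by using the contraction lemma to free space for one arm, and then redo the LP.
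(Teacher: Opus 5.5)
\textbf{The gap is your second candidate.} It does not hold for 2DK, and your LP rests on it. The resource contraction behind \Cref{theorem:skewed-packing} uses rotations essentially. In \Cref{lem:transformation-container-packing-new}, all leftover thin items $I'_T$ are turned so that their long side is horizontal, so that they fit into the single strip $[0,N]\times[0,\epsthin N)$. Moreover, \Cref{lem:free-strip-at-boundary} only guarantees that \emph{one} of $\SM^h,\SM^v$ becomes empty, so thin items and thin containers of both orientations must be rotated into whichever strip is free. Without rotations you would need a free horizontal strip and a free vertical strip at the same time, and the chain argument of \Cref{lem:chain-of-boxes} cannot deliver that. The claim is in fact false. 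Plugged into the argument of \Cref{sec:ptas-cardinality}, it would give $(1+\eps)$-approximate container packings with $O_\eps(1)$ containers for the cardinality case of 2DK. That contradicts the known barrier of $2$ for such packings without rotations \cite{galvez2021approximating}. The numbers also signal the problem: your first two inequalities alone give $\mathrm{ALG}\ge\max\{1-L,\tfrac12(1+L)\}\ge\tfrac23$, i.e.\ a $3/2$-approximation for 2DK, far stronger than the theorem you are proving.

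\textbf{What is needed instead.} You need a finer classification and a different new ingredient. Split the corridor items into fat/thin and long/short, giving $\OPT_{LF},\OPT_{SF},\OPT_{LT},\OPT_{ST}$, with large and untouchable items counted in $\OPT_{LF}$. Your three-way split (large, long skewed, short skewed) cannot express the valid bounds. For instance, the $\tfrac34$ in the $\Ls$-bound of \cite{galvez2021approximating} applies only to thin items from long subcorridors and carries no large or fat profit. That bound reads $\tfrac34p(\OPT_{LT})+p(\OPT_{ST})+\tfrac12p(\OPT_{SF})$, and it comes alongside $p(\OPT_{LF})+\tfrac12(p(\OPT_{SF})+p(\OPT_{ST}))$ (\Cref{lem:lcpacking-focs}). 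With only the bounds of \cite{galvez2021approximating}, the LP stays at $17/9$.

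The paper's new step is an alternate corridor processing (\Cref{lem:new-processing}):
\begin{enumerate}
\item Process all short acute subcorridors first. Afterwards every remaining obtuse subcorridor is short and every long one is acute (\Cref{obs:obtuse-piece-length-new}).
\item Process all long horizontal subcorridors, then the short obtuse ones. This leaves the long vertical subcorridors untouched as boxes, so their thin items survive. The symmetric processing keeps the long horizontal ones instead.
\end{enumerate}
Averaging the two gives $p(\OPT_{LF})+p(\OPT_{SF})+\tfrac12p(\OPT_{LT})$. Combining the two bounds of \Cref{lem:lcpacking-focs} with this one, using weights $2,6,5$, yields $13\,p(\OPT_{L\&C})\ge 7(1-\eps)\,p(\OPT)$.
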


Given that we now have a PTAS for the cardinality case of 2DKR, a natural question
is how much time is needed for computing a $(1+\eps)$-approximation for the problem.
Note that for squares, there is a $(1+\eps)$-approximation
algorithm whose running time is only $n\log^2 n+(\log n)^{O_{\eps}(1)}$\cite{DBLP:conf/compgeom/BuchemDW24}.
However, for 2DK and 2DKR we prove a much higher lower bound on the running time (see Section \ref{sec:finehard}).

\begin{theorem}
\label{theorem:hardness-of-2dkr} Assuming the $k$-\textsc{Sum} Conjecture, 
an algorithm for 2DK or 2DKR
computing a $(1+\epsilon)$-approximation for every given $\eps >0$ must have a running time of $n^{\Omega(1/\eps)}$, even in the cardinality
case.
\end{theorem}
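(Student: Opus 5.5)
We prove the statement by a fine-grained reduction from $k$-\textsc{Sum}, following the classical scheme used for lower bounds on approximation schemes for one-dimensional problems such as Bin Packing and Multiple Knapsack. Recall the hypothesis: for every $k$ and every $\delta>0$, there is no algorithm that decides in time $O(n^{\lceil k/2\rceil-\delta})$ whether $k$ given numbers, one from each of $k$ sets $A_1,\dots,A_k$ of size $n$, sum to a target $T$.

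\textbf{Encoding.} Given such an instance, we build a cardinality instance of 2DK with $N$ chosen as a suitable polynomial in $\max(T,\max_j\max A_j)$ times large scaling constants. We place $k$ ``selector'' rows. Row $j$ has height $H_j$, and the heights differ so that items of row $j$ cannot be used in another row. The row consists of a rigid frame of many small, cheap ``filler'' items whose count dominates the profit. Inside the frame, one gadget item of width $L+a$ is chosen for each $a\in A_j$, together with complementary items of width $L'-a$. The rows are chained by vertical ``bar'' items. A bar placed in row $j$ at horizontal offset $\sum_{j'\le j}a_{j'}$ forces the offset in row $j+1$. Only when the total offset equals exactly $T$ does the final row close, so that every filler item fits. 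The goal is the following property: the instance admits a packing of a target cardinality $P=\Theta(k)\cdot(\text{gadget count})$ if and only if the $k$-\textsc{Sum} instance is a yes-instance. Otherwise the optimum is at most $P-1$.

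\textbf{Gap and parameters.} The number of items is $m=O(kn)+O_k(1)$ filler items. We keep the fillers bounded by $O(k)$ groups, each group being a set of identical items packed by counting, so that $P=O(k)$. Then the additive gap of $1$ is a multiplicative gap of $1+\Omega(1/k)$. Choosing $\eps=c/k$ for a small constant $c$, a $(1+\eps)$-approximation distinguishes the two cases. An algorithm running in time $m^{o(1/\eps)}=n^{o(k)}$ would therefore beat $n^{\lceil k/2\rceil-\delta}$ for large $k$, contradicting the conjecture.

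\textbf{Rotations.} For 2DKR we make every gadget and filler item strongly asymmetric: its width is in $[N/3,N]$ and its height is a distinct small magnitude. We also make the knapsack effectively a square into which a rotated item cannot fit together with the forced frame, for example by a huge item of width $N$ and height close to $N$ minus the gadget region, which must be packed in any good solution. Rotating any gadget then destroys feasibility, and the same analysis applies.

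\textbf{Main obstacle.} The hardest step is \emph{soundness}: showing that a packing of $P$ items forces the intended row-wise structure with one element chosen per set. The difficulty is that 2D packings may interleave items arbitrarily. We plan to handle this by an area and width counting argument. The fillers and gadgets of each row together have total width exactly $N$, and their heights are distinct powers of a large base. A packing with $P$ items must then use exactly one gadget per row, and horizontal lines separate the rows. The exact width equations then yield $\sum_j a_j=T$. We would first try to make this rigid by letting all items of row $j$ have height exactly $H_j$, with $\sum_j H_j=N$ and each $H_j>N/(2k)$ distinct. Any vertical line then crosses at most one item per height class.
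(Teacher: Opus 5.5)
Your proposal has genuine gaps. It describes the shape of a reduction, but it never supplies the instance or the soundness argument, and that is where all the work lies.

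\textbf{The construction does not fit the cardinality setting.} There are no ``cheap'' fillers, since every item has profit $1$. If the rows contain many fillers whose count dominates, then $P$ is large. An additive gap of $1$ is then only a relative gap of $1/P$, which no longer gives an exponent $\Omega(1/\eps)$. If instead $P=O(k)$, each row holds $O(1)$ items. In that case the frame-and-bar mechanism that carries the offset $\sum_{j'\le j}a_{j'}$ from row to row must be specified, and you do not specify it.

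\textbf{Soundness is missing.} The only concrete rigidity claim you state is that any vertical line crosses at most one item per height class when all $H_j>N/(2k)$. This is false, because items of height $H_j\le N/2$ can be stacked. Nothing in your sketch stops a no-instance from compensating for a missing gadget by packing extra items into the slack.

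The missing idea is an instance in which \emph{every} packing of the target size is rigid. The paper gets this with $2n$ items of almost identical shape, $(\frac Nk+a)\times(\frac N2-a)$ and $(\frac Nk-a)\times(\frac N2+a)$ for $a\in\AM$.
\begin{itemize}
\item By area, at most $2k$ items fit, and a packing of $2k$ items leaves free area $O(N^2/k^3)$.
\item In such a packing the items can be shifted so that each touches the top or the bottom boundary, with $k$ on each side, sorted by height.
\item Since $w(i)+h(i)=(\frac12+\frac1k)N$ for every item, the non-overlap condition between the $j$-th top item and the $j$-th bottom item becomes a width inequality. Summing over $j$ forces the top widths to add up to exactly $N$, which is an equal-sum split of $k$ chosen numbers.
\end{itemize}
This is why the source problem is $k$-\textsc{PartSum}, obtained from $(k-1)$-\textsc{Sum}, rather than a chained $k$-partite target sum.

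\textbf{The rotation step also fails.} In the cardinality case no item can be forced into the solution. Your huge item of width $N$ contributes only $1$. Dropping it frees almost the whole knapsack, and the gadget items (width in $[N/3,N]$, small height) can simply be stacked there. So nothing bounds the optimum of a no-instance by $P-1$. The huge item also conflicts with your row design, where $\sum_j H_j=N$ already uses the full height.

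The paper needs no forcing item. It combines three facts:
\begin{itemize}
\item the $O(N^2/k^3)$ bound on free area;
\item a horizontal line meeting items of both orientations meets only one horizontally oriented item and at most $\frac{k-1}2$ vertically oriented ones;
\item a gravity argument.
\end{itemize}
Together these show that a packing of $2k$ items cannot mix orientations, after which the 2DK argument applies unchanged.

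Your parameter bookkeeping ($\eps=\Theta(1/k)$, $P=O(k)$, comparing $n^{ck}$ with $n^{\lceil k/2\rceil-\delta}$) has the right form. It only becomes a proof once a rigid construction of this kind is in place.
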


\subsection{Other related work}
The first approximation algorithms for 2DK and 2DKR are due to Jansen and Zhang \cite{jansen2004maximizing} and both have an approximation ratio of $2+\eps$. As mentioned above, Gálvez, Grandoni, Ingala, Heydrich, Khan, and Wiese~\cite{galvez2021approximating}
improved this factor to $17/9+\eps$ for 2DK and to $3/2+\eps$ for 2DKR, with further improvements in the respective cardinality cases. Gálvez, Grandoni, Khan, Ramírez-Romero, and Wiese showed that all of these ratios can be improved further by allowing pseudo-polynomial running time~\cite{Galvez00RW21}.

Bansal, Caprara, Jansen, Pr{\"a}del, and Sviridenko \cite{bansal2009structural} obtained a PTAS for 2DK and 2DKR when the profit-to-area ratio of the rectangles is bounded by a constant. Grandoni, Kratsch, and Wiese \cite{GKW19} considered parameterized
algorithms for 2DK for which the parameter~$k$ is the size of the
optimal solution. They showed that 2DK and 2DKR are W{[}1{]}-hard
for this parameter and provide a parametrized approximation scheme
for (the cardinality case of) 2DKR with a running time of $k^{O(k/\eps)}n^{O(1/\eps^{3})}$.
Buchem, Deuker, and Wiese~\cite{DBLP:conf/compgeom/BuchemDW24} presented
approximation algorithms for 2DK and 2DKR whose running times are
near-linear, i.e., $n\log^2 n+(\log n)^{O_{\eps}(1)}$. They also provided
dynamic algorithms with polylogarithmic query and update times.

The geometric knapsack problem has also been investigated for
different geometric shapes, including disks and regular polygons \cite{AcharyaBG0MW24},
as well as convex polygons \cite{MerinoW24} and also in higher dimensions,
for instance, in the context of hypercubes \cite{Jansen0LS22} and
cuboids \cite{JansenK0ST25}. It is worth noting that \aw{packing rectangles}
becomes significantly \aw{more} challenging when arbitrary rotations are permitted
(i.e., when items are not required to be axis-aligned); in fact, the
existence of even a polynomial time $O(1)$-approximation algorithm
remains open in this case.

\aw{A related problem is the 2D-Vector Knapsack problem which 
admits a PTAS \cite{frieze1984approximation} while there is a lower bound of $n^{o(1/\epsilon)}$ for the running time of any $(1+\epsilon)$-approximation for the problem, as shown by Jansen, Land, and Land \cite{jansen-lower-bound} (improving a previous lower bound 
of $n^{o(1/\sqrt{\epsilon})}$ by Kulik and Shachnai
\cite{kulik2010there}).
} 

2DK and 2DKR exhibit rich connections with numerous other geometric packing
and covering problems, including geometric bin packing \cite{bansal2014binpacking,Kar0R25}, strip packing \cite{HarrenJPS14, JansenS05}, generalized multidimensional knapsack~\cite{fstt1SS22,abs-2102-05854},
maximum independent
set of rectangles \cite{adamaszek2019approximation,GalvezKMMPW22},
unsplittable flow on a path \cite{amzingUFP2014,GMW22}, storage allocation
and round-SAP \cite{MomkeW20,Kar0W22}, guillotine packing~\cite{KhanLMSW25,khanMSW21},
and rectangle stabbing \cite{KhanSW24,KhanWW24}. For a comprehensive
overview, we refer the reader to the survey in \cite{CKPT17}.

\section{The Corridor Decomposition Framework}
\label{sec:corr-decomposition-appendix}

\begin{figure}
  \centering
  \begin{subfigure}[b]{0.45\textwidth}
  \centering
    \includegraphics[width=0.7\linewidth]{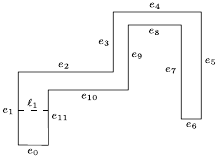}
    \caption{}
    \label{fig:open-corridor}
  \end{subfigure}
  \hfill
  \begin{subfigure}[b]{0.45\textwidth}
  \centering
    \includegraphics[width=0.7\linewidth]{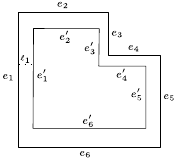}
    \caption{}
    \label{fig:closed-corridor}
  \end{subfigure}

  \caption{(a) An open corridor. (b) A closed corridor.}
  \label{fig:corridor-types}
\end{figure}
We give a brief description of the corridor decomposition framework used in \cite{adamaszek2015knapsack}, which was inspired by the work of \cite{adamaszek2019approximation}. An \textit{open corridor} is a face in the two-dimensional plane, bounded by a simple rectilinear polygon with $2k$ edges $e_0, \ldots, e_{2k-1}$, for some integer $k \ge 2$, such that for each pair of horizontal (resp. vertical) edges $e_i, e_{2k-i}$, $i \in [k-1]$, there exists a vertical (resp. horizontal) line segment $\ell_i$ such that both $e_i$ and $e_{2k-i}$ intersect $\ell_i$, and $\ell_i$ does not intersect any other edge. Similarly, a \textit{closed corridor} is a face bounded by two simple rectilinear polygons defined by the edges $e_1, \ldots e_{k}$ and $e'_1,\ldots, e'_{k}$ such that the second polygon is completely contained inside the first, and for each pair of horizontal (resp. vertical) edges $e_i,e'_i$, $i\in [k]$, there exists a vertical (resp. horizontal) line segment $\ell_i$ such that both $e_i$ and $e'_i$ intersect $\ell_i$ and $\ell_i$ does not intersect any other edge. The minimum length of the line segment $\ell_i$, $i\in [k]$ is called the \textit{width} of the corridor.

Next, observe that each open (resp. closed) corridor is the union of $k-1$ (resp. $k$) rectangular boxes, which we refer to as \textit{subcorridors}.
Each such subcorridor is a maximally large rectangle that is contained inside the corridor. A subcorridor is said to be \textit{horizontal} (resp. \textit{vertical}) if the boundary edges $e_{i}, e_{2k-i}$ or $e_i,e'_i$ are horizontal (resp. vertical). The \textit{length} of the subcorridor is defined as the length of the shorter edge among $e_i,e_{2k-i}$ or $e_i,e'_i$.

We now state the corridor decomposition lemma of \cite{adamaszek2015knapsack}.

\begin{lemma}[\cite{adamaszek2015knapsack}]
\label{lem:corr-decomposition-cardinality}
    Let $\eps>0$ and let $I$ be a set of items packed inside $K$. If every item in $I$ has height or width at least $\delta N$, for a given constant $\delta>0$, then there exists a corridor partition and a set $I_{\text{corr}}\subseteq I$ such that
    \begin{itemize}
        \item There is a subset $I_{\text{corr}}^{\text{cross}}\subseteq I_{\text{corr}}$ with $|I_{\text{corr}}^{\text{cross}}|\in O_{\eps, \delta}(1)$ such that each item in $I_{\text{corr}}\setminus I_{\text{corr}}^{\text{cross}}$ is fully contained in some corridor.
        \item $p(I_{\text{corr}})\ge (1-O(\eps))p(I)$.
        \item The number of corridors is $O_{\eps,\delta}(1)$ and each corridor has width at most $\delta N$ and has at most $1/\eps$ bends.
    \end{itemize}
\end{lemma}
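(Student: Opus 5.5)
The plan follows the construction of \cite{adamaszek2015knapsack}, which in turn adapts the partitioning idea of \cite{adamaszek2019approximation}. The tool is a recursive partition of $K$ by a polygonal cut of bounded complexity that crosses few items. Throughout, call an item \emph{long} if its longer side is at least $\delta N$; by hypothesis every item in $I$ is long. First, I will discretize: extend the edges of items to a grid of $O_{\eps,\delta}(1)$ candidate coordinates, or equivalently work with the combinatorial structure of the packing. Arguing via a shifting step, I will ensure that all cuts can be drawn along a constant-size family of lines while losing at most an $\eps$-fraction of the profit.

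Second, I will build the partition by a recursive balanced-cut argument. Each long item of width at least $\delta N$ is shrunk to a horizontal segment (its centre line), and similarly for tall items. These segments are then "fattened" into a family of $O(1/\delta)$-length structures. The goal is a partition of $K$ into $O_{\eps,\delta}(1)$ faces, each a rectilinear polygon with $O_{\eps}(1)$ edges, such that only $O_{\eps,\delta}(1)$ items intersect face boundaries. Those items form $I_{\text{corr}}^{\text{cross}}$.

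The key point is that long items crossing a boundary edge of length less than $\delta N$ are few per edge. By contrast, boundary edges of length at least $\delta N$ can be charged using the fact that each long item crossing them is long in the perpendicular direction, so at most $O(1/\delta)$ disjoint such items fit along that edge. Iterating the cut $O_{\eps}(1)$ levels deep keeps the polygon complexity bounded. Whenever a face has more than $1/\eps$ bends, I will cut it by a segment across a subcorridor, choosing among $1/\eps$ disjoint candidate positions the one whose removed items have the least profit. This loses at most an $\eps$-fraction of the profit of that face, and summing over faces gives the bound $p(I_{\text{corr}})\ge(1-O(\eps))p(I)$.

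Third, I need to show that each resulting face is a corridor of width at most $\delta N$. For a face that is thicker than $\delta N$ in some region, I will further subdivide it along the grid. Pieces of size larger than $\delta N$ in both dimensions contain long items only in a way that allows splitting into thin strips at $O(1)$ additional crossing cost. Finally, I will check that the faces obtained are open or closed corridors in the sense defined above.

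I expect the main obstacle to be the bend-reduction step combined with keeping the total number of crossing items bounded. The profit loss must be charged so that the repeated cuts do not compound beyond $O(\eps)$, and I would handle this by choosing cuts via an averaging argument over disjoint candidate positions, as above.
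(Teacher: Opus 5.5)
The paper does not prove this lemma; it imports it from \cite{adamaszek2015knapsack}, so your proposal has to stand on its own. It has a genuine gap at its core. The whole difficulty is to produce $O_{\eps,\delta}(1)$ faces of bounded complexity, each a corridor of width at most $\delta N$, whose boundaries meet only $O_{\eps,\delta}(1)$ retained items. Your proposal states this as a goal but supplies no construction. The counting facts you offer in its place are false.

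For a horizontal segment of length $L$, the items meeting it whose long side is horizontal number at most $L/(\delta N)+2$. Items whose long side is vertical, however, can meet it in unbounded number, because nothing bounds their widths from below. So the count is governed by the orientation of the items relative to the edge, not by the length of the edge. Your claim that edges shorter than $\delta N$ meet few items fails for thin perpendicular items. Your claim for edges of length at least $\delta N$ is reversed: items that are long perpendicular to an edge are exactly those of which arbitrarily many fit along it. Consequently every boundary edge must run parallel to the long sides of the few items it cuts, or through empty space. Its position must therefore be read off from the packing.

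This is why your discretization and grid steps do not work as stated. Take a stack of horizontal items of tiny total height $t$ at an arbitrary height $y_0$, with blocks of vertical items (each of height greater than $\delta N$) directly below and above it.
\begin{itemize}
\item The blocks must be thinned by vertical cuts that stop at the stack; otherwise the cuts sever every stack item.
\item All but a few columns need a horizontal boundary edge separating them from the stack, and such edges meet few items only if they lie in the window $[y_0,y_0+t]$.
\item No shifting or averaging argument locates this window. Subdividing a thick face ``along the grid'' cuts through a whole column of one block unless the grid happens to contain such a coordinate.
\end{itemize}
Selecting $O_{\eps,\delta}(1)$ coordinates of this kind among the $\Theta(|I|)$ item edges is precisely the content of the lemma. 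In general packings the interfaces between horizontal and vertical items are staircases or worse, and your proposal gives no rule for this selection.

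Similarly, halving the weight $O_\eps(1)$ times by balanced cuts does not make a face a corridor, and you state no invariant or stopping rule that would. Without one you also have no bound on the number of bends before bend reduction, and that bound is what makes the final number of corridors $O_{\eps,\delta}(1)$. The bend-reduction step itself is fine in spirit if you cut subcorridors at a common offset modulo $1/\eps$ and delete, rather than keep, the unboundedly many items each such cut meets. But it presupposes the bounded-complexity corridor partition that the proposal never constructs.
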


\begin{figure}
    \centering
    \includegraphics[width=0.5\linewidth]{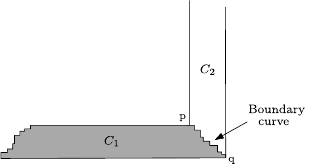}
    \caption{Boundary curve between subcorridors $C_1$ and $C_2$. The shaded region denotes the private region of $C_1$.}
    \label{fig:boundarycurve}
\end{figure}

Next, we define the \textit{boundary curve} between two consecutive subcorridors of a corridor. Consider two neighboring subcorridors $C_1$ and $C_2$ and assume w.l.o.g.~that $C_2$ is to the top right of $C_1$. Consider the two points $p$ and $q$ as shown in Figure \ref{fig:boundarycurve}. The boundary curve is defined as any simple rectilinear monotonically decreasing curve joining $p$ and $q$ that does not intersect any rectangle packed in the two subcorridors. Observe that each subcorridor has two boundary curves and we call the region delimited by these curves as the \textit{private region} of the subcorridor.

Before going into the PTAS for the unweighted case of 2DKR, we first discuss a \emph{resource augmentation lemma}, which informally states that it is possible to obtain a PTAS for 2DK and 2DKR if we are allowed to augment either the height or the width of the knapsack by a factor of $1+\delta$, for some constant $\delta>0$.

\subsection{Resource augmentation lemma}
\label{subsec:ra}
The following lemma allows us to convert an arbitrary packing inside a rectangular region into a container packing. Informally, if there exists an empty strip whose width is at least a $\delta$-fraction of the width of the region, then the items packed in the region can be rearranged into a constant (depending on $\delta$) number of containers by losing only negligible profit.
\begin{lemma}[\cite{galvez2021approximating}]
\label{lem:resource-augmentation}
    Let $I$ be a collection of \aw{$n$} items that can be packed into a box of size $a\times b$, and $\epsilon_{\text{ra}}>0$ be a given constant. Then there exists a container packing of $I'\subseteq I$ inside a box of size $a\times (1+\epsilon_{\text{ra}})b$ (resp., $(1+\epsilon_{\text{ra}})a\times b$) such that
    \begin{itemize}
        \item $p(I')\ge (1-O(\epsilon_{\text{ra}}))p(I)$;
        \item the number of containers is $O_{\epsilon_{\text{ra}}}(1)$ and their sizes belong to a set of cardinality $n^{O_{\epsilon_{\text{ra}}}(1)}$ that can be computed in time $n^{O_{\epsilon_{\text{ra}}}(1)}$.
    \end{itemize}
\end{lemma}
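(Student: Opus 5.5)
We follow the standard resource-augmentation route of Jansen--Solis-Oba type arguments, as refined in \cite{galvez2021approximating}. We treat augmenting the width $b$ to $(1+\epsilon_{\text{ra}})b$; the other case is symmetric. Set $\delta=\epsilon_{\text{ra}}$ and classify items with two thresholds $\mu\ll\delta'$, chosen by a pigeonhole argument over $O(1/\delta)$ geometrically decreasing candidate pairs. This ensures the \emph{medium} items, with width in $[\mu b,\delta' b)$ or height in $[\mu a,\delta' a)$, have profit at most $\delta\,p(I)$; we discard them. The remaining items fall into four classes: \emph{large} (wide and tall), \emph{wide} (width $\ge\delta' b$, height $<\mu a$), \emph{tall} (height $\ge\delta' a$, width $<\mu b$), and \emph{small} (both dimensions tiny).

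Large items number at most $1/\delta'^2$, so their positions are fixed. We then extend the edges of the large items to partition the box, and group the wide items into horizontal stacks. Concretely, we draw the $O_{\delta}(1)$ distinct vertical lines at the left and right endpoints of all large and wide items that are ``long''. Since there are too many wide items for this to give $O(1)$ lines directly, we instead apply linear grouping: inside each vertical slab between consecutive lines, we sort the wide items by width and round them up to $O(1/\delta)$ distinct widths, sacrificing the top group. The rounded wide items in each slab can then be stacked into $O_\delta(1)$ containers, namely horizontal boxes of a single width. Tall items are handled analogously, forming vertical containers after rounding heights.

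The extra width $\delta b$ pays for this rounding. Shifting tall items right by fractional rounding amounts, and stretching, fits inside the augmented strip because the total rounding loss per horizontal line is at most $O(\delta)b$. The leftover free space is a union of $O_\delta(1)$ rectangles, and into these the small items are packed by NFDH. The volume lost at region boundaries is $O(\mu/\delta'^{2})$ of the area, so only $O(\delta)$ profit of small items is lost, which requires a fractional argument choosing the most profitable small items by density.

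Finally, the container sizes are the rounded widths and heights, each a sum of at most $O_\delta(1)$ item dimensions. There are therefore $n^{O_\delta(1)}$ candidates, enumerable in that time.

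The main obstacle is the rounding/shifting step, which has two requirements. Rounding the tall items of each slab must not cause collisions with wide containers crossing the slab. The augmentation must suffice simultaneously for every horizontal cut line, despite nested large items. We would try first to apply rounding to the whole configuration LP of the strip, in the style of Kenyon--Rémila, then convert the basic solution into $O_\delta(1)$ configurations/containers, using the extra $\delta b$ to absorb the fractional configurations.
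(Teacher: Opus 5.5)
This lemma is imported in the paper from \cite{galvez2021approximating} without proof. There is therefore no paper proof to compare against, only what any proof must accomplish, and your proposal has a genuine gap exactly at the step you call the main obstacle. You never show how a packing in which wide and tall items interleave becomes $O_\delta(1)$ containers, and the concrete steps you do give fail.
\begin{itemize}
\item Wide items have width at least $\delta' b$, so they are not contained in the slabs between your vertical lines. Rounding their widths up in place creates overlaps. Even $O(1)$ distinct widths would not yield $O(1)$ stacks, because what prevents stacking is that tall items sit between the wide items and would cut through any common container.
\item Discarding the top group of a linear grouping is not profit-safe here, since that group can carry most of $p(I)$. It is affordable only if the group can be moved into the extra $\delta b$ strip. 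That works for items that are thin in the augmented direction (tall items, grouped by total width), not for wide ones.
\item Tall items cannot be rounded up in height in place, as your ``analogously'' suggests, because the height $a$ is not augmented and nothing pays for it. Rounding their heights is only feasible via a shift along the $x$-axis, where each group takes the place of the next taller one, and your sketch does not set this up.
\item Your claimed $O(\delta)b$ rounding loss per horizontal line is unsupported, since such a line can cross unboundedly many tall items.
\item The suggested Kenyon--R\'emila configuration LP does not repair this. In that scheme, items that are long in the augmented direction (your wide items, of length up to $b$) are filled greedily into the configuration columns. The additive cost is of order their length per configuration, i.e.\ $\Theta(b)$ per configuration rather than $O(\delta b)$.
\end{itemize}

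The missing ingredient is an asymmetric treatment of the two directions. One route:
\begin{itemize}
\item Scale all $x$-coordinates by $1+\delta$. This maps the items' rectangles to pairwise disjoint rectangles in which each item $i$ has private slack $\delta w(i)$. For wide and large items this slack is at least $\delta\delta' b$, enough to snap both vertical edges to a grid of $O(1/(\delta\delta'))$ lines without touching any other item.
\item Use a random grid offset. Then tall and small items crossing grid lines cost only $O(\delta)$ profit, provided $\mu$ is small enough relative to $\delta\delta'$.
\item In each grid column the wide and large items now span the full column width. Hence at most $1/\delta'$ of the gaps between them are high enough to contain tall items.
\item So tall items live in $O_{\delta,\delta'}(1)$ rectangles containing no wide items. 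There, a grouping/configuration argument along the $x$-axis gives vertical containers, with the tallest group and the integrality overflow absorbed by extra width.
\item Wide items live in $O_{\delta,\delta'}(1)$ rectangles containing no tall items. There, their snapped $x$-extents allow a configuration rearrangement in $y$.
\end{itemize}
You would still need three further steps. First, charge the constantly many items cut or left over in these steps to an $O(\delta)$-fraction of the profit, since the statement permits no additive loss. Second, bound the area wasted inside the containers so that the small items fit by NFDH; your boundary-loss estimate does not address this. Third, add a trimming step for container sizes: the stacking dimension of a container is a sum of arbitrarily many item dimensions, not of $O_\delta(1)$ of them.
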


\section{PTAS for cardinality case of 2D Knapsack with rotations}
\label{sec:ptas-cardinality}
In this section, we present our PTAS for the cardinality case of \aw{2DKR}.
Let $\epsilon>0$
and let $\OPT$ be an optimal solution. We classify the items in $\OPT$
according to their heights and widths in the packing. For any subset $S\subseteq I$, we shall let $p(S)$ denote the total profit of the items in $S$. Note that in the cardinality case, it holds that $p(S)=|S|$ -- nevertheless we shall establish all the lemmas in this section for the weighted case as well, so that we can later reuse them to prove our result for the case of skewed items (\Cref{theorem:skewed-packing}) in \Cref{sec:skewed}. We will choose
two constants $\epssmall,\epslarge$ satisfying $0<\epssmall<\epslarge\le\epsilon^2$
according to which we classify an item $i$ as
\begin{itemize}
\item \textit{small} if $w(i)\le\epssmall N$ and $h(i)\le\epssmall N$,
\item \textit{large} if $w(i)>\epslarge N$ and $h(i)>\epslarge N$,
\item \textit{horizontal} if $w(i)>\epslarge N$ and $h(i)\le\epssmall N$,
\item \textit{vertical} if $h(i)>\epslarge N$ and $w(i)\le\epssmall N$,
and
\item \textit{intermediate} otherwise (i.e., $h(i)$ or $w(i)$ is in $(\epssmall N,\epslarge N]$).
\end{itemize}

Let $\OPT_{\text{small}},\OPT_{\text{large}},\OPT_{\mathrm{hor}},\OPT_{\mathrm{ver}}$,
and $\OPT_{\text{int}}$ denote the small, large, horizontal, vertical
and intermediate items in $\OPT$, respectively. 
We will later require that $\epssmall$ is much smaller than
$\epslarge$; formally, we will define a function $f:\R\rightarrow\R$
and require that $f(\epssmall)\le\epslarge$. Using the following
lemma, we can discard the items in $\OPT_{\text{int}}$ at negligible
loss in profit.
\begin{lemma}
[\cite{galvez2021approximating}] \label{lem:delete-intermediate-items}
For any constant $\epsilon>0$ and positive increasing function $f(\cdot)$
such that $f(x)>x$ for each $x\in(0,1)$, there exist values $\epslarge,\epssmall$
with $\epsilon^{2}\ge\epslarge\ge f(\epssmall)\ge \epssmall \ge \lambda_{\eps} \in \Omega_{\eps}(1)$, for some constant $\lambda_{\eps}$,
such that the total profit of items in $\OPT_{\mathrm{int}}$ is bounded
by $\epsilon \cdot p(\OPT)$. The pair $(\epssmall,\epslarge)$ is one pair
from a set of $O_{\epsilon}(1)$ pairs, and this set can be computed
in polynomial time.
\end{lemma}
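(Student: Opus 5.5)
The plan is the standard pigeonhole argument over a geometrically decreasing sequence of thresholds. Set $k:=\lceil 1/\epsilon\rceil$ and define a sequence $\epsilon_1>\epsilon_2>\dots>\epsilon_{k+1}$ by $\epsilon_1:=\epsilon^2$ and $\epsilon_{j+1}:=\min\{\epsilon_j,\,\tilde f^{-1}(\epsilon_j)\}$ for a suitably modified function $\tilde f$. Here $\tilde f$ replaces $f$ by a strictly increasing continuous function with $f\le\tilde f$, or alternatively $\epsilon_{j+1}$ is chosen as the largest value $x$ in a discretized grid with $f(x)\le \epsilon_j$. Since $f(x)>x$, the sequence strictly decreases. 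Because $k$ depends only on $\epsilon$ and $f$ is fixed, every $\epsilon_j$ is a constant depending only on $\epsilon$ (and $f$). We set $\lambda_\epsilon:=\epsilon_{k+1}$.

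For each $j\in\{1,\dots,k\}$, let $I_j$ be the set of items $i\in\OPT$ such that $w(i)\in(\epsilon_{j+1}N,\epsilon_jN]$ or $h(i)\in(\epsilon_{j+1}N,\epsilon_jN]$. Each item has only two dimensions, and the intervals $(\epsilon_{j+1}N,\epsilon_jN]$ are pairwise disjoint. Hence each item belongs to at most two of the sets $I_j$, which gives $\sum_{j=1}^k p(I_j)\le 2p(\OPT)$. By averaging, some index $j$ satisfies $p(I_j)\le 2p(\OPT)/k\le 2\epsilon\, p(\OPT)$. We then choose $\epslarge:=\epsilon_j$ and $\epssmall:=\epsilon_{j+1}$. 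With this choice, $\OPT_{\mathrm{int}}=I_j$ exactly by the definition of intermediate items: an item is intermediate iff $h(i)$ or $w(i)$ lies in $(\epssmall N,\epslarge N]$. The inequalities $\epsilon^2\ge\epslarge\ge f(\epssmall)\ge\epssmall\ge\lambda_\epsilon$ hold by construction. To obtain the bound $\epsilon\cdot p(\OPT)$ rather than $2\epsilon\cdot p(\OPT)$, it suffices to take $k:=\lceil 2/\epsilon\rceil$ instead.

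For the algorithmic part, the candidate set is $\{(\epsilon_{j+1},\epsilon_j): j\in[k]\}$. This set has $O_\epsilon(1)$ elements and can be computed in constant time, given that $f$ can be evaluated or inverted approximately. The algorithm does not know $\OPT$, so it simply tries every pair in the set.

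The main obstacle is the technical point of inverting $f$. Nothing guarantees that $f$ is invertible or that $f^{-1}(\epsilon_j)$ exists in closed form. I would handle this by taking $\epsilon_{j+1}:=\sup\{x\in(0,\epsilon_j]: f(x)\le\epsilon_j\}$, or a positive grid value slightly below this supremum. This value is positive and depends only on $\epsilon_j$ and $f$. Monotonicity of $f$ then gives $f(\epsilon_{j+1})\le\epsilon_j$. Positivity holds as long as $f(x)\to$ something at most $\epsilon_j$ for small $x$, which is true for all the functions used in the paper, such as polynomials or exponentials of $x$ that vanish at $0$. Beyond this point, the averaging argument is routine.
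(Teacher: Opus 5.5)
Your proposal is correct and follows the standard argument of the cited source \cite{galvez2021approximating}; the paper itself gives no proof here. That argument takes $O(1/\epsilon)$ thresholds by iterating $f$ (here its generalized inverse, because of the paper's convention $\epslarge\ge f(\epssmall)$), notes that each item hits at most two of the disjoint ranges, and averages --- the same shifting idea the paper reuses in Lemma~\ref{lem:weighted-classification}. Your caveats are also justified: the statement tacitly needs $f(x)\to 0$ as $x\to 0^+$ (for $f(x)=x+1$ it is false), and you should take a value strictly below $\sup\{x: f(x)\le\epsilon_j\}$ rather than the supremum itself, because $f$ may jump above $\epsilon_j$ exactly there.
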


We {\em guess}\footnote{By guessing, we mean that we \aw{try} all possible choices and \aw{finally output the best \aw{obtained} solution (over all choices).}} the pair $(\epssmall,\epslarge)$ according to Lemma~\ref{lem:delete-intermediate-items}. Our strategy is to partition the knapsack into $O_{\epsilon}(1)$ rectangular
containers, such that within each container the items are stacked
horizontally on top of each other, or vertically next to each other,
or all items are small in both dimensions compared to the container
and they are packed with the Next-Fit-Decreasing-Height (NFDH) algorithm (see \Cref{subsec:nf}). Formally, a \emph{container }is an open axis-parallel rectangle
$C\subseteq K$ with integral coordinates, together with the label
\emph{horizontal, vertical}, or \emph{area} \emph{container}\aw{; s}ee  \Cref{fig:containers}.

\begin{figure}
    \centering
    \includegraphics[width=0.7\linewidth]{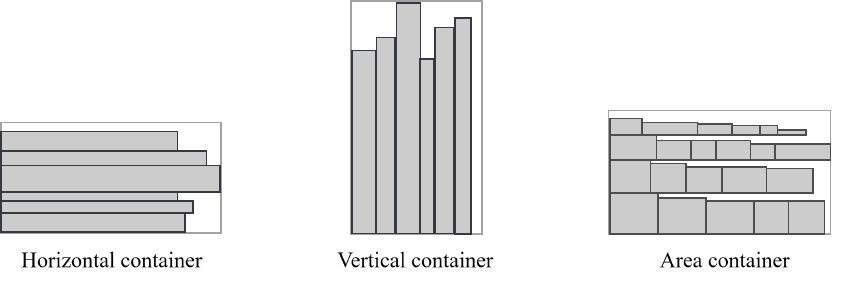}
    \caption{Types of containers.}
    \label{fig:containers}
\end{figure}

\begin{definition}[Container Packing]
Consider a packing of a set of items $I'\subseteq I$ inside $K$
and a set of containers $\C$. They form a \emph{container packing}
if the containers in $\C$ are pairwise disjoint and
\begin{itemize}
\item each item $i\in I'$ is contained in one container $C\in\C$,
\item inside each {\em horizontal} (respectively, {\em vertical}) container, the items are stacked
one on top of the other (respectively, one next to the other), and
\item if an item $i\in I'$ is packed in an {\em area} container of some height
$h$ and some width $w$, then $w(i)\le\epsilon\cdot w$ and $h(i)\le\epsilon\cdot h$.
\end{itemize}
\end{definition}

Container packings have been considered before in the literature \cite{Galvez2016, galvez2021approximating, Galvez00RW21, GalvezGAJKR23, Jansen0LS22}, 
and it is well-known that in polynomial time, we can essentially compute
the most profitable container packing with a constant number of containers
(even in the weighted case of our problem) via a reduction to the generalized assignment problem (GAP).

\begin{lemma}[\cite{galvez2021approximating}]
\label{lem:computer-container-packing}Let $\epsilon>0$ and $c\in\N$.
Consider an instance of the (weighted) two-dimensional knapsack problem
with rotations. Let $\OPT(c)$ denote the most profitable container
packing with at most $c$ containers. In time $n^{(c/\eps)^{O(1)}}$, 
we can compute a solution with a profit of at least $(1-\eps)p(\OPT(c))$.
\end{lemma}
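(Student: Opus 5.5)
The plan is a reduction to the generalized assignment problem (GAP). First we guess the container geometry; then we assign items to the fixed containers, with rotations folded into the item sizes; finally we solve the resulting GAP instance approximately.

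\textbf{Step 1: guessing the geometry.} By Lemma~\ref{lem:resource-augmentation} and standard shifting arguments, we may assume that every container in $\OPT(c)$ has width and height from a polynomial-size candidate set. Each dimension is either a sum of at most $1/\eps$ item widths or heights (using both orientations of each item), or is rounded to a grid. We then enumerate, in time $n^{O(c)}$, all configurations of at most $c$ pairwise disjoint containers with labels (horizontal, vertical, area). For area containers, we may lose an $\eps$-fraction by slightly shrinking, in the standard way.

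\textbf{Step 2: building the GAP instance.} For a fixed guessed configuration, the problem becomes an assignment problem. Each item $i$ is assigned to at most one container $C$, in one of its two orientations. The feasibility conditions are as follows.
\begin{itemize}
\item \emph{Horizontal container.} The oriented width must satisfy $w\le w(C)$, and the item consumes its oriented height from the budget $h(C)$.
\item \emph{Vertical container.} Symmetric to the horizontal case.
\item \emph{Area container.} Both oriented sides must be at most $\eps$ times the corresponding side of $C$, and the item consumes its area from the budget $(1-O(\eps))\,w(C)h(C)$.
\end{itemize}
Rotation is absorbed by allowing an item–container pair to have up to two size options; we keep the cheaper feasible one, since any feasible orientation only increases the options available. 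This gives a GAP instance with $c$ bins, where each item has a bin-dependent size and profit $p(i)$.

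\textbf{Step 3: solving GAP with a constant number of bins.} GAP with $c$ bins admits a PTAS in time $n^{(c/\eps)^{O(1)}}$. We guess the $O(c/\eps)$ most profitable, or largest, items per bin. The remaining items are then assigned via an LP with $c$ capacity constraints. A basic optimal solution has at most $c$ fractional items, which we discard; their profit is small relative to the guessed items, so the loss is at most $\eps\cdot p(\OPT(c))$.

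\textbf{Step 4: converting back to packings.} Stacking in horizontal and vertical containers is trivially feasible. For area containers, NFDH packs items whose total area is at most $(1-2\eps)$ of the container area, given that all items are $\eps$-small relative to the container (see \Cref{subsec:nf}). We therefore need to show that $\OPT(c)$ restricted to each area container can be trimmed to area $(1-O(\eps))$ at $O(\eps)$ profit loss. This holds by removing a least profitable subset of area about $\eps$, since all items are small.

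\textbf{Main obstacle.} The hardest part is Step 1. We must show that the containers of $\OPT(c)$ can be assumed to come from a polynomial set, and we must handle area containers so that NFDH fits. I would first try shrinking each container to the minimal bounding values defined by its items, in the horizontal and vertical cases. For area containers, I would round their dimensions down to multiples derived from item sizes, using the profit-trimming argument above. Running time is dominated by the enumeration and the GAP guessing, giving $n^{(c/\eps)^{O(1)}}$ in total.
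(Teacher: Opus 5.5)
Your overall route is the one behind the cited result: guess the containers, reduce to GAP with $O(c)$ bins with rotation folded into the bin-dependent sizes, and handle area containers by NFDH after deleting a least profitable group of items. Compare the GAP construction in the proof of Lemma~\ref{lem:compute-LC-packing-optimal}. Your Steps~2--4 are sound. The gap is Step~1, which you flag but do not close.
\begin{itemize}
\item Lemma~\ref{lem:resource-augmentation} does not help here. It requires an extra $\eps_{\text{ra}}$-fraction of free space, which the containers of $\OPT(c)$ need not have.
\item Shrinking a horizontal container to the bounding box of its items makes its height the total height of \emph{all} its items. That is a subset sum of up to $n$ numbers, so it does not come from a polynomial-size set when $N$ is exponential in $n$.
\item ``Rounding to a grid'' is not free either. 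You cannot delete a thin horizontal strip of such a container at small profit loss, because a single item may be almost as tall as the container.
\item Likewise, rounding a side of an area container down to a multiple of an item side can leave $\Theta(N/s)$ candidate values.
\end{itemize}

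What is missing is a rounding step that pays for itself.

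\textbf{Horizontal containers} (symmetrically, vertical ones). With at most $1/\eps$ items the bounding box already works. Otherwise, let $T$ be the $1/\eps$ tallest items in their packed orientation, $t$ the smallest height in $T$, and $R$ the remaining items. Every item of $R$ has height at most $t$, so the total height $h(R)$ is at most $nt$. Delete the least profitable item of $T$; this loses at most $\eps\, p(T)$ and frees height at least $t$. Replace the container by two stacked horizontal containers:
\begin{itemize}
\item one for the rest of $T$, whose height is a sum of fewer than $1/\eps$ item sides;
\item one for $R$, of height $\lceil h(R)/t\rceil\cdot t< h(R)+t$, i.e., $kt$ with $k\le n$.
\end{itemize}
Widths are maximum item sides.

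\textbf{Area containers.} Take one of size $w\times h$ whose items have largest sides $a\le\eps w$ and $b\le\eps h$.
\begin{itemize}
\item If $w\ge na$ (resp.\ $h\ge nb$), all its items fit into one vertical container of size $na\times b$ (resp.\ one horizontal container of size $a\times nb$).
\item Otherwise, shrink it to $\lfloor w/a\rfloor a\times\lfloor h/b\rfloor b$. This loses at most an $\eps$-fraction per side, since $1/\eps\le w/a<n$. Then trim with your grouping argument.
\end{itemize}

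Afterwards there are at most $2c$ containers and all sides lie in a set of size $n^{O(1/\eps)}$. Pushing the containers down and to the left makes all coordinates sums of at most $2c$ such sides. The enumeration therefore costs $n^{O(c^2/\eps)}$ rather than the $n^{O(c)}$ you state, which is still within $n^{(c/\eps)^{O(1)}}$.
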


Note that for a fixed value for $c$, it might be the case that $p(\OPT(c))$ is
significantly smaller than $p(\OPT)$. In particular, it is not clear
what approximation ratio one can achieve for our problem using container
packings with only a constant number of containers. Indeed, in \cite{galvez2021approximating}
it was shown that for the two-dimensional knapsack problem \emph{without}
rotations such container packings cannot yield a better approximation
factor than 2. However, we show that \emph{with} rotations we can
achieve an approximation ratio of $1+\eps$. If $|\OPT|\le O_{\eps}(1)$
then clearly there is a container packing with only $O_{\eps}(1)$ containers, 
since we can simply introduce one container for each item in $\OPT$. Hence,
it suffices to prove that there is a container packing that packs
$(1-\eps)|\OPT|-d_{\eps}$ items for some constant $d_{\eps}$ (yielding a $1+O(\eps)$-approximation
whenever $|\OPT|\ge d_{\eps}/\eps$).

\begin{lemma}[Resource contraction lemma]
\label{lem:profitable-container-packing}There are global constants
$c_{\eps}, d_{\eps}\in\N$ and $\mu_{\eps} >0$ such that for each instance of the cardinality case of 2DKR, there exists a container packing with
at most $c_{\eps}$ containers that packs at least $(1-\eps)|\OPT|-d_{\eps}$ items
such that each container is contained in $[0,N]\times [0, (1-\mu_{\eps})N]$.
\end{lemma}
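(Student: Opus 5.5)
We follow the roadmap sketched in the introduction. First we guess $(\epssmall,\epslarge)$ via Lemma~\ref{lem:delete-intermediate-items} and discard $\OPT_{\mathrm{int}}$, losing at most $\eps|\OPT|$ items. The large items number $O_{\eps}(1)$ since each has area more than $\epslarge^2N^2$; we drop them and charge them to $d_{\eps}$. The small items are handled separately at the end by area arguments: either they have total profit at most $\eps|\OPT|$, or we reserve area containers for them in the free space. The horizontal and vertical items each have one side at least $\epslarge N$. To them we apply the corridor decomposition (Lemma~\ref{lem:corr-decomposition-cardinality} with $\delta=\epslarge$). This gives $O_{\eps}(1)$ corridors with at most $1/\eps$ bends each. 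The $O_{\eps}(1)$ crossing items are discarded into $d_{\eps}$.

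Next we run the corridor-to-box routine of \cite{galvez2021approximating} on each corridor. We sacrifice an $\eps$-fraction of its items, which we choose via a shifting argument over the private regions and boundary curves. The remaining items then lie in boxes (horizontal or vertical containers), except for a leftover set $I_T$ of total area small enough to fit into a strip of width $\epsthin N$. The key quantitative improvement is to redo this routine so that the number of boxes is $(\log(1/\epsthin))^{O_{\eps}(1)}$ rather than $\mathrm{poly}(1/\epsthin)$. For this we group items in each subcorridor geometrically by length, using $O(\log(1/\epsthin))$ classes of lengths in $[\epsthin N, N]$. Items shorter than $\epsthin N$ in their long direction do not occur, since long sides exceed $\epslarge N$. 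The thin items near boundary curves, those cut by the staircase, form $I_T$.

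We then split the boxes into \emph{thick} ones, with short side at least $\epsclarge N$, and \emph{thin} ones. Each thick horizontal box contains stacked items of height at most $\epssmall N$. We remove a bottom band of height $\eps\cdot\epsclarge$ times its height from the box. This deletes at most an $\eps$-fraction of its items, counting by a cheapest band of consecutive items, plus $O(1)$ boundary items. Vertical boxes are handled symmetrically. Rotations are used here: any box may be rotated, so all shrinking is done in a common direction. After shrinking, we push all boxes maximally down and left, using a compaction argument in the style of a Steinberg/ordering argument on a sweep line. We claim this frees a strip of width $\Omega(\stripthickness N)$ either at the top or at the right. If the strip is on the right, we rotate the whole packing by $90^\circ$; this is allowed since all items may be rotated. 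We then set $\mu_{\eps}$ proportional to $\eps\cdot\epsclarge$.

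The main obstacle is packing the thin boxes and $I_T$ into the freed strip. By the resource augmentation Lemma~\ref{lem:resource-augmentation}, we re-pack $I_T$ into containers inside the strip. Each thin box has area at most $\epsclarge N^2$ and can be rotated to lie along the strip. So the strip must have length-budget for about (number of thin boxes)$\cdot\epsclarge N$. This is exactly where the polylogarithmic dependence is needed. We choose $\epsclarge$ and $\epsthin$ from a sequence $\eps_1>\eps_2>\dots$, with $\eps_{k+1}$ much smaller than $\eps_k$ relative to $(\log 1/\eps_k)^{O_{\eps}(1)}$. A pigeonhole over $O(1/\eps)$ candidate indices then guarantees that the items with short side in $(\epsthin N,\epsclarge N]$ are cheap. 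The number of thin boxes, about $(\log 1/\epsthin)^{O_\eps(1)}$, times their thickness $\epsthin N$, then stays below the strip width $\eps\epsclarge N$ minus the room needed for $I_T$. Finally we reserve half the strip for $I_T$ and half for the thin boxes, stacked side by side. We verify that all containers lie in $[0,N]\times[0,(1-\mu_\eps)N]$, leaving a further sliver empty by shrinking the strip slightly, and that the total number of containers is a constant $c_\eps$.
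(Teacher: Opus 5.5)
Your outline has the right architecture (corridors, polylogarithmically many boxes, thick/thin split, shrink and push, refill a boundary strip). However, the central step, producing only $(\log\frac{1}{\epsthin})^{O_{\eps}(1)}$ boxes plus a thin leftover set, has no working mechanism.

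\emph{Grouping by length does not give boxes.} Grouping items by the length of their long side cannot turn a corridor into boxes. The obstruction is the monotone boundary curve at each bend. The width available there depends on an item's \emph{distance from the longer edge} of the subcorridor, so equal-length items at different positions cannot be moved rigidly into a common rectangle. Moreover, every long side exceeds $\epslarge N$, so your classes in $[\epsthin N,N]$ reduce to $O(\log\frac{1}{\epslarge})$ non-empty ones. They therefore cannot produce a $\log\frac{1}{\epsthin}$ count, nor a leftover set of area $O(\epsthin N^2)$. Also, the boundary curve crosses no item, so ``items cut by the staircase'' is empty.

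What is needed is a \emph{positional} geometric partition. In a subcorridor of width $h$, cut the private region by lines at distance $\epsthin(1+\eps)^j h$ from its longer edge. The lowest region, of height $\epsthin h$, becomes $I'_T$. Each region $\RR_{j+1}$ above it, after deleting its cheapest of $1/\eps$ bands, fits as a box of width $w_j$ into $\RR_j$. This works because $h(\RR_{j+1})\le(1+\eps)h(\RR_j)$ and the widths $w_j$ are non-increasing. These $O(\frac{1}{\eps}\log\frac{1}{\epsthin})$ boxes are propagated through the remaining bends by paths, giving $(\frac{1}{\eps}\log\frac{1}{\epsthin})^{O(1/\eps)}$ boxes per corridor. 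Meanwhile $I'_T$ has area at most $2\epsthin N^2$ and short sides at most $\epsthin\epslarge N$, so Steinberg's algorithm packs it into a strip of height $4\epsthin N$.

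\emph{The parameter choice fails as stated.}
\begin{itemize}
\item The pigeonhole must be over \emph{containers} of intermediate thickness, not over items with short side in $(\epsthin N,\epsclarge N]$. All horizontal and vertical items have short side at most $\epssmall N$, and an intermediate box may consist of arbitrarily thin items.
\item $\epsthin$ cannot vary with the pigeonhole index. The box decomposition itself depends on $\epsthin$, so the classes are not disjoint across indices.
\end{itemize}
The paper fixes $\epsthin$ first as an explicit function of $\eps$ and $\epslarge$. It then sets $k=\frac{3}{\eps}(\log\frac{1}{\epsthin})^{\Gamma'(\eps,\epslarge)}\ge 3|\C|/\eps$ and shifts over the $1/\eps$ scales $(\epslarge k^{-j}N,\epslarge k^{-j+1}N]$. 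The polylogarithmic bound is exactly what keeps the smallest possible $\epscsmall=\epslarge k^{-1/\eps}$ above $6\epsthin/\eps$. Thin containers then have thickness up to $\epscsmall N$, not $\epsthin N$, and their total thickness is at most $|\C|\epscsmall N\le\frac{\eps}{3}\epsclarge N$.

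\emph{The free-strip step is unsupported.} Boxes cannot be rotated in place so that all shrink in a common direction. Horizontal containers must shrink in height and vertical ones in width, each by an $\eps$-fraction, i.e., by at least $\stripthickness N$; a band of relative height $\eps\epsclarge$ is too little. You then need the chain argument of Lemma~\ref{lem:chain-of-boxes}. After pushing down, any container meeting the top strip is vertical and rests on a chain of vertical containers down to the floor; otherwise it would have dropped by the shrinkage of a horizontal container below it. Pushing left then frees the right strip.

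\emph{Small items.} Placing them requires bounding the wasted area inside containers (Lemma~\ref{lem:split-container}). It also requires choosing $\epssmall$ small relative to the final number of containers. A bare ``area argument'' provides neither.
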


\Cref{lem:computer-container-packing} and \Cref{lem:profitable-container-packing}
 yield Theorem~\ref{theorem:PTAS-cardinality-case}. We will prove Lemma~\ref{lem:profitable-container-packing}
in the remainder of this section.
As a first step, we
drop all items in $\OPT_{\text{large}}$ since already due to their
areas, there can be at most $O(1/\epslarge^{2})=O_{\epsilon}(1)$
of them. In the following, we will focus on the items of $\OPT_{\mathrm{hor}} \cup \OPT_{\mathrm{ver}}$;
we will add the small items in $\OPT$ at the very end with standard techniques, see e.g., \cite{galvez2021approximating}.
\begin{figure}
    \centering
    \includegraphics[width=0.9\linewidth]{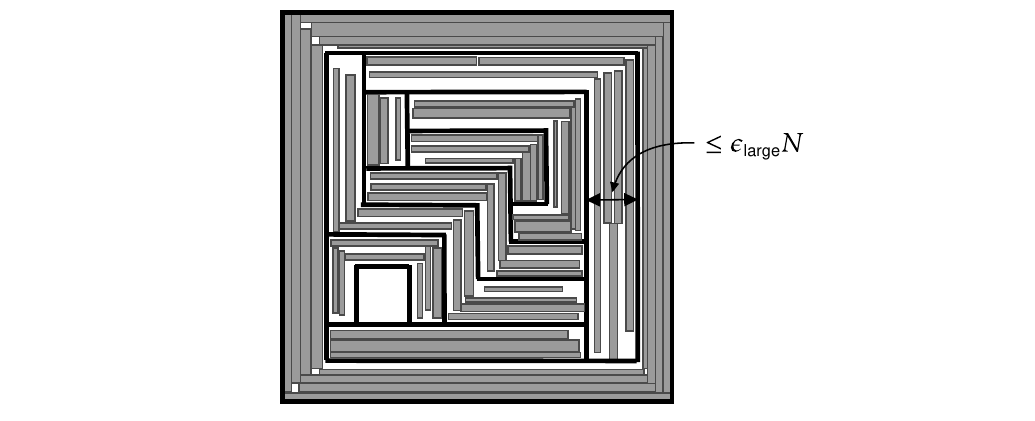}
    \caption{A partition of the knapsack into thin corridors, i.e., whose width is at most $\epslarge N$.}
    \label{fig:corridor-decomposition}
\end{figure}

Applying the \emph{corridor decomposition framework} (Lemma \ref{lem:corr-decomposition-cardinality}) to the packing of $\OPT_{\mathrm{hor}} \cup \OPT_{\mathrm{ver}}$, we obtain a packing of at least $(1-\eps)|\OPT_{\mathrm{hor}} \cup \OPT_{\mathrm{ver}}|$ items,
in which each item is contained in one out of $O_{\eps, \epslarge}(1)$ corridors, each having width at most $\epslarge N$, see Figure~\ref{fig:corridor-decomposition}. In particular, this implies that horizontal items can be placed only
in the horizontal parts of a corridor and the vertical items only
in the vertical parts. However, in the bend of a corridor there
may be both horizontal and vertical items and they may be packed in
a very complicated manner, see Figure~\ref{fig:corridor-decomposition}.

In \cite{galvez2021approximating}, a method was presented that takes a
packing of items of $\OPT_{\mathrm{hor}}\cup \OPT_{\mathrm{ver}}$ inside the $O_{\eps, \epslarge}(1)$ corridors as described above, removes
some of the items, and constructs a container packing for the remaining
items. More precisely, among the removed items, there are $O_{\eps, \epslarge}(1)$
items that we could simply discard since they are only constantly
many; for another subset we can bound the number of items by $O(\eps)\cdot |\OPT_{\mathrm{hor}}\cup\OPT_{\mathrm{ver}}|$
which is also small.
For the remaining removed items (the set $I'_{T}$ in the lemma below) we cannot bound their total number unfortunately; however, they
fit in a thin strip of size $[0,N]\times[0,\epsthin N)$
for some chosen parameter $\epsthin>0$. This parameter
may be chosen arbitrarily small. However, the resulting number of
containers may depend on $\epsthin$ and this number might
grow \emph{polynomially} with $1/\epsthin$.

\begin{lemma}[\aw{implicit in} \cite{galvez2021approximating}]
\label{lem:transformation-container-packing-old}
For any choice for the parameters $\epsilon>0$ and $\epsilon_{\mathrm{large}}>0$ there is a value $\Gamma(\epsilon, \epslarge)\in \N$ 
such that for any $\epsthin>0$ there exist sets $I',I''\subseteq \OPT_{\mathrm{hor}}\cup\OPT_{\mathrm{ver}}$ with $p(I')+p(I'')\ge(1-\epsilon)p(\OPT_{\mathrm{hor}}\cup\OPT_{\mathrm{ver}})$ and $|I''|=O_{\eps,\epslarge,\epsthin}(1)$
such that
\begin{itemize}
\item there exists a subset $I'_{P}\subseteq I'$ for which there is a container
packing with at most $(\frac{1}{\epsthin})^{\Gamma(\epsilon, \epslarge)}$
containers
such that each container is labeled horizontal or vertical, 
and
\item for the remaining items $I'_{T}=I'\setminus I'_{P}$ there exists
a packing inside $[0,N]\times[0,\epsthin N)$ (when
rotations are allowed).
\end{itemize}
\end{lemma}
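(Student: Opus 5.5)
The plan is to retrace the corridor-to-container transformation of \cite{galvez2021approximating} and track how the parameter $\epsthin$ enters the construction. We start from the packing given by Lemma~\ref{lem:corr-decomposition-cardinality} applied to $\OPT_{\mathrm{hor}}\cup\OPT_{\mathrm{ver}}$ with $\delta=\epslarge$. This yields $O_{\eps,\epslarge}(1)$ corridors, each of width at most $\epslarge N$ and with at most $1/\eps$ bends. It also yields $O_{\eps,\epslarge}(1)$ crossing items, which we put into $I''$.

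First, we partition each corridor into subcorridors, using the boundary curves and private regions defined above. Following \cite{galvez2021approximating}, we then process the subcorridors in each corridor via the \emph{shifting/stair argument}. For each chain of subcorridors we delete the items in one of $1/\eps$ groups. The groups are obtained by slicing each subcorridor into $1/\eps$ strips parallel to its long side. We choose the group so that the deleted profit is at most $\eps\, p(\cdot)$. After this deletion, each subcorridor's private region can be cut into $O_{\eps}(1)$ pieces. Each piece is a monotone staircase-shaped region containing only horizontal (resp.\ vertical) items, all spanning a length of at least $\epslarge N$ along the subcorridor.

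Second, we convert each staircase region into boxes. We sort the items in a horizontal piece by width and do a linear-grouping style rounding. We cut the region at heights where the staircase steps. The items that are too "irregular" form a thin residual layer: these are the $O(1)$ per-step boundary items together with the items lying in the thin top fraction of each group. The crucial parameter choice is to make the residual total thickness (height for horizontal pieces, width for vertical ones) at most $\epsthin N$ divided by the number of pieces. The number of resulting horizontal/vertical containers is then $(1/\epsthin)^{O(1)}$ per piece, since we need $\mathrm{poly}(1/\epsthin)$ groups to make the rounding loss this small. Summing over the $O_{\eps,\epslarge}(1)$ pieces gives $(1/\epsthin)^{\Gamma(\eps,\epslarge)}$ containers. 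The items in these containers form $I'_P$; the residual items form $I'_T$. Any remaining $O_{\eps,\epslarge,\epsthin}(1)$ items that straddle cuts go to $I''$.

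Third, we pack $I'_T$ into $[0,N]\times[0,\epsthin N)$. Residual horizontal items from the different pieces are stacked on top of each other; each has width at most $N$, and their total height is at most $\tfrac12\epsthin N$. Residual vertical items are rotated by $90^\circ$ so that they become horizontal, and then stacked the same way. This is exactly where rotations are needed. The two stacks together fit within height $\epsthin N$.

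The main obstacle I expect is the second step. We must guarantee that the residual from the rounding in each staircase piece has thickness proportional to $\epsthin$, and not merely to $\eps$, while the profit loss stays at $O(\eps)$. To achieve this, I would first try harmonic/geometric grouping by the step boundaries: we put only the thinnest $\epsthin$-fraction of each step into $I'_T$, and keep everything else in boxes aligned with the steps.
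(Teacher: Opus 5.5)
\textbf{There is a genuine gap, in your second step, and that step carries the whole content of the lemma: it never yields a bounded number of containers.}

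\begin{itemize}
\item \emph{Unbounded number of boxes.} The boundary curves of a private region follow item edges, so a staircase can have $\Theta(n)$ steps. Cutting ``at heights where the staircase steps'' and keeping ``boxes aligned with the steps'' therefore gives unboundedly many boxes.
\item \emph{Unbounded boundary items.} For the same reason, the ``$O(1)$ per-step boundary items'' are unboundedly many. They can go neither into $I''$ nor, without a thickness bound, into $I'_T$.
\item \emph{Linear grouping does not help.} Horizontal items in a subcorridor may lie side by side, so a width-sorted stack need not fit. Rounding widths up needs room the staircase does not provide.
\item \emph{Obtuse subcorridors.} Your first step claims every private region splits into $O_\eps(1)$ one-sided staircases. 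This fails for a Z-shaped (obtuse) subcorridor: one boundary staircase widens the region towards the top and the other towards the bottom, so a strip need not fit into the rectangular core of its neighbour in either direction. Cutting it by a line perpendicular to its long side crosses an unbounded number of short-sided items.
\item \emph{Circular parameter choice.} In the construction that actually works (below), the number of processed pieces grows with $1/\epsthin$. So budgeting each residual as ``$\epsthin N$ divided by the number of pieces'' is circular.
\end{itemize}

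\textbf{The missing idea.} The argument the paper relies on comes from \cite{galvez2021approximating}. The paper's proof of \Cref{lem:transformation-container-packing-new} is the same argument with geometrically instead of equidistantly spaced lines.
\begin{itemize}
\item Only ever process an \emph{end} subcorridor: the first subcorridor of an open corridor, or the bottom-most horizontal subcorridor of a closed one. Its private region has a single monotone staircase side (two, both narrowing upward, for the closed case), so its width is non-increasing away from its longer edge.
\item Draw lines parallel to that edge at distances $\epsthin h, 2\epsthin h,\dots$. Each line meets at most $1/\epslarge$ items, since they are long, and these go to $I''$.
\item The strip adjacent to the longer edge becomes $I'_T$. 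Every other strip is moved one strip towards that edge, into the rectangular core of the strip below; monotonicity guarantees this core exists. This gives $O(1/\epsthin)$ boxes.
\item Paths from the box corners through the following subcorridors cut the rest of the corridor into $O(1/\epsthin)$ corridors with one bend fewer. Recursing over at most $1/\eps$ bends gives $(1/\epsthin)^{O(1/\eps)}$ boxes per corridor. Each box becomes $O_\eps(1)$ containers via \Cref{lem:box-to-container}.
\item For $I'_T$ there are two options. First, the corridors created inside an original subcorridor partition its width, so its thin layers have total thickness at most $\epsthin\epslarge N$; stack these layers as regions, not item by item, rotating the vertical ones. Alternatively, argue as the paper does: the thin items have total area at most $2\epsthin N^2$, and each one, rotated long side horizontal, has height at most $\epsthin\epslarge N$. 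Steinberg's algorithm then packs them into height $4\epsthin N$, and one rescales $\epsthin$.
\end{itemize}
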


We want to improve the dependence on $1/\epsthin$
in Lemma~\ref{lem:transformation-container-packing-old}. Therefore,
we present an alternative transformation of a given $(1+\eps)$-approximate
packing within $O_{\eps, \epslarge}(1)$ corridors to a container packing such
that the number of containers has only a \emph{polylogarithmic} dependence
on $1/\epsthin$. This improvement may seem minor;
however, we will see later that it makes a crucial difference in the
remainder of our reasoning. Our adjustment is inspired by (another)
transformation of a packing within corridors to a container packing
with at most $(\log N)^{O_{\eps}(1)}$ containers \cite{adamaszek2015knapsack} (which would be too much for our purposes though).

\begin{lemma}
\label{lem:transformation-container-packing-new}
For any choice for the parameters $\epsilon>0$ and $\epsilon_{\mathrm{large}}>0$ there is a value $\Gamma'(\epsilon, \epsilon_{\mathrm{large}})\in \N$
such that for any $\epsthin>0$ there exist sets $I',I''\subseteq \OPT_{\mathrm{hor}}\cup\OPT_{\mathrm{ver}}$ with $p(I')+p(I'')\ge(1-\epsilon)p(\OPT_{\mathrm{hor}}\cup\OPT_{\mathrm{ver}})$ and $|I''|=O_{\eps,\epslarge,\epsthin}(1)$ such that
\begin{itemize}
\item there exists a subset $I'_{P}\subseteq I'$ for which there is a container
packing with at most $(\log\frac{1}{\epsthin})^{\Gamma'(\epsilon, \epsilon_{\mathrm{large})}}$
containers such that each container is labeled horizontal or vertical, and
\item for the remaining items $I'_{T}=I'\setminus I'_{P}$ there exists
a packing inside $[0,N]\times[0,\epsthin N)$ (when
rotations are allowed).
\end{itemize}
\end{lemma}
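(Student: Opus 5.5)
We start from the corridor packing given by Lemma~\ref{lem:corr-decomposition-cardinality}, applied to $\OPT_{\mathrm{hor}}\cup\OPT_{\mathrm{ver}}$ with $\delta=\epslarge$. This yields $O_{\eps,\epslarge}(1)$ corridors, each of width at most $\epslarge N$ and with at most $1/\eps$ bends. The $O_{\eps,\epslarge}(1)$ crossing items go into $I''$. As in \cite{galvez2021approximating}, we first sacrifice an $O(\eps)$-fraction of the profit so that each corridor can be cut into subcorridors whose private regions are separated by boundary curves. Concretely, we guess for each corridor a shift among $1/\eps$ classes and delete the items of one class. This turns each corridor into a chain of \emph{open} pieces with $O_\eps(1)$ subcorridors. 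Every horizontal (resp.\ vertical) item then lies in the private region of a single horizontal (resp.\ vertical) subcorridor.

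Inside each subcorridor $S$ we follow the approach of \cite{adamaszek2015knapsack} rather than the fine-grid approach of \cite{galvez2021approximating}. Fix a horizontal subcorridor $S$ of width $w_S\le\epslarge N$. Its private region is delimited by two monotone boundary curves. We stack its items by their vertical order. For each item we record how far its left and right ends extend into the bend regions. We then group the items geometrically in the "depth" coordinate: round the lengths and positions of the boundary profiles to powers of $(1+\eps)$ between $\epsthin N$ and $N$. This produces $O(\log_{1+\eps}(1/\epsthin))=O_\eps(\log\frac1{\epsthin})$ classes.

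Consider the items whose width is below $\epsthin N$ relative to the relevant profile scale, or that lie in the final thin layer near the inner boundary of total height $\le\epsthin N$ per corridor portion. These are collected into $I'_T$. Since there are $O_{\eps,\epslarge}(1)$ subcorridors, each contributing a stack of height $O(\epsthin N/\text{const})$, they can be laid horizontally (rotating vertical ones) inside $[0,N]\times[0,\epsthin N)$ after rescaling $\epsthin$ by a constant. The remaining items of $S$ are packed into a staircase of boxes, one per geometric class. We apply the standard "shifting/staircase" argument (remove one of $1/\eps$ consecutive height groups, charge the loss to $\eps p$) so that each class becomes a horizontal container.

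Recursing over the $\le 1/\eps$ bends of each corridor multiplies the class counts. Each bend interacts with its neighbor subcorridor's profile, which gives at most $(\log\frac1{\epsthin})^{O(1/\eps)}$ boxes per corridor. Summing over $O_{\eps,\epslarge}(1)$ corridors gives $(\log\frac1{\epsthin})^{\Gamma'(\eps,\epslarge)}$ horizontal/vertical containers, as claimed.

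The main obstacle is the middle step: showing that geometric rounding of the boundary-curve profiles loses only an $\eps$-fraction of profit (not merely area) while keeping the containers disjoint from those of adjacent subcorridors. Items near the tips of a bend can be arbitrarily short relative to the profile. The plan is to handle this by shifting the curve inward by a $(1+\eps)$ factor in depth, which yields a nested dyadic structure. Whatever cannot be charged this way is declared to be in the thin strip $I'_T$, using the fact that its total thickness across all corridors is $O(\epsthin N)$.
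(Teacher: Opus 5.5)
\textbf{There is a genuine gap.} Your outline has the right architecture: corridor decomposition, geometric classes in the depth direction giving $O(\frac{1}{\eps}\log\frac{1}{\epsthin})$ boxes per subcorridor, recursion over at most $1/\eps$ bends, and leftovers sent to $I'_T$. But the step you flag as the main obstacle is the content of the lemma, and the mechanism you offer for it fails.

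Take a horizontal subcorridor whose private region narrows as the depth $d$ (distance from the longer edge) grows. The items of a class between depths $d_j$ and $d_{j+1}$ are only known to fit in a rectangle of width $w(d_j)$. Such a rectangle is guaranteed to stay inside the private region only at depth at most $d_j$. If it pokes above $d_j$, it enters the bend and may overlap unboundedly many items of the neighbouring subcorridor. Hence every class must move towards the longer edge by essentially its own height.

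Deleting one class out of every $1/\eps$ consecutive ones does not provide this. It frees height $h(\mathcal{R}_k)$ once, while the class directly above already has height about $(1+\eps)h(\mathcal{R}_k)$, and every higher class needs room too. Rounding the lengths of the boundary profiles, or ``shifting the curve inward,'' does not fix this. Putting the thin layer at the inner (short) edge is also the wrong side, since shifting items inward moves them into narrower parts of the region.

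What is missing is the paper's shrink-and-shift-outward step.
\begin{enumerate}
\item Draw lines at depths $\epsthin(1+\eps)^j h$, where $h$ is that subcorridor's own height (relative values, not absolute ones between $\epsthin N$ and $N$). This gives regions $\mathcal{R}_1,\mathcal{R}_2,\dots$ going inward from the longer edge.
\item The outermost region $\mathcal{R}_1$, of height $\epsthin h$ and the widest, goes to $I'_T$.
\item Cut each other region $\mathcal{R}_{j+1}$ into $1/\eps$ equal horizontal strips. Put the $O_{\eps,\epslarge}(1)$ items on the cut lines into $I''$ and delete the least profitable strip. The rest fits into a box of width $w_j$ and height $(1-\eps)h(\mathcal{R}_{j+1})\le h(\mathcal{R}_j)$, which lies inside $\mathcal{R}_j$ by monotonicity of the boundary curve.
\item Handle the interaction with the next subcorridor, which you leave vague, by paths from the top corners of these boxes through the following subcorridors. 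These split the rest of the corridor into $O(\frac{1}{\eps}\log\frac{1}{\epsthin})$ corridors with one bend fewer, each path cutting only $O_{\eps,\epslarge}(1)$ items. Then recurse.
\item Open closed corridors by first processing their bottommost horizontal subcorridor. Your preliminary deletion of ``one of $1/\eps$ classes'' is therefore unnecessary, and it is unjustified as stated.
\end{enumerate}

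Your bound on $I'_T$ also fails. It counts $O_{\eps,\epslarge}(1)$ subcorridors, but the recursion produces $(\log\frac{1}{\epsthin})^{O(1/\eps)}$ pieces, each with its own thin layer. With absolute layers of height $\epsthin N$, the total would not be $O(\epsthin N)$. With relative layers, each thin region is an $\epsthin$-fraction of its piece, so $a(I'_T)\le 2\epsthin N^2$. Since each such item has short side at most $\epsthin\epslarge N$, Steinberg's algorithm packs $I'_T$ into a strip of height $4\epsthin N$, after which you rescale $\epsthin$.
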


Before proving \Cref{lem:transformation-container-packing-new}, we first show the following lemma which states that any packing of a subset of $\OPT_{\mathrm{hor}}$ (resp. $\OPT_{\mathrm{ver}}$) inside a rectangular region can be converted to a container packing using only horizontal (resp. vertical) containers, by discarding a negligible fraction of items.

\begin{lemma}
\label{lem:box-to-container}
    Let $\delta>0$ and let $I_B \subseteq \OPT_{\mathrm{hor}}$ be a set of items packed inside a box $B$. Then there exist subsets $I'_B, I''_B \subseteq I_B$ with $p(I'_B)+p(I''_B)\ge (1-\delta)p(I_B)$ and $|I''_B|=O_{\delta,\epslarge}(1)$ such that $I'_B$ can be packed into $O_{\delta}(1)$ horizontal containers all lying inside $B$. An analogous claim holds when $I_B \subseteq \OPT_{\mathrm{ver}}$.
\end{lemma}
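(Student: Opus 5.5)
We aim to turn the packing of $I_B$ inside $B$ into a packing with only $O_\delta(1)$ horizontal containers. Let $B$ have width $w_B$ and height $h_B$. Every item of $I_B\subseteq\OPT_{\mathrm{hor}}$ has width greater than $\epslarge N\ge \epslarge w_B$. Hence any vertical line crosses items whose total width is huge relative to the item size, and at most $1/\epslarge$ items can lie side by side on any horizontal line through $B$. The plan has three steps: first free up vertical room in $B$ by discarding a small set of items, then apply the resource augmentation lemma (Lemma~\ref{lem:resource-augmentation}), and finally argue that the resulting containers can be taken to be horizontal.

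\emph{Creating slack.} Divide $B$ into $\lceil 1/\delta\rceil$ horizontal strips of height $\delta h_B$ each. Assign every item to the strip that contains its bottom edge. Some strip $S$ receives items of total profit at most $\delta\, p(I_B)$, up to rounding. Delete all items assigned to $S$. An item starting below $S$ may still protrude into $S$, but the region of $S$ lying above all items that intersect its lower boundary is now free. This is not yet a clean empty horizontal strip. To obtain one, we instead use the standard shifting argument: every item that crosses the bottom line of $S$ is pushed up as far as possible, which yields an empty horizontal region of height $\delta h_B$ spanning the full width of $B$. If this pushing turns out to be problematic, we fall back on the following alternative. Consider the at most $1/\epslarge$ items crossing a fixed horizontal line; they form $O(1/\epslarge)$ items, which we place in $I''_B$, and together with the strip items this gives an honest empty strip. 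After this step $B$ contains an empty strip of height $\Omega(\delta)h_B$. Equivalently, the remaining items fit into a box of size $w_B\times h_B/(1+\delta)$, and we have lost profit at most $O(\delta)p(I_B)$ plus $O_{\delta,\epslarge}(1)$ items.

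\emph{Resource augmentation.} Apply Lemma~\ref{lem:resource-augmentation} with $\epsilon_{\text{ra}}=\delta$ to the box $w_B\times h_B/(1+\delta)$, augmenting its height. This gives a container packing inside $B$ with $O_\delta(1)$ containers that keeps profit at least $(1-O(\delta))$ of the remaining items.

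\emph{Containers are horizontal.} Containers produced by the lemma may be vertical or area containers. Every item of $I_B$ is wider than $\epslarge N\ge\epslarge w_B$. A vertical container holds items side by side, so it contains at most $1/\epslarge$ items, and we can move these into $I''_B$; there are $O_\delta(1)$ such containers, for $O_{\delta,\epslarge}(1)$ items in total. An area container of width $w$ only holds items of width at most $\epsilon w$. We re-pack such a container as a single horizontal container by sorting its items by height and stacking them, which is valid because each item's width is at most the container width. The stacked heights may exceed the container height. In that case we use NFDH-type reasoning: since every item has width at least $\epslarge N$, the container width is at least $\epslarge N/\epsilon$, and each NFDH shelf holds at most $1/\epslarge$ items. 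We can therefore replace the container by its $O(1)$-many shelves if the count is bounded, or otherwise discard the least profitable $\delta$ fraction of shelves. We expect this last step to be the main technical obstacle.

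Rescaling $\delta$ completes the argument. The vertical case follows by symmetry, exchanging the roles of the axes.
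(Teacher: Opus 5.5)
Your first two steps are essentially the paper's whole proof. The paper draws $1/\delta-1$ equidistant horizontal lines, puts the at most $1/(\delta\epslarge)$ items crossing them into $I''_B$, deletes the least profitable strip, and applies Lemma~\ref{lem:resource-augmentation} to the freed strip of height $\delta\, h(B)$. Your fallback is a correct and slightly more economical version of this: you assign items by bottom edge and put only the at most $1/\epslarge$ items crossing the bottom line of $S$ into $I''_B$. Drop the ``push up'' idea, since pushed items are blocked by items starting at the top of $S$ and the obstruction merely moves. The paper then stops, tacitly treating the containers of Lemma~\ref{lem:resource-augmentation} as horizontal, although that lemma says nothing about container types. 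You rightly notice this, and your treatment of vertical containers (fewer than $1/\epslarge$ items each, sent to $I''_B$) is fine.

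The area-container step, which you yourself flagged, does not work as written. Stacking all items of an area container $C$ in one column needs total height up to roughly $a(C)/(\epslarge N)$. This can exceed $h(C)$ by a factor of about $w(C)/(\epslarge N)\ge 1/\epsilon$. NFDH shelves are rows of items placed side by side, i.e.\ \emph{vertical} containers in the paper's terminology. Their number is also unbounded, because item heights can be arbitrarily small relative to $h(C)$. Discarding a $\delta$-fraction of the shelves still leaves unboundedly many.

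The missing idea is to swap the axes. Sort $I_C$ by non-increasing width and stack items bottom to top in columns, each as wide as its first item, opening a new column to the right when the next item does not fit. Every column is a horizontal container. Each column is wider than $\epslarge N$ while $w(C)\le N$, so there are fewer than $1/\epslarge$ of them. By Lemma~\ref{lem:NFDH-guarantee} with the axes exchanged, all items are packed once $a(I_C)\le(1-2\epsilon)a(C)$. You can enforce this by splitting $I_C$ greedily into groups of area at least $2\epsilon\, a(C)$ and deleting the least profitable full group. This loses an $O(\epsilon)$-fraction of $p(I_C)$, which is within budget for the choice $\delta=\epsilon$ used in the paper.

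This yields $O_{\delta,\epslarge}(1)$ rather than $O_\delta(1)$ horizontal containers. That is harmless for Lemma~\ref{lem:transformation-container-packing-new}, whose container bound may depend on $\epslarge$ anyway.
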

\begin{proof}
    Let $w(B), h(B)$ denote the width and height of Box $B$. We draw $1/\delta -1$ equidistant horizontal lines inside $B$, partitioning the interior of $B$ into $1/\delta$ strips, each of width $w_B$ and height $\delta\cdot h(B)$. Since the width of each item of $\OPT_{\text{hor}}$ is at least $\epslarge N$, each line intersects at most $1/\epslarge$ horizontal items. Thus the total number of intersected items is bounded by $\frac{1}{\delta\cdot \epslarge}=O_{\delta,\epslarge}(1)$, which we assign to the set $I''_B$. Finally, we discard all items lying in the minimum profitable strip among the $1/\delta$ strips, thereby discarding a profit of at most $\delta\cdot p(I_B)$. This frees up a strip of height $\delta\cdot h(B)$ inside $B$, which we use for resource augmentation in order to obtain a container packing of profit at least $(1-O(\delta))p(I_B)$ using Lemma \ref{lem:resource-augmentation}. 
    
    The case for $I_B \subseteq \OPT_{\mathrm{ver}}$ can be shown analogously. This completes the proof. 
\end{proof}

We are now ready to prove \Cref{lem:transformation-container-packing-new}.

\begin{proof}[Proof of \Cref{lem:transformation-container-packing-new}]    
We apply Lemma \ref{lem:corr-decomposition-cardinality} to the optimal packing of the items in $\OPT_{\mathrm{hor}} \cup \OPT_{\mathrm{ver}}$, thus obtaining the sets $\OPT_{\text{corr}}$ and $\OPT_{\text{corr}}^{\text{cross}}$ with $|\OPT_{\text{corr}}^{\text{cross}}|=O_{\epsilon,\epslarge}(1)$, and a collection of $O_{\epsilon,\epslarge}(1)$ corridors in which the items of $\OPT_{\text{corr}}\setminus\OPT_{\text{corr}}^{\text{cross}}$ are packed. We assign the items of $\OPT_{\text{corr}}^{\text{cross}}$ to the set $I''$. Note that since the width of the corridors is at most $\epslarge N$, the horizontal subcorridors do not contain any vertical item and vice versa.

We now {\em process} the corridors into boxes. First, we discuss the processing for {\em open corridors} (see \Cref{fig:open-corridor}). Let $C_1$ be the first subcorridor of an open corridor $C$ and w.l.o.g.~assume that $C_1$ is horizontal with the shorter horizontal edge being the top one, and the next bend lying to the right of $C_1$ (see Figure \ref{fig:process}). Let $h$ be the height of $C_1$. For each $j=0,1,2,\ldots$ we draw a horizontal line at distance 
$\epstiny(1+\epsilon)^jh$ from the bottom horizontal edge of $C_1$. This partitions the private region of $C_1$ into $O(\frac{1}{\epsilon}\log\frac{1}{\epsthin})$ regions as shown in Figure \ref{fig:process}. We assign the items of $\OPT_{\mathrm{hor}}$ intersected by these lines to $I''$, noting that they are only $O_{\epsilon,\epsilon_{\text{large}},\epsilon_{\text{thin}}}(1)$ many.

Let $\RR_1,\RR_2,\ldots$ denote these regions from bottom to top. We mark as \textit{thin} the items in $\RR_1$, remove them from the packing and assign them to the set $\optthin$.
For each $j$, let $h(\RR_j)$ be the height of region $\RR_j$ and $w_j$ be the width of the top horizontal edge of $\RR_j$. Due to the geometrically increasing heights, notice that $h(\RR_{j+1})\le (1+\epsilon)h(\RR_j)$, for all $j\ge 1$, and due to the monotonic boundary curve between subcorridors (see \Cref{fig:boundarycurve}), $w_{j} \ge w_{j+1}$, for all $j \ge 1$.  Thus, $\RR_j$ is completely contained in a box of height $h(\RR_j)$ and width $w_{j-1}$; also, within $\RR_{j-1}$ we can place a rectangular box of height $h(\RR_j){/(1+\eps)\ge (1-\eps)h(\RR_j)}$ and width $w_{j-1}$, for each $j=2, 3, \dots$.
Hence if we are able to free up a horizontal strip of height $\epsilon\cdot h(\RR_{j+1})$ inside $\RR_{j+1}$, the remaining items in $\RR_{j+1}$ can be packed inside a box $B_j$ of width $w_j$ and height $(1-\epsilon)h(\RR_{j+1}) < h(\RR_j)$. 
This box can be completely placed inside the region $\RR_j$. 
To this end, we draw $1/\epsilon-1$ equidistant horizontal lines inside $\RR_{j+1}$ and put the at most $O_{\epsilon,\epsilon_{\text{large}}}(1)$ items of $\OPT_{\mathrm{hor}}$ intersected by these lines into $I''$.
These lines divide $\RR_{j+1}$ into $1/\epsilon$ strips, each of height $\eps\cdot h(\RR_{j+1})$. Among these strips, the strip that contains items with the least profit is called the {\em minimum profitable strip}. 
We then delete all items inside the minimum profitable strip, thereby discarding items whose profit is at most an $\epsilon$-fraction of the total profit of the items in $\RR_{j+1}$. 
Then we get packing of the remaining items in $\RR_{j+1}$ in a box of height $(1-\eps)h(\RR_{j+1})$ and width at most $w_j$. 
Altogether, we obtain a collection $\RR(C_1)$ of $O(\frac{1}{\epsilon}\log\frac{1}{\epstiny})$ boxes corresponding to $C_1$.

Next, we partition the remaining subcorridors. Let $C_2, C_3,\ldots$ be the sequence of subcorridors next to $C_1$. For each $B_j \in \RR(C_1)$, we create a path $p_j$ starting at the top right corner of $B_j$. The first part of $p_j$ is a vertical line segment completely inside $C_2$ joining the top right corner of $B_j$ and extending till it intersects the boundary curve between $C_2$ and $C_3$; from there we construct a horizontal line segment inside $C_3$ extending until it intersects the boundary curve between $C_3$ and $C_4$, and so on.
These collection of paths $\{p_j\}_{j: B_j \in \RR(C_1)}$ yields a decomposition of the region of $C$ not occupied by the boxes $\RR(C_1)$
into $O(\frac{1}{\epsilon}\log\frac{1}{\epstiny})$ smaller corridors each with one bend less. We assign all items of $\OPT_{\mathrm{hor}}\cup \OPT_{\mathrm{ver}}$ intersected by the paths $\{p_j\}_{j: B_j \in \RR(C_1)}$  to the set $I''$ noting that there are only $O_{\epsilon,\epsilon_{\text{large}},\epsilon_{\text{thin}}}(1)$ of them.
We continue this processing recursively, removing the thin items in each subcorridor (which we assign to the set $\optthin$) and obtaining boxes from the subcorridors.

\begin{figure}
    \centering
    \includegraphics[width=0.7\linewidth]{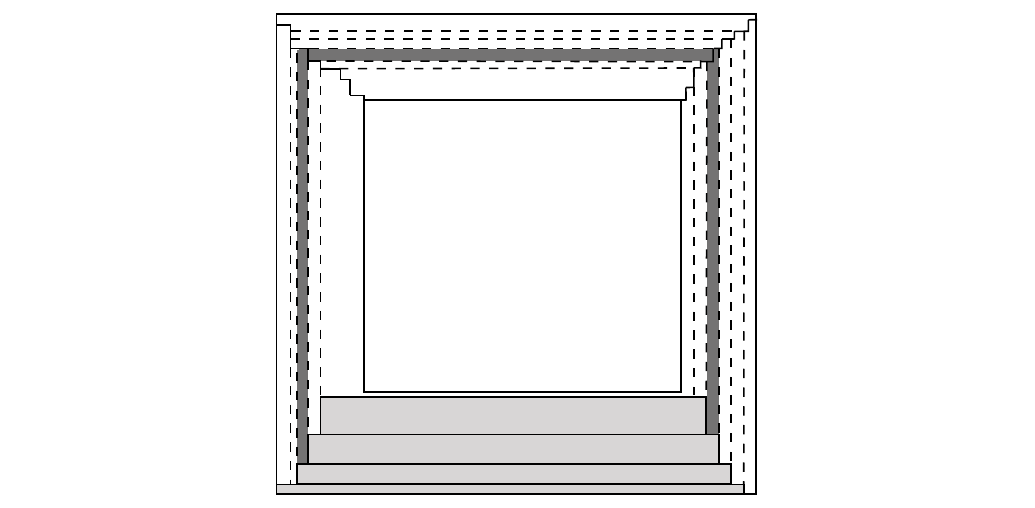}
    \caption{Processing a closed corridor.}
    \label{fig:closed-process}
\end{figure}

Now we discuss the processing of closed corridors (see \Cref{fig:closed-corridor}). For a closed corridor $C$, we let $C_1$ be the bottom-most horizontal subcorridor of $C$, i.e., the one with the lowest $y$-coordinate of the bottom edge. As in the case of an open corridor, we draw lines at geometrically increasing heights inside $C_1$, mark the items in the bottommost region as thin (i.e., put them in $\optthin$), and obtain a collection $\RR(C_1)$ of $O(\frac{1}{\epsilon}\log\frac{1}{\epstiny})$ boxes. For each box $B_j \in \RR(C_1)$, we construct two paths $p_j^{\text{left}}$ and $p_j^{\text{right}}$, starting from the top left and top right corners of $B_j$, respectively, similar to the construction of the path $p_j$ described before. The path $p_j^{\text{left}}$ (resp. $p_j^{\text{right}}$) ends after hitting the right (resp. left) edge of some box in $\RR(C_1)$.
These paths form a partition of the area of $C$ outside the boxes $\RR(C_1)$ into $O(\frac{1}{\epsilon}\log\frac{1}{\epstiny})$ smaller open corridors (see Figure~\ref{fig:closed-process}), to which we then apply the corridor processing described previously for open corridors.

Since each corridor has at most $1/\epsilon$ bends, the number of boxes obtained from each corridor is $(\frac{1}{\epsilon}\log\frac{1}{\epstiny})^{O(1/\epsilon)}\le (\log\frac{1}{\epstiny})^{O(1/\epsilon^2)}$, as $1/\eps \le (\log\frac{1}{\epstiny})^{O(1/\epsilon)}$. By Lemma \ref{lem:corr-decomposition-cardinality}, the number of corridors was $O_{\epsilon,\epslarge}(1)$, and therefore the total number of boxes in the knapsack can be bounded by $(\log \frac{1}{\epsthin})^{O_{\epsilon,\epslarge}(1)}$. 

Next, we apply \Cref{lem:box-to-container} with $\delta=\epsilon$ to each box $B$ thereby obtaining a packing of a set $I'_P$ into $(\log \frac{1}{\epsthin})^{O_{\epsilon,\epslarge}(1)}$ containers. We define $I'=I'_P\cup I'_T$. For each box $B$ we assign the items in the set $I''_B$ to $I''$, noting that they are only $O_{\eps,\epslarge,\epsthin}(1)$ many. It remains to show that the removed thin items, i.e., all items we assigned to $I'_T$, can all be packed inside a strip of sufficiently small width. To this end, observe that in any subcorridor, the items of $I'_T$ come from a thin strip whose area is at most an $\epsthin$-fraction of the area of the subcorridor.
Since the total area of all subcorridors is at most $2N^2$, the total area of the items of $I'_T$ is bounded by $2\epsthin N^2$. We orient each item so that its longer side is horizontal. Since by Lemma \ref{lem:corr-decomposition-cardinality} the height $h$ of each horizontal subcorridor is bounded by $\epslarge N$ and each item in $I'_T$  fits
in a box of height at most $\epsthin h$, the height of each item is trivially at most $\epsthin\cdot\epslarge N < \epsthin N$. Therefore, using an algorithm of Steinberg \cite{steinberg1997strip} (see \Cref{subsec:stein} for details on Steinberg's algorithm), the items of $I'_T$ can all be packed inside a strip of width $N$ and height $4\epsthin N$. Lemma \ref{lem:transformation-container-packing-new} follows by rescaling $\epsthin$ (i.e., replacing $\epsthin$ by $\epsthin/4$).
\end{proof}

\begin{figure}
    \centering
    \includegraphics[width=\linewidth]{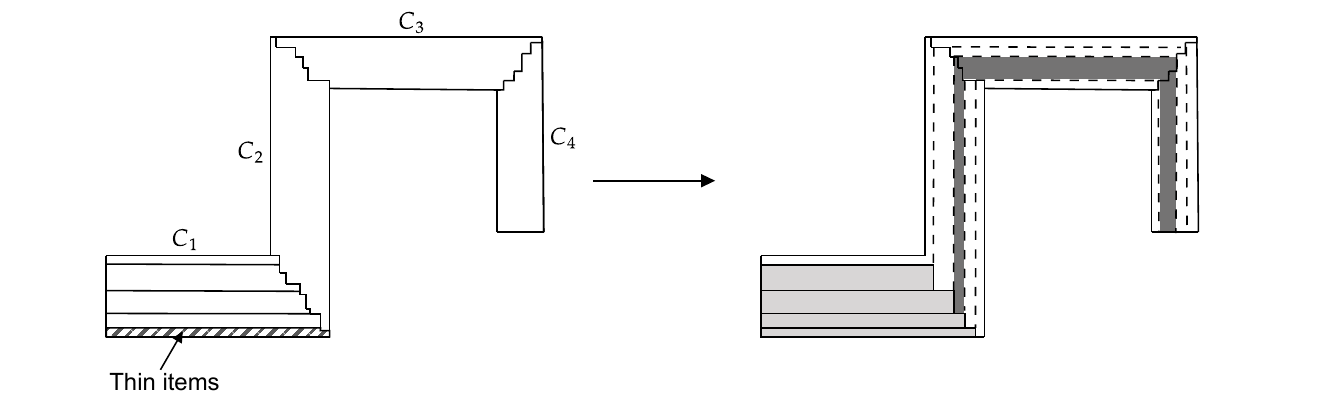}
    \caption{Processing a corridor into containers.}
    \label{fig:process}
\end{figure}

We invoke Lemma~\ref{lem:transformation-container-packing-new} with
$\epsthin:=(\eps/2)^{(2/\eps)^{\frac{\Gamma'(\eps,\epslarge)}{\eps\cdot \epslarge}}}$. Consider the sets $I'$, $I'_{P}$,
and $I'_{T}$ due to Lemma~\ref{lem:transformation-container-packing-new}
and let~$\C$ denote the at most $(\log\frac{1}{\epsthin})^{\Gamma'(\epsilon, \epsilon_{\mathrm{large}})}$
containers of the corresponding container packing for $I'_{P}$. Our
goal is to transform this container packing further such that we can
add the items in $I'_{T}$ to this packing. In the process, we will sacrifice at most $\eps\cdot|I'_{P}|\le O(\eps)\cdot |\OPT|$ items from $I'_{P}$.

We classify the containers in $\C$ as thick and thin containers. For a container $C\in\C$, we denote by $h(C)$
and $w(C)$ its height and width, respectively. We will choose two
constants $\epscsmall,\epsclarge$ with $0<\epscsmall<\epsclarge\le\epsilon$
and define that
\begin{itemize}
\item a horizontal container $C\in\C$ is \emph{thick} if $h(C)\ge\epsclarge N$
and \emph{thin} if $h(C)\le\epscsmall N$,
\item a vertical container $C\in\C$ is \emph{thick} if $w(C)\ge\epsclarge N$
and \emph{thin} if $w(C)\le\epscsmall N$,
\item a container $C\in\C$ is \emph{intermediate }if it is neither thick
nor thin.
\end{itemize}
The next lemma shows that there are choices for $\epscsmall,\epsclarge$
such that the intermediate containers have only very few items (similar
to Lemma~\ref{lem:delete-intermediate-items}).
We will need later that all thin horizontal (respectively, vertical) containers together have a total height (respectively, width) of at most $\Theta(\epsilon) \cdot \epsclarge N$, which is why we require that $\epsclarge/\epscsmall\ge 3|\C|/\epsilon$. This can be achieved with a standard shifting argument. However, we will also need that $\epsthin \le \frac{\eps}{6}\cdot \epsclarge$, which is why we need to choose $\epsthin$ sufficiently small. However, a smaller choice for $\epsthin$ implies a larger bound for $|\C|$ which, due to the shifting step, implies that $\epsclarge$ might be smaller; but then, the bound $\epsthin \le \frac{\eps}{6}\cdot \epsclarge$ might not be satisfied anymore. Our improved bound of
 $|\C|\le(\log\frac{1}{\epsthin})^{\Gamma'(\epsilon, \epsilon_{\mathrm{large})}}$ from Lemma~\ref{lem:transformation-container-packing-new} will ensure that we can satisfy our requirements despite this circular dependence of the mentioned quantities.

\begin{lemma}
\label{lem:weighted-classification}
    There exist values $\epscsmall,\epsclarge$,
with $\epslarge\ge\epsclarge>\epscsmall\ge6\epsthin/\epsilon$, 
and $\epsclarge/\epscsmall\ge 3|\C|/\epsilon$,
such that the total profit of items in intermediate containers is
at most $\epsilon\cdot p(\OPT)$.
For any choice of the parameters $\eps, \epslarge$, and $\epsthin$, there is a
global set of $O_{\epsilon}(1)$ pairs that we can compute in polynomial time and that contains the pair
$(\epscsmall,\epsclarge)$.
\end{lemma}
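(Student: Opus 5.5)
This is a standard shifting argument over a geometric sequence of thresholds. Set $\rho := 3|\C|/\epsilon$, where $|\C|\le(\log\frac{1}{\epsthin})^{\Gamma'(\epsilon,\epslarge)}$. Define $k:=\lceil 1/\epsilon\rceil+1$ candidate values
\[
x_0:=\epslarge,\qquad x_{t+1}:=x_t/\rho \quad (t=0,\dots,k-1).
\]
For each $t\in\{1,\dots,k\}$ consider the pair $(\epscsmall,\epsclarge)=(x_t,x_{t-1})$. This pair satisfies $\epsclarge/\epscsmall=\rho=3|\C|/\epsilon$ and $\epsclarge\le\epslarge$ by construction.

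Under this pair, a container $C$ is intermediate iff its relevant dimension lies in $(x_t N, x_{t-1}N)$. The relevant dimension is the height for horizontal containers and the width for vertical ones. The intervals $(x_tN,x_{t-1}N)$ for $t=1,\dots,k$ are pairwise disjoint, so each container of $\C$, and hence each item of $I'_P$, is intermediate for at most one choice of $t$. Summing over $t$, the total profit of items in intermediate containers across all $k$ choices is at most $p(I'_P)\le p(\OPT)$. By averaging, some $t$ has intermediate profit at most $p(\OPT)/k\le\epsilon\, p(\OPT)$. That pair is the one we take.

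The remaining requirement, and the only real obstacle, is $\epscsmall\ge 6\epsthin/\epsilon$ for every candidate, i.e. $x_k=\epslarge\,\rho^{-k}\ge 6\epsthin/\epsilon$. Here the circularity noted before the lemma appears, since $\rho$ depends on $|\C|$, which grows with $1/\epsthin$. Write $L:=\log\frac{1}{\epsthin}$. Then $\rho\le 3L^{\Gamma'}/\epsilon$, and the condition becomes
\[
\log\frac{1}{\epsthin} \;\ge\; k\log\rho+\log\frac{6}{\epsilon\,\epslarge}
\;\approx\; \frac{1}{\epsilon}\Bigl(\Gamma'\log L+\log\frac{3}{\epsilon}\Bigr)+\log\frac{6}{\epsilon\epslarge}.
\]
The left side is $L$ and the right side is only $O_{\epsilon,\epslarge}(\log L)$. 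So the inequality holds once $L$ is a sufficiently large function of $\epsilon,\Gamma',\epslarge$. I would verify this for the concrete choice $\epsthin=(\epsilon/2)^{(2/\epsilon)^{\Gamma'/(\epsilon\epslarge)}}$. This gives $L=(2/\epsilon)^{\Gamma'/(\epsilon\epslarge)}\log(2/\epsilon)$, so $\log L\approx\frac{\Gamma'}{\epsilon\epslarge}\log\frac{2}{\epsilon}$. This is vastly smaller than $L$ for small $\epsilon$, so the routine check goes through with crude estimates. This is exactly where the polylogarithmic bound of Lemma~\ref{lem:transformation-container-packing-new} is used; a polynomial bound $|\C|=\epsthin^{-\Gamma}$ would make the right side linear in $L$ with a large coefficient, and the inequality would fail.

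Finally, the candidate set $\{(x_t,x_{t-1})\}_{t=1}^k$ has $O(1/\epsilon)$ pairs. It is determined by $\epsilon,\epslarge,\epsthin$ and the bound on $|\C|$: we use the upper bound $(\log\frac1{\epsthin})^{\Gamma'}$ in place of $|\C|$, replacing $|\C|$ by this bound in $\rho$, which only strengthens $\epsclarge/\epscsmall\ge3|\C|/\epsilon$. So the set is computable in constant time and contains a valid pair, and we guess it.
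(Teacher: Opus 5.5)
Your proposal is correct and follows essentially the paper's own proof: geometric thresholds with ratio $\frac{3}{\epsilon}(\log\frac{1}{\epsthin})^{\Gamma'}$ (the computable upper bound on $3|\C|/\epsilon$, which you also switch to), averaging over about $1/\epsilon$ disjoint ranges, and $\epscsmall\ge 6\epsthin/\epsilon$ coming from the choice of $\epsthin$. Your use of $\lceil 1/\epsilon\rceil+1$ candidates instead of $1/\epsilon$ makes that last inequality marginally more demanding, but the chosen $\epsthin$ still satisfies it easily.
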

\begin{proof}
Let $k:= \frac{3}{\eps}(\log \frac{1}{\epstiny})^{\Gamma'(\epsilon,\epslarge)} \ge \frac{3}{\eps}|\CC|$. Since the width of each subcorridor was at most $\epslarge N$, the maximum height (resp.~width) of a horizontal (resp.~vertical) container is also at most $\epslarge N$. For $j\in \mathbb{N}$, let $\CC_j \subseteq \CC$ consist of horizontal containers having height in the range $(\frac{\epslarge}{k^{j}}N,\frac{\epslarge}{k^{j-1}}N]$ and vertical containers having width in the range $(\frac{\epslarge}{k^{j}}N,\frac{\epslarge}{k^{j-1}}N]$, for $j=1,2,\dots$. For convenience, let $p(\CC_j)$ denote the total profit of items packed in the containers of $\CC_j$. Since the sets $\{\CC_j\}_{j\in \mathbb{N}}$ are disjoint, there must exist a $j^*\in [1/\epsilon]$ such that $p(\CC_{j^*}) \le \epsilon\cdot p(OPT)$. We define $\epsclarge:= \epslarge/k^{j^*-1}$ and $\epscsmall := \epslarge/k^{j^*}$. Clearly then $\epsclarge/\epscsmall = k \ge 3|\CC|/\epsilon$, and since $j^* \le 1/\epsilon$, we have $\epscsmall \ge \epslarge/k^{1/\eps} = \left. \epslarge\middle/\left(\frac{3}{\eps}(\log \frac{1}{\epstiny})^{\Gamma'(\eps,\epslarge)}\right)^{1/\eps}\right.\ge 6\epstiny/\epsilon$, where the last inequality follows by our choice of $\epsthin$.
\end{proof}

We discard all intermediate containers in $\C$ and sacrifice the
items contained in them. \emph{Temporarily}, we also remove the thin
containers in $\C$ and their corresponding items. However, we will
later put their items back in our packing, together with the items
in $I'_{T}$.

Let $\Cl\subseteq\C$ denote the thick containers in $\C$. Our goal
is to sacrifice a few of their items in order to free up an empty strip
at the top edge or at the right edge of the knapsack. We will use
this strip to pack the items from the thin containers and also the
items in $I'_{T}$. For this, we will 
\emph{shrink} \aw{the} thick
containers. This means that we reduce their respective sizes in the
``thick'' dimension by a factor of $1-\eps$. In the next lemma,
we show that we can do this by discarding only constantly many items
packed inside a thick container and reducing the number of the remaining
items by a factor of $1-\eps$. The lemma is formulated for horizontal
containers, but by rotating items a symmetric statement immediately
holds also for vertical containers.
\begin{lemma}
\label{lem:weighted-shrinking}
    Let $\delta>0$. Consider a container packing
which includes a horizontal container~$C$ of height $h(C)$
and width $w(C)$ containing a set of items $I_{C}\subseteq I$. There exist subsets $I_{C}', I''_C\subseteq I_{C}$ with $p(I_{C}')+p(I''_C)\ge(1-\delta)p(I_{C})$ and $|I''_C|=O_{\delta,\epslarge}(1)$
such that $I'_{C}$ can be packed inside a horizontal container
$C'$ of height $h(C')=(1-\delta)h(C)$ and width $w(C')=w(C)$.
\end{lemma}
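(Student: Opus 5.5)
The plan is to exploit the fact that inside a horizontal container the items are stacked one on top of the other. The packing of $I_C$ is therefore just a sequence of items $i_1,\dots,i_m$ ordered from bottom to top, with $\sum_{j} h(i_j)\le h(C)$, and each item has width at most $w(C)$. Shrinking the container then amounts to deleting a subset of the stack whose total height is at least $\delta h(C)$, at profit loss at most $\delta p(I_C)+{}$(a constant number of items). After the deletion we restack the remaining items contiguously, which again gives a valid horizontal container of width $w(C)$ and height $(1-\delta)h(C)$.

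First I handle the tall items, i.e.\ those of height greater than $\delta^2 h(C)$. There are at most $1/\delta^2$ of them, so I put them all into $I''_C$. This is certainly $O_{\delta,\epslarge}(1)$ many. If their total height already exceeds $\delta h(C)$, I am done: $I'_C$ consists of the remaining items, no profit is lost from $I'_C$ beyond $I''_C$, and the remaining items fit into height $(1-\delta)h(C)$.

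Otherwise every remaining item has height at most $\delta^2 h(C)$. I cut the stacked column into $1/\delta$ horizontal slabs of height $\delta h(C)$ each, i.e.\ I draw $1/\delta-1$ equidistant horizontal lines, as in the proof of Lemma~\ref{lem:box-to-container}. An item crossed by a line goes into $I''_C$. Each line meets at most one item, since the items are stacked, so this adds at most $1/\delta$ items. By averaging, one slab contains items of total profit at most $\delta p(I_C)$, and I discard all items lying fully inside that slab.

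Deleting that slab, together with the crossing items removed from $I_C$ into $I''_C$, frees a full vertical extent of $\delta h(C)$. Restacking the rest then fits into height $(1-\delta)h(C)$. The crossing items are not lost, because they are counted in $I''_C$ as the statement permits. The profit lost is only that slab, so $p(I'_C)+p(I''_C)\ge(1-\delta)p(I_C)$. In fact the tall-item step is unnecessary: the line argument alone already works. Each line meets at most one item, the crossing items and the cheapest slab are removed, and the freed height is at least $\delta h(C)$, even if tall items span several slabs, since those are placed in $I''_C$.

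The only mild subtlety is the freed-height accounting. Removing the crossing items together with the items strictly inside the cheapest slab must free at least $\delta h(C)$ of stack height, and this holds because the union of their vertical extents covers that slab's interval. Beyond this the proof is routine. The case of vertical containers follows by rotating.
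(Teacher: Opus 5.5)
Your proof is correct and is essentially the paper's argument: draw $1/\delta-1$ equidistant horizontal lines, put the crossed items into $I''_C$, and discard the least profitable strip. As you note yourself, the tall-item preamble is superfluous. The one difference is that you bound the crossed items by one per line using the stacking structure of the container, whereas the paper bounds them by $1/\epslarge$ per line using that horizontal items are wider than $\epslarge N$. Your bound is slightly cleaner and does not need the items to be horizontal.

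One small correction to your freed-height sentence: the removed items need not cover the slab's interval, since the slab may contain empty space. The right justification is that every surviving item avoids the open slab. Because the items are stacked, their total height is therefore at most $(1-\delta)h(C)$.
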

\begin{proof}
The proof is very similar to the proof of Lemma \ref{lem:box-to-container}. We construct $1/\delta - 1$ horizontal lines inside $C$ that partition the interior of $C$ into strips of height $\delta\cdot h(C)$. We include the horizontal items intersected by these lines in the set $I''_C$, noting that there are only $\frac{1}{\delta\cdot \epslarge}=O_{\delta,\epslarge}(1)$ of them. By the pigeonhole principle, one of the $1/\delta$ strips must contain a profit of at most $\delta\cdot p(I_C)$. By discarding the items lying in this strip, the remaining items $I'_C$ can be packed inside a shrunk container $C'$ of height $(1-\delta)h(C)$.
\end{proof}

We apply Lemma~\ref{lem:weighted-shrinking} with $\delta:=\eps$ to
each (horizontal or vertical) container $C\in\Cl$ and replace
the container by the corresponding container $C'$. Then, we push
all resulting containers down as much as possible such that they do
not intersect. Then, similarly, we push them to the left as much as
possible. Let $\Cl'$ denote the resulting containers.

\begin{figure}
    \centering
    \hspace{1.8cm}
    \includegraphics[width=0.5\linewidth]{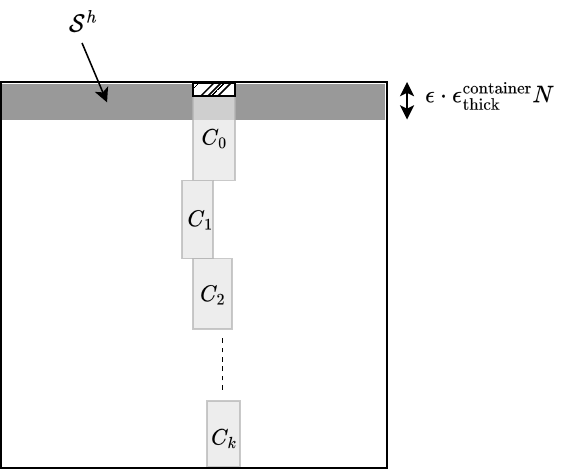}
    \caption{A chain of vertical containers.}
    \label{fig:chain}
\end{figure}

If we shrink a horizontal container $C\in\Cl$ and replace
it by the corresponding smaller container $C'$, then the \emph{absolute
}amount by which we shrink $C$ equals $h(C)-h(C')=\eps\cdot h(C)\ge\stripthickness\cdot N$
(similarly for the widths of vertical containers). Hence, after pushing
all shrunk containers down and to the left, one may hope that one
of the strips
\begin{itemize}
\item $\SM^{h}:=[0,N]\times[N-\stripthickness\cdot N,N]$\aw{, i.e.,} of height $\stripthickness \aw{\cdot N}$
at the top of \aw{$K$} and
\item $\SM^{v}:=[N-\stripthickness\cdot N,N]\times[0,N]$\aw{, i.e.,}  of width $\stripthickness\aw{\cdot N}$
on the right of \aw{$K$}
\end{itemize}
does not intersect with any container in $\Cl'$. Then, we could use
$\SM^{h}$ or $\SM^{v}$ to put back the items from the thin containers
and also items from $I'_{T}$. In the next lemma, we show that if the strip $\SM^h$ is not empty, then there must exist a ``chain''
of tightly stacked \aw{thick} vertical containers that reach from $\SM^{h}$
to the bottom of the knapsack, see Figure \ref{fig:chain}.


\begin{lemma}
\label{lem:chain-of-boxes}
    If $\SM^h$ intersects some container, then there exists a \textit{chain} of vertical containers $C_0,C_1,\ldots, C_k \in \CC'_{\text{thick}}$ for some integer $k$ such that
    \begin{enumerate}[label=(\roman*)]
        \item $C_0$ intersects $\SM^h$, and the top edge of $C_0$ does not touch the boundary of any other container;
        \item the top edge of $C_j$ touches the bottom edge of \ak{$C_{j-1}$}, for all $j\in [k]$;
        \item the bottom edge of $C_k$ touches the bottom boundary of the knapsack.
    \end{enumerate}
\end{lemma}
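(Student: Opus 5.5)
We will build the chain from the top down, using the fact that the containers in $\Cl'$ were pushed down as far as possible, together with a lower bound on the top coordinate of any container meeting $\SM^h$.

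\emph{Choice of $C_0$ and first step.} Among all containers of $\Cl'$ that intersect $\SM^h$, pick $C_0$ to have the maximum top coordinate. We first argue that the top edge of $C_0$ does not touch the bottom edge of another container. A container $C$ lying on top of $C_0$ would have bottom coordinate equal to the top of $C_0$, which is greater than $N-\stripthickness\cdot N$. Hence $C$ also intersects $\SM^h$ and has a strictly larger top, contradicting maximality.

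\emph{Why every container met along the way is vertical.} After shrinking, a thick horizontal container has height $(1-\eps)h(C)\le(1-\eps)\epslarge N$. The heights of thick horizontal containers are always at least $(1-\eps)\epsclarge N$, but that lower bound does not help. What matters is that any container touching the strip must, by the pushing, rest on a sequence of containers reaching the floor. I will argue that only vertical containers can appear in such a sequence. Push-down compactness gives: if a container's bottom is not at $y=0$, then some container touches its bottom edge from below (with overlapping $x$-projection), otherwise it could be pushed further down.

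Pushing to the left afterwards may break this property, so the plan is to apply the down-push last, or to note that the left-push preserves the $y$-coordinates. With the $y$-coordinates fixed, "touching below" might be lost after the left-shift. I would therefore reorder to push left first and then down, or iterate until stable, so that the down-compactness holds in the final configuration.

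Repeatedly choosing a supporting container gives a sequence $C_0,C_1,\dots,C_k$ with strictly decreasing bottoms, ending at the floor. The main obstacle is showing that these are all vertical. Here we use the heights. Every container has height at most $\epslarge N$ if it is horizontal. The original thick containers were inside the knapsack. We would argue by contradiction: if the chain contained a horizontal container $H$, then consider the original (unshrunk) packing. The containers of the chain were stacked disjointly in the same $x$-column in some order, and shrinking removes at least $\stripthickness\cdot N$ from the total height of any column containing a horizontal one. So the stack of the column could be at most $N-\stripthickness\cdot N$ high, contradicting that $C_0$ reaches into $\SM^h$.

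Making this precise requires that the chain containers pairwise share a common vertical line, or at least that their total height is bounded by $N$ in the original packing. I would handle this by choosing the chain greedily, following a single point $x$-coordinate downward when possible, and by using that pushing never increases the heights of stacks. I expect this monotonicity argument (comparing the stack height after compaction to the original disjoint packing) to be the delicate step. If it fails, one could instead select a chain minimizing the number of horizontal members, and argue that each horizontal member in it must have been shrunk.
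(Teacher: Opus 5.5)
Your outline is sound, but the step you flag as ``delicate'' is the entire content of the lemma, and you leave it unproved. You take an arbitrary support chain in the down-pushed configuration and want to exclude horizontal members via the bound $\sum_j h(C_j)\le N$ on the \emph{original} heights. The ways you suggest for obtaining this bound do not work. A support chain generally has no common vertical line: the $x$-overlaps of consecutive members can be pairwise disjoint, so the chain zigzags. Insisting on a single abscissa destroys the touching property as soon as the container below that abscissa is not the one supporting the current member. The fallback of minimizing the number of horizontal members is not developed.

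The missing observation is that overlap of \emph{consecutive} members already suffices. Let $C_j$ be the original container and $\widetilde C_j\subseteq C_j$ its shrunk copy before pushing, so the chain member $C_j'$ is a downward translate of $\widetilde C_j$. Pushing down never lets two containers with overlapping $x$-projections pass each other, so $\widetilde C_{j-1}$ lies above $\widetilde C_j$. Since $C_{j-1},C_j$ are disjoint, overlap in $x$, and contain $\widetilde C_{j-1},\widetilde C_j$, this forces $\mathrm{bottom}(C_{j-1})\ge\mathrm{top}(C_j)$ in the original packing. Telescoping gives $N\ge\mathrm{top}(C_0)\ge\sum_{j=0}^{k}h(C_j)$, while after pushing $\mathrm{top}(C_0')=\sum_{j=0}^{k}h(C_j')$. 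Vertical members keep their height, and each horizontal member loses at least $\stripthickness N$. So a single horizontal member gives $\mathrm{top}(C_0')\le N-\stripthickness N$, contradicting that $C_0'$ meets $\SM^h$. This includes $C_0'$ itself, whose verticality you never argue separately. Hence every support chain from your $C_0$ is entirely vertical, which is the lemma.

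With this inequality in place, your route is a rigorous version of the paper's. The paper grows the chain level by level through vertical supports only. It argues only informally that if the vertical supports ran out, $C_0$ would have been pushed down by at least $\stripthickness N$. The telescoping bound is exactly what justifies that claim, and it shows that \emph{any} support chain works. So the level-by-level bookkeeping, and the paper's separate remark that horizontal containers cannot reach $\SM^h$, come for free.

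On the push order, you are right that the subsequent left push can destroy the touching relations, but you need not alter the procedure. The chain is meant to be taken in the configuration right after the downward push. That is how Lemma~\ref{lem:free-strip-at-boundary} uses it, performing the left push only afterwards. The left push does not change $y$-coordinates, so whether a container meets $\SM^h$ is the same in both configurations.
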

\begin{proof}
Since $\SM^h$ has a height of $\stripthickness N$ and the height of each horizontal container in $\CC_{\text{thick}}$ was shrunk by at least $\stripthickness N$, no horizontal container in $\CC'_{\text{thick}}$ can intersect $\SM^h$. Therefore, all containers intersecting $\SM^h$ must be vertical.

Let $C_0$ be any vertical container intersecting $\SM^h$ (see \Cref{fig:chain}). \ak{Let the two endpoints of the top edge of $C_0$ be $(l_0, y_0)$ and $(r_0,y_0)$ for $0 \le l_0\le r_0 \le N $ and $y_0 \in \aw{[N-\stripthickness N,N]}$.}  Since each vertical item has a height of at least $\epslarge N$, the height of any vertical container must be at least $\epslarge N$, and therefore no other container intersects the area above $C_0$ inside $\SM^h$, \ak{i.e., the region $[0,N]\times [y_0, N]$}. Now, if the bottom edge of $C_0$ touches the bottom boundary of the knapsack, we set $k=0$ and are done. Else, let $\CC_1 \subseteq \CC'_{\text{thick}}$ be the set of \ak{vertical} containers whose top edge touches the bottom edge of $C_0$. If $\CC_1 = \emptyset$, then the bottom edge of $C_0$ only touches horizontal containers, and since the height of each horizontal container shrank by at least $\stripthickness N$, $C_0$ must have been pushed down by a distance of at least $\stripthickness N$, contradicting the fact that $C_0$ intersects $\SM^h$. Hence, we have $\CC_1\neq \emptyset$. Again, if the bottom edge of some container in $\CC_1$ touches the bottom boundary of the knapsack, we set $k=1$ and are done. Otherwise, let $\CC_2 \subseteq \CC'_{\text{thick}}$ be the set of vertical containers whose top edge touches the bottom edge of some box in $\CC_1$. By a similar argument as before, if $\CC_2$ were to be empty, $C_0$ together with all boxes in $\CC_1$ would have been pushed down by a height of at least $\stripthickness N$, implying that $C_0$ would no longer intersect $\SM^h$, a contradiction. Therefore $\CC_2 \neq \emptyset$. If there exists some $C_2 \in \CC_2$ such that the bottom edge of $C_2$ touches the bottom boundary of the knapsack, then by definition of $\CC_2$, there exists a $C_1 \in \CC_1$ such that $C_0, C_1, C_2$ satisfy the conditions of the lemma, and we are done. Otherwise, we continue the above process until we obtain a collection of vertical containers $\CC_k \subseteq \CC'_{\text{thick}}$ such that the bottom edge of some container in $\CC_j$ touches the bottom boundary of the knapsack. Then we can find $C_j \in \CC_j$ for each $j\in [k]$ satisfying conditions (ii) and (iii) of the lemma, and we are done. \ak{As each vertical container has height \dk{at least} $\epslarge N$, we have $j \in [1/\epslarge]$. }
\end{proof}

Now it is easy to show that one of the strips $\SM^h$ or $\SM^v$ is empty.

\begin{lemma}
\label{lem:free-strip-at-boundary}
At least one of the strips $\SM^{h}$ and $\SM^{v}$ does not intersect
with any container in $\Cl'$.
\end{lemma}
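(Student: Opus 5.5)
We argue by contradiction: assume both $\SM^h$ and $\SM^v$ intersect some container of $\Cl'$. Applying Lemma~\ref{lem:chain-of-boxes} to $\SM^h$ gives a chain $C_0,\dots,C_k$ of vertical containers. Consecutive containers touch (top edge of $C_j$ against bottom edge of $C_{j-1}$), $C_0$ reaches into $\SM^h$, and $C_k$ sits on the bottom boundary. The union of the chain is therefore a connected set that meets the bottom edge $y=0$ and reaches height at least $N-\stripthickness\cdot N$.

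By the symmetric version of Lemma~\ref{lem:chain-of-boxes}, with the roles of the axes exchanged, we get a chain $D_0,\dots,D_m$ of \emph{horizontal} containers. Here $D_0$ intersects $\SM^v$, the right edge of $D_j$ touches the left edge of $D_{j-1}$, and $D_m$ touches the left boundary $x=0$. For this to hold, the symmetric lemma needs the pushing steps to be symmetric. The containers were pushed down first and then left, so after the left push no container can move further left. Moreover, leftward pushing does not change vertical positions, so the vertical touching relations used in the first chain argument also survive. If either point caused trouble, I would instead re-run the chain argument directly using only the final ``cannot move left'' property, which is exactly what the horizontal-chain argument needs. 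The width bound for horizontal thick containers (shrunk by at least $\stripthickness\cdot N$) plays the role that the height shrinking played before.

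Next we get a topological crossing. The vertical chain is a connected union of rectangles joining $y=0$ to the strip $\SM^h$. The horizontal chain is a connected union joining $x=0$ to $\SM^v$. The main obstacle is that the chains only reach into the strips and not all the way to the top and right boundaries, so a naive ``a path from the bottom to the top must cross a path from left to right'' argument needs care. I would handle this as follows:
\begin{itemize}
\item The top edge of $C_0$ lies at height $y_0\ge N-\stripthickness N$.
\item The horizontal chain lies entirely below height $N-\stripthickness N$, since no horizontal container meets $\SM^h$ (each has height at most $\epslarge N$ and was shrunk). Thus $D$ never goes above $C_0$'s top edge.
\item Symmetrically, the vertical chain lies left of $x=N-\stripthickness N$, since no vertical container meets $\SM^v$, while $D_0$ reaches into $\SM^v$.
\end{itemize}
Now extend the vertical chain by the segment from the top of $C_0$ up to $y=N$. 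By Lemma~\ref{lem:chain-of-boxes}(i), the region above $C_0$ is free of containers, and in any case the horizontal chain does not enter it. Similarly, extend the horizontal chain from $D_0$ rightwards to $x=N$. This gives a curve from the bottom side to the top side of $K$ and a curve from the left side to the right side. By the standard planar crossing argument for the square, the two curves intersect. Any intersection point must lie in the interior of both some $C_j$ and some $D_l$, or at least in a common area. We need to rule out that they meet only along boundaries. The extensions avoid the other chain by the observations above, and the containers are open with positive width and height, so a crossing forces overlapping interiors. This contradicts the disjointness of the containers in $\Cl'$.
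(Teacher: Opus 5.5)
Your crossing argument is fine once both chains exist. The problem is that the horizontal chain does not exist in general, so the ``symmetric version'' of Lemma~\ref{lem:chain-of-boxes} is a genuine gap. You say the argument only needs the final ``cannot move left'' property, but that is not enough. The chain argument also needs the slack step: ``if the left edge of $D_0$ touches only vertical containers, then $D_0$ was pushed left by at least $\stripthickness N$.'' This step compares $\Cl'$ with the original packing. Between shrinking and the left push sits the downward push, and it can turn a vertical adjacency into a horizontal one.

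Concretely, take $\epsclarge<(1-\eps)\epslarge$, $w:=\epslarge N$, $t\in[\epsclarge N,\epslarge N]$, and the following thick containers:
\begin{itemize}
\item vertical $E=(0,w)\times(0,N/2)$;
\item vertical $T=(0,(1-\eps)w)\times(N/2,N)$;
\item horizontal $D=((1-\eps)w,N)\times(N/2,N/2+t)$, which rests on a sliver of $E$ of width $\eps w$.
\end{itemize}
Shrink each container keeping its bottom-left corner. Then $E$ becomes $(0,(1-\eps)w)\times(0,N/2)$, so $D$ loses its support and the down push drops it to the floor, immediately to the right of $E$. The left push then moves nothing. In $\Cl'$:
\begin{itemize}
\item $T$ still meets $\SM^h$, and Lemma~\ref{lem:chain-of-boxes} gives the chain $T,E$;
\item $D$ still meets $\SM^v$;
\item the only container touching the left edge of $D$ is the vertical $E$.
\end{itemize}
$D$ did not move horizontally at all, because the width $E$ lost is exactly the sliver past which $D$ fell. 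So there is no horizontal chain, and the two ``chains'' share $E$ instead of crossing. Separately, your remark that the left push preserves the vertical touching relations is also wrong: touching requires $x$-overlap, and the left push can destroy it.

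This configuration satisfies every hypothesis the argument uses, yet both strips are hit. So it contradicts the statement itself for the construction exactly as described (shrink, push down, then push left). A different example (horizontal container $G=(0,0.2N+\eps w)\times(N/2,N/2+t)$ on $E$'s left sliver, $E=(0.2N,0.2N+w)\times(0,N/2)$, $T=(0.2N+\eps w,0.2N+w)\times(N/2,N)$, $D=(0.2N+w,N)\times(0,t)$) breaks the version where vertical containers keep their right edge instead. The paper's one-line justification fails on the first example for the same reason as your slack step. It claims that the width loss of the vertical chain $T,E$ lets everything to its right move left by $\stripthickness N$, but here $E$'s slack is absorbed by $D$'s fall. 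Hence no better proof route alone can close this gap. The construction itself must change so that the intermediate down push cannot reorder adjacencies created by the shrinking. Note that for a \emph{single} push applied directly to the shrunk configuration, your slack argument is sound.
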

\begin{proof}
    Suppose that $\SM^h$ intersects some container in $\CC'_{\text{thick}}$, otherwise we are done. As discussed in the proof of the previous lemma, all containers intersecting $\SM^h$ must be vertical. Consider the chain of vertical containers $C_0, C_1,\ldots, C_k$ guaranteed by \Cref{lem:chain-of-boxes}. Since the width of each of these containers shrank by at least $\stripthickness N$, after pushing all containers as much to the left as possible, the strip $\SM^v$ of width $\stripthickness N$ at the right boundary of the knapsack cannot intersect any container in $\CC'_{\text{thick}}$, and we are done.
\end{proof}

W.l.o.g., we assume that $\SM^{h}$ does not intersect with any container
in $\Cl'$. It turns out that $\SM^{h}$ is large enough to accommodate
all items in the thin containers in $\C$. \ak{This is true because the sum of 
the total height of \aw{the thin horizontal containers in $\C$ and the total width of the thin vertical containers in $\C$} is bounded by
$|\C|\cdot\epscsmall N\le \frac{1}{3} \epsilon\cdot \epsclarge N$}, where the inequality follows from Lemma \ref{lem:weighted-classification}. Recall that the items in $I'_{T}$ can be packed
in a thin area of width $N$ and height $\epsthin N\le\frac{1}{6} \epsilon\cdot \epsclarge N$
(see Lemma~\ref{lem:transformation-container-packing-new}), where the inequality again follows from Lemma \ref{lem:weighted-classification}. Hence,
we can easily find a container packing for the items in $I'_{T}$
with $O_{\eps}(1)$ containers inside one third of $\SM^{h}$ (i.e.,
twice as much space as the items in $I'_{T}$ would need) and pack
the thin containers in $\C$ into \aw{the second} third of $\SM^{h}$.
We leave the remaining third of $\SM^{h}$ empty, which has a width of at least $2\epsthin N$. 
Hence, we choose $\mu_{\eps}$ such that $\mu_{\eps} \le 2\epsthin $. Note that $\epsthin$ depends on $\epslarge$ which 
is defined via Lemma~\ref{lem:delete-intermediate-items}. However, since $\epslarge\ge \lambda_{\eps}$
we can show that for $\mu_{\eps}:= (\eps/2)^{(2/\eps)^{\frac{\Gamma'(\eps,\lambda_{\eps})}{\eps\lambda_{\eps}}}}$ 
the area $[0,N]\times ((1-\mu_{\eps}) N,N]=:K_{\mathrm{empty}}$ does not intersect any container. Thus, we obtain the following lemma.

\begin{lemma}
\label{lem:repack-missing-items}There exists a container packing
with a set of containers $\C''$ with $|\C''|\le|\C|+O_{\eps,\epslarge,\epsthin}(1)$
that packs all items in $I'_{T}$ and all items that are packed in
thin containers in $\C$, such that each container $C\in\C''$ is
contained in $\SM^{h} \setminus K_{\mathrm{empty}}$.
\end{lemma}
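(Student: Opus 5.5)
We split the strip $\SM^{h}$, of height $\stripthickness\cdot N$, into three horizontal slabs of equal height $\frac{1}{3}\stripthickness N$. The topmost slab contains $K_{\mathrm{empty}}$, since $\mu_\eps\le 2\epsthin\le\frac13\eps\epsclarge$; this follows from $\epsthin\le\frac{\eps}{6}\epsclarge$ (Lemma~\ref{lem:weighted-classification}) and monotonicity of our choices in $\epslarge\ge\lambda_\eps$. The other two slabs are used for the two groups of items, so every new container lies in $\SM^h\setminus K_{\mathrm{empty}}$.

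\textbf{Bottom slab: the items of $I'_T$.} By Lemma~\ref{lem:transformation-container-packing-new}, these items have a packing in $[0,N]\times[0,\epsthin N)$, and $\epsthin N\le\frac16\eps\epsclarge N$. So the bottom slab has height at least $2\epsthin N$ and width $N$. This is a box with a free strip of relative height at least $1/2$ beyond what the given packing needs. We apply the resource augmentation Lemma~\ref{lem:resource-augmentation} to the packing of $I'_T$ inside $[0,N]\times[0,\epsthin N)$, with $\epsilon_{\text{ra}}=1$ (or any constant). This gives a container packing with $O(1)$ containers inside $[0,N]\times[0,2\epsthin N)$, which we translate into the bottom slab.

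The difficulty is that Lemma~\ref{lem:resource-augmentation} only guarantees profit $(1-O(\epsilon_{\text{ra}}))p(I'_T)$, whereas the statement asks to pack \emph{all} of $I'_T$. I would handle this by choosing $\epsilon_{\text{ra}}=\eps$ and absorbing the loss of an $O(\eps)$-fraction of $p(I'_T)\le p(\OPT)$ into the overall $(1-O(\eps))$ guarantee. Alternatively, since the items have height at most $\epsthin\epslarge N$, tiny relative to the slab, one can repack them greedily by NFDH into $O_{\eps}(1)$ containers. Either way, the container count is $O_{\eps,\epslarge,\epsthin}(1)$.

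\textbf{Middle slab: the thin containers of $\C$.} The thin horizontal containers each have height at most $\epscsmall N$ and width at most $N$. We stack them on top of each other, left-aligned, inside the middle slab, keeping their internal packings unchanged. Each thin vertical container has width at most $\epscsmall N$ but may have height up to $N$. We rotate it by $90^\circ$, together with its items, so it becomes a horizontal container of height at most $\epscsmall N$ and width at most $N$. Rotating is allowed in 2DKR, and a vertical container turns into a horizontal container with items stacked on top of each other, so it remains a valid container. The total stacked height is then at most $|\C|\cdot\epscsmall N\le\frac13\eps\epsclarge N$ by $\epsclarge/\epscsmall\ge 3|\C|/\eps$, so everything fits in the middle slab. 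This step adds no new containers.

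\textbf{Conclusion.} All new containers lie in the lower two slabs, hence in $\SM^h\setminus K_{\mathrm{empty}}$, and they are disjoint from $\Cl'$ because $\SM^h$ avoids $\Cl'$. The number of containers is $|\C|+O_{\eps,\epslarge,\epsthin}(1)$. I expect the main obstacle to be the packing of $I'_T$: resource augmentation is lossy, so we either accept an $O(\eps)$ loss or use the NFDH argument for items that are thin relative to the slab.
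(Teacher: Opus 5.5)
Your proposal is correct and follows the paper's argument: you split $\SM^{h}$ into thirds and put $I'_T$ into one third via Lemma~\ref{lem:resource-augmentation} (twice the space it needs). You stack the thin containers, with the vertical ones rotated, in another third using $|\C|\cdot\epscsmall N\le\frac13\eps\epsclarge N$, and leave the remaining third, which contains $K_{\mathrm{empty}}$, empty. The $O(\eps)$-loss on $I'_T$ that you flag is implicit in the paper's own ``easily find a container packing'' step, even though the lemma says ``all items'', and it is harmless downstream. However, drop your NFDH alternative: items of $I'_T$ have a side longer than $\epslarge N$ (possibly close to $N$), so they cannot go into area containers, and plain NFDH shelves would not give $O_{\eps}(1)$ containers.
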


It remains to add the (small) items from $\OPT_{\text{small}}$ to our packing. 

\paragraph{Packing small items.}
In this subsection, we pack the items of $\OPT_{\text{small}}$. For this, we first show the following lemma, which enables us to bound the total wasted area inside the horizontal and vertical containers.

\begin{lemma}
\label{lem:split-container}
    Let $\delta >0$ and let $I_C \subseteq \OPT_{\mathrm{hor}}$ (resp. $\OPT_{\mathrm{ver}}$) be a set of items packed inside a horizontal (resp. vertical) container $C$. There exists a set $I'_C \subseteq I_C$ with $|I'_C|\ge |I_C|-O_{\delta,\epslarge}(1)$ and a collection of horizontal (resp. vertical) containers $C_1, C_2,\ldots,C_{1/\delta}$ all lying inside $C$, that together pack items of $I'_C$, such that the total area inside $\bigcup_{j\in [1/\delta]} C_j$ not occupied by any item is \ak{at most}  $\delta \cdot a(C)$. Here $a(C)$ denotes the area of the container $C$.
\end{lemma}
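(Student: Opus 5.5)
We treat the horizontal case; the vertical case follows by exchanging the roles of the two coordinates. Inside a horizontal container $C$ the items of $I_C$ are stacked on top of each other, so we may assume they are sorted by non-increasing width from bottom to top. Reordering the stack keeps it a valid packing and does not change which items are packed. Let the sorted items be $i_1,\dots,i_m$ with $w(i_1)\ge\dots\ge w(i_m)$.

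The wasted area in a horizontal stack comes from two sources. The first is the vertical space at the top of $C$ that no item fills; we remove it by letting the last container end at the top of the last item. The second is the gap to the right of each item, of size $w(C)-w(i)$ for item $i$. To control the second source, we group the items by width. Every item of $I_C\subseteq\OPT_{\mathrm{hor}}$ has width $>\epslarge N\ge \epslarge w(C)$, so all widths lie in $(\epslarge w(C), w(C)]$.

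We cut the sorted stack into $1/\delta$ consecutive groups. The cleanest choice is to define the groups by widths. Let $w_{\max}=w(i_1)$ and $w_{\min}=w(i_m)$, and partition $[w_{\min},w_{\max}]$ into $1/\delta$ intervals of equal length $\ell=\delta(w_{\max}-w_{\min})\le \delta w(C)$. Group $G_j$ consists of the items whose width lies in the $j$-th interval. Since the stack is sorted, each group is a contiguous block. We define $C_j$ as the rectangle whose width is the largest width in $G_j$ and whose height is the total height of $G_j$, placed at the left edge of $C$ at the vertical position of the block. The containers $C_j$ are then disjoint, lie inside $C$, and are each stacks of horizontal items.

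For each item $i\in G_j$, the unused area in its row of $C_j$ is at most $\ell\cdot h(i)$. Summing over all items, the total unused area is at most $\ell\sum_i h(i)\le \delta w(C)h(C)=\delta\,a(C)$. In this construction no items are dropped at all, so $I'_C=I_C$ and the bound $|I_C|-O_{\delta,\epslarge}(1)$ holds trivially.

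The only delicate point is that the statement asks for exactly $1/\delta$ containers and relative waste at most $\delta$. The equal-length partition gives both directly, because the total height of the items is at most $h(C)$. If a multiplicative grouping were needed instead, we would use geometric widths $(1+\delta)^{-k}w(C)$ down to $\epslarge w(C)$ and rescale $\delta$, dropping the $O_{\delta,\epslarge}(1)$ boundary items. With the additive partition above, this is unnecessary.
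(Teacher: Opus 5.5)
Your proof is correct, but it groups the items differently from the paper.

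\textbf{The paper's route.} It cuts $C$ by $1/\delta-1$ equidistant horizontal lines into strips of height $\delta h(C)$ and discards the items crossing these lines; this is the source of the $O_{\delta,\epslarge}(1)$ loss. It takes as $C_j$ the strip trimmed to the width of its widest item. With the widest items at the bottom, it bounds the waste by a telescoping sum: the part of $C_j$ of width $w(C_{j+1})$ is claimed to be full, so the waste is at most $\sum_{j<1/\delta}(w(C_j)-w(C_{j+1}))\,\delta h(C)+w(C_{1/\delta})\,\delta h(C)=w(C_1)\,\delta h(C)\le\delta\, a(C)$.

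\textbf{Your route.} You group by width value into $1/\delta$ intervals of length at most $\delta w(C)$, and cut the sorted stack along item boundaries. Each item's row then has slack at most $\delta w(C)$, and you sum against $\sum_i h(i)\le h(C)$.

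\textbf{What your route buys.} No item is dropped, so $I'_C=I_C$. The lower bound on the widths coming from $I_C\subseteq\OPT_{\mathrm{hor}}$ is never used, so your argument works for an arbitrary stack. Your containers are also tight by construction. In the paper's version, each strip must additionally be shrunk to the items lying entirely inside it. Otherwise the discarded crossing items leave holes in the region claimed to be completely filled, and the telescoping bound fails as written when items are tall relative to $\delta h(C)$.

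\textbf{What the paper's route buys.} Only containers of uniform height $\delta h(C)$ and a partition fixed independently of the widths. Neither is needed where the lemma is applied: Lemma~\ref{lem:final-skwed-packing} uses only the number of containers and the waste bound.

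One presentational point: state explicitly that after reordering you left-align the items and stack them without vertical gaps. This is free inside a horizontal container, and your per-row slack bound relies on it. Your final paragraph on geometric grouping is unnecessary and can be dropped.
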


\begin{proof}
\begin{figure}
    \centering
    \includegraphics[width=1.1\linewidth]{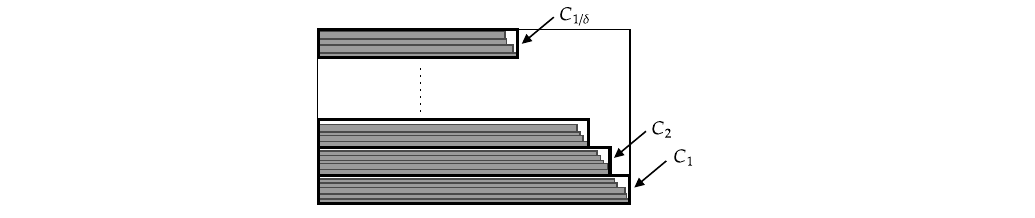}
    \caption{\dk{Splitting a container into multiple containers.}}
    \label{fig:split-containers}
\end{figure}
    Let $C$ be a horizontal container. Assume that the items $I_C$ are packed in non-decreasing order of width from top to bottom, and they are pushed to the left (and also to the bottom of the container as much as possible) so that all of them touch the left boundary of $C$. Also assume w.l.o.g.~that the height of $C$ equals the sum of the heights of the items of $I_C$, i.e., there is no empty horizontal region spanning the width of $C$. We draw $1/\delta - 1$ equidistant horizontal lines inside $C$ that partition the interior of $C$ into strips of height $\delta\cdot h(C)$. We discard the at most $\frac{1}{\delta\cdot \epslarge} =O_{\delta,\epslarge}(1)$ horizontal items intersected by these lines. Let $\RR_1,\RR_2,\ldots, \RR_{1/\delta}$ denote these strips from bottom to top. Inside each $\RR_j$, we create a container $C_j$ of height $\delta\cdot h(C)$ (i.e., equal to the height of $\RR_j$) and whose width equals the maximum width of an item packed inside $\RR_j$ (see Figure \ref{fig:split-containers}). Notice then that by our construction, for $j\in \{1,\ldots,1/\delta-1\}$, the box of width $w(C_{j+1})$ and height $h(C_j)$ inside $C_j$, touching the left boundary of $C_j$ is completely filled with items. Therefore, the free area inside $C_j$ is at most $(w(C_j)-w(C_{j+1}))\cdot\delta h(C)$. Summing over $j$, the total free area inside the containers $\bigcup_{j=1}^{1/\delta-1} C_j$ is at most $(w(C_1)-w(C_{1/\delta}))\cdot \delta h(C)$. 
    The free area inside the container $C_{1/\delta}$ is trivially upper bounded by $w(C_{1/\delta})\cdot \delta h(C)$. 
    Since $w(C_1)=w(C)$, the total free area within $\bigcup_{j=1}^{1/\delta} C_j$ thus is at most $\delta h(C)\cdot w(C) = \delta \cdot a(C)$, and we are done.
\end{proof}

Using the following lemma, we are able to free up enough space in the knapsack outside of the horizontal and vertical containers, which will enable us to pack almost all the small items. Recall that we require the region $K_{\text{empty}}= [0,N]\times ((1-\mu_{\eps}) N, N]$ inside $K$ should not intersect any container. Also $\delta \le 2\epsthin$, and therefore the area of the region $K_{\text{empty}}$ is at most $2\epsthin N^2$.

\begin{lemma}
\label{lem:final-skwed-packing}
    There exist subsets $\OPT'_{\text{skew}}, \OPT''_{\text{skew}} \subseteq \OPT_{\mathrm{hor}}\cup \OPT_{\mathrm{ver}}$, with $p(\OPT'_{\text{skew}}) + p(\OPT''_{\text{skew}})\ge (1-\epsilon)p(\OPT_{\mathrm{hor}}\cup \OPT_{\mathrm{ver}})$ and $|\OPT''_{\text{skew}}|=O_{\eps,\epslarge,\epsthin}(1)$, such that there exists a container packing of $\OPT'_{\text{skew}}$ into a collection $\overline{\CC}$ of horizontal and vertical containers inside $K\setminus K_{\text{empty}}$, with $|\overline{\CC}|=O_{\eps,\epslarge,\epsthin}(1)$. Furthermore, the total area of the region inside $K\setminus K_{\text{empty}}$ not occupied by any container from $\overline{\CC}$ is at least $\max\{\epsilon N^2,a(\OPT_{\text{small}})-\epsilon^3N^2\}$.
\end{lemma}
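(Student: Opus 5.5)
We build the container packing $\overline{\CC}$ from the construction above and then refine it to control wasted area. First, run the whole pipeline developed so far. Apply Lemma~\ref{lem:transformation-container-packing-new} to get $I'_P$, $I'_T$, $I''$ and containers $\C$. Discard the intermediate containers via Lemma~\ref{lem:weighted-classification}, which costs profit $\eps\cdot p(\OPT)$; to charge this against $p(\OPT_{\mathrm{hor}}\cup\OPT_{\mathrm{ver}})$ instead, run the shifting argument with respect to the profit of the skewed items. Shrink the thick containers with Lemma~\ref{lem:weighted-shrinking} ($\delta=\eps$), and push them down and left. By Lemma~\ref{lem:free-strip-at-boundary} and Lemma~\ref{lem:repack-missing-items}, put the thin containers and $I'_T$ back into $\SM^h\setminus K_{\mathrm{empty}}$. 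All discarded items of constant cardinality go into $\OPT''_{\text{skew}}$. The profit lost is $O(\eps)p(\OPT_{\mathrm{hor}}\cup\OPT_{\mathrm{ver}})$ after rescaling $\eps$. The number of containers is $|\C|+O_{\eps,\epslarge,\epsthin}(1)=O_{\eps,\epslarge,\epsthin}(1)$, since $|\C|$ is bounded in terms of $\epsthin$ and $\epslarge$.

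Second, apply Lemma~\ref{lem:split-container} with $\delta=\eps^3$ to every horizontal and vertical container. This replaces each container $C$ by $1/\eps^3$ subcontainers inside $C$ whose wasted area totals at most $\eps^3 a(C)$. It discards only $O_{\eps,\epslarge}(1)$ items per container, so $O_{\eps,\epslarge,\epsthin}(1)$ overall, which also go into $\OPT''_{\text{skew}}$. The containers holding $I'_T$ in the strip need care, since they contain items that may be rotated. I will form them as horizontal containers of items oriented long side horizontal, so the lemma applies to them too. The total wasted area inside $\overline{\CC}$ is then at most $\eps^3 N^2$.

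Third, the area bound. The free area of $K\setminus K_{\mathrm{empty}}$ outside $\overline{\CC}$ is at least
\[
N^2-a(K_{\mathrm{empty}})-a(\OPT'_{\text{skew}})-\eps^3N^2\ \ge\ N^2-2\epsthin N^2-\eps^3N^2-a(\OPT_{\mathrm{hor}}\cup\OPT_{\mathrm{ver}}).
\]
Since $a(\OPT_{\mathrm{hor}}\cup\OPT_{\mathrm{ver}})+a(\OPT_{\text{small}})\le N^2$, this is at least $a(\OPT_{\text{small}})-(2\epsthin+\eps^3)N^2$. Absorbing $2\epsthin$ needs a slightly smaller splitting parameter, say $\delta=\eps^3/2$, together with $\epsthin\le\eps^3/4$; both are constant adjustments. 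This gives the term $a(\OPT_{\text{small}})-\eps^3N^2$.

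For the $\eps N^2$ term, suppose the free area is below $\eps N^2$. Then the containers are almost full. Applying Lemma~\ref{lem:weighted-shrinking}-type removal of a least-profitable $\eps$-fraction strip in each container frees area about $\eps\cdot a(\overline{\CC})\ge \eps(1-2\eps)N^2$ at profit cost $O(\eps)$. Only $O(1)$ items per container are cut, and they go into $\OPT''_{\text{skew}}$.

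The main obstacle is this last case. Shrinking strips frees area but may not keep the profit loss proportional, because the bound is per container and needs a pigeonhole over all strips jointly. I would handle it by choosing, in each container independently, its least profitable of $1/\eps$ strips. This loses at most an $\eps$-fraction of each container's profit and hence of the total. It frees exactly $\eps\, a(C)$ per container, minus wasted area already accounted for. So the total freed area is at least $\eps(N^2-a(K_{\mathrm{empty}})-\eps^3N^2)-$small, which is $\ge\eps N^2/2$. Rescaling $\eps$ finishes the argument.
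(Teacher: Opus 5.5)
Your proposal is correct and follows essentially the paper's route. You shrink containers via Lemma~\ref{lem:weighted-shrinking} to free an $\Omega(\eps)$-fraction of the area, split via Lemma~\ref{lem:split-container} with $\delta=\eps^3/2$ to bound wasted area, and conclude using $a(\OPT_{\mathrm{hor}}\cup\OPT_{\mathrm{ver}})+a(\OPT_{\text{small}})\le N^2$ and $\epsthin\le\eps^3/4$. The only difference is ordering: the paper shrinks every container unconditionally with $\delta=2\eps$ \emph{before} splitting, which gives free area at least $2\eps(1-2\epsthin)N^2>\eps N^2$ directly, whereas your conditional shrinking after splitting is also valid (it only decreases container area) but needs the case distinction and a rescaling of $\eps$.
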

\begin{proof}
    At the end of Lemma \ref{lem:repack-missing-items}, we have a packing of a subset of $\OPT_{\mathrm{hor}}\cup \OPT_{\mathrm{ver}}$
    into the $O_{\eps,\epslarge,\epsthin}(1)$ containers $\CC'_{\text{thick}}\cup \CC''$. We apply Lemma \ref{lem:weighted-shrinking} with $\delta=2\epsilon$ to each container in $\CC'_{\text{thick}}\cup \CC''$. This frees up a $2\eps$-fraction of the area occupied by each container, and thus the total free area inside $K\setminus K_{\text{empty}}$ is at least $2\eps\cdot(1-2\epsthin) N^2>\eps N^2$, assuming $\epsthin < 1/4$.

    We next apply Lemma \ref{lem:split-container} with $\delta=\eps^3/2$ to each of the shrunk containers, thus obtaining a collection of containers $\overline{\CC}$ with $|\overline{\CC}|=\frac{2}{\eps^3}\cdot|\CC'_{\text{thick}}\cup \CC''|=O_{\eps,\epslarge,\epsthin}(1)$. Let $\OPT'_{\text{skew}}$ be the items packed in the containers, and $\OPT''_{\text{skew}}$ be the $O_{\eps,\epslarge,\epsthin}(1)$ items of $\OPT_{\mathrm{hor}}\cup \OPT_{\mathrm{ver}}$ discarded while applying Lemmas \ref{lem:transformation-container-packing-new}, \ref{lem:weighted-shrinking} and \ref{lem:split-container}. Then $p(\OPT'_{\text{skew}}) + p(\OPT''_{\text{skew}})\ge (1-O(\epsilon))p(\OPT_{\mathrm{hor}}\cup \OPT_{\mathrm{ver}})$. Observe that Lemma \ref{lem:split-container} guarantees that the total area inside the containers $\overline{\CC}$ that is not occupied by any item is bounded by $\frac{\eps^3}{2}N^2$. Therefore, a($\overline{\CC}$) (i.e., the total area of all containers in $\overline{\CC}$) is at most $a(\OPT_{\mathrm{hor}}\cup \OPT_{\mathrm{ver}})+\frac{\eps^3}{2}N^2$. Since $a(\OPT_{\mathrm{hor}}\cup \OPT_{\mathrm{ver}})+a(\OPT_{\text{small}})\le N^2$, it follows that
    the total free area inside $K\setminus K_{\text{empty}}$ not occupied by any container from  $\overline{\CC}$ is at least $N^2-a(\overline{\CC})-a(K_{\text{empty}})\ge N^2-(a(\OPT_{\mathrm{hor}}\cup \OPT_{\mathrm{ver}})+\frac{\eps^3}{2}N^2) -2\epsthin N^2 \ge a(\OPT_{\text{small}})-(2\epsthin+\frac{\eps^3}{2})N^2\ge a(\OPT_{\text{small}})-\epsilon^3N^2$. The last inequality follows by assuming $\epsthin \le \eps^3/4$.
    This completes the proof. 
\end{proof}

We shall now pack the items of $\OPT_{\text{small}}$ in the free area in $K\setminus K_{\text{empty}}$ outside the containers $\overline{\CC}$. For this, we shall need that $\epssmall \cdot |\overline{\CC}| \le \eps^4/4$ holds. For technical reasons that will be discussed in the proof of Theorem \ref{theorem:skewed-packing} in \Cref{sec:skewed}, we shall also require that $\epssmall\cdot |\OPT''_{\text{skew}}|\le \epsthin$. By an appropriate choice of the function $f$ in Lemma \ref{lem:delete-intermediate-items} (discussed later), we may assume $\epssmall$ to be sufficiently small such that both the aforementioned conditions hold.

\begin{lemma}
\label{lem:pack-small-items-in-free-space}
    There is a subset $\OPT'_{\text{small}}\subseteq \OPT_{\text{small}}$ with $p(\OPT'_{\text{small}})\ge (1-O(\epsilon))p(\OPT_{\text{small}})$ that can be packed into a collection of at most $O_{\epssmall}(1)$ area containers each of dimension $\frac{\epssmall}{\eps}\times \frac{\epssmall}{\eps}$, lying in the region of $K\setminus K_{\text{empty}}$ not occupied by the containers $\overline{\CC}$.
\end{lemma}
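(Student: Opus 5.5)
We plan a standard grid argument on the free region left by Lemma~\ref{lem:final-skwed-packing}, followed by NFDH inside the resulting cells. Let $g:=\frac{\epssmall}{\eps}N$ and overlay a uniform grid of cells of side $g$ on $K\setminus K_{\text{empty}}$. We call a cell \emph{free} if it is completely disjoint from every container in $\overline{\CC}$. Each free cell becomes one area container; items of size at most $\epssmall N=\eps g$ in both dimensions satisfy the area-container condition. The number of cells is $O((\eps/\epssmall)^2)=O_{\epssmall}(1)$.

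\textbf{Step 1: area of the free cells.} A cell that is not free either meets the boundary of some container of $\overline{\CC}$ or lies inside one; only the first kind loses free area. Each container boundary has perimeter at most $4N$, so it meets at most $O(N/g)$ cells, which together have area $O(gN)$. Summed over all containers, the lost area is
\[
O\bigl(|\overline{\CC}|\cdot gN\bigr)=O\Bigl(\frac{\epssmall|\overline{\CC}|}{\eps}N^2\Bigr)\le O(\eps^3)N^2,
\]
using the assumption $\epssmall|\overline{\CC}|\le\eps^4/4$. Boundary effects at $\partial K$ and at $K_{\text{empty}}$ add only $O(gN)$. By Lemma~\ref{lem:final-skwed-packing}, the total area of the free cells is therefore at least $\max\{\eps N^2,\,a(\OPT_{\text{small}})-\eps^3N^2\}-O(\eps^3)N^2$.

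\textbf{Step 2: fill the cells greedily.} Sort $\OPT_{\text{small}}$ in nonincreasing order of profit density $p(i)/a(i)$ and take the longest prefix $S$ whose area is at most $(1-3\eps)$ times the free-cell area. Pack $S$ into the cells one after another using NFDH. The standard NFDH guarantee says that items of side at most $\eps g$ fill a $g\times g$ cell up to area at least $(1-2\eps)g^2$ before an item fails to fit. Hence all of $S$ gets packed, except at most one item per cell, which we simply move on to the next cell.

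\textbf{Step 3: profit accounting.} We split into two cases.
\begin{itemize}
\item If $a(\OPT_{\text{small}})\le(1-3\eps)$ times the free area, then all small items are packed.
\item Otherwise the free area is at least $\eps N^2$ and also at least $a(\OPT_{\text{small}})-O(\eps^3)N^2$. The left-out area is then at most $O(\eps)a(\OPT_{\text{small}})+O(\eps^3)N^2$. Since the free area is at least $\eps N^2$, this is $O(\eps)$ times the selected area. Because we chose items greedily by density, the lost profit is $O(\eps)p(\OPT_{\text{small}})$. The single item straddling the cutoff has area at most $\epssmall^2N^2$, which is negligible relative to $\eps N^2$, and the density argument also covers it.
\end{itemize}

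\textbf{Main obstacle.} The delicate part is Step 1, in particular reconciling the loss term $O(\eps^3)N^2$ with the case where $a(\OPT_{\text{small}})$ is tiny. In that case the free area is still at least $\eps N^2$, which dominates $a(\OPT_{\text{small}})$, so everything fits. It is exactly the $\max$ in Lemma~\ref{lem:final-skwed-packing} that makes both regimes work.
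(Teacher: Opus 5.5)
Your proof is correct and follows essentially the same route as the paper: the same grid of side $\frac{\epssmall}{\eps}N$, the same $O(\eps^3)N^2$ bound on the area lost to cells meeting container boundaries (via $\epssmall|\overline{\CC}|\le\eps^4/4$), and NFDH into the free cells, where running out of cells would force a packed area of at least $(1-2\eps)$ times the free-cell area. The only difference is how you choose the subset to pack: you keep a density-greedy prefix and bound the lost profit by the fractional-knapsack density argument, while the paper splits $\OPT_{\text{small}}$ into at least $\frac{1}{5\eps}$ groups of area at least $4\eps\cdot a(\OPT_{\text{small}})$ and drops the least profitable group; both give the $1-O(\eps)$ guarantee.
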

\begin{proof}
Let $\epsilon':=\epssmall/\epsilon$. We draw a uniform grid inside $K\setminus K_{\text{empty}}$ where each cell of the grid has height and width equal to $\epsilon'N$. We delete all grid cells that overlap with any of the containers of $\overline{\CC}$. Notice that for each container, there are at most $4/\epsilon'$ grid cells that partially overlap with it. The total area of such partially overlapping grid cells is at most
\[\frac{4}{\epsilon'}\cdot |\overline{\CC}| \cdot \epsilon'^2 N^2 \le \epsilon^3 N^2,\]
where the inequality follows by our choice of $\epssmall$. Therefore by Lemma \ref{lem:final-skwed-packing}, the area of the free cells is at least $\max\{(\epsilon-\epsilon^3) N^2,a(\OPT_{\text{small}})-2\epsilon^3N^2\}$. Note that the free cells are by a factor $1/\epsilon$ larger in each dimension than each small item.

Now if $a(\OPT_{\text{small}})\le \epsilon^2 N^2$, the free cells have a total area of at least $(\epsilon-\epsilon^3)N^2>\frac{\eps}{2}N^2$, assuming $\eps<1/2$. Therefore, the items of $\OPT_{\text{small}}$ can be fully packed into the free cells using NFDH. Hence, assume that $a(\OPT_{\text{small}})> \epsilon^2 N^2$, so that the area of the free cells is at least $a(\OPT_{\text{small}})-2\epsilon^3N^2 \ge (1-2\epsilon)a(\OPT_{\text{small}})$. We partition $\OPT_{\text{small}}$ into groups of total area at least $4\epsilon\cdot a(\OPT_{\text{small}})$, i.e., we iteratively pick items into a group until their total area exceeds $4\epsilon\cdot a(\OPT_{\text{small}})$, and then restart the procedure to create another group (the last group may have a smaller total area). Since the area of each item is at most $\epssmall^2N^2 < \eps^3N^2 <\eps\cdot a(\OPT_{\text{small}})$, the number of groups is at least $\frac{1}{5\epsilon}$. We delete the group having minimum profit among the ones with total area at least $4\epsilon\cdot a(\OPT_{\text{small}})$, and let $\OPT'_{\text{small}}$ be the remaining items. Then $p(\OPT'_{\text{small}})\ge (1-5\epsilon)p(\OPT_{\text{small}})$ and $a(\OPT'_{\text{small}})\le (1-4\epsilon)a(\OPT_{\text{small}})$. We pack $\OPT'_{\text{small}}$ into the free cells using NFDH. Observe that we do not run out of cells in this process, since otherwise by Lemma \ref{lem:NFDH-guarantee}, the total area of the packed items would be at least $(1-2\epsilon)\cdot(1-2\eps)a(\OPT_{\text{small}})>a(\OPT'_{\text{small}})$.
\end{proof}

\paragraph{Choice of \texorpdfstring{$f$}{f} in Lemma \ref{lem:delete-intermediate-items}.}
Now we discuss the choice of $f$. 
Remember from Lemma \ref{lem:final-skwed-packing}, we showed that there exist subsets $\OPT'_{\text{skew}}, \OPT''_{\text{skew}} \subseteq \OPT_{\mathrm{hor}}\cup \OPT_{\mathrm{ver}}$, where $|\OPT''_{\text{skew}}|=O_{\eps,\epslarge,\epsthin}(1)$ and there exists a container packing of $\OPT'_{\text{skew}}$ into a collection $\overline{\CC}$ of horizontal and vertical containers inside the knapsack. As discussed before, we require there that $\epssmall$  is sufficiently small such that we can guarantee that $\epssmall \cdot |\overline{\CC}| \le \eps^4/4$ and $\epssmall\cdot |\OPT''_{\text{skew}}|\le \epsthin$. 
Since both  $|\overline{\CC}|$ and $|\OPT''_{\text{skew}}|$ are $O_{\eps,\epslarge,\epsthin}(1)$ and $\epsthin$ is a function of $\eps$ and $\epslarge$ only, \aw{all these quantities} do not depend on $\epssmall$.
Thus, there exists a function $k_{\eps}(\cdot)$ such that $k_{\eps}(\epslarge)\le \min\left\{\frac{\eps^4}{4|\overline{\CC}|},\frac{\epsthin^2}{|\OPT''_{\text{skew}}|}\right\}$  for any choice of $\epslarge$. We may then choose the function $f$ in Lemma \ref{lem:delete-intermediate-items} such that $f:=k_{\eps}^{-1}$, so that the desired bounds on $\epssmall$ are satisfied.

\subsection{Extension to the case of skewed items}
\label{sec:skewed}
We prove Theorem \ref{theorem:skewed-packing} in this subsection. Recall that there is a global constant $\lambda_{\eps}$ such that $\epssmall \ge \lambda_{\eps}$ (see Lemma \ref{lem:delete-intermediate-items}). We let $\eps_{\mathrm{skew}}=\lambda_{\eps}$ so that $\eps_{\mathrm{skew}}\le \epssmall$ holds. Since each item has one side of length at most $\eps_{\mathrm{skew}}N$, it follows that $\OPT_{\text{large}}=\emptyset$, i.e., all items in $\OPT$ are either vertical or horizontal or small. We apply a similar argument\aw{ation} as in the cardinality case. Using Lemma \ref{lem:transformation-container-packing-new}, we obtain a container packing of a subset $I'_P \subseteq \OPT_{\mathrm{hor}}\cup \OPT_{\mathrm{ver}}$ into $(\log \frac{1}{\epsthin})^{O_{\eps,\epslarge}(1)}$ containers, and a set $I'_T$ that can be packed into a strip of width $\epsthin N$. We then classify the containers as thick, thin and intermediate, such that for an appropriate choice of the parameters $\epsclarge, \epscsmall$, the total profit of items packed in intermediate containers is at most $\eps \cdot p(\OPT)$ (see Lemma \ref{lem:weighted-classification}). We temporarily discard the thin containers. Following this, we apply Lemma \ref{lem:weighted-shrinking} to each of the thick containers and by pushing the resulting containers down and to the left as much as possible, we free up an empty strip of width $\eps\cdot\epsclarge N \ge 6\epsthin N$ inside the knapsack. We pack the thin containers and the items of $I'_T$ into this strip using Lemma \ref{lem:repack-missing-items}, which occupy a total height of at most $4\epsthin N$. Observe that this still leaves an empty strip $\SM$ of width $2\epsthin N$. We then apply Lemmas \ref{lem:weighted-shrinking} and \ref{lem:split-container} to the containers, which creates sufficient free area (outside $\SM$) to pack the items of $\OPT_{\text{small}}$. Let $\OPT'_{\text{skew}} \subseteq \OPT_{\mathrm{hor}}\cup \OPT_{\mathrm{ver}}$ be the items packed inside the horizontal and vertical containers and $\OPT''_{\text{skew}}$ be the $O_{\eps,\epslarge,\epsthin}(1)$ items that were discarded so far, i.e., while applying Lemmas \ref{lem:transformation-container-packing-new}, \ref{lem:weighted-shrinking} and \ref{lem:split-container}. We pack the items of $\OPT_{\text{small}}$ in the created free space (outside $\SM$) using Lemma \ref{lem:pack-small-items-in-free-space}.

The only remaining piece to handle here is the set $\OPT''_{\text{skew}}$. In the cardinality case, we could afford to discard these items, since they are only constantly many in number, and thus had negligible profit assuming $|\OPT|$ was large enough. However, this no longer holds in the weighted case. Hence, we shall repack them inside the strip $\SM$. Recall that we \aw{chose} $f$ in Lemma \ref{lem:delete-intermediate-items} such that $\epssmall\cdot |\OPT''_{\text{skew}}| \le \epsthin$ holds. Assuming w.l.o.g.~that the strip $\SM$ is horizontal, we rotate the items of $\OPT''_{\text{skew}}$ so that their shorter side (which has length at most $\epssmall N$) is vertical, and pack them one above the other in a single horizontal container inside $\SM$.

\section{Improved approximation for the non-rotation case}
\label{sec:non-rotation-improved}
In this section, we prove Theorem \ref{theorem:non-rotation-improved-1}. We first begin with a discussion of the corridor partitioning framework of \cite{galvez2021approximating}. Note that, unlike the cardinality case, now we cannot afford to discard $O_{\eps}(1)$ items at negligible loss in profit. Also the large items may occupy a very large fraction of the area of the knapsack, and therefore a global classification of items into $\OPT_{\mathrm{hor}}, \OPT_{\mathrm{ver}}, \OPT_{\text{small}}$ as done in the cardinality case will not work.

To handle this, \cite{galvez2021approximating} presented a more robust version of the corridor decomposition lemma that allows to handle small items and ensure that a specified set of $O_{\eps}(1)$ items are not deleted.
Given a packing of a set of items $I$ inside $K$ and a set of  \emph{untouchable }items $I'\subseteq I$ with $|I'|\in O_{\eps}(1)$, a non-uniform grid $G(I')$ is defined inside $K$ where the $x$-coordinates (resp. $y$-coordinates) of the grid cells correspond to the $x$-coordinates (resp. $y$-coordinates) of the items of $I'$. Let $\CC(I')$ denote the collection of these grid cells and let $I(C)$ denote the set of items that intersect a cell $C\in \CC(I')$. Let $w(i \cap C)$ and $h(i \cap C)$ denote the width and height of the intersection of rectangle $i$ with cell $C$. For some constants $\epslarge, \epssmall$, the items $I(C)$ intersecting cell $C$ are classified as follows.
\begin{itemize}
    \item $I_{\text{small}}(C):= \{i\in I(C) \mid w(i\cap C)\le \epssmall\cdot w(C) \text{ and } h(i\cap C)\le \epssmall\cdot h(C)\}$.
    \item $I_{\text{large}}(C) := \{i\in I(C) \mid w(i\cap C)> \epslarge\cdot w(C) \text{ and } h(i \cap C) > \epslarge\cdot h(C)\}$.
    \item $I_{\mathrm{hor}}(C) := \{i \in I(C) \mid w(i \cap C)> \epslarge\cdot w(C) \text{ and } h(i \cap C)\le \epssmall\cdot h(C)\}$.
    \item $I_{\mathrm{ver}}(C) := \{i\in I(C) \mid w(i \cap C) \le \epssmall\cdot w(C) \text{ and } h(i \cap C)>\epslarge\cdot h(C) \}$.
    \item $I_{\text{int}}(C) := I(C) \setminus (I_{\text{small}}(C)\cup I_{\text{large}}(C)\cup I_{\mathrm{hor}}(C)\cup I_{\mathrm{ver}}(C))$.
\end{itemize}

By an appropriate choice of $\epslarge$ and $\epssmall$, it can be ensured that the items in $I_{\text{int}}(C)$ have negligible profit. Let $I_{\text{small}}(I')$ be the set of items $i$ that belong to $I_{\text{small}}(C)$ for every cell $C$ that intersects $i$. The following lemma \aw{is very similar to a corresponding statement in \cite{galvez2021approximating} and it can be proven with almost exactly the same argumentation}.

\begin{lemma} \label{lem:corr-decomposition}
 Let $I$ be a set of
items that can be packed inside $K$. Let also $I'\subseteq I$ be
a given set of untouchable items with $|I'|\in O_{\epsilon}(1)$. Then,
there exists a corridor partition of the knapsack and a set of items
$I_{\text{corr}}\subseteq I$ satisfying: \begin{itemize} 
\item there exists a set of items $I_{\text{corr}}^{\text{cross}}\subseteq I_{\text{corr}}$
such that 
\begin{itemize}
    \item each item in $I_{\text{corr}}\setminus I_{\text{corr}}^{\text{cross}}$ is completely contained in some corridor of the partition,
    \item $I'\subseteq I_{\text{corr}}\setminus I_{\text{corr}}^{\text{cross}}$, and
    \item $|I_{\text{corr}}^{\text{cross}}\setminus I_{\text{small}}(I')|\in O_{\epsilon,\epslarge}(1)$
\end{itemize}
\item for each cell $C\in\CC(I')$ we have that $a(I_{\text{corr}}^{\text{cross}}\cap I_{\text{small}}(C)  \cap I_{\text{small}}(I'))\le\eps^{3}\cdot a(C)$, 
\item $p(I_{\text{corr}})\ge(1-O(\epsilon))p(I)$, and
\item the number of corridors is $O_{\epsilon,\epsilon_{\text{large}}}(1)$
and each corridor has at most $1/\epsilon$ bends and width at most
$\epsilon_{\text{large}}N$, except possibly for the corridors containing
items from $I'$ that correspond to rectangular regions matching exactly
the size of these items. \end{itemize} 
\end{lemma}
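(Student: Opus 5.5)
We follow the corridor decomposition argument of \cite{adamaszek2015knapsack} and its robust variant in \cite{galvez2021approximating}, now carried out inside the non-uniform grid $G(I')$ defined by the untouchable items.

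\textbf{Step 1: classify items cell by cell.} Every item of $I'$ is declared to be its own rectangular corridor, exactly matching its size. Its boundary then serves as fixed segments of the partition. We use Lemma~\ref{lem:delete-intermediate-items}, applied per cell with a shifting over the $O_\eps(1)$ cells, to choose $\epslarge,\epssmall$. This makes $\bigcup_C I_{\text{int}}(C)$ have profit $O(\eps)p(I)$, and we delete these items. Items that are large in some cell are at most $O(1/\epslarge^2)$ per cell. We put them into $I_{\text{corr}}^{\text{cross}}$ (they are not in $I_{\text{small}}(I')$), so their number stays $O_{\eps,\epslarge}(1)$.

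\textbf{Step 2: build the corridors.} Inside each cell $C$ we run the original corridor construction on the items that are horizontal or vertical relative to $C$, with all lengths scaled to $C$. Concretely, we build the "rectilinear polygon" partition via a maximal set of non-crossing segments hugging items longer than $\epslarge$ times the cell dimension. Then we perform the standard spiral/bend-reduction step: delete the items of a minimum-profit class among $1/\eps$ classes, so that every corridor has at most $1/\eps$ bends and width at most $\epslarge N$. Summing these losses over the cells costs $O(\eps)p(I)$. The corridors of adjacent cells are then glued along the grid lines. Only $O_{\eps,\epslarge}(1)$ non-small items cross a grid line or a corridor boundary, since each boundary segment has length at least $\epslarge$ times a cell side. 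These crossing items go into $I_{\text{corr}}^{\text{cross}}$. As in \cite{galvez2021approximating}, we never delete an item of $I'$, because they are already separate corridors, and no corridor is allowed to pass through them.

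\textbf{Step 3 (main obstacle): small items cut by corridor boundaries.} Small items, i.e.\ those in $I_{\text{small}}(I')$, that are cut by corridor boundaries cannot be counted, so we bound their area instead. Each cell contains $O_{\eps,\epslarge}(1)$ boundary segments, each of length at most the corresponding cell dimension. A small item crossing a segment lies in a strip of width $2\epssmall\cdot h(C)$ (or $2\epssmall\cdot w(C)$) around that segment. Hence the total area of such items in $C$ is $O_{\eps,\epslarge}(\epssmall)\,a(C)$. Choosing $f$ in Lemma~\ref{lem:delete-intermediate-items} so that $\epssmall$ is small enough relative to this count gives the bound $\eps^3 a(C)$.

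The delicate point is that the number of segments must depend only on $\eps$ and $\epslarge$, not on $\epssmall$. This holds because the corridor construction uses only the non-small items. Items that are small in one cell but not in another were already removed as intermediate items or counted as crossing items. This gives all the required properties.
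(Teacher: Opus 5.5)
Your Steps~1 and~3 are essentially fine: shifting away the intermediate items, sending the $O(1/\epslarge^2)$ per-cell large items to $I_{\text{corr}}^{\text{cross}}$, and bounding the area of cut items of $I_{\text{small}}(I')$ via the $O_{\eps,\epslarge}(1)$ segments per cell all work. Step~2, however, has a genuine gap.

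The grid lines of $G(I')$ are dictated by the untouchable items, not chosen to avoid other items, and unboundedly many non-small items can cross them. For example, let $I'$ be one small square in the middle of $K$, with a tall stack of thin horizontal items spanning the full width above it. Each of these items crosses both vertical grid lines, is horizontal in every cell it meets, and together they may carry almost all of the profit. So your claim that only $O_{\eps,\epslarge}(1)$ non-small items cross a grid line is false. The reason you give (segments have length at least $\epslarge$ times a cell side) bounds nothing, since arbitrarily many thin items orthogonal to a segment can cross it; in \cite{adamaszek2015knapsack} the constant bound comes from how the boundaries are built, not from their lengths.

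Corridor partitions computed independently in two adjacent cells have no reason to match along their common grid line. So ``gluing'' them does not put such items inside a single corridor, and you can neither charge them to $I_{\text{corr}}^{\text{cross}}$ nor delete them. The same issue arises for an item that is horizontal in $C_1$ but small in a neighbouring cell $C_2$. Your closing remark that such items were already removed or counted is not correct. The construction in $C_2$ ignores them and may cut many of them, and they are not in $I_{\text{small}}(I')$. Also, per-cell losses of $O(\eps)$ add up to $O(\eps\,|\CC(I')|)\,p(I)$, because items span several cells.

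The missing idea is to decompose the whole packing at once, which is what the paper does:
\begin{enumerate}
\item Stretch every column and row of $G(I')$ affinely so the grid becomes uniform with cells of side $\frac{1}{1+2|I'|}$. This maps the packing to a packing, and every item that is horizontal or vertical with respect to some cell gets a side of length at least $\frac{\epslarge}{1+2|I'|}$.
\item Apply Lemma~\ref{lem:corr-decomposition-cardinality} once to these items (without $I'$), with $\delta=\frac{\epslarge}{1+2|I'|}$.
\item Map back. Bends are preserved and corridor widths become at most $\epslarge N$.
\item Reinsert $I'$, circumventing it with $O_{\eps,\epslarge}(1)$ extra segments.
\item Only then reinsert $I_{\text{small}}(I')$ and use your Step~3 area argument.
\end{enumerate}
In this route the grid lines never serve as corridor boundaries, so the crossing problem does not arise.
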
 
\begin{proof}
    We consider the packing of $(\cup_{C\in \CC(I')} I_{\text{hor}}(C)\cup I_{\text{ver}}(C))\setminus I'$, and imagine stretching the non-uniform grid $G(I')$ into a uniform $[0,1]\times [0,1]$ grid, so that each grid cell is of length $\frac{1}{1+2|I'|}$. Then each item in $(\cup_{C\in \CC(I')} I_{\text{hor}}(C)\cup I_{\text{ver}}(C))\setminus I'$ has height or width at least $\frac{\epslarge}{1+2|I'|}$. We apply Lemma \ref{lem:corr-decomposition-cardinality} to this packing yielding a decomposition of $[0,1]\times [0,1]$ into $O_{\eps,\epslarge}(1)$ corridors and the set $I_{\text{corr}}$ satisfying $p(I_{\text{corr}})\ge (1-O(\eps))p(I)$. We then rescale back to the original non-uniform grid. 

    We assign the items of $(\cup_{C\in \CC(I')} I_{\text{hor}}(C)\cup I_{\text{ver}}(C))\setminus I'$ that do not completely lie inside a corridor to the set $I_{\text{corr}}^{\text{cross}}$. We now include back the items of $I'$. Note that these items can overlap with some corridor, but they can be circumvented by adding only $O_{\eps,\epslarge}(1)$ extra lines (see Figure 5 in \cite{galvez2021approximating}). 

    Finally, we include back the items of $I_{\text{small}}(I')$ in the packing. Those items which do not completely lie inside a corridor are assigned to the set $I_{\text{corr}}^{\text{cross}}$. 
    For any cell $C \in \CC(I')$, since the total number of lines defining the corridors inside $C$ is only $O_{\eps,\epslarge}(1)$, the total area of the items in $I_{\text{corr}}^{\text{cross}} \cap I_{\text{small}}(C)\cap I_{\text{small}}(I')$ can be bounded by $\eps^3 \cdot a(C)$, for sufficiently small $\epssmall$, and the lemma follows.   
\end{proof}

\aw{We want to apply Lemma \ref{lem:corr-decomposition} to the optimal packing $\OPT$. We first do this with $I':=\emptyset$. In the following, we will describe a procedure that discards the constantly many items $I_{\text{corr}}^{\text{cross}}\setminus I_{\text{small}}(I')$ and an additional set of constantly many items \dk{coming from the corridor processing described later}. If the total profit of these discarded items is at most $\eps\cdot p(\OPT)$ we are fine. Otherwise, we add the constantly many discarded items to the set $I'$ and iterate. After at most $1/\eps$ iterations, we must have that the discarded items of that iteration have a total profit of at most $\eps\cdot p(\OPT)$. Therefore, we assume in the following that we have 
}
a partition of the knapsack into $O_{\epsilon,\epslarge}(1)$ corridors and a set $\OPT_{\text{corr}}$ consisting of items packed inside the corridors. 
For each cell $C\in \CC(I')$, we classify the items $\OPT(C)$ intersecting $C$ into the sets $\OPT_{\text{small}}(C), \OPT_{\text{large}}(C), \OPT_{\mathrm{hor}}(C), \OPT_{\mathrm{ver}}(C), \OPT_{\text{int}}(C)$ as discussed before and let $\OPT_{\text{small}}$ be the set of items $i$ that belong to $\OPT_{\text{small}}(C)$ for every cell $C$ intersecting $i$.
By an appropriate choice of $\epslarge$ and $\epssmall$, it can be ensured that $p(\cup_{C\in \CC(I')}\OPT_{\text{int}}(C))\le \eps\cdot p(\OPT)$, and so the items of $\cup_{C\in \CC(I')}\OPT_{\text{int}}(C)$ are discarded.

\begin{figure}
  \centering
  \begin{subfigure}[b]{0.45\textwidth}
  \centering
    \hspace*{-3.4cm}
    \includegraphics[width=2\linewidth]{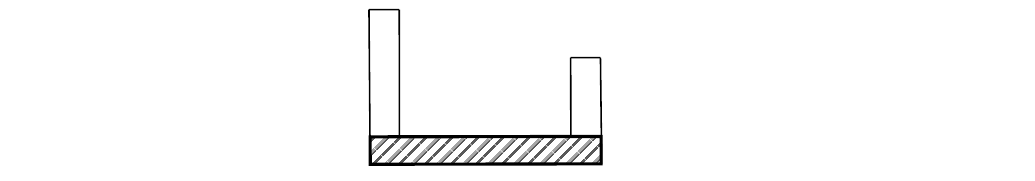}
    \caption{}
    \label{fig:sub1}
  \end{subfigure}
  \hfill
  \begin{subfigure}[b]{0.45\textwidth}
  \centering
    \hspace*{-4.4cm}
    \includegraphics[width=2.2\linewidth]{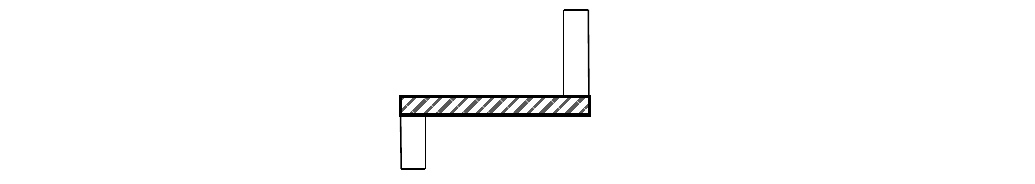}
    \caption{}
    \label{fig:sub2}
  \end{subfigure}

  \caption{(a) An acute subcorridor. (b) An obtuse subcorridor.}
  \label{fig:bends}
\end{figure}

\paragraph{Processing a subcorridor.}
The next step is to partition the corridors into boxes, retaining a large fraction of the profit of $\OPT_{\text{corr}}$. For simplicity, we discard the items of $\OPT_{\text{small}}$ for now, and explain later how to include them back. In our processing, we shall discard a set of $O_{\eps,\epslarge}(1)$ items $\OPT_{\text{kill}}$. By our discussion in the previous paragraph, we may assume that we are in the iteration where these items have a profit of at most $\eps\cdot p(\OPT)$. We first assign the items of $\OPT_{\text{corr}}^{\text{cross}}\setminus \OPT_{\text{small}}$ to $\OPT_{\text{kill}}$ (note that they are only $O_{\eps,\epslarge}(1)$ in number).

A subcorridor is said to be \textit{acute} if either it is the first or the last subcorridor of an open corridor, or the two adjacent subcorridors lie on the same side of the considered subcorridor; otherwise it is labeled \textit{obtuse} (see Figure \ref{fig:bends}). Consider an acute subcorridor $C_1$ of a corridor $C$ and assume that $C_1$ is horizontal with the shorter horizontal edge being the top one (the other cases are analogous). Similar to the proof of Lemma \ref{lem:transformation-container-packing-new}, we can consider strips of geometrically increasing heights inside $C_1$, mark as thin the items of the bottommost strip and remove them and continue. For the cardinality case, this was crucial to ensure the number of boxes is small enough, which was in turn essential for creating a sufficiently large empty strip for repacking thin items and small boxes. However, for the weighted case, we can \dk{work} with a simpler processing scheme similar to that used in \cite{galvez2021approximating} (which would not have worked out for the cardinality case though).

\aw{Formally,}
let $c_{\eps,\epslarge}$ be an upper bound on the total number of subcorridors, and let $\eps_{\text{box}}:= \eps^4/c_{\eps,\epslarge}$. Let $h$ be the height of $C_1$. We draw $1/\epsilon_{\text{box}}$ equidistant horizontal lines that partition the private region of $C_1$ into strips of height $\epsilon_{\text{box}}h$ and assign the horizontal items intersected by these lines to the set $\OPT_{\text{kill}}$ (there are only $O_{\epsilon, \epslarge}(1)$ such items). We mark as \textit{thin} the items \aw{inside} the bottommost, i.e., the widest such strip, and the remaining items as \textit{fat}. As before, if the thin items of a particular piece are deleted, the fat items of the piece can be repacked into boxes. Following this, we construct paths from the endpoints of the top edges of the boxes, that partition $C\setminus C_1$ into $1/\epsilon_{\text{box}}$ smaller corridors, as we did in the cardinality case. We call the above procedure as \textit{processing} the corridor piece $C_1$.

We call a subcorridor to be \textit{long} if its length exceeds $N/2$ and \textit{short} otherwise. \aw{In \cite{galvez2021approximating}, three different ways were presented} to partition the corridors into boxes, where in each case some subset of items were discarded and the others were packed into boxes.
\aw{They differ in the order in which the corridor pieces are processed with the method described above.}
Let $\OPT_T$ be the set of thin items that are discarded in at least one of the corridor processing methods, and let $\OPT_F$ be the items that are marked as fat in all cases. Further, let $\OPT_{LT}$ (resp. $\OPT_{ST}$) be the items in $\OPT_T$ that came from long (resp. short) subcorridors. Analogously, items in $\OPT_F$ are classified into $\OPT_{LF}$ and $\OPT_{SF}$. Note that in the weighted setting, the large items and $O_{\epsilon}(1)$ untouchable items were also included in the set $\OPT_{LF}$ in \cite{galvez2021approximating}.

In one of the three processing strategies, the short subcorridors were labeled as even and odd alternately, the odd (or even) short subcorridors were deleted, and then the even (or odd) short subcorridors were processed last. This gives the first bound in the lemma stated below. It was also shown that a subset of the thin items could be packed into an L-region at the knapsack boundary and a subset of the remaining items could be packed into containers in the free area outside the L~(referred to as an $L\& C$-packing), which gives the second profit bound in the following lemma. \aw{Here, $\OPT_{L\&C}$ denotes the maximum profit of a packing that
uses $O_{\epsilon}(1)$ containers and one special L-shaped corridor with two corridor pieces; the left edge of this corridor coincides with the left edge of the knapsack and its bottom edge coincides with the bottom edge of the knapsack.}

\begin{lemma}[\cite{galvez2021approximating}]
\label{lem:lcpacking-focs}
    The following statements hold:
    \begin{enumerate}[label=(\roman*)]
        \item $p(\OPT_{L\&C}) \ge (1-\epsilon)(p(\OPT_{LF})+\frac{1}{2}(p(\OPT_{SF})+p(\OPT_{ST})))$ and
        \item $p(\OPT_{L\&C}) \ge (1-\epsilon)(\frac{3}{4}p(\OPT_{LT})+p(\OPT_{ST})+\frac{1}{2}p(\OPT_{SF}))$.
    \end{enumerate}
\end{lemma}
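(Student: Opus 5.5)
This lemma is quoted from \cite{galvez2021approximating}, so the plan is to reconstruct its two constructions on top of the corridor partition and the processing scheme above.

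\textbf{Bound (i).} Label the short subcorridors of every corridor alternately even and odd along the corridor. Then run two symmetric constructions. In the first, delete all odd short subcorridors, which splits every corridor into pieces in which each short subcorridor (now even) can be processed last. Process the long subcorridors first. Their thin items are removed, but they lie in strips of height $\eps_{\text{box}}h$, so their total area is tiny. The fat items of long subcorridors survive, and so do the large and untouchable items, because they were placed in $\OPT_{LF}$.

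Because each even short subcorridor is processed last, it is never cut by a path from another piece. Its items, thin or fat, can therefore be packed in boxes by the resource-augmentation argument, i.e. the analogue of \Cref{lem:box-to-container} applied to the whole subcorridor after freeing one strip. The second construction swaps the roles of even and odd. Averaging the two, every item of $\OPT_{SF}\cup\OPT_{ST}$ survives in at least one of them, while all of $\OPT_{LF}$ survives in both. This gives $p(\OPT_{LF})+\frac12(p(\OPT_{SF})+p(\OPT_{ST}))$ up to $(1-\eps)$.

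The removed long thin items still need a home, and I expect this to be the main technical point of (i). Each box is shrunk slightly, and the long thin items of total small area are repacked into the freed space with NFDH or Steinberg, following the small-item argument in the style of \Cref{lem:pack-small-items-in-free-space}. Finally, boxes are converted into containers via \Cref{lem:resource-augmentation}.

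\textbf{Bound (ii).} This is the $L$-construction. All thin items come from strips adjacent to the longer boundary edge of acute subcorridors, so their total width, respectively height, is $O(\eps_{\text{box}}\cdot c_{\eps,\epslarge})N\le\eps^4N$. The plan is to pack all of $\OPT_{ST}$ and a $3/4$ fraction of $\OPT_{LT}$ into an L-shaped region of arm thickness $O(\eps^4)N$ along the bottom and left knapsack edges. The main obstacle is fitting items longer than $N/2$ into the arms: two long horizontal items can conflict where a long vertical one crosses them. For these I would use the standard argument that long horizontal and long vertical thin items can be separated so that one arm takes horizontal ones and the other vertical ones. Only crossings at the corner are lost, and an averaging over placing the corner interaction at each of four positions (shifting which quarter of the long items is dropped) loses at most a $1/4$ fraction of $\OPT_{LT}$.

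Outside the L, I would use the free area $(1-O(\eps^4))N\times(1-O(\eps^4))N$ to keep half of $\OPT_{SF}$. This again uses the even/odd deletion: short subcorridors have length $\le N/2$, so an even or odd half of their fat boxes can be shifted into the shrunk region. Choose whichever parity is more profitable, then convert to containers by \Cref{lem:resource-augmentation}. Summing the three contributions gives (ii).
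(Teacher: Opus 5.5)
Your argument for (i) is the construction the paper attributes to \cite{galvez2021approximating}. You alternate parities of the short subcorridors, delete one class, process the long subcorridors so that each surviving short subcorridor comes last and keeps all its items, and average over the two classes; this already gives (i). Drop the step you single out as the main technical point. $\OPT_{LT}$ does not occur in (i), so the thin items of long subcorridors are simply discarded. Moreover, an NFDH/Steinberg-style repacking of items longer than $N/2$ into the scattered space freed by shrinking boxes would not work in general. A minor point: the surviving short subcorridor \emph{is} cut by the paths coming from its processed neighbours. This is harmless, since the pieces are boxes and only $O_{\eps,\epslarge}(1)$ items are killed.

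For (ii) there are genuine gaps.

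(1) Your parity argument for half of $\OPT_{SF}$ fails. Parities are assigned inside each corridor, so one class may contain all short subcorridors (e.g.\ if every corridor has exactly one). These boxes may cover the strips along the left and bottom edges that the L needs. Deleting the other class frees nothing there, and nothing lets you shift the survivors inward. What works is the random-strip argument of \Cref{lem:pack-sf}. Delete the items hit by a random vertical strip of width $\Theta(\eps)N$ and a random horizontal strip of height $\Theta(\eps)N$. A short fat item has one side at most $(\frac12+2\epslarge)N$ and the other at most $\epssmall N$, so it survives with probability at least $\frac12-O(\eps)$. Closing the two gaps puts the survivors into a $(1-\Theta(\eps))N\times(1-\Theta(\eps))N$ box.

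(2) For the factor $\frac34$, averaging over four choices may tell you which quarter to drop. It does not tell you why the remaining three quarters fit into an L, which lies along only two edges, and ``only crossings at the corner are lost'' does not bound the loss. The route is as follows. First push the long thin items to the boundary into four stacks; this is unobstructed, since every horizontal item of width $>N/2$ crosses $x=N/2$ and every vertical one of height $>N/2$ crosses $y=N/2$. Then drop the least profitable stack, losing at most $\frac14 p(\OPT_{LT})$. Finally apply the lemma of \cite{galvez2021approximating} that a boundary U of three stacks can be rearranged into a boundary L without loss; this is the same result invoked in \Cref{lem:delete-vert-arm}, and it is the missing idea.

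(3) $\OPT_{ST}$ should not go into the arms, and neither should the items of length at most $N/2$ among the thin items of long subcorridors. The arms are meant for items longer than half the corresponding side (cf.\ Property (4) of Definition~\ref{def:l-packing}). Instead, use their total thickness of at most $\eps^4N$ together with resource augmentation. Pack them into boxes of size about $(\frac12+\eps)N\times\eps N$ and $\eps N\times(\frac12+\eps)N$ along the top and right edges, as in \Cref{lem:pack-st}.
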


\aw{In \cite{galvez2021approximating} two additional partitioning techniques were presented. The first one
processes} the subcorridors in any feasible order, which gives a packing of profit $p(\OPT_{LF})+p(\OPT_{SF})$, i.e., the full profit of the fat items. This loses out on the profit of the thin items. The \aw{second} processing strategy \aw{is} more involved and returned a packing of profit at least $p(\OPT_{LF})+\frac{1}{2}(p(\OPT_{SF})+p(\OPT_{LT}))$. We present an alternate corridor processing strategy that simplifies and improves upon both of \aw{these two} profit guarantees.

\paragraph{Alternate corridor processing.}
To begin with, we process all acute subcorridors that are short. Then, all remaining acute subcorridors will be long. We make the following simple observation.

\begin{lemma}
    \label{obs:obtuse-piece-length-new}
        In each of the remaining corridors, all obtuse subcorridors must be short.
\end{lemma}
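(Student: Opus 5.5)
The plan is to argue by contradiction from the geometry of an obtuse bend. The key point is that an obtuse subcorridor is flanked by two subcorridors lying on opposite sides of it, so its own length is hemmed in by the positions of those neighbours.

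First I would make precise what processing the short acute subcorridors does. Processing an acute subcorridor $C_1$ replaces it by $O(1/\eps_{\text{box}})$ boxes. The paths starting at the corners of these boxes then split the rest of the corridor into smaller corridors. In each smaller corridor, the neighbour of $C_1$ has become a first or last piece, and is therefore acute. I would then observe that each remaining corridor has the following structure.
\begin{itemize}
\item Every acute subcorridor, including the two end pieces of an open corridor, is long.
\item Any subcorridor strictly between two consecutive acute subcorridors is obtuse.
\end{itemize}
So a maximal run of obtuse pieces between two acute (hence long) pieces forms a monotone staircase.

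Second, I would show that such a staircase forces every obtuse piece in it to be short. Take an obtuse horizontal subcorridor $C_j$. Its neighbours $C_{j-1}$ and $C_{j+1}$ are vertical and lie on opposite sides, say $C_{j-1}$ goes down from the left end of $C_j$ and $C_{j+1}$ goes up from its right end. Suppose the length of $C_j$ exceeds $N/2$.

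Follow the staircase in both directions until the first acute piece on each side. These two acute pieces $A$ and $A'$ are long, so each has length more than $N/2$.
\begin{itemize}
\item If $A$ is vertical, the monotone descent from the left end of $C_j$ must end in a vertical segment of length more than $N/2$.
\item If $A$ is horizontal, then $A$ lies in the same direction as $C_j$ and is displaced from it by the intermediate vertical pieces.
\end{itemize}
In either case I would add up the extents in a single coordinate direction, using monotonicity of the staircase. The aim is to show that the two long horizontal pieces, or $C_j$ together with a long horizontal acute piece, would need horizontal extent greater than $N$ with non-overlapping $x$-projections. Alternatively, the vertical pieces on both sides would need vertical extent greater than $N$. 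Either way this contradicts the fact that the knapsack is $[0,N]\times[0,N]$. Corridor widths are at most $\epslarge N$, which I would carry as a lower-order correction.

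The main obstacle is this contradiction step. A monotone staircase alone does not obviously prevent one long horizontal piece, one long vertical piece and a long obtuse piece from fitting together, because monotone staircases can span nearly $N$ in both directions. I would therefore first check carefully which pieces' projections are forced to be disjoint. If the pure length count fails, I would fall back on the fact that in the \cite{galvez2021approximating} setting a long subcorridor spans more than half the knapsack, so two long parallel pieces at a bend must have disjoint projections.

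Closed corridors need a separate check, since they have no end pieces. There I would use that a closed rectilinear polygon must have at least four convex (acute) corners. Between them, obtuse runs are monotone, and the same counting argument applies.
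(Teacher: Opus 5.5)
Your argument is the paper's: from the long obtuse piece $C_j$, walk to the first acute piece on each side, which must be long. If either of these is horizontal, its $x$-projection is disjoint from that of $C_j$; if both are vertical, their $y$-projections are disjoint; either way the knapsack would have width or height exceeding $N$. The obstacle you flag does not arise, because this two-case split is exhaustive and each case uses only one coordinate, with the needed disjointness coming directly from the monotonicity of the boundary chains along a run of obtuse pieces.
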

    \begin{proof}
        Suppose there exists an obtuse subcorridor $C_1$ that is long, and w.l.o.g.~assume that $C_1$ is horizontal. Starting from the left (resp. right) bend of $C_1$, we traverse the subcorridors one by one until we reach the first acute subcorridor $C'$ (resp. $C''$). Since all the short acute subcorridors were already processed, it must be the case that both $C'$ and $C''$ are long. Since $C_1$ is a horizontal long subcorridor, neither of $C'$ and $C''$ can be horizontal, else the width of the knapsack would exceed $N$. But then both $C'$ and $C''$ are vertical long subcorridors having only obtuse subcorridors in between, implying that the height of the knapsack exceeds $N$, a contradiction.
    \end{proof}

    From the above observation, it follows that in the remaining corridors, all the long subcorridors are acute, while all the short subcorridors are obtuse. Notice also that between any two vertical long subcorridors, there must exist a horizontal long subcorridor, otherwise the height of the knapsack would exceed~$N$. We perform two different corridor processings as follows. In the first case, we first process all the horizontal long subcorridors in any order, followed by processing the obtuse subcorridors (which are all short). At the end, only the vertical long subcorridors survive, which we can now directly partition into boxes. In this way, we save all the thin items in the vertical long subcorridors. The other case is symmetric with the roles of the horizontal and vertical long subcorridors reversed. We therefore obtain the following lemma.

    \begin{lemma}
    \label{lem:new-processing}
        We have $p(\OPT_{L\&C}) \ge (1-\epsilon)(p(\OPT_{LF})+p(\OPT_{SF})+\frac{1}{2}p(\OPT_{LT}))$.
    \end{lemma}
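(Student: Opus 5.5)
The plan is to analyse the two symmetric processings described just before the statement and average their profits. In each of them every fat item is kept, and in one of them the thin items of the vertical long subcorridors are also kept; in the other, those of the horizontal long subcorridors are kept.

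First I would fix the state after all short acute subcorridors have been processed (in any feasible order). By Lemma~\ref{obs:obtuse-piece-length-new}, every remaining long subcorridor is acute and every remaining short one is obtuse. Processing a subcorridor discards only its thin items, plus $O_{\eps,\epslarge}(1)$ items in $\OPT_{\text{kill}}$ that we may assume have profit at most $\eps\cdot p(\OPT)$. So after this stage all items of $\OPT_{LF}\cup\OPT_{SF}$ survive; the untouchable and large items, which sit in their own rectangular corridors, survive too.

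Next I would run processing A:
\begin{itemize}
\item process all horizontal long subcorridors, then all obtuse (short) subcorridors;
\item the pieces left are vertical long subcorridors, whose bends have all been removed, so they are rectangular boxes.
\end{itemize}
The point I expect to be the main obstacle is that this processing order is feasible. Each processing step requires the processed subcorridor to be acute at that moment, i.e., it must be an end piece of its current open corridor, or the first piece of a closed corridor. Processing a horizontal long acute subcorridor splits the corridor into smaller corridors whose new first pieces are the adjacent obtuse subcorridors. These are then end pieces, hence acute, and can be processed. We keep going until each remaining piece is a single vertical long subcorridor. The observation that between any two vertical long subcorridors lies a horizontal long one guarantees that no two vertical long pieces remain glued through an unprocessed chain. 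Closed corridors are handled as in the proof of Lemma~\ref{lem:transformation-container-packing-new}: since a closed corridor has a horizontal long piece, we start the processing there and obtain open corridors.

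The resulting boxes are turned into containers by Lemma~\ref{lem:box-to-container}, losing at most an $\eps$-fraction of the profit plus constantly many items, which we add to the killed set and handle by the iteration argument. We keep thin items only from the vertical long subcorridors and from no short one. The resulting container packing is a special case of an $L\&C$ packing, with an empty L, so
\[ p(\OPT_{L\&C})\ge(1-O(\eps))\bigl(p(\OPT_{LF})+p(\OPT_{SF})+p(\OPT_{LT}^{V})\bigr),\]
where $\OPT_{LT}^{V}$ denotes the thin items of $\OPT_{LT}$ coming from vertical long subcorridors. Processing B, the symmetric variant, gives the same bound with $\OPT_{LT}^{H}$ in place of $\OPT_{LT}^{V}$.

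One care point is that the notion of thin depends on the processing order. However, $\OPT_{LT}$ consists of items that are thin in some strategy, and in processing A no vertical long item is ever discarded as thin. Hence every item of $\OPT_{LT}$ lies in $\OPT_{LT}^{H}\cup\OPT_{LT}^{V}$ and is preserved by at least one of the two processings. Taking the better of the two gives at least the average, which yields $(1-\eps)(p(\OPT_{LF})+p(\OPT_{SF})+\frac12 p(\OPT_{LT}))$ after rescaling $\eps$.
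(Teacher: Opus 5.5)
Your proposal is correct and matches the paper's argument: after processing the short acute subcorridors, run the two symmetric processings (horizontal long, then obtuse, keeping vertical long intact, and vice versa), each keeping all fat items plus the thin items of one orientation of long subcorridors, and average the two bounds. Your extra remarks on why the processing order is feasible and how closed corridors are opened fill in details the paper leaves implicit.
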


    \aw{We remark that Lemma \ref{lem:lcpacking-focs} still holds if we define our sets $\OPT_{F}, \OPT_{T}$, etc. now according to the corridor decompositions from \cite{galvez2021approximating} and, additionally, our new alternative corridor processing. For the rest of this paper, all these sets are defined via all these corridor processing steps.} Also, note that in all the corridor processing steps, we need to discard a set of $O_{\eps,\epslarge}(1)$ items. We assign these items to the set $\OPT_{\text{kill}}$.
    Combining Lemmas \ref{lem:lcpacking-focs} and \ref{lem:new-processing}, we prove Theorem \ref{theorem:non-rotation-improved-1}.

    \nonrotationimproved*

    \begin{proof}
        We have
        \begin{alignat*}{2}
            2p(\OPT_{L\&C}) &\ge (1-\epsilon)(2p(\OPT_{LF})+p(\OPT_{SF})+p(\OPT_{ST})) &\quad& \text{[Lemma \ref{lem:lcpacking-focs}(i)]} \\
            6p(\OPT_{L\& C}) &\ge (1-\epsilon)\left(\frac{9}{2}p(\OPT_{LT})+6p(\OPT_{ST})+3p(\OPT_{SF})\right) &\quad& \text{[Lemma \ref{lem:lcpacking-focs}(ii)]} \\
            5p(\OPT_{L\& C}) &\ge (1-\epsilon)\left(5p(\OPT_{LF})+5p(\OPT_{SF})+\frac{5}{2}p(\OPT_{LT})\right) &\quad& \text{[Lemma \ref{lem:new-processing}]}
        \end{alignat*}
        Adding the above three inequalities, we obtain
        \begin{align*}
            13p(\OPT_{L\& C}) &\ge (1-\epsilon)(7p(\OPT_{LF})+ 9p(\OPT_{SF})+7p(\OPT_{LT})+7p(\OPT_{ST})) \\
            &\ge (7-7\epsilon)p(\OPT),
        \end{align*}
        thus completing the proof.
    \end{proof}

\paragraph{Handling small items.}
We now \aw{describe} 
how we pack the small items that we ignored so far. \aw{We handle this step in the same way as in \cite{galvez2021approximating} (see Section 6.3.1 in \cite{galvez2021approximating} for details)}. Let $\OPT_{\text{small}}^{\text{cross}}:= \OPT_{\text{small}}\cap \OPT_{\text{corr}}^{\text{cross}}$ be the items not completely lying inside the corridors.
We apply our corridor processing techniques to the packing of $\OPT_{\text{corr}}$ (i.e., by including the items of $\OPT_{\text{small}}\setminus \OPT_{\text{small}}^{\text{cross}}$) and let $\OPT_{\text{small}}^{\text{kill}}$ denote the items of $\OPT_{\text{small}}$ that are intersected by some line during the corridor processing. Let $\OPT'_{\text{small}}:= \OPT_{\text{small}}^{\text{cross}} \cup \OPT_{\text{small}}^{\text{kill}}$ be the items that we wish to repack. We assign each item $i \in \OPT'_{\text{small}}$ to the cell $C$ that maximizes the area of the intersection of $i$ with $C$. By an argument similar to the proof of Lemma \ref{lem:corr-decomposition}, the total area of the items assigned to $C$ can be bounded by $O(\eps^3)\cdot a(C)$, \aw{i.e., by choosing the function $f$ accordingly}.
Note that in the construction of boxes from the corridors in Lemmas \ref{lem:lcpacking-focs}(i) and \ref{lem:new-processing}, the items are moved only within a subcorridor. Assume for some cell $C\in \CC(I')$, there is a horizontal subcorridor $H$ intersecting \dk{the top or bottom boundary of} $C$ such that some items within $H$ were moved (vertically) to $C$ that were not in $C$ before. Since the part of the subcorridor $H$ lying within $C$ has a height of at most $\epslarge\cdot h(C)$, the total area of $C$ \dk{occupied by items from such horizontal subcorridors}
is at most $\epslarge\cdot a(C)$. Analogously, an area of \aw{at most} $\epslarge\cdot a(C)$ is \dk{occupied by items that were moved horizontally from other cells into $C$ through vertical subcorridors intersecting the left and right boundaries of $C$}. We now shrink each horizontal (resp. vertical) box arising after processing the corridors by a factor of $1-\eps$, by losing only an $\eps$-fraction of the profit of the box. The $O_{\eps,\epslarge}(1)$ items discarded during this process are assigned to the set $\OPT_{\text{kill}}$. Overall this creates a free area of at least $\Omega(\eps)\cdot a(C)$ inside the cell $C$. Since the area of the small items assigned to $C$ is only $O(\eps^3)\cdot a(C)$, they can be packed using NFDH into $O_{\epssmall}(1)$ area containers inside $C$.

Note that summing over all cells, the total area of the items of $\OPT'_{\text{small}}$ is bounded by $O(\eps^3)\cdot(N^2-a(I'))$. In the packings that we \aw{will} present in \Cref{sec:weighted-case}, we \aw{will use a different way to} repack the items of $\OPT'_{\text{small}}$. \aw{More precisely, we will} identify a certain region of large area inside the knapsack that is sufficient to pack $\OPT'_{\text{small}}$.

\section{Hardness of Container Packings for weighted case of 2DKR}
\label{sec:hardness-of-container-packing}
In Section \ref{sec:ptas-cardinality}, we showed that in the cardinality case of our problem,
there always exists a $(1+\eps)$-approximate container packing with
$O_{\eps}(1)$ containers. A natural question is whether this result
can be generalized to the weighted case. In \cite{galvez2021approximating}, it was shown that in the weighted case, there is always a $(1.5+\eps)$-approximate
solution using $O_{\eps}(1)$ containers. We show that this is essentially
the best possible.

\begin{theorem}
\label{lem:1.5-hardness}
For any $\delta > 0$, there is an instance of
the weighted case of the two-dimensional knapsack problem with rotations
such that any container packing providing a $(1.5-\delta)$-approximation
needs $\Omega(\delta\log N)$ containers.
\end{theorem}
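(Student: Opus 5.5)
We construct, for a given $\delta>0$, an instance with one huge item and many thin items at geometrically varying scales, as the introduction suggests. Concretely, there is a huge item $i^*$ of width $N$ and height $N-\eta$ for some small $\eta\ge 1$, with profit $1$. The remaining items are $k=\Theta(\log N)$ groups of skewed items. Group $j$ consists of items of height (thickness) about $2^{j}$ and length about $N/2^{j}$ (or lengths chosen so that a group fills exactly one L-shaped or staircase layer). Their profits are scaled so that each group has total profit roughly $1/k$ and all groups together have profit $1$.

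The instance is designed so that $\OPT$ packs $i^*$ together with all thin items. The thin items are packed in the leftover strip of height $\eta$ plus a staircase-like arrangement: if $i^*$ is shortened slightly it leaves an L-region, and the thin items with geometrically varying dimensions tile it. Alternatively, and more simply, we avoid $i^*$'s exact height and let $\OPT$ consist of $i^*$ plus a set of thin items that exactly fill a strip of height $\eta$ only because items of all $\log N$ scales interlock, e.g.\ rotated vertical items of width $1$ whose lengths $\eta$ sit next to horizontal ones. The target value is $p(\OPT)\approx 1.5$ normalized, with profit $1$ for $i^*$ and $1/2$ for the thin items, chosen so that the thin items alone, without $i^*$, give profit at most about $1$ in any packing. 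This is achieved by giving the thin items total profit $1$ and making each one barely fit. Then a packing without $i^*$ earns at most $1$, which is a ratio $1.5$.

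The core step is a counting argument for packings that contain $i^*$. Once $i^*$ is placed, the free space is a strip of height $\eta$ (we may take it at the top, since $i^*$ spans the full width). A container packing with $c$ containers inside this strip: each container is horizontal, vertical or area. We must show that $c$ containers capture only a $O(c/\log N)$ fraction of the thin items' profit. We design the groups so that items of scale $j$ can be packed together only in a container of height $\Theta(2^j)$. A horizontal container of height $h$ stacks items, so it holds only items of height at most $h$ with total height at most $h$. Geometric profit-per-area weighting then ensures that a container of height $h$ collects profit essentially only from $O(1)$ scales near $\log h$. Vertical containers inside the strip have width at most... they are bounded by the strip's height, so items in them have length at most $\eta$. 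We must also rule out area containers profiting from many scales, using the constraint that items are an $\epsilon$-fraction of the container. Hence $c$ containers collect at most $1/2\cdot O(c)/k$ of the thin profit. To exceed ratio $1.5-\delta$ one needs thin profit $\Omega(\delta)$, forcing $c=\Omega(\delta k)=\Omega(\delta\log N)$.

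The main obstacle is getting the exact instance right: the thin items must be simultaneously packable with $i^*$ in $\OPT$ (requiring a cleverly interlocking arrangement across all scales within the $\eta$-strip, and possibly allowing $i^*$ to be rotated or not), while any rectangle container captures only $O(1)$ scales. I would first try an arrangement where the scales decrease in length as thickness doubles, so that they form an exact "staircase" in which widths and heights both vary. The horizontal and rotated placements of each scale must both be shown useless for aggregating many scales in one container; rotations are what make this check delicate.
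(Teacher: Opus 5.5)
\textbf{There is a genuine gap, and it lies in your core counting step, not only in the unfinished construction.}

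You want to show that once $i^*$ is placed, $c$ containers inside the free $N\times\eta$ strip capture only an $O(c/\log N)$ fraction of the thin profit. No instance can satisfy this once $N$ is large. To see why, classify the items of the strip relative to the strip itself:
\begin{itemize}
\item at most $1/\epsilon^2$ of them are both wider than $\epsilon N$ and taller than $\epsilon\eta$, and each of these can get its own container;
\item a random horizontal sub-strip of height $\epsilon\eta$ hits an item of height at most $\epsilon\eta$ only with probability $O(\epsilon)$;
\item a random vertical sub-strip of width $\epsilon N$ does the same for items of width at most $\epsilon N$.
\end{itemize}
Delete the better of the two sub-strips and apply Lemma~\ref{lem:resource-augmentation} to the freed space. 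This gives $O_\epsilon(1)$ containers inside the strip that keep at least roughly a third (typically about half) of the thin profit.

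Two consequences follow. First, area containers in particular absorb items from many of your scales at once, so no bound of the form ``each container profits from $O(1)$ scales'' can hold. Second, with your normalization $p(i^*)=1$ and thin profit $1/2$, constantly many containers already give a ratio bounded away from $3/2$. Your two normalizations are also mutually inconsistent. The right balance is $p(i^*)$ equal to half the thin profit, so that both ``drop $i^*$'' and ``keep $i^*$ plus half of the thin items'' yield exactly $\frac23 p(\mathrm{OPT})$. What must then be shown is that, with $i^*$ packed, $c$ containers capture at most a $\frac12+O(c/\log N)$ fraction of the thin profit.

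\textbf{The missing idea is a \emph{pairing} structure} that makes exactly one half of the thin profit cheap and every additional pair expensive. The paper's instance is as follows.
\begin{itemize}
\item Set $N=2^{3(n+1)/2}$.
\item For each $j\le (n-1)/2$, take an item $H_j$ of size $(N-(2^{j-1}-1)N^{2/3})\times 2^{j-1}$ and an item $V_j$ of size $2^{j-1}N^{2/3}\times(N^{1/3}-2^j+1)$, all of profit $1$.
\item Add $i^*$ of size $N\times(N-N^{1/3})$ and profit $(n-1)/2$.
\end{itemize}
The pairs interlock in a staircase that fills the strip of height $N^{1/3}$. Every long side exceeds $N^{1/3}$, so orientations inside the strip are forced; the instance then behaves like the non-rotational L-instance, and area containers are impossible. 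Meanwhile all $H_j$ fit into a single horizontal container, which gives the cheap half.

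A two-case argument (according to whether $V_j$ shares $H_j$'s container or not) shows that whenever both $H_j$ and $V_j$ are packed, $H_j$ is the narrowest $H$-item in its container. Hence the number $s$ of complete pairs is at most the number $c$ of containers, and at most $s+(n-1)/2$ thin items are packed. A $(3/2-\delta)$-approximation must contain $i^*$, so it forces $c\ge s=\Omega(\delta n)=\Omega(\delta\log N)$.

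Your one-sided geometric scales, aimed at a ``small fraction'' bound, cannot be turned into this argument without such pairs.
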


\begin{figure}
  \centering

  \begin{subfigure}[b]{0.45\textwidth}
  \centering
    \includegraphics[width=0.55\linewidth]{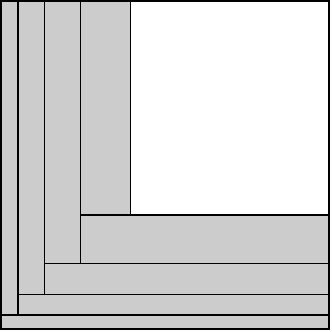}
    \vspace*{0.1cm}
    \caption{}
    \label{fig:non-rot-l-packing}
  \end{subfigure}
  \hfill
  \begin{subfigure}[b]{0.45\textwidth}
  \hspace*{-1cm}
  \centering
      \includegraphics[width=0.7\linewidth]{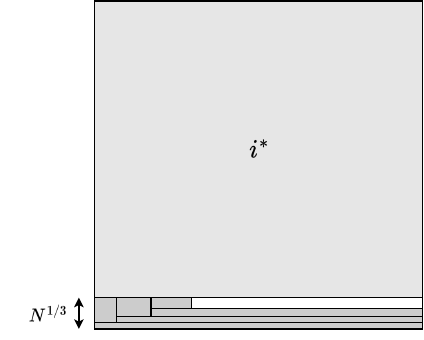}
    \caption{}
    \label{fig:container-hardness}
  \end{subfigure}

  \caption{(a) Hard instance for setting without rotations. (b) Hard instance for setting with rotations.}
  \label{fig:hard-instances}
\end{figure}

We prove \Cref{lem:1.5-hardness} in the remainder of this subsection.
We define an instance of 2DKR for which we will show that a container packing with only a few containers can obtain only (essentially) a $2/3$-fraction of the optimal profit.
Let the number of items $n$ be an odd integer and let the side length of the knapsack be $N:= 2^{\frac{3(n+1)}{2}}$. We consider the instance $I$ defined as follows. For each $j\in [\frac{n-1}{2}]$, we have an item $H_j$ of size $(N-(2^{j-1}-1)N^{2/3}) \times 2^{j-1}$, and an item $V_j$ of size $(2^{j-1}N^{2/3})\times \dk{(N^{1/3}-2^{j}+1)}$, each having a profit of $1$. Finally, we have a ``special" item $i^*$ of size $N \times (N-N^{1/3})$, and a profit of $\frac{n-1}{2}$ (see Figure \ref{fig:container-hardness}).

This is inspired by a family of instances for 2DK presented in \cite{galvez2021approximating} for which it was (essentially) shown that no container packing with few containers achieves an approximation ratio better than 2.
In the instance in \cite{galvez2021approximating}, the items are defined such that they fit tightly into a container with
the shape of an $\Ls$ (see \Cref{fig:non-rot-l-packing}). However, it is easy to pack them into one container by simply rotating the vertical items by 90 degrees. Therefore, in our construction we introduced the special item $i^*$ and scale down the heights of all items in the construction in \cite{galvez2021approximating} (see \Cref{fig:container-hardness}). Then, ignoring the item $i^*$, the items still fit inside a container with
the shape of an $\Ls$, however, its vertical arm is much smaller, i.e., $N^{1/3}$. Thus, the items in the horizontal arm do not fit in the vertical arm after rotation. Furthermore, we made the items for the vertical arm slightly wider, i.e., at least $N^{2/3}$, so that for each of them the width is larger than the height of the horizontal arm. Thus, they do not fit in the horizontal arm if we rotate them. Therefore, our obtained items behave like items in an instance of 2DK, since effectively we cannot rotate them.

\begin{lemma}
\label{lem:feasible-packing}
    There exists a feasible packing of all input items inside the knapsack.
\end{lemma}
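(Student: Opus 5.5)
We give an explicit packing; no rotation is needed. The special item $i^*$, of size $N\times(N-N^{1/3})$, goes at the top of the knapsack, occupying $[0,N]\times[N^{1/3},N]$. This leaves the bottom strip $[0,N]\times[0,N^{1/3}]$ of height $N^{1/3}$, and we pack all items $H_j,V_j$ into it in an $\Ls$-shaped arrangement, mirroring the 2DK construction of \cite{galvez2021approximating} with heights scaled down.

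The horizontal items are stacked flush right. Place $H_1$ at the bottom, then $H_2$ on top of it, and so on. $H_j$ has height $2^{j-1}$, so it sits at vertical offset $\sum_{l<j}2^{l-1}=2^{j-1}-1$, and its bottom-left corner is at $((2^{j-1}-1)N^{2/3},\,2^{j-1}-1)$. Its right edge is at $(2^{j-1}-1)N^{2/3}+N-(2^{j-1}-1)N^{2/3}=N$, which is correct. The top of the stack is at $2^{(n-1)/2}-1$, and we must check this is at most $N^{1/3}=2^{(n+1)/2}$. It is, since $2^{(n-1)/2}-1<2^{(n+1)/2}$.

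The vertical items $V_j$ go to the left of the $H$'s, side by side. $V_j$ is placed with bottom-left corner $((2^{j-1}-1)N^{2/3},\,2^{j}-1)$. Its width is $2^{j-1}N^{2/3}$, so it spans horizontally $[(2^{j-1}-1)N^{2/3},(2^j-1)N^{2/3}]$. Its height is $N^{1/3}-2^j+1$, so it spans vertically $[2^j-1,N^{1/3}]$.

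Three disjointness checks remain.
\begin{enumerate}
\item Consecutive $V$'s do not overlap: their horizontal intervals are consecutive and disjoint.
\item All $V_j$ lie below $i^*$: each has top edge at exactly $N^{1/3}$.
\item No $V_j$ meets any $H_k$. If $k\le j$, then $H_k$ has top $2^k-1\le 2^j-1$, which is the bottom of $V_j$, so $V_j$ is above it. If $k\ge j+1$, then $H_k$ starts horizontally at $(2^{k-1}-1)N^{2/3}\ge(2^j-1)N^{2/3}$, which is the right edge of $V_j$, so $V_j$ is to its left.
\end{enumerate}
Finally, everything fits inside $K$. The largest right edge among the $V$'s is $(2^{(n-1)/2}-1)N^{2/3}<N$. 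The $H_j$ have nonnegative width because $(2^{j-1}-1)N^{2/3}<N^{1/3}N^{2/3}=N$. The $V_j$ have positive height because $2^j\le 2^{(n-1)/2}<N^{1/3}$.

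The only real obstacle is index bookkeeping in the $V_j$–$H_k$ case split, in particular the boundary case $k=j$ and whether $H_j$ or $H_{j+1}$ shares the vertical line at the right edge of $V_j$. Since all items are open rectangles, touching boundaries are allowed, so both boundary cases are fine.
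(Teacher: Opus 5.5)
Your proposal is correct and is essentially the paper's proof: you use the same packing, with $i^*$ on top, each $H_j$ occupying $[(2^{j-1}-1)N^{2/3},N]\times[2^{j-1}-1,2^j-1]$, and each $V_j$ sitting directly above $H_j$ up to height $N^{1/3}$. The paper verifies this placement inductively via the identities $w(H_j)+\sum_{i<j}w(V_i)=N$ and $h(V_j)+\sum_{i\le j}h(H_i)=N^{1/3}$, whereas you write out explicit coordinates and check disjointness directly; both arguments are sound.
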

\begin{proof}
    First, we orient the rectangle $i^*$ such that $w(i^*)=N$ and $h(i^*)=N-N^{1/3}$ and place it such that the top edge of $i^*$ coincides with the top boundary of the knapsack. In the remaining empty strip $\SM$ of height $N^{1/3}$ below $i^*$, we pack the rectangles of $\bigcup_{j\in [\frac{n-1}{2}]} H_j \cup V_j$ as follows. \aw{For each  $j\in [\frac{n-1}{2}]$} we orient the rectangles \aw{$H_j$ and $V_j$} such that $w(H_j)=N-(2^{j-1}-1)N^{2/3}$ and $h(H_j)=2^{j-1}$, and $w(V_j)=2^{j-1}N^{2/3}$ and \dk{$h(V_j)=N^{1/3}-2^{j}+1$}. We place $H_1$ \aw{such that its bottom edge touches the bottom boundary of the knapsack} and then place $V_1$ on top of $H_1$ as much to the left as possible -- note that $V_1$ \aw{fits inside} the strip $\SM$ since $h(V_1)+h(H_1)=N^{1/3}$. \dk{Next, note that $w(H_2)+w(V_1)=(N-N^{2/3})+N^{2/3}=N$, so we can place $H_2$ such that its bottom edge touches the top edge of $H_1$. We then place $V_2$ above $H_2$ as much to the left as possible (touching the right edge of $V_1$), noting that this is possible since $h(V_2)+h(H_2)+h(H_1)=N^{1/3}=h(\SM)$. In general, we have that \ak{$w(H_j)+\sum_{i=1}^{j-1}w(V_i)=(N-(2^{j-1}-1)N^{2/3})+\sum_{i=1}^{j-1}2^{i-1}N^{2/3}=(N-(2^{j-1}-1)N^{2/3})+(2^{j-1}-1)N^{2/3}=N$} and $h(V_j)+\sum_{i=1}^{j} h(H_i)= (N^{1/3}-2^j+1)+\sum_{i=1}^{j} 2^{i-1} = N^{1/3}$
    for all $j\ge 2$, hence after $H_1,\ldots,H_{j-1}$ and $V_1,\ldots, V_{j-1}$ have been packed, we can pack $H_j$ with its bottom edge touching $H_{j-1}$, and pack $V_j$ above $H_j$ touching the right edge of $V_{j-1}$. In this way, we obtain a packing of $\bigcup_{j\in [\frac{n-1}{2}]} H_j \cup V_j$ inside the strip $\SM$ as shown in Figure \ref{fig:container-hardness}.}
\end{proof}

We are now ready to prove Lemma \ref{lem:1.5-hardness}.

\begin{proof}[Proof of \Cref{lem:1.5-hardness}]
    Due to Lemma \ref{lem:feasible-packing}, we have $p(\OPT)=3\cdot \frac{n-1}{2}$. Therefore, any $(3/2-\delta)$-approximation must pack the rectangle $i^*$. W.l.o.g., we assume that $i^*$ is oriented such that $w(i^*)=\aw{N}$ and $h(i^*)=N-N^{1/3}$, and the top edge of $i^*$ coincides with the top knapsack boundary. This means that in the strip $\SM$ of height $N^{1/3}$ and width $N$ lying below $i^*$, at least $(1/2+\delta)(n-1)$ rectangles of $\bigcup_{j\in [\frac{n-1}{2}]} H_j \cup V_j$ must be packed into containers.

    Observe that for each $j\in [\frac{n-1}{2}]$, if $H_j$ is packed in the strip $\SM$, then it must be oriented such that $w(H_j)=N-(2^{j-1}-1)N^{2/3}$ and $h(H_j)=2^{j-1}$ \dk{(this is because the longer side of $H_j$ has length \ak{$N-(2^{j-1}-1)N^{2/3}\ge N-(2^{\frac{(n+1)}{2}-2})N^{2/3} \ge N-\frac{1}{4}N^{1/3}\cdot N^{2/3}=\frac{3}{4}N>h(\SM)$)}}. Also, since the \dk{longer} side of $V_j$ has length at least $2^{j-1}N^{2/3}\ge N^{2/3}>h(\SM)$, it follows that if $V_j$ is packed, it must be oriented such that $w(V_j)=2^{j-1}N^{2/3}$ and $h(V_j)=N^{1/3}-2^{j}+1$. Suppose there exists a container packing of a subset of items from $\bigcup_{j\in [\frac{n-1}{2}]} H_j \cup V_j$ into $c$ containers inside $\SM$, for some $c\in \mathbb{N}$. Note that there cannot be any area containers since $w(H_j)>N/2$ and $h(V_j)>h(\SM)/2$, for all $j$, \aw{using that $\delta \le 1/2$ since otherwise the claim is trivial}. For each $j$, we call the rectangles $H_j$ and $V_j$ a \textit{symmetric pair}. Observe that if the packing contains $s$ symmetric pairs, then the number of items packed \aw{apart from $i^*$} is at most \dk{$2s+(\frac{n-1}{2}-s)=s+\frac{n-1}{2}$}. We shall show that, if there are $c$ containers in the packing, then $s \le c$. \dk{If this holds, then since for a $(3/2-\delta)$-approximation, at least $(1/2+\delta)(n-1)$ items from $\bigcup_{j\in [\frac{n-1}{2}]} H_j \cup V_j$ must be packed.
    \aw{However, before we argued that at most $s+\frac{n-1}{2}$ of them are packed, which yields}  that $s+\frac{n-1}{2}\ge (\frac{1}{2}+\delta)(n-1)$, implying $c\ge s=\Omega(\delta n) = \Omega(\delta \log N)$.}

    We now establish that $s\le c$. Consider a $j\in [\frac{n-1}{2}]$ such that both $H_j$ and $V_j$ are packed. Let $C_j$ be the container that packs $H_j$. We first show that $H_j$ must be the rectangle of smallest width \aw{among the rectangles $\{H_{j'}\}_{j'\in [\frac{n-1}{2}]}$} in the container $C_j$. For this, we consider the following two cases.
    \begin{itemize}
        \item {\em $H_j$ and $V_j$ are both packed in $C_j$}: In this case, $C_j$ cannot be a vertical container, since otherwise its width would be at least 
        $w(H_j)+w(V_j)=N-(2^{j-1}-1)N^{2/3}+2^{j-1}N^{2/3}>N$, a contradiction. If $C_j$ also contains $H_{j'}$ for some $j'>j$, then the height of $C_j$ is at least $h(V_j)+h(H_{j'})=(N^{1/3}-2^{j}+1)+2^{j'-1}>N^{1/3}=h(\SM)$, a contradiction. Hence $H_j$ is the rectangle of smallest width inside $C_j$.
        \item {\em $V_j$ is packed in a distinct container $C'_j$}: \dk{Then we have $w(C_j)+w(C'_j)\ge w(H_j)+w(V_j)= N-(2^{j-1}-1)N^{2/3}+2^{j-1}N^{2/3}>N$, and thus there must exist a vertical line segment $\ell$ inside $\SM$ intersecting both $C_j$ and $C'_j$. Suppose $C_j$ also contains $H_{j'}$ for some $j'>j$. Since $w(H_j)+w(H_{j'})>N$, $C_j$ cannot be a vertical container. Therefore $C_j$ is a horizontal container and the height of $C_j$ is then at least $h(H_j)+h(H_{j'})>2^{j'-1}$. But then the length of the segment $\ell$ would exceed
        $2^{j'-1}+\aw{h(V_j)}=2^{j'-1}+(N^{1/3}-2^{j}+1)>N^{1/3}=h(\SM)$, a contradiction.}
    \end{itemize}
    \dk{Consider \aw{a map} $f$ that maps each packed symmetric pair $(H_j,V_j)$ to the container $C_j$ where the item $H_j$ is packed. Since we have established that $H_j$ is the rectangle of smallest width inside $C_j$, the mapping $f$ must be one-to-one. This implies that the number of symmetric pairs in the packing must be upper bounded by the number of containers, i.e., it holds that $s\le c$.}
\end{proof}

\section{Improved approximation for weighted case of 2DKR}
\label{sec:weighted-case}
We prove \Cref{theorem:rotation-weighted} in this section. Notice that the instances from the proof of \Cref{lem:1.5-hardness} have
one very large item that is as wide as the knapsack and almost (but
not exactly) as high as the knapsack. In fact, we can show that this
case is (essentially) the only bottleneck for getting a packing with
$O_{\eps}(1)$ containers and an approximation ratio strictly better
than $1.5$. \aw{This is our first step to prove \Cref{theorem:rotation-weighted}.}

Let $\eps>0$. We say that an item $i^*$ is \emph{huge}
if $h(i^*)\ge(1/2+\eps)N$ and $w(i^*)\ge(1/2+\eps)N$. Note that there can be at most one huge item in any feasible packing. It turns out that if $\OPT$ does not contain a huge item, the most profitable container packing already achieves an approximation ratio less than $1.5$.

\begin{restatable}{lemma}{nohugeitem}
\label{lem:no-huge-item}
    There exists a container packing with $O_{\epsilon}(1)$ containers and an approximation ratio of $\frac{190}{127}+\epsilon < 1.497+\eps$ if there is no huge item in the optimal solution.
\end{restatable}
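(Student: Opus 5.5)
We will show that, when $\OPT$ contains no huge item, the best of a small number of candidate container packings achieves ratio $190/127+\eps$. The plan mirrors the proof of Theorem~\ref{theorem:non-rotation-improved-1}. We first write down several lower bounds on $p(\OPT(c))$, each as a linear combination of the profits of the sets $\OPT_{LF},\OPT_{SF},\OPT_{LT},\OPT_{ST}$, together with the large/untouchable items. We then take a convex combination of these inequalities to obtain $127\,p(\OPT(c))\ge(1-O(\eps))\,90\,p(\OPT)$.

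\textbf{Setup.} We apply Lemma~\ref{lem:corr-decomposition} to $\OPT$, iterating on the untouchable set $I'$ so that $\OPT_{\text{kill}}$ has negligible profit. We classify the surviving items into fat/thin and long/short exactly as in Section~\ref{sec:non-rotation-improved}. The key new ingredient is rotation: any L-shaped region used in Lemma~\ref{lem:lcpacking-focs} and Lemma~\ref{lem:new-processing} must be replaced by rectangular containers.

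\textbf{Step 1 (L-regions become containers).} Suppose the thin items fit in an L-region of arm widths $\epsthin N$ at the left and bottom knapsack boundaries. Since rotations are allowed, we can rotate the vertical arm's items by $90^\circ$, stack both arms into a single horizontal strip of height $O(\epsthin N)$, and pack that strip with Steinberg's algorithm and Lemma~\ref{lem:resource-augmentation}. This needs an empty strip of height $\Omega(\epsthin N)$ next to the fat containers. Since there is no huge item, we can create it by shrinking, as in Lemma~\ref{lem:weighted-shrinking}, via a weighted resource-contraction argument from \cite{galvez2021approximating}. That argument states that without a huge item one can free such a strip while losing at most about a factor $1/2$ only on the (at most two) large items meeting a specific line, and otherwise only an $\eps$-fraction.

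\textbf{Step 2 (profit inequalities).} This step yields the $L\&C$ bounds of Lemmas~\ref{lem:lcpacking-focs} and~\ref{lem:new-processing}, now as container bounds, plus a new bound. The new bound holds the fat items in full and adds all thin items $\OPT_T$ into the strip freed in Step 1, at the cost of a fraction of the large items: $p(\OPT(c))\ge p(\OPT_{LF}\setminus\OPT_{\text{large}})+p(\OPT_{SF})+p(\OPT_T)+\alpha\, p(\OPT_{\text{large}})$ for some constant $\alpha$, say $\alpha=1/2$. A complementary packing keeps all large items and the fat items. It uses the fact that rotated skewed items together with Theorem~\ref{theorem:skewed-packing}'s machinery give a $(1+\eps)$ bound on the skewed part. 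Large items are therefore split off as a separate variable $p(\OPT_{\text{large}})$.

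\textbf{Step 3 (LP combination).} Writing $x_{LF},x_{SF},x_{LT},x_{ST},x_{L}$ with total $1$, we solve the small LP that minimizes the best candidate's profit. We then verify by multipliers (as in the $2,6,5$ combination for $13/7$) that the minimum is $127/190$. Finding the exact multipliers is routine once the inequalities are fixed.

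\textbf{Main obstacle.} Step 1 is the main difficulty: freeing an $\Omega(\epsthin N)$ strip in the weighted setting without a huge item, while controlling the profit lost on the few large items that block every candidate strip. I would first try a shifting argument over $O(1/\eps)$ candidate horizontal and vertical strip positions. Because no item exceeds $(1/2+\eps)N$ in both dimensions, one of the two directions admits a cut line crossing large items of total profit at most half of $p(\OPT_{\text{large}})$. This is what produces the constant $\alpha$ and, in turn, the precise ratio $190/127$.
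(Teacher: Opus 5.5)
Your setup (corridor decomposition, the $LF/SF/LT/ST$ classification, and reusing Lemma~\ref{lem:lcpacking-focs}(i) and Lemma~\ref{lem:new-processing} as container bounds) matches the paper. The proof breaks at Step~1, and everything after it depends on that step.

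\textbf{Step 1 relies on a misreading.} Lemma~\ref{lem:resource-contraction} does not say a strip can be freed by sacrificing half of a few large items while losing only an $\eps$-fraction elsewhere. It keeps profit $\frac12 p(I)$ of the \emph{whole} set $I$. That is why the paper only extracts the bound $p(\OPT_{LT})+p(\OPT_{ST})+\frac12(p(\OPT_{LF})+p(\OPT_{SF}))$ from it (Lemma~\ref{lem:res-contraction}).

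\textbf{Your fallback mechanisms do not close the gap.}
\begin{itemize}
\item A strip of thickness $\eps N$ destroys every skewed item it crosses. A random horizontal strip hits a vertical item of height $h$ with probability about $h/N$, which exceeds $\frac12$ for long items. This is exactly why Lemma~\ref{lem:randomstrip} keeps only $\frac12$, $\frac14$, $\frac34$ of $\OPT_{LF_s}$, $\OPT_{LF_\ell}$, $\OPT_{SF}$, respectively.
\item The shrink-and-push argument (Lemmas~\ref{lem:weighted-shrinking}--\ref{lem:free-strip-at-boundary}) avoids that loss. However, its chain argument needs every object in a blocking chain to have shrunk, and large items cannot shrink. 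You give no argument that deleting large items of at most half their profit always breaks all blocking chains.
\item Your ``complementary packing'' via Theorem~\ref{theorem:skewed-packing} is unavailable once large items are present. If it were available, you would already have a $(1+\eps)$-approximation.
\end{itemize}

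\textbf{Step 3 is not actually derived.} Grant your new bound with $\alpha=\frac12$. Take three times it, plus Lemma~\ref{lem:lcpacking-focs}(i), plus Lemma~\ref{lem:new-processing}. This gives $5\,p(\OPT(c))\ge\frac72(1-\eps)\,p(\OPT)$, i.e.\ ratio $10/7$. So $190/127$ cannot be the optimum of your LP; the number is reverse-engineered. Also, $127\,p(\OPT(c))\ge 90\,p(\OPT)$ would be ratio $127/90$, not $190/127$.

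\textbf{What actually produces $190/127$.} The paper uses a structural case analysis that your proposal lacks.
\begin{itemize}
\item Split $\OPT_{LF}$ into $\OPT_{LF_\ell}$ (one side $>(\frac12+\eps)N$, the other $>\eps N$) and $\OPT_{LF_s}$. By Lemma~\ref{lem:bound-thin-items}, all of $\OPT_{ST}$ fits in the footprint of a single $\OPT_{LF_\ell}$ item.
\item If $\OPT_{LF_\ell}=\emptyset$, the random-strip bound together with Lemma~\ref{lem:lcpacking-focs}(i) and Lemma~\ref{lem:new-processing} gives $10/7$.
\item If $|\OPT_{LF_\ell}|\ge 2$, delete its cheapest item and repack $\OPT_{ST}$ in its place (Lemma~\ref{lem:delete-and-repack-st}). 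This gives $16/11$.
\item If $|\OPT_{LF_\ell}|=1$, that single item may carry all of $p(\OPT_{LF})$ and cannot be discarded. The paper keeps it and relocates about $\frac1{20}p(\OPT_{SF})$ from the most profitable arm into the thickest arm next to it. This still leaves a free strip for all thin items (Lemma~\ref{lem:one-by-twenty-sf}).
\item Combining the five bounds of the last case with multipliers $42,42,6,80,20$ yields exactly $190/127$.
\end{itemize}
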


\begin{figure}
    \centering
    \includegraphics[width=0.35\linewidth]{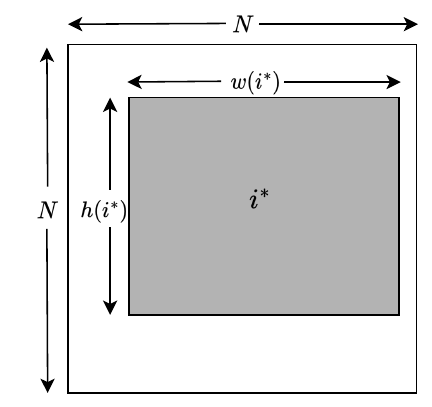}
    \caption{A huge item $i^*$ satisfying $w(i^*)\ge h(i^*)$ and $N-w(i^*)\ge 2\eps^2(N-h(i^*))$ if, e.g., $\eps=1/4$.}
    \label{fig:chain-and-dimensions-combined}
\end{figure}

We shall prove \Cref{lem:no-huge-item} in \Cref{sec:no-huge-item}. Suppose now that there is a huge item $i^*$. 
In the following lemma, the condition $w(i^*)\ge h(i^*)$ \aw{
(which is w.l.o.g.~since we can rotate items)}
in particular implies
that the total thickness of the area on the left and the right of~$i^*$ (i.e., $N-w(i^*)$) is thinner than the corresponding
area on the top and below~$i^*$ (i.e., $N-h(i^*)$); however, \aw{we assume that it is not much thinner, i.e.,}
$N-w(i^*)>2\eps^{2}(N-h(i^*))$, see Figure \ref{fig:chain-and-dimensions-combined}.

\begin{restatable}{lemma}{smallhugeitem}
\label{lem:container-packing=000020exists}
There exists a container
packing with $O_{\eps}(1)$ containers and an approximation ratio
of $\frac{37}{25}+\epsilon = 1.48+\eps$ if there is a huge item $i^*$ in the optimal solution with $w(i^*)\ge h(i^*)$
and $N-w(i^*)>2\eps^{2}(N-h(i^*))$.
\end{restatable}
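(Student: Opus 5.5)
We reduce to a situation in which the region outside $i^*$ decomposes into two frames, and then combine a few candidate container packings by a weighted average. Fix the optimal packing and place $i^*$ with $w(i^*)\ge h(i^*)$. The free space outside $i^*$ consists of a \emph{vertical frame}, the strips to the left and right of $i^*$ of total width $a:=N-w(i^*)$, and a \emph{horizontal frame}, the strips above and below $i^*$ of total height $b:=N-h(i^*)\ge a$. By hypothesis $a>2\eps^2 b$, so neither frame is negligibly thin relative to the other. This is exactly what fails in the instance of \Cref{lem:1.5-hardness}, where $a=0$.

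We apply \Cref{lem:corr-decomposition} with $i^*$ (together with the $O_\eps(1)$ untouchable items arising from the iteration) as untouchable items. We then process corridors as in \Cref{lem:lcpacking-focs} and \Cref{lem:new-processing}, obtaining the sets $\OPT_{LF},\OPT_{SF},\OPT_{LT},\OPT_{ST}$, with $i^*$ counted in $\OPT_{LF}$. Since $i^*$ is huge, every long subcorridor must lie in one of the two frames and run parallel to the adjacent side of $i^*$.

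Next we build the candidate packings.
\begin{itemize}
\item[(A)] Keep $i^*$ and all fat items, packed into boxes and then into containers via \Cref{lem:box-to-container}. This gives profit $\approx p(\OPT_{LF})+p(\OPT_{SF})$.
\item[(B)] Keep $i^*$, push it to a corner, and shrink a $\Theta(\eps^2)$-fraction of the boxes in the vertical frame via \Cref{lem:weighted-shrinking}. This frees a strip of width $\Omega(\eps^2 a)$. Because the thin items of the horizontal frame have area at most $\eps_{\text{box}}\cdot bN$ and can be rotated into the vertical strip, the condition $a>2\eps^2 b$ guarantees that this strip accommodates them, using Steinberg's algorithm and \Cref{lem:resource-augmentation}. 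This recovers a large part of $\OPT_{LT}$ and yields a bound of the type $p(\OPT_{LF})+p(\OPT_{SF})+c\cdot p(\OPT_{LT})$.
\item[(C)] Drop $i^*$ and use the bounds of \Cref{lem:lcpacking-focs}(ii), turned into pure container packings with rotations. Since $i^*$ is absent, the whole knapsack is available, so the $\Ls$ can be replaced by rotating the items of one arm so that they fit into containers.
\item[(D)] Drop $i^*$ and use the corridor processing of \Cref{lem:new-processing}, which keeps the long thin items.
\end{itemize}

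Finally, we choose nonnegative multipliers for the resulting inequalities, as in the proof of \Cref{theorem:non-rotation-improved-1}, so that the sum bounds $25\,p(\OPT_{\text{cont}})\ge 37^{-1}\cdot 25\cdots$. Equivalently, we solve the small LP whose variables are the profits of $\OPT_{LF}\setminus\{i^*\}$, $p(i^*)$, $\OPT_{SF}$, $\OPT_{LT}$ and $\OPT_{ST}$, and check that its value certifies the ratio $37/25$. Small items are handled as in the preceding ``Handling small items'' paragraph: the shrinking step frees area in every grid cell.

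The main obstacle is packing (B). We must show that the space freed in the vertical frame, which is only $\Theta(\eps^2)$ times thinner than the horizontal frame, really absorbs the rotated thin items of the horizontal frame (and possibly the short thin items). We must also handle the case $a\ll\epslarge N$, where the vertical frame is itself a single corridor. If the direct area argument falls short, we will first try sacrificing half of the thin items through the alternating odd/even trick, and then adjust the LP weights accordingly.
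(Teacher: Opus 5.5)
There is a genuine gap: no choice of multipliers for your packings can certify $37/25$. Packings (A) and (B) keep $\hugeitem$ but only the fat items, plus possibly part of $\OPT_{LT}$. Packings (C) and (D) drop $\hugeitem$. So none of them keeps $\hugeitem$ together with $\OPT_{ST}$. Take $p(\hugeitem)=p(\OPT_{ST})=1$ and all other classes empty: each of (A)--(D) retains profit at most $1$ out of $2$, so your LP has value $2$. Even if you add \Cref{lem:lf-sf-st}(i) and grant $c=1$ in (B), the point $p(\hugeitem)=1$, $p(\OPT_{ST})=2$ still forces ratio $3/2>37/25$.

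Packing (B) is also not justified geometrically. A strip that receives the rotated long horizontal thin items (lengths close to $N$) must span the full height of $K$, so it crosses the top and bottom arms. You keep those arms filled with fat items in horizontal boxes that may span the full width $N$. Shrinking only boxes of the vertical frame therefore frees no such strip. A strip confined between the two horizontal arms has height $h(\hugeitem)<N$ and cannot hold those items.

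Your other packings add nothing beyond what is already available. (B) with $c=1/2$ is just \Cref{lem:lf-sf-st}(ii). (C) and (D) are dominated by \Cref{lem:delete-big-item}: once $\hugeitem$ is dropped, rotate the contents of the top and bottom arms (total thickness $<(1/2-\eps)N$) into a $(1/2-\eps)N\times N$ box where $\hugeitem$ was. Keep the side arms in place and use the leftover width $2\eps N$ for resource augmentation. This retains $(1-\eps)(p(\OPT)-p(\hugeitem))$.

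The missing idea is the packing of \Cref{lem:one-by-twelve-sf}. It keeps $\hugeitem$ and essentially \emph{all} thin items, at the price of sacrificing all of $\OPT_{LF}$ and all but $\frac{1}{12}$ of $\OPT_{SF}$. Let $w=N-w(\hugeitem)$ and $h=N-h(\hugeitem)$. Consider $\OPT_{LT}$ and the part of $\OPT_{ST}$ whose subcorridors run parallel to the adjacent side of $\hugeitem$. After rotating the horizontal ones, these items have total thickness at most $2c_{\eps,\epslarge}\epsilon_{\text{box}}h\le 2\eps^4 h<\eps^2 w$. This is exactly where the hypothesis $w>2\eps^2 h$ is used. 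So they fit, with room for resource augmentation, into a full-height strip of width $\eps w$ along the right boundary of $K$, which is now free (assuming w.l.o.g.\ $w_r\ge w_\ell$ and that the bottom arm is the thickest arm; the other case is symmetric). Two further groups go into the thickest arm. The first is the remaining ``orthogonal'' short thin items, of total thickness at most $\eps^4 N$. The second is the best of three groups of $\OPT_{SF}$-items from the most profitable arm, obtained by splitting at the middle line; it fits in width $(1/2+2\epslarge)N$. This yields $(1-\eps)(p(\OPT_{LT})+p(\OPT_{ST})+\frac{1}{12}p(\OPT_{SF}))+p(\hugeitem)$. Taking $12$ times this bound plus $2$ times \Cref{lem:lf-sf-st}(i) and $11$ times \Cref{lem:lf-sf-st}(ii) gives $(\frac{13}{25}-\eps)(p(\OPT)-p(\hugeitem))+p(\hugeitem)$. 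Adding $12$ times \Cref{lem:delete-big-item} then shows that $37$ times the best container packing is at least $(25-O(\eps))p(\OPT)$.
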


\Cref{lem:container-packing=000020exists} will be proved in \Cref{sec:vertical-arms-large}.
Suppose now that there is
a huge item $i^*\in\OPT$ \aw{such that w.l.o.g. $w(i^*)\ge h(i^*)$ holds, but also} $N-w(i^*)\le2\eps^{2}(N-h(i^*))$. Our goal
is to argue that there is a {$\frac{130}{87}+\eps < 1.495$-approximate packing
using $O_{\eps}(1)$ containers and an $\Ls$-corridor (see Figure \ref{fig:structured-packings}),
similar as in \cite{galvez2021approximating}. 
We shall formally define an \lc-packing in \Cref{sec:lc*-packings}. Intuitively, an \lc-packing consists of $O_{\eps}(1)$ containers and a single $\Ls$-corridor whose horizontal (resp.~vertical) arm is aligned with the bottom (resp.~left) boundary of $K$, and the vertical arm has a height of at most $1/2$. As we shall see, the latter condition enables us to efficiently compute a partition of items that can be packed into the two arms of the $\Ls$. Using this, we shall give a PTAS to compute the most profitable \lc-packing. 
Finally, we show the existence of an \lc-packing of profit strictly more than $\frac{2}{3}p(\OPT)$ for the case when the total thickness of the regions on the left and right of $i^*$ is negligible compared to the regions on the top and below $i^*$. See \Cref{sec:large-huge-item} for a proof of the following lemma.

\begin{restatable}{lemma}{LCpackinghugeitem}
\label{lem:LC-packing-hugeitem}
If in the optimal solution
there is a huge item $i^{*}\in\OPT$ with $w(i^*)\ge h(i^*)$ and $N-w(i^{*})\le2\eps^{2}(N-h(i^{*}))$
then there is an L\&C{*}-packing with $O_{\epsilon}(1)$ containers
that yields an approximation ratio of $\frac{130}{87}+\eps < 1.495+\eps$.
\end{restatable}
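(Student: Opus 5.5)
Our plan is to place $i^*$ so that the free space is essentially a thin horizontal band, and then to compare three candidate packings against $\OPT$ with suitable weights, in the same spirit as the proof of Theorem~\ref{theorem:non-rotation-improved-1}.

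\textbf{Normalization.} Write $\alpha:=N-h(i^*)$ and $\beta:=N-w(i^*)\le 2\eps^2\alpha$. We push $i^*$ to the top-right corner of $K$. The items of $\OPT\setminus\{i^*\}$ then lie in two regions:
\begin{itemize}
\item a bottom strip $[0,N]\times[0,\alpha]$;
\item a left strip $[0,\beta]\times[\alpha,N]$.
\end{itemize}
The left strip is extremely thin relative to $\alpha$. Items that lie in the left strip and are taller than roughly $\alpha$ cannot go into the bottom strip without rotation. After rotation they have width at most $\beta\le 2\eps^2\alpha$ in the vertical direction, so they become skewed horizontal items.

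Our first step is to show that the items in the left strip behave like the vertical arm of an $\Ls$ whose vertical arm has height at most $N/2$. We do this by shifting: we guess a horizontal cut and discard the cheaper half of the items whose height exceeds $N/2$.

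\textbf{Three candidate packings.}
\begin{enumerate}
\item Packing (a). We drop $i^*$ and apply the skewed-item machinery of Section~\ref{sec:ptas-cardinality}, namely Lemma~\ref{lem:transformation-container-packing-new}, the resource contraction, and Theorem~\ref{theorem:skewed-packing}. Every remaining item is either thin in one dimension or lies in a thin strip. We rotate the left-strip items and merge them with the bottom-strip items. This should recover nearly all of $p(\OPT)-p(i^*)$ using $O_\eps(1)$ containers, with the caveat of items that are near-square relative to $\alpha$.
\item Packing (b). We keep $i^*$ together with a container packing of the bottom strip, obtained from Lemma~\ref{lem:resource-augmentation} after freeing an $\eps$-fraction.
\item Packing (c). This is the \lc-packing. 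We keep $i^*$, put the bottom strip contents into the horizontal arm of the $\Ls$ and the left-strip contents into the vertical arm, and place everything else in containers. We then use a corridor decomposition inside the bottom strip (Lemma~\ref{lem:corr-decomposition} with untouchable set $\{i^*\}$) and define thin and fat classes $\OPT_{LT},\OPT_{ST},\OPT_{LF},\OPT_{SF}$ as in Section~\ref{sec:non-rotation-improved}. Lemmas~\ref{lem:lcpacking-focs} and \ref{lem:new-processing}, adapted to rotations, give lower bounds on $p(\OPT_{L\&C^*})$.
\end{enumerate}

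\textbf{Near-square items.} Items inside the vertical arm may be almost squares of side about $\alpha$. For these we pay a factor $2$: either we keep them unrotated in the vertical arm, or we place every other one in the horizontal arm.

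\textbf{Combining the bounds.} Each candidate gives a linear inequality in terms of $p(i^*)$ and the profits of the item classes (thin or fat, long or short, near-square or not). We choose nonnegative multipliers for these inequalities so that the sum dominates $87\,p(\OPT)$ and the combined left-hand side is at most $130$ times the best packing. This yields the ratio $\frac{130}{87}$. Concretely, we would set up the small linear program over class profits and read the multipliers off its dual.

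\textbf{Main obstacle.} The hard step is packing (c) with rotations. The difficulty is proving that the vertical-arm items can be assigned consistently to the two arms at a loss of at most a factor $2$, and only on the near-square class. This requires the assumption $\beta\le 2\eps^2\alpha$: it lets vertical-arm items be rotated into the horizontal arm without exceeding its height. We also expect some care in verifying that the chosen multipliers produce exactly $\frac{130}{87}$.
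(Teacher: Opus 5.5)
There is a genuine gap, and it sits in the one step that has to beat $3/2$. Your packing (c) is not an \lc-packing, because the $\Ls$ is in the wrong place. By Definition~\ref{def:l-packing}, both arms have thickness at most $\eps N$, every item in the horizontal arm has packed width $>N/2$, and $H_\Ls\le N/2$.
\begin{itemize}
\item The bottom strip has height $\alpha$, which can be as large as $(1/2-\eps)N$, and it holds items of arbitrary shape. So its contents cannot be poured into the horizontal arm.
\item The region left of $i^*$ has height $h(i^*)>(1/2+\eps)N$, so it cannot serve as the vertical arm. A horizontal cut does not fix this.
\item Near-squares of side about $H_\Ls$ cannot be moved to the horizontal arm either, since that violates Property~(4).
\item Your normalization is also invalid. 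A side-arm item can be almost $N$ tall, and top-arm items can span the whole width, so you cannot just slide $i^*$ into the top-right corner.
\end{itemize}
In the paper the $\Ls$ is never wrapped around $i^*$. The top/bottom-arm items are gathered into a box of size $N\times(N-h(i^*))$ at the bottom of $K$, and the entire \lc-structure is built \emph{inside} that box; this is where $H_\Ls\le N/2$ comes from. Side-arm items never enter the $\Ls$.

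You are also missing the key ingredients.
\begin{itemize}
\item \emph{The $\Ls$ bound.} Lemma~\ref{lem:arbitrary-knapsack} shows that items packed in an $N\times h'$ box with $h'\le N/2$ admit an \lc-packing retaining a $(22/43-\eps)$-fraction of the profit. The proof does a corridor decomposition inside the box and pushes long thin items into four boundary stacks. It turns the resulting U into an $\Ls$, paying the factor $2$ on near-squares by keeping only the $h>w$ or only the $w\ge h$ part of one side stack (Property~(5)). The fraction comes from weights $2,20,21$ on Lemmas~\ref{lem:lcpacking-focs}(i), \ref{lem:new-processing}, \ref{lem:lt-st-sf-pack}. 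Applying it to the bottom box, while horizontal/small side-arm items stay in containers beside $i^*$, gives Lemma~\ref{lem:lc-packing}.
\item \emph{Container packings that keep $i^*$.} Your (b) frees space by deleting a strip, which kills exactly the items that are large or vertical relative to their arm. In general no $O_\eps(1)$-container packing keeping $i^*$ retains most of the bottom strip; the instance of Theorem~\ref{lem:1.5-hardness} lies exactly in this regime. The paper therefore splits (b) in two:
\begin{itemize}
\item Lemma~\ref{lem:rotate-vertical} keeps the horizontal/small top-bottom items. Here $\beta\le 2\eps^2\alpha$ is used to fit \emph{all} rotated side-arm items into the freed strip of height $\eps\alpha$, not to feed the horizontal arm.
\item Lemma~\ref{lem:klaus} keeps the large/vertical items. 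It makes $i^*$, the large items and the $k$ most profitable vertical items untouchable, cuts the rest into cells, and chooses $k$ via Lemma~\ref{lem:jansen-porkolab}.
\end{itemize}
\item \emph{The combination.} Weights $22,22,43$ on these three packings give $(44/87-\eps)(p(\OPT)-p(i^*))+p(i^*)$. Adding $43$ times Lemma~\ref{lem:delete-big-item} gives $130/87$. Lemma~\ref{lem:delete-big-item} needs no skewed-item machinery: rotate the top/bottom contents into a column of width $N-h(i^*)<(1/2-\eps)N$ in $i^*$'s column, leaving $2\eps N$ free.
\end{itemize}
Without a concrete analogue of the $22/43$ bound, your dual-multiplier step has nothing from which to produce $130/87$.
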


Combining Lemmas \ref{lem:no-huge-item}, \ref{lem:container-packing=000020exists} and \ref{lem:LC-packing-hugeitem}, in the worst case, we obtain an approximation ratio of $\frac{190}{127}+\epsilon<1.497+\epsilon$ for the weighted case of 2DKR, which proves \Cref{theorem:rotation-weighted}.

\subsection{There is no huge item in the optimal packing}
\label{sec:no-huge-item}
We apply the corridor decomposition framework (Lemma \ref{lem:corr-decomposition}) to the optimal packing $\OPT$ and classify items into $\OPT_{LF},\OPT_{SF}, \OPT_{LT},\OPT_{ST}$ as in \Cref{sec:non-rotation-improved}. \dk{Note that the packings corresponding to Lemmas \ref{lem:lcpacking-focs}(i) and \ref{lem:new-processing} are all container packings, and hence their profit guarantees continue to hold even in the setting with rotations.} 

In the following, we present several ways of packing various sets of items into containers. In each case, the main idea is to free a strip of small width or height in the knapsack, where we can repack a subset of the thin items, i.e., the items in $\OPT_{LT}\cup \OPT_{ST}$. For this, we shall frequently make use of the following lemma.

\begin{lemma}
\label{lem:bound-thin-items}
    The items in $\OPT_{LT}\cup \OPT_{ST}$ can be packed inside a \aw{rectangular} box of \aw{size} $\eps^4 N \times N$. Also, items in $\OPT_{ST}$ can be packed in a \aw{rectangular} box of \aw{size} $\eps^4 N \times (1/2+\eps)N$.    
\end{lemma}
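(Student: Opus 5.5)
The plan is to bound the area of the thin items, orient them so their short side is tiny, and apply Steinberg's algorithm (\Cref{subsec:stein}), exactly as in the last step of the proof of \Cref{lem:transformation-container-packing-new}.

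\textbf{Area bound.} In every processing strategy of \Cref{sec:non-rotation-improved}, the thin items of a subcorridor come from the bottommost strip of its private region. That strip has height $\eps_{\text{box}}h$, where $h$ is the subcorridor width. Its area is therefore at most $\eps_{\text{box}}$ times the area of the subcorridor, which is at most $\eps_{\text{box}}N^2$. A subcorridor may be processed differently in the (constantly many, say at most five) strategies. Summing over strategies and over the at most $c_{\eps,\epslarge}$ subcorridors gives
\[a(\OPT_{LT}\cup\OPT_{ST})\le 5\,c_{\eps,\epslarge}\,\eps_{\text{box}}N^2\le 5\eps^4 N^2 .\]
A slightly smaller choice of $\eps_{\text{box}}$ turns this into $a(\OPT_{LT}\cup\OPT_{ST})\le \eps^5N^2$. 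This constant adjustment is harmless, since it only changes the number of boxes.

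\textbf{Dimensions.} Each thin item lies in a strip of height $\eps_{\text{box}}h\le\eps_{\text{box}}\epslarge N$. So its shorter side is at most $\eps^8N$, and its longer side is at most $N$, since it lies in the knapsack. Rotating each item so that its short side is horizontal, every item has width $\le \eps^8N$ and height $\le N$.

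\textbf{First claim (box $\eps^4N\times N$).} Steinberg's condition for packing into a box of width $a$ and height $b$ requires $w_{\max}\le a$, $h_{\max}\le b$ and
\[2a(I)\le ab-(2w_{\max}-a)_+(2h_{\max}-b)_+ .\]
With $a=\eps^4N$ and $b=N$ we have $2w_{\max}\le a$, so the correction term vanishes. The condition reduces to $2\eps^5N^2\le\eps^4N^2$, which holds. This also covers the worry that an item's long side could exceed $N/2$: it is still at most $b=N$.

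\textbf{Second claim (box $\eps^4N\times(1/2+\eps)N$).} The same computation works once we know that every item of $\OPT_{ST}$ has long side at most $(1/2+\eps)N$. A short subcorridor has length at most $N/2$, and every item is contained in its own subcorridor. So the long side is at most $N/2$, possibly plus $\epslarge N$ from the bend region, hence at most $(1/2+\eps)N$. Taking $b=(1/2+\eps)N\ge N/2$, the area condition becomes $2\eps^5N^2\le \eps^4 N^2/2$, which holds for $\eps\le 1/4$.

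\textbf{Main obstacle.} The delicate part is the area accounting. The thin sets are unions over several processing strategies, and a subcorridor is further subdivided recursively after earlier pieces are processed. I will charge each thin item to the original subcorridor containing it, and use that within one strategy the thin strips of the sub-pieces of a subcorridor are disjoint and each is an $\eps_{\text{box}}$-fraction of its piece. If the recursion multiplies the count, the factor $c_{\eps,\epslarge}$ built into $\eps_{\text{box}}$ is the slack I will spend.
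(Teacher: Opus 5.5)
Your proof is correct, but it takes a different route from the paper. The paper never passes through area. It notes that the thin items of each of the at most $c_{\eps,\epslarge}$ subcorridors lie inside a strip of thickness at most $\eps_{\text{box}}N$ and length at most $N$ (at most $(1/2+2\epslarge)N$ for short subcorridors). It rotates these strips to a common orientation and places them side by side with their contents untouched, for a total width of at most $c_{\eps,\epslarge}\eps_{\text{box}}N=\eps^4N$. That argument is one line, exact in the constants, and needs no packing algorithm. However, it tacitly assumes one thin strip per subcorridor and says nothing about the recursive pieces or the union over processing strategies.

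Your area-plus-Steinberg argument needs only the total area and a per-item bound on the short side, so it does not care how many thin strips there are or what shape they have. This is why your bookkeeping across strategies and recursion, via disjointness of the pieces of a subcorridor, goes through cleanly. The price is Steinberg's factor $2$. You need not shrink $\eps_{\text{box}}$ to pay it: each thin strip has thickness at most $\eps_{\text{box}}\epslarge N$, so with the paper's $\eps_{\text{box}}$ the total area is already at most $5c_{\eps,\epslarge}\eps_{\text{box}}\epslarge N^2\le 5\eps^6N^2$.

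There are small slips that do not affect the conclusion:
\begin{itemize}
\item The short side is bounded by $\eps_{\text{box}}\epslarge N\le\eps^6N$, not $\eps^8N$; you only need $2w_{\max}\le\eps^4N$.
\item A short subcorridor's long edge can exceed its length by $2\epslarge N$ (one bend at each end), not $\epslarge N$.
\item Your fallback of ``spending the factor $c_{\eps,\epslarge}$'' could not absorb a genuine multiplication of strips under recursion. Your disjointness argument already shows the area does not multiply, so the fallback is moot.
\end{itemize}
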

\begin{proof}
    Since the number of subcorridors is bounded by $c_{\epsilon,\epsilon_{\text{large}}}$ and each item of $\OPT_{LT}\cup \OPT_{ST}$ lies inside a strip of width at most $\epsilon_{\text{box}}N$ inside a subcorridor, the sum of the widths of these strips is bounded by $\epsilon_{\text{box}}N\cdot c_{\epsilon,\epsilon_{\text{large}}} \le \epsilon^4N$, by our choice of $\eps_{\text{box}}$. Therefore the items of $\OPT_{LT}\cup \OPT_{ST}$ can be packed inside a box of dimensions $\eps^4 N \times N$. Also, since the longer side of each item in $\OPT_{ST}$ has length at most $(1/2+2\epslarge)N\le (1/2+\eps)N$, they can be packed inside a box of width $(1/2+\eps)N$.
\end{proof}

We now state the {\em resource contraction lemma} for the weighted case from \cite{galvez2021approximating}.

\begin{lemma}[\cite{galvez2021approximating}]
\label{lem:resource-contraction}
        Let $I$ be a set of items packed into the knapsack such that no item $i \in I$ \aw{satisfies} both $w(i) \ge (1-\epsilon)N$ and $h(i) \ge (1-\epsilon)N$. Then it is possible to pack a set $I' \subseteq I$ with profit at least $\frac{1}{2}p(I)$ into a box of size $N\times (1-\epsilon/2)N$ if rotations are allowed.
\end{lemma}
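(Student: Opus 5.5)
The plan is a case analysis on a single horizontal line $y=t$. In each case we either find an empty strip directly, or we split $I$ into two parts, each of which can be repacked into the contracted box $[0,N]\times[0,(1-\epsilon/2)N]$. Keeping the better part then gives at least $\frac12 p(I)$.

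First, suppose there is a horizontal strip of height $\epsilon N/2$ inside $K$ that intersects no item. We shift every item above it down by $\epsilon N/2$, which keeps all of $I$, and we are done. The symmetric case of an empty vertical strip is handled the same way, after rotating the whole packing by $90^\circ$ (we are allowed to rotate items). From now on, assume every horizontal and every vertical strip of height (resp. width) $\epsilon N/2$ is hit by some item.

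Next, fix the line $\ell$ at height $y=N/2$, and let $I_\ell$ be the items it crosses. The items in $I\setminus I_\ell$ split into those entirely above $\ell$ and those entirely below it. We define two candidate solutions.

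\begin{enumerate}[label=(\roman*)]
\item $I_\ell$ together with one side's items, say the more profitable of the above and below sets.
\item The remaining items.
\end{enumerate}

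For (ii), the items strictly below $\ell$ lie in $[0,N]\times[0,N/2]$. If we drop $I_\ell$ and the items above $\ell$, there is plenty of room, so this only matters for (i). The cleaner split is this. Option A keeps the items not crossing $\ell_1:\,y=(1-\epsilon/2)N$, which are the items below it; they already fit, since the ones above the line are dropped. Option B keeps the items not crossing $\ell_2:\,y=\epsilon N/2$, which are the items above it, shifted down.

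These two options fail to cover only the items crossing both lines, i.e. items with height at least $(1-\epsilon)N$. Do the same with vertical lines at $x=\epsilon N/2$ and $x=(1-\epsilon/2)N$, then rotate. This covers everything except items crossing both vertical lines, i.e. items with width at least $(1-\epsilon)N$.

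So we have four candidate packings, and each must fit the contracted box. Fitting is immediate for the horizontal ones. For the vertical ones, rotating $K$ by $90^\circ$ turns a width contraction into a height contraction; this uses the hypothesis that rotations are allowed.

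An item missed by all four would satisfy both $w(i)\ge(1-\epsilon)N$ and $h(i)\ge(1-\epsilon)N$, which is excluded by hypothesis. The catch is that we need every item to lie in at least \emph{two} of the four sets so that averaging gives $\frac12$. That does not hold in general, because an item crossing both horizontal lines appears only in the vertical options. This is the main obstacle.

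To handle it, partition $I$ into three groups.
\begin{enumerate}[label=(\alph*)]
\item $T_h$: tall items, those crossing both horizontal lines.
\item $T_w$: wide items, those crossing both vertical lines.
\item The rest.
\end{enumerate}
$T_h$ and $T_w$ are disjoint: a tall item and a wide item would have to intersect, and no single item is both tall and wide by hypothesis.

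If $T_w=\emptyset$, the vertical pair A', B' already covers each item at least once, and together with A, B every item is covered twice. The one exception is tall items, which are covered only by A', B'. Those two options are complementary on items crossing only one vertical line, so each item is still covered once by the pair, and the pair sums to at least $p(I)$. Hence the better of A' and B' has profit at least $\frac12 p(I)$.

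In general, if both $T_h$ and $T_w$ are nonempty, they cross each other and cannot coexist, so one of them is empty. The argument above then applies in the appropriate orientation, i.e. after choosing whether to contract using horizontal or vertical lines. Finally, we verify that the items kept in the chosen option shift into the contracted box, allowing a $90^\circ$ rotation of the whole configuration.
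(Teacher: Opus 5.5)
Your argument breaks at the claim that a tall item (one meeting both lines $y=\epsilon N/2$ and $y=(1-\epsilon/2)N$) and a wide item must overlap, which you use to conclude that $T_h$ or $T_w$ is empty. Non-overlap only forces their $x$-extents or their $y$-extents to be disjoint, and either can happen: a tall item may lie entirely inside the side strip $[0,\epsilon N/2]\times[0,N]$, and a wide item may lie inside the bottom strip.

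Here is a concrete counterexample. Let $a=\epsilon N/4$ and take the pinwheel $L=(0,a)\times(a,N)$, $U=(a,N)\times(N-a,N)$, $R=(N-a,N)\times(0,N-a)$, $D=(0,N-a)\times(0,a)$, all of profit $1$. No item is massive, $T_h=\{L,R\}$ and $T_w=\{U,D\}$. Your candidates A, B, A$'$, B$'$ keep exactly $\{D\}$, $\{U\}$, $\{L\}$, $\{R\}$, respectively, so the best of them has profit only $\frac14 p(I)$. The lemma does hold on this instance. For example, rotate $L$ and $R$ and stack all four items. Alternatively, delete the items meeting a horizontal strip of height $\epsilon N/2$ through the middle of $K$, which removes only $L$ and $R$. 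None of your four candidates does either.

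What your argument does establish is the easy case. If $T_h=\emptyset$ (resp.\ $T_w=\emptyset$), every item lies in A or B (resp.\ A$'$ or B$'$), so the better of that pair already has profit at least $\frac12 p(I)$; no double covering is needed. The entire content of the lemma is the case $T_h\neq\emptyset\neq T_w$.

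In that case, disjointness does give you structure. Either every item of $T_h$ lies in one of the two vertical boundary strips of width $\epsilon N/2$, or every item of $T_w$ lies in one of the two horizontal ones. Turning this into a $\frac12$ guarantee needs candidate packings beyond your four. Examples are deleting a strip in the interior of $K$ rather than at its boundary, or rotating individual long items while the rest of the packing stays in place. You also need an averaging argument over these candidates.

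The paper does not reprove this lemma; it imports it from \cite{galvez2021approximating}. So this missing case is exactly what a self-contained proof would have to supply.
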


We show the following lemma.

\begin{lemma}
\label{lem:res-contraction}
    There exists a container packing of profit at least $(1-\epsilon)(p(\OPT_{LT})+p(\OPT_{ST})+\frac{1}{2}(p(\OPT_{LF})+p(\OPT_{SF})))$.
\end{lemma}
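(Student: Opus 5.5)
We apply the weighted resource contraction lemma (Lemma~\ref{lem:resource-contraction}) to the fat items and use the freed strip for the thin items. Concretely, let $I:=\OPT_{LF}\cup\OPT_{SF}$, packed as in $\OPT$ (restricted to these items). Since we are in the case of \Cref{sec:no-huge-item}, $\OPT$ has no huge item. Hence no item $i\in I$ has both $w(i)\ge(1-\eps)N$ and $h(i)\ge(1-\eps)N$, because $(1-\eps)N\ge(1/2+\eps)N$ for small $\eps$. Lemma~\ref{lem:resource-contraction} then gives $I'\subseteq I$ with $p(I')\ge\frac12 p(I)$ packed inside $N\times(1-\eps/2)N$, with rotations allowed. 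This leaves an empty horizontal strip of height $\frac{\eps}{2}N$ at the top of $K$.

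Next we turn the packing of $I'$ into a container packing. The box $N\times(1-\eps/2)N$ contains $I'$ and sits inside $K$, which has an additional free strip of height $\frac{\eps}{2}N$. We use only part of that strip: a sub-strip of height $\frac{\eps}{4}N\ge\frac{\eps}{4}\cdot(1-\eps/2)N$ serves as resource augmentation. By Lemma~\ref{lem:resource-augmentation} with $\epsilon_{\text{ra}}=\Theta(\eps)$, we obtain a container packing of a subset of $I'$ with profit $(1-O(\eps))p(I')$, using $O_\eps(1)$ containers inside $[0,N]\times[0,(1-\eps/4)N]$.

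The thin items then go into the remaining strip $[0,N]\times((1-\eps/4)N,N]$. By Lemma~\ref{lem:bound-thin-items}, all items of $\OPT_{LT}\cup\OPT_{ST}$ fit into a box of size $\eps^4N\times N$; rotate it to be horizontal, of height $\eps^4N\ll\frac{\eps}{4}N$. This box has a lot of slack in the short dimension, so Lemma~\ref{lem:resource-augmentation} (or Lemma~\ref{lem:box-to-container}, since thin items are horizontal/vertical relative to their subcorridor) turns it into $O_\eps(1)$ containers. We can, for instance, augment the height from $\eps^4N$ to $2\eps^4N$, losing only an $O(\eps)$ fraction of their profit. The total profit is then at least $(1-O(\eps))\big(p(\OPT_{LT})+p(\OPT_{ST})+\frac12(p(\OPT_{LF})+p(\OPT_{SF}))\big)$, and we rescale $\eps$.

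The main obstacle is the bookkeeping around which items lie in $I$. The fat sets may include large items and untouchable items; these must satisfy the hypothesis of Lemma~\ref{lem:resource-contraction}, which they do since no item is huge. The small items $\OPT_{\text{small}}$ are not covered by the bound above and need not be, since the statement only counts the four classes. Discarded items ($\OPT_{\text{kill}}$, intermediate items) are already accounted for by the $(1-\eps)$ factor. We also need the containers produced by Lemma~\ref{lem:resource-contraction} plus augmentation to stay disjoint from the strip used for the thin items, which the split of the $\frac{\eps}{2}N$ strip into two parts guarantees.
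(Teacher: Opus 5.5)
Your proposal is correct and follows essentially the paper's proof. You apply Lemma~\ref{lem:resource-contraction} to $\OPT_{LF}\cup\OPT_{SF}$, which is valid since there is no huge item, place $\OPT_{LT}\cup\OPT_{ST}$ into the freed strip of height $\eps N/2$ via Lemma~\ref{lem:bound-thin-items}, and use the leftover height for resource augmentation (Lemma~\ref{lem:resource-augmentation}). The remaining differences are cosmetic: the paper first processes the subcorridors before contracting, which is not needed, and you split the strip explicitly to augment the fat and thin parts separately. For the thin box, taking $\epsilon_{\text{ra}}=\eps$ suffices.
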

\begin{proof}
    We first {\em process} all subcorridors (with parameter $\eps_{\mathrm{box}}$ as discussed in \Cref{sec:non-rotation-improved}) in any order and obtain a container packing of profit at least $(1-\epsilon)(p(\OPT_{LF})+p(\OPT_{SF}))$. Note that since we are in the case when there is no huge item in the optimal packing, no item can simultaneously satisfy $w(i) \ge (1-\epsilon)N$ and $h(i)\ge (1-\epsilon)N$. We apply Lemma \ref{lem:resource-contraction} with $I=\OPT_{LF}\cup \OPT_{SF}$ and obtain a packing of profit at least $\frac{1}{2}(p(\OPT_{LF})+p(\OPT_{SF}))$ which in addition contains an empty strip $\mathcal{S}$ of height $\epsilon N/2$. Due to Lemma \ref{lem:bound-thin-items}, we can pack the items in $\OPT_{LT}\cup \OPT_{ST}$ inside $\mathcal{S}$ and use the remaining height of $(\epsilon/2-\epsilon^4)N$ inside $\mathcal{S}$ for resource augmentation (Lemma \ref{lem:resource-augmentation}) to obtain a container packing of the desired profit stated in the lemma.
\end{proof}

Now we further classify $\OPT_{LF}$ into two categories. Let $\OPT_{LF_{\ell}} := \{i \in \OPT_{LF} \mid h_i > \epsilon N \text{ and } w_i > (1/2+\epsilon)N\} \cup \{i\in \OPT_{LF}\mid h_i > (1/2+\epsilon)N \text{ and } w_i > \epsilon N\}$, and $\OPT_{LF_s} := \OPT_{LF}\setminus \OPT_{LF_{\ell}}$. Note that the dimensions of the items in $\OPT_{LF_{\ell}}$ are sufficiently large so that all items of $\OPT_{ST}$ can be packed in the place of a single item of $\OPT_{LF_{\ell}}$. Using the following lemma, we obtain a container packing by deleting a {\em random strip} inside the knapsack. A similar random strip argument will be crucially used in multiple subsequent lemmas, where we delete all items intersected by a random strip and argue that we still retain sufficient profit. Then we use the empty strip region to pack some thin items (and small items).

\begin{lemma}
\label{lem:randomstrip}
    There exists a container packing of profit at least $(1-\epsilon)(\frac{1}{2}p(\OPT_{LF_s})+\frac{1}{4}p(\OPT_{LF_{\ell}})+ \frac{3}{4}p(\OPT_{SF})+p(\OPT_{LT})+p(\OPT_{ST}))$.
\end{lemma}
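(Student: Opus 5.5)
We pick a random horizontal (or vertical) strip of height $\eps^4 N$, delete every item it intersects, and reuse the freed space for the thin items; the point is to bound each item's deletion probability by class.

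\textbf{Setup.} Start from the packing of the fat items obtained by processing all subcorridors in any order. By the discussion before Lemma~\ref{lem:new-processing}, this is a container packing of $\OPT_{LF}\cup\OPT_{SF}$ (up to the factor $1-\eps$), and every item lies in one of $O_\eps(1)$ boxes. Choose a horizontal strip $\SM=[0,N]\times[y,y+\eps^4N)$ uniformly at random, with $y$ ranging over $[0,(1-\eps^4)N]$, or alternatively over a suitable discretized set of candidate positions. Delete every item that intersects $\SM$. The space freed is then used to pack $\OPT_{LT}\cup\OPT_{ST}$ (Lemma~\ref{lem:bound-thin-items}) together with the resource augmentation slack.

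\textbf{Step 1: deletion probabilities.} An item of height $h$ is hit with probability about $(h+\eps^4N)/N$.
\begin{itemize}
\item An item of $\OPT_{LF_s}$ (not simultaneously long and wide) has, after rotating the whole packing if needed, height at most $(1/2+\eps)N$. Its survival probability is therefore roughly $1/2$.
\item Short fat items have longer side at most $(1/2+\eps)N$. Since they sit in thin subcorridors, their shorter side is at most $\epslarge N$. A horizontal strip thus hits horizontal short items with probability $O(\eps)$ and vertical ones with probability at most about $1/2$.
\end{itemize}
To reach the $3/4$ bound for $\OPT_{SF}$, we randomize the strip's orientation, taking it horizontal or vertical with probability $1/2$ each. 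A short item is then hit with probability about $\frac12\cdot\frac12=\frac14$. For $\OPT_{LF_s}$ we need the bound $1/2$ even under random orientation. I would first try to get it from the observation that an item of $\OPT_{LF_s}$ with one side above $(1/2+\eps)N$ has its other side at most $\eps N$, so it is hit with probability at most $\frac12(1+\eps)$. Items with both sides below $(1/2+\eps)N$ are hit with probability at most $1/2+\eps$.

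\textbf{Step 2: where the $\frac14$ for $\OPT_{LF_\ell}$ comes from.} These are the items that are long in one direction and wide in the other, and they may be hit almost surely. To recover a $1/4$ fraction, we use the remark that all of $\OPT_{ST}$ fits in the place of a single item of $\OPT_{LF_\ell}$. So we take a convex combination with an alternative packing. In that packing the strip is not deleted but placed so that it avoids some item of $\OPT_{LF_\ell}$; I expect this to be arranged by restricting $y$ to positions missing a given large item, exploiting that such an item has height below $(1-\eps)N$ or else there is a huge item. Averaging gives the $\frac14$.

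\textbf{Step 3: repacking.} In the freed strip of height $\eps^4 N$ (plus any extra slack), pack $\OPT_{LT}\cup\OPT_{ST}$ using Lemma~\ref{lem:bound-thin-items}. Then apply Lemma~\ref{lem:resource-augmentation}, or shrink boxes by Lemma~\ref{lem:weighted-shrinking} to create room, so that everything becomes $O_\eps(1)$ containers. Finally, derandomize by taking the best strip position.

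The main obstacle is Step 2: justifying the exact coefficients $\frac14$ and $\frac34$. These likely require a careful case split on strip orientation and on whether an $\OPT_{LF_\ell}$ item blocks the strip. I would resolve this first, by checking that every item of $\OPT_{LF_\ell}$ survives with probability at least $1/4$ under the randomized orientation combined with the "replace one large item by $\OPT_{ST}$" trick.
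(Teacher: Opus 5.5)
Your Step 1 is right and matches the paper: with the strip horizontal or vertical with probability $\frac12$ each, items of $\OPT_{SF}$ survive with probability $\frac34-O(\eps)$ and items of $\OPT_{LF_s}$ with probability $\frac12-O(\eps)$.

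\textbf{The gap is Step 2.} It rests on the false premise that items of $\OPT_{LF_\ell}$ may be hit almost surely. This lemma is in the case with no huge item, so every $i\in\OPT_{LF_\ell}$ has a side of length at most $(\frac12+\eps)N$, since otherwise $i$ would be huge. Say this side is the height of $i$. Then a random horizontal strip of thickness $\eps N$ misses $i$ with probability at least $\frac12-O(\eps)$. Under the random orientation, $i$ therefore survives with probability at least $\frac12(\frac12-O(\eps))=\frac14-O(\eps)$. That one line gives the coefficient $\frac14$; no case split and no auxiliary packing is needed.

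The detour you propose would also fail to give the stated bound, for three reasons.
\begin{itemize}
\item Replacing one $\OPT_{LF_\ell}$ item by $\OPT_{ST}$ is the idea of Lemma~\ref{lem:delete-and-repack-st}, which needs $|\OPT_{LF_\ell}|\ge 2$. The present lemma is used exactly when $|\OPT_{LF_\ell}|\le 1$ (Lemmas~\ref{lem:optlflempty} and~\ref{lem:optlflone}).
\item Averaging with a second packing dilutes the coefficients on $\OPT_{LT}$, $\OPT_{ST}$ and $\OPT_{SF}$.
\item Restricting $y$ to positions that avoid a given item conditions the strip on a set of measure possibly about $\frac12$. This can roughly double the hit probabilities of other items and destroys the $\frac12$ and $\frac34$ bounds.
\end{itemize}

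\textbf{A smaller issue is the strip thickness.} A strip of thickness $\eps^4N$ is exactly the room that Lemma~\ref{lem:bound-thin-items} needs for $\OPT_{LT}\cup\OPT_{ST}$, so no slack is left for Lemma~\ref{lem:resource-augmentation}. Your fallback, Lemma~\ref{lem:weighted-shrinking}, discards $O(1)$ items, which is not free in the weighted setting. Instead, take thickness $\eps N$ as the paper does; this changes survival probabilities only by $O(\eps)$. Then pack the thin items in $\eps^4N$ of the strip and use the remaining $(\eps-\eps^4)N$ for resource augmentation.

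It does not matter whether you start from the original packing of $\OPT_{LF}\cup\OPT_{SF}$ or from the processed boxes. Survival depends only on item dimensions, and the final container structure comes from resource augmentation anyway.
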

\begin{proof}
    We consider the packing of the items of $\OPT_{LF}\cup \OPT_{SF}$ inside the knapsack and consider a strip $\mathcal{S}$ of thickness $\epsilon N$, where $\mathcal{S}$ is chosen to be a random horizontal strip with probability $1/2$, and a random vertical strip with probability $1/2$. 
    \ak{Here, by random horizontal strip we mean a strip $\mathcal{S}^H:=[0,N]\times[aN, (a+\eps) N]$, where $a \in [0, 1-\eps]$ chosen uniformly at random. Similarly, a random vertical strip denotes $\mathcal{S}^V := [aN, (a+\eps) N] \times [0,N]$, where $a \in [0, 1-\eps]$ chosen uniformly at random. 
    Thus, both  $\mathcal{S}^H$ and $\mathcal{S}^V$ are fully contained in the knapsack. Finally, we choose $\mathcal{S}^H$ or $\mathcal{S}^V$, both with probability 1/2.}
    We delete all items of $\OPT_{LF}\cup \OPT_{SF}$ intersecting $\mathcal{S}$. Each item in $\OPT_{SF}$ is skewed, i.e., one of its dimensions has length at most $\epsilon_{\text{small}}N$ ($\le \epsilon N$), and the other has length at most $(1/2+2\epslarge)N\le (1/2+2\epsilon)N$, therefore it survives with probability at least \dk{$\frac{1}{2}\cdot (1-\frac{2\epsilon}{1-\eps})+\frac{1}{2}\cdot (\frac{1}{2}-\frac{3\epsilon}{1-\eps})= \frac{3}{4}-O(\epsilon)$}. Consider now an item $i\in \OPT_{LF_s}$. If the longer side of $i$ has length more than $(1/2+\epsilon)N$, the shorter side must have length at most $\epsilon N$, and thus the probability that item $i$ survives is at least \dk{$\frac{1}{2}(1-\frac{2\epsilon}{1-\eps})=\frac{1}{2}-O(\eps)$}. Otherwise both sides of $i$ have length at most $(1/2+\epsilon)N$, and then $i$ survives with probability at least $\frac{1}{2}\cdot (\frac{1}{2}-\frac{2\epsilon}{1-\eps})+\frac{1}{2}\cdot (\frac{1}{2}-\frac{2\epsilon}{1-\eps}) = \frac{1}{2}-O(\epsilon)$. Thus, every item in $\OPT_{LF_s}$ survives with probability at least $1/2 - O(\epsilon)$. Finally, since one of the sides of each item in $\OPT_{LF_{\ell}}$ does not exceed $(1/2+\epsilon)N$ \dk{(otherwise the item would have been huge)}, each such item survives with probability at least \dk{$\frac{1}{2}\cdot (\frac{1}{2}-\frac{2\epsilon}{1-\eps})=\frac{1}{4}-O(\epsilon)$}. Hence we retain an expected profit of at least $(1-O(\epsilon))(\frac{1}{2}p(\OPT_{LF_s})+\frac{1}{4}p(\OPT_{LF_{\ell}})+ \frac{3}{4}p(\OPT_{SF}))$, and free up a strip of thickness $\epsilon N$ in the process. Due to Lemma \ref{lem:bound-thin-items}, we can pack the items of $\OPT_{LT}\cup \OPT_{ST}$ (which use up a thickness of $\epsilon^4N$) and use the remaining empty space of thickness $(\epsilon-\epsilon^4)N$ for resource augmentation (Lemma \ref{lem:resource-augmentation}) to get a container packing.    
\end{proof}

In the case when $\OPT_{LF_{\ell}}=\emptyset$, Lemma \ref{lem:lcpacking-focs}(i), Lemma \ref{lem:new-processing} and Lemma \ref{lem:randomstrip} already gives a $(10/7+O(\epsilon))$-approximation.

\begin{lemma}
\label{lem:optlflempty}
    If $\OPT_{LF_{\ell}}=\emptyset$, there exists a container packing of profit at least $(7/10-\epsilon)p(\OPT)$.
\end{lemma}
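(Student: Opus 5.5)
Since $\OPT_{LF_{\ell}}=\emptyset$, we have $\OPT_{LF}=\OPT_{LF_s}$. The plan is to take a convex combination of three container packings we already have and check each item class separately. Write $a:=p(\OPT_{LF})$, $b:=p(\OPT_{SF})$, $c:=p(\OPT_{LT})$ and $d:=p(\OPT_{ST})$.

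The three packings are the following. As remarked at the start of \Cref{sec:no-huge-item}, the packings behind Lemma~\ref{lem:lcpacking-focs}(i) and Lemma~\ref{lem:new-processing} are container packings, so their guarantees hold in the rotational setting.
\begin{itemize}
\item Lemma~\ref{lem:lcpacking-focs}(i) gives a container packing of profit at least $(1-\eps)\bigl(a+\tfrac12 b+\tfrac12 d\bigr)$.
\item Lemma~\ref{lem:new-processing} gives one of profit at least $(1-\eps)\bigl(a+b+\tfrac12 c\bigr)$.
\item Lemma~\ref{lem:randomstrip} gives one of profit at least $(1-\eps)\bigl(\tfrac12 a+\tfrac34 b+c+d\bigr)$, because the term $\tfrac14 p(\OPT_{LF_{\ell}})$ vanishes.
\end{itemize}

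I take these with weights $x=\tfrac15$, $y=\tfrac15$ and $z=\tfrac35$, which sum to $1$. The weights are found by requiring the coefficients of $c$ and $d$ to be equal, which forces $x=y$, and then equating the coefficients of $a$ and $c$ to $7/10$. The resulting coefficient of each class is:
\begin{itemize}
\item for $a$: $x+y+\tfrac z2=\tfrac15+\tfrac15+\tfrac{3}{10}=\tfrac{7}{10}$;
\item for $b$: $\tfrac x2+y+\tfrac34 z=\tfrac1{10}+\tfrac15+\tfrac{9}{20}=\tfrac34\ge\tfrac7{10}$;
\item for $c$: $\tfrac y2+z=\tfrac1{10}+\tfrac35=\tfrac7{10}$;
\item for $d$: $\tfrac x2+z=\tfrac7{10}$.
\end{itemize}
Hence the best of the three packings has profit at least $(1-\eps)\tfrac{7}{10}(a+b+c+d)$.

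It remains to check that $a+b+c+d\ge(1-O(\eps))p(\OPT)$, which is the only step needing care. By the conventions of \Cref{sec:non-rotation-improved}, the large items and the untouchable items are counted inside $\OPT_{LF}$. The items lost to the corridor decomposition, the intermediate items and $\OPT_{\text{kill}}$ each cost at most $O(\eps)p(\OPT)$, using the iteration over untouchable sets. Small items are reinserted as described in the paragraph on handling small items, via the cell-based NFDH argument. This requires that the freed area in each of the three packings is available: in the first two this comes from shrinking boxes, and in the third from the deleted strip $\mathcal S$. Rescaling $\eps$ then yields profit at least $(7/10-\eps)p(\OPT)$.
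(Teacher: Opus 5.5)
Your proof is correct and is essentially the paper's argument. The paper combines the same three bounds (Lemmas~\ref{lem:lcpacking-focs}(i), \ref{lem:new-processing} and \ref{lem:randomstrip}) with multipliers $2,2,6$, which are exactly your weights $\tfrac15,\tfrac15,\tfrac35$, to get $10p(\optcont)\ge(1-\epsilon)(7a+\tfrac{15}{2}b+7c+7d)$; your last paragraph on discarded, untouchable and small items just makes explicit what the paper takes for granted when it writes $\ge(7-7\epsilon)p(\OPT)$.
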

\begin{proof}
    Let $\optcont$ be the maximum profit container packing. We have the following bounds on $\optcont$.
    \begin{alignat*}{2}
        2p(\optcont) &\ge (1-\epsilon)(2p(\OPT_{LF})+p(\OPT_{SF})+p(\OPT_{ST})) &\quad& \text{[Lemma \ref{lem:lcpacking-focs}(i)]} \\
        2p(\optcont) &\ge (1-\epsilon)\left(2p(\OPT_{LF})+2p(\OPT_{SF})+p(\OPT_{LT})\right) &\quad& \text{[Lemma \ref{lem:new-processing}]} \\
        6p(\optcont) &\ge (1-\epsilon)\left(3p(\OPT_{LF})+\frac{9}{2}p(\OPT_{SF})+6p(\OPT_{LT})+6p(\OPT_{ST})\right) &\quad& \text{[Lemma \ref{lem:randomstrip}]}
    \end{alignat*}
    Adding the three above inequalities, we obtain
    \begin{align*}
        10p(\optcont)&\ge (1-\epsilon)\left(7p(\OPT_{LF})+\frac{15}{2}p(\OPT_{SF})+7p(\OPT_{LT})+7p(\OPT_{ST})\right) \\
        &\ge (7-7\epsilon)p(\OPT),
    \end{align*}
    which completes the proof.
\end{proof}

We assume from now on that $\OPT_{LF_{\ell}}\neq \emptyset$. We further distinguish two cases depending on the number of items in $\OPT_{LF_{\ell}}$.

\paragraph{Case 1: $\mathbf{|\OPT_{LF_{\ell}}|\ge 2}$.}
In this case, by discarding the least profitable item of $\OPT_{LF_{\ell}}$, we are able to repack all items in $\OPT_{ST}$.

\begin{lemma}
\label{lem:delete-and-repack-st}
    There exists a container packing of profit at least $(1-\epsilon)(p(\OPT_{LF_s})+p(\OPT_{SF})+\frac{1}{2}p(\OPT_{LT})+p(\OPT_{ST}))+\frac{1}{2}p(\OPT_{LF_{\ell}})$.
\end{lemma}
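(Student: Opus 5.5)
We build the packing from the container packing of \Cref{lem:new-processing}. In that construction we first process all short acute subcorridors and then process either all horizontal long subcorridors and the obtuse ones, or the symmetric variant with vertical and horizontal swapped. This yields a container packing of profit at least $(1-\epsilon)(p(\OPT_{LF})+p(\OPT_{SF})+\frac12 p(\OPT_{LT}))$, and we note that the items of $\OPT_{LF_{\ell}}$ (large, untouchable, fat) are kept in their own containers. More precisely, since the bound $\frac12 p(\OPT_{LT})$ arises by averaging the two symmetric processings, we carry out the construction below in \emph{each} of them and average at the end. Averaging is valid because the removal step does not depend on which processing is used.

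\textbf{Step 1: free space.} Let $i_{\min}$ be the least profitable item of $\OPT_{LF_{\ell}}$. Since $|\OPT_{LF_{\ell}}|\ge 2$, we have $p(i_{\min})\le \frac12 p(\OPT_{LF_{\ell}})$. Removing it therefore keeps profit at least $(1-\epsilon)(p(\OPT_{LF_s})+p(\OPT_{SF})+\frac12 p(\OPT_{LT}))+\frac12 p(\OPT_{LF_{\ell}})$, treating the $(1-\epsilon)$ factor on $\OPT_{LF_{\ell}}$ carefully. We can keep it out of that factor because the large items are never discarded in the processing; they are untouchable items or packed in their own containers. Removing $i_{\min}$ frees its whole region. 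By definition of $\OPT_{LF_{\ell}}$, after a rotation if needed, this region contains an empty box of size at least $\epsilon N\times(1/2+\epsilon)N$. Here we use that in the container packing the large item occupies a container of exactly its size, so the region it vacates is genuinely empty and not overlapped by other containers.

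\textbf{Step 2: repack $\OPT_{ST}$.} By \Cref{lem:bound-thin-items}, $\OPT_{ST}$ fits in a box of size $\epsilon^4N\times(1/2+\epsilon)N$. Place this box inside the freed box, orienting it so its long side matches the $(1/2+\epsilon)N$ side. The leftover $(\epsilon-\epsilon^4)N\times(1/2+\epsilon)N$ region serves as resource augmentation. Applying \Cref{lem:resource-augmentation} to the packing of $\OPT_{ST}$ inside the box gives $O_\epsilon(1)$ containers with profit $(1-\epsilon)p(\OPT_{ST})$.

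\textbf{Step 3: combine and average.} Summing the profits of Steps 1 and 2 gives the claimed bound $(1-\epsilon)(p(\OPT_{LF_s})+p(\OPT_{SF})+\frac12 p(\OPT_{LT})+p(\OPT_{ST}))+\frac12 p(\OPT_{LF_{\ell}})$. The items of $\OPT_{ST}$ are disjoint from the sets kept by the processing, since thin items in short subcorridors are either dropped or already counted. To make this precise, in each processing we delete every $\OPT_{ST}$ item that survived and repack all of $\OPT_{ST}$ in the freed box; we only lose profit that we then regain. Small items are handled as in \Cref{sec:non-rotation-improved}.

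\textbf{Main obstacle.} The delicate point is justifying that the region of the removed item of $\OPT_{LF_{\ell}}$ is free of all other containers after the corridor processing. This needs the fact that large and untouchable items correspond to corridors matching exactly their size (\Cref{lem:corr-decomposition}), so the processing never moves boxes into their footprint. It also needs that the $\OPT_{ST}$ items repacked are exactly the ones the processing discarded, so nothing is counted twice.
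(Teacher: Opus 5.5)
Your proposal is correct and follows the paper's own proof. Both take the container packing of Lemma~\ref{lem:new-processing} and delete the least profitable item of $\OPT_{LF_{\ell}}$, which costs at most $\frac{1}{2}p(\OPT_{LF_{\ell}})$ since $|\OPT_{LF_{\ell}}|\ge 2$, then put the $\epsilon^4N\times(1/2+\epsilon)N$ box for $\OPT_{ST}$ from Lemma~\ref{lem:bound-thin-items} into its vacated footprint, using the leftover thickness $(\epsilon-\epsilon^4)N$ for resource augmentation; your extra bookkeeping (averaging over the two symmetric processings and first deleting any surviving $\OPT_{ST}$ items to avoid double counting) only spells out what the paper leaves implicit.
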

\begin{proof}
    Using Lemma \ref{lem:new-processing}, we first get a container packing of profit at least $(1-\epsilon)(p(\OPT_{LF})+p(\OPT_{SF})+\frac{1}{2}p(\OPT_{LT}))$. \aw{By the construction of this packing in the proof of Lemma \ref{lem:new-processing}}, the items of $\OPT_{LF_{\ell}}$ are all present in this packing. Since one of the sides of every item in $\OPT_{LF_{\ell}}$ has length at least $(1/2+\eps)N$, by Lemma \ref{lem:bound-thin-items}, the items of $\OPT_{ST}$ can all fit in the space occupied by such an item.
    We delete the item of $\OPT_{LF_{\ell}}$ having minimum profit and pack items of $\OPT_{ST}$ in its place. Since the shorter side of the deleted item has length at least $\epsilon N$, there is still an empty space of thickness at least $(\epsilon-\epsilon^4)N$, which we can use for resource augmentation (Lemma \ref{lem:resource-augmentation}) in order to get a container packing of $\OPT_{ST}$. Since $\OPT_{LF_{\ell}}$ had at least two items, we are able to retain a profit of at least $\frac{1}{2}p(\OPT_{LF_{\ell}})$ and are done.
\end{proof}

Combining Lemmas \ref{lem:lcpacking-focs}(i), \ref{lem:new-processing}, \ref{lem:res-contraction} and \ref{lem:delete-and-repack-st} gives a $16/11+O(\epsilon)$-approximation for this case.

\begin{lemma}
\label{lem:optlflmorethantwo}
    If $|\OPT_{LF_{\ell}}|\ge 2$, there exists a container packing of profit at least $(11/16-\epsilon)p(\OPT)$.
\end{lemma}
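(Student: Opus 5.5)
The plan is to take a nonnegative linear combination of four earlier profit bounds: Lemma~\ref{lem:lcpacking-focs}(i), Lemma~\ref{lem:new-processing}, Lemma~\ref{lem:res-contraction} and Lemma~\ref{lem:delete-and-repack-st}. This mirrors the proofs of Lemma~\ref{lem:optlflempty} and Theorem~\ref{theorem:non-rotation-improved-1}. As noted at the start of \Cref{sec:no-huge-item}, the packings behind Lemma~\ref{lem:lcpacking-focs}(i) and Lemma~\ref{lem:new-processing} are container packings, so all four bounds apply to $\optcont$, the maximum-profit container packing with $O_\eps(1)$ containers. Lemma~\ref{lem:delete-and-repack-st} is exactly where the hypothesis $|\OPT_{LF_{\ell}}|\ge 2$ is used.

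Since $\OPT_{LF}=\OPT_{LF_s}\cup\OPT_{LF_{\ell}}$, I first rewrite every bound in the five variables $p(\OPT_{LF_s})$, $p(\OPT_{LF_{\ell}})$, $p(\OPT_{SF})$, $p(\OPT_{LT})$, $p(\OPT_{ST})$. The coefficient vectors, in that order, are:
\begin{itemize}
\item Lemma~\ref{lem:lcpacking-focs}(i): $(1,1,\tfrac12,0,\tfrac12)$;
\item Lemma~\ref{lem:new-processing}: $(1,1,1,\tfrac12,0)$;
\item Lemma~\ref{lem:res-contraction}: $(\tfrac12,\tfrac12,\tfrac12,1,1)$;
\item Lemma~\ref{lem:delete-and-repack-st}: $(1,\tfrac12,1,\tfrac12,1)$.
\end{itemize}
The goal is weights $a,b,c,d\ge 0$ with $a+b+c+d=16$ such that every variable receives total coefficient at least $11$.

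Solving the resulting small linear system (the $LF_\ell$, $LT$ and $SF$ constraints are tight) gives $a=2$, $b=4$, $c=8$, $d=2$. Checking each variable:
\begin{itemize}
\item $LF_s$: $2+4+4+2=12$;
\item $LF_\ell$: $2+4+4+1=11$;
\item $SF$: $1+4+4+2=11$;
\item $LT$: $0+2+8+1=11$;
\item $ST$: $1+0+8+2=11$.
\end{itemize}
Adding $2\times$(i), $4\times$Lemma~\ref{lem:new-processing}, $8\times$Lemma~\ref{lem:res-contraction} and $2\times$Lemma~\ref{lem:delete-and-repack-st} therefore yields
\[
16\,p(\optcont)\ge(1-\epsilon)\cdot 11\,\bigl(p(\OPT_{LF_s})+p(\OPT_{LF_{\ell}})+p(\OPT_{SF})+p(\OPT_{LT})+p(\OPT_{ST})\bigr).
\]

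Two remaining points need checking.
\begin{itemize}
\item The $\tfrac12 p(\OPT_{LF_{\ell}})$ term in Lemma~\ref{lem:delete-and-repack-st} has no $(1-\epsilon)$ factor; this only helps, so the combined bound still carries the factor $(1-\epsilon)$.
\item The five classes must cover $\OPT$ up to a $(1-O(\epsilon))$ factor. Discarded intermediate items, $\OPT_{\text{kill}}$ and the small items are handled as in \Cref{sec:non-rotation-improved}, costing $O(\epsilon)p(\OPT)$.
\end{itemize}
Dividing by $16$ and rescaling $\epsilon$ gives profit at least $(11/16-\epsilon)p(\OPT)$. The only real work is finding the weights, which is done above; the rest is bookkeeping of $\epsilon$-losses.
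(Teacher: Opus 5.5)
Your proposal is correct and is essentially the paper's proof: the paper adds exactly these four bounds (Lemma~\ref{lem:lcpacking-focs}(i), Lemma~\ref{lem:new-processing}, Lemma~\ref{lem:res-contraction} and Lemma~\ref{lem:delete-and-repack-st}) with the same weights $2,4,8,2$ to obtain $16\,p(\optcont)\ge(1-\epsilon)\cdot 11\,p(\OPT)$. The only cosmetic difference is that the paper keeps four variables rather than splitting $\OPT_{LF}$, absorbing the split in the last inequality via $2p(\OPT_{LF_s})+p(\OPT_{LF_{\ell}})\ge p(\OPT_{LF})$, which is the same slack you found in the coefficient $12\ge 11$ on $\OPT_{LF_s}$.
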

\begin{proof}
    Let $\optcont$ be the maximum profit container packing. We have the following bounds on $\optcont$.
    \begin{alignat*}{2}
        2p(\optcont) &\ge (1-\epsilon)(2p(\OPT_{LF})+p(\OPT_{SF})+p(\OPT_{ST})) &\quad& \text{[Lemma \ref{lem:lcpacking-focs}(i)]} \\
        4p(\optcont) &\ge (1-\epsilon)\left(4p(\OPT_{LF})+4p(\OPT_{SF})+2p(\OPT_{LT})\right) &\quad& \text{[Lemma \ref{lem:new-processing}]} \\
        8p(\optcont) &\ge (1-\epsilon)\left(8p(\OPT_{LT})+8p(\OPT_{ST})+4p(\OPT_{LF})+4p(\OPT_{SF})\right) &\quad& \text{[Lemma \ref{lem:res-contraction}]} \\
        2p(\optcont) &\ge (1-\epsilon)(p(\OPT_{LF})+2p(\OPT_{SF})+ p(\OPT_{LT})+2p(\OPT_{ST})) &\quad& \text{[Lemma \ref{lem:delete-and-repack-st}]}
    \end{alignat*}
    Adding the four above inequalities, we obtain
    \begin{align*}
        16p(\optcont)&\ge (1-\epsilon)(11p(\OPT_{LF})+11p(\OPT_{SF})+11p(\OPT_{LT})+11p(\OPT_{ST})) \\
        &= (11-11\epsilon)p(\OPT),
    \end{align*}
    completing the proof.
\end{proof}

\paragraph{Case 2: $\mathbf{|\OPT_{LF_{\ell}}|=1}$.}
In this case, we cannot afford to discard the single item in $\OPT_{LF_{\ell}}$, as it may contain the entire profit of $\OPT_{LF}$. For instance, a distribution of profits that does not yield a better than $1.5$-approximation using all our previous lemmas is the following: $p(\OPT_{LF_{\ell}})=p(\OPT_{SF})=p(\OPT_{ST})=\frac{1}{3}p(\OPT)$. However, we show that it is possible to pack the single item of $\OPT_{LF_{\ell}}$ together with a constant fraction of the profit of $\OPT_{SF}$ below it. Then, we can free up an empty horizontal strip inside the packing of $\OPT_{SF}$, and repack items of $\OPT_{ST}$ in the strip.

\begin{lemma}
\label{lem:one-by-twenty-sf}
    There exists a container packing of profit at least $p(\OPT_{LF_{\ell}})+ (1-\epsilon)(p(\OPT_{LT})+p(\OPT_{ST})+\frac{1}{20}p(\OPT_{SF}))$.
\end{lemma}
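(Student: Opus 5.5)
Let $i_\ell$ denote the unique item of $\OPT_{LF_{\ell}}$. By rotating, assume $w(i_\ell)>(1/2+\eps)N$ and $h(i_\ell)>\eps N$. Since there is no huge item, $h(i_\ell)<(1/2+\eps)N$. The plan is to build a packing in which $i_\ell$ is placed flush against the top boundary of $K$ as a container of its own. Below it, in the rectangle $[0,N]\times[0,N-h(i_\ell)]$ of height at least $(1/2-\eps)N$, we place three things: a $1/20$ fraction (in profit) of $\OPT_{SF}$, all of $\OPT_{LT}\cup\OPT_{ST}$, and the remaining space used for resource augmentation.

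\textbf{Step 1 (selecting a subset of $\OPT_{SF}$).} Each item of $\OPT_{SF}$ is skewed: one side is at most $\epssmall N$ and the other at most $(1/2+2\epslarge)N$. Using rotations, stand all horizontal-type items so that their long side is horizontal, at most $(1/2+\eps)N$. The target is a sub-rectangle $R$ of width $N$ and height roughly $(1/2-2\eps)N$ below $i_\ell$, and we want a subset of $\OPT_{SF}$ of profit at least $\frac{1}{20}p(\OPT_{SF})$ that fits into $R$ minus a thin strip. I would argue this via the original packing in $\OPT$. Consider a random axis-parallel square or strip window $W$ of size about $\frac12 N\times\frac12 N$ (or a suitable region) inside $K$ and keep the items of $\OPT_{SF}$ fully inside $W$. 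A skewed item with long side at most $(1/2+\eps)N$ lies in such a window with constant probability. By choosing among a few shifted windows and rotating $W$ if needed, a constant fraction of the profit survives. Two windows of height about $N/4$ and width about $N/2$ can be placed side by side in $R$, and the window geometry, together with the factor lost to items of length about $N/2$, should be tuned to yield the constant $1/20$. The step I expect to be the main obstacle is this one: items of $\OPT_{SF}$ of length close to $(1/2+\eps)N$ do not fit into quarter-windows. To handle them, I would first split $\OPT_{SF}$ by long-side length, at most $N/4$ versus more than $N/4$. For the long class, I would use that the original corridor boxes containing them have width more than $N/4$, and take the best such box group (there are $O_\eps(1)$ boxes) or a random horizontal strip of height $N/4$. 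In both cases averaging yields a constant fraction, and the constants are chosen so that the worst branch still gives $1/20$.

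\textbf{Step 2 (repacking thin items).} After Step 1, the selected $\OPT_{SF}$ items occupy a region of height at most $(1/2-3\eps)N$ inside $R$. This leaves a horizontal strip of height $\Omega(\eps N)$ below $i_\ell$. By Lemma~\ref{lem:bound-thin-items}, $\OPT_{LT}\cup\OPT_{ST}$ fits into an $N\times\eps^4N$ box (rotated horizontally), so we place it there. The remaining $(\eps-\eps^4)N$ of free height serves as resource augmentation in Lemma~\ref{lem:resource-augmentation}. This converts the packing of the selected $\OPT_{SF}$ items and the thin items into $O_\eps(1)$ containers while losing only a $(1-O(\eps))$ factor. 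The item $i_\ell$ is kept intact as its own container, so its profit is counted in full, which gives exactly the claimed bound $p(\OPT_{LF_{\ell}})+(1-\eps)(p(\OPT_{LT})+p(\OPT_{ST})+\frac{1}{20}p(\OPT_{SF}))$ after rescaling $\eps$.
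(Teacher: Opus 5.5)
Your proposal has a genuine gap in Step 1, and Step 1 is essentially the whole content of the lemma. Step 2, packing $\OPT_{LT}\cup\OPT_{ST}$ into the freed strip and using the rest for resource augmentation, is routine and matches the paper. The specific problems with Step 1 are these.

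(i) A random window of size about $\frac12N\times\frac12N$ cannot contain any item of $\OPT_{SF}$ whose long side $L$ lies in $(N/2,(\frac12+2\epslarge)N]$. For $L$ slightly below $N/2$, the containment probability is about $1-2L/N$, which tends to $0$. So the claimed constant survival probability is false.

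(ii) Your fallback for the long class, taking the best group among the $O_\eps(1)$ corridor boxes, only yields a $1/O_\eps(1)$ fraction. You need an absolute constant. Also, a horizontal strip of height $N/4$ contains none of the vertical items of that class.

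(iii) The constant $\frac1{20}$ is asserted ("constants are chosen so that the worst branch still gives $1/20$") but never derived. Your claim in Step 2 that the selected items occupy height at most $(\frac12-3\eps)N$ depends on it.

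Here is the missing argument.
\begin{itemize}
\item Every item of $\OPT_{SF}$ is disjoint from $i_\ell$. Two disjoint axis-parallel rectangles are separated by an axis-parallel line, so each such item lies entirely above, below, left, or right of $i_\ell$. Hence one of these four arms carries profit at least $\frac14p(\OPT_{SF})$.
\item Your relocation of $i_\ell$ to the top is legitimate and even convenient. Every arm has length at most $N$ and thickness at most $N-h(i_\ell)$; for the side arms this uses $w(i_\ell)\ge h(i_\ell)$. So that arm, rotated if needed, fits into your region $R$. If the arm's thickness is at most $N-h(i_\ell)-\eps N$, you need not delete anything.
\item Otherwise, orient the arm horizontally. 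Split its items into those of height at most $\epssmall N$ and the rest.
\item For the first group, delete the items hit by a random horizontal strip of height $\eps N$. This loses only an $O(\eps)$ fraction.
\item For the second group, cut the arm into four slabs of width $N/4$ and keep the items meeting the most profitable slab, which have at least a quarter of the profit. Then delete a random vertical strip of width $3\eps N$. Rotate the resulting $(\frac14-2\eps)N\times t$ box by $90^\circ$; its height $(\frac14-2\eps)N$ leaves a free strip of height at least $\eps N$.
\item Keep the better of the two groups. Since $\max\{x,y/4\}\ge (x+y)/5$, you get at least $\frac14\cdot\frac15=\frac1{20}$ of $p(\OPT_{SF})$.
\end{itemize}
The paper does exactly this, but keeps $i_\ell$ in place and uses the thickest arm as the destination. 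That arm has height at least $(\frac14-\eps)N$, which is just enough for the rotated quarter slab plus the free $\eps N$ strip.
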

\begin{proof}
\begin{figure}
    \centering
    \includegraphics[width=0.5\linewidth]{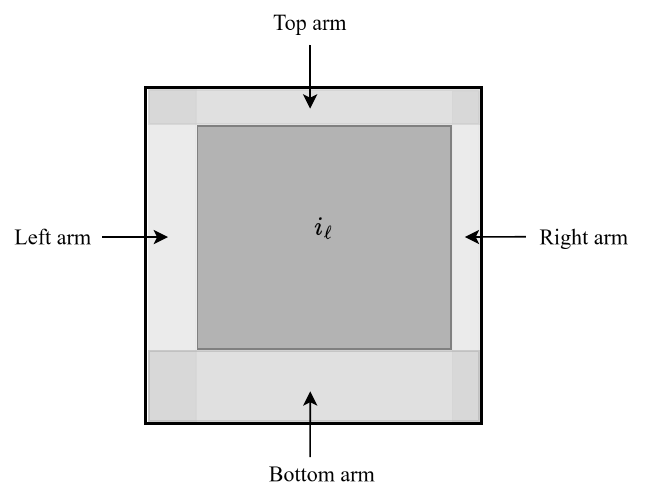}
    \caption{\dk{Partitioning the region outside $i_{\ell}$ into four arms.}}
    \label{fig:arms}
\end{figure}
Let $i_{\ell}$ denote the single item in $\OPT_{LF_{\ell}}$.
We divide the region surrounding the item $i_{\ell}$ into four arms as shown in Figure \ref{fig:arms}. The \textit{top} arm consists of all points inside the knapsack whose $y$-coordinate exceeds the $y$-coordinate of the top edge of $i_{\ell}$. Analogously, we define the \textit{bottom}, \textit{left} and \textit{right} arms. We assume w.l.o.g.~that the thickness of the bottom arm is the largest among the four arms. Let $h_b$ be the height of the bottom arm. Since $i_{\ell}$ is not a huge item, one of the sides of $i_{\ell}$ must have length at most $(1/2+\epsilon)N$, and therefore it follows that $h_b \ge \frac{1}{2}(\frac{1}{2}-\epsilon)N > (\frac{1}{4}-\epsilon)N$.

    \begin{figure}
    \centering
    \includegraphics[width=\linewidth]{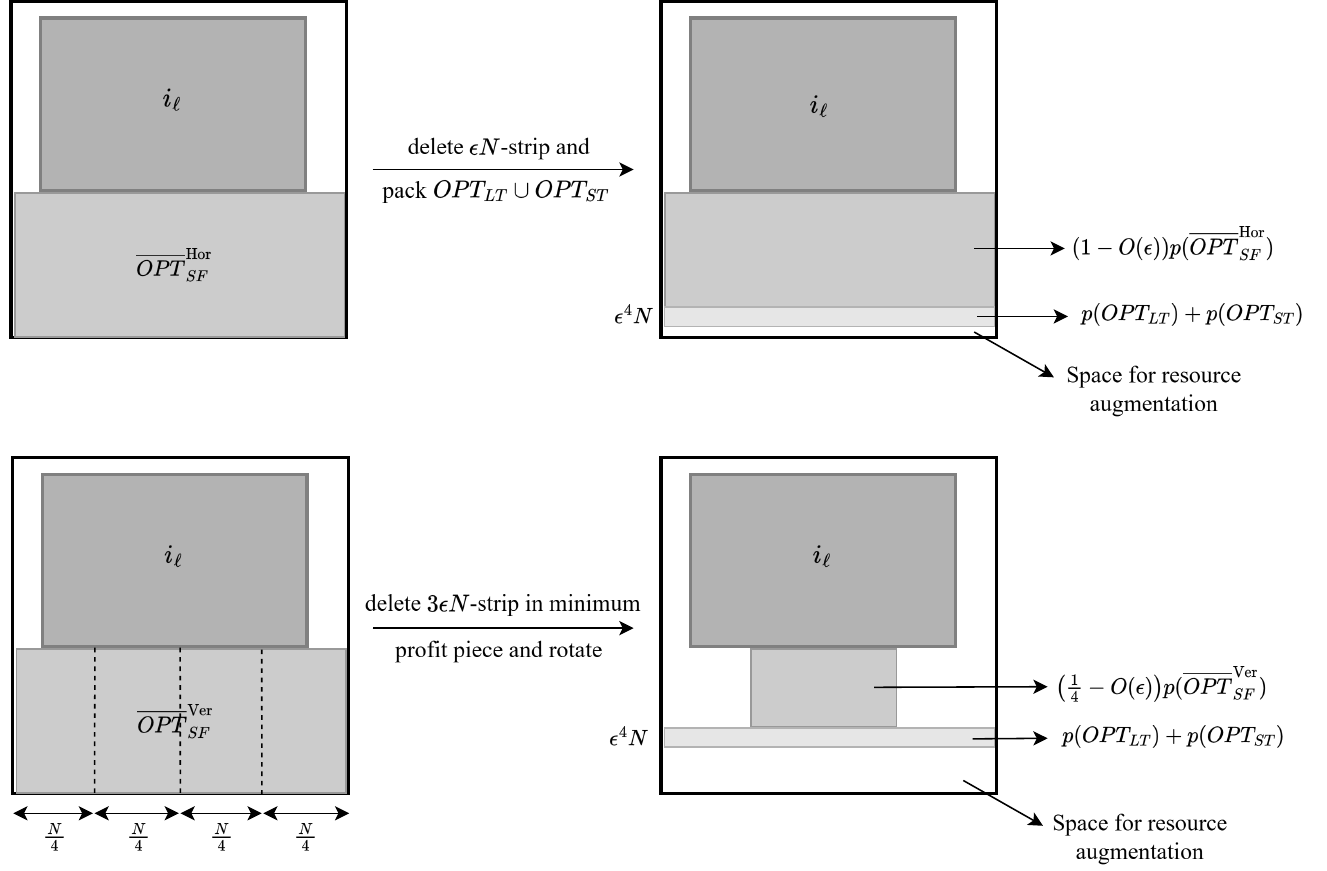}
    \caption{\dk{Packing corresponding to Lemma \ref{lem:one-by-twenty-sf}.}}
    \label{fig:nonmassivecase}
\end{figure}

    Consider now the items of $\OPT_{SF}$. Our goal is to pack a subset of these items of profit $\approx \frac{1}{20}p(\OPT_{SF})$ below the item $i_{\ell}$ so that there is still an $O(\epsilon)N$-height empty strip left that can be used for packing the items of $\OPT_T$ and for resource augmentation. For this, note that out of the top, bottom, left and right arms, there must be one arm such that the items of $\OPT_{SF}$ completely lying inside the arm have a profit of at least $\frac{1}{4}p(\OPT_{SF})$. Say, this holds for the bottom arm, otherwise we can transfer the items from the highest profit arm to the bottom arm. Now we have a packing of a subset $\overline{\OPT}_{SF} \subseteq \OPT_{SF}$ of profit $p(\overline{\OPT}_{SF}) \ge \frac{1}{4}p(\OPT_{SF})$ inside the bottom arm. Let $\overline{\OPT}_{SF}^{\text{Hor}} \subseteq \overline{\OPT}_{SF}$ be the horizontal items in $\overline{\OPT}_{SF}$, i.e., the items of height at most $\epsilon_{\text{small}}N$, and $\overline{\OPT}_{SF}^{\mathrm{ver}} := \overline{\OPT}_{SF}\setminus \overline{\OPT}_{SF}^{\text{Hor}}$. We consider the packing of $\overline{\OPT}_{SF}^{\text{Hor}}$ and discard items intersecting a random horizontal strip of height $\epsilon N$ in the bottom arm, \dk{similar to Lemma \ref{lem:randomstrip}}. Clearly, each item survives with probability at least $1-O(\epsilon)$, so we retain a profit of at least $(1-O(\epsilon))p(\overline{\OPT}_{SF}^{\text{Hor}})$, and are able to free up a strip of height $\epsilon N$ in the process. Now consider the packing of $\overline{\OPT}_{SF}^{\mathrm{ver}}$. We partition the bottom arm into four equal pieces by constructing vertical segments separated by a distance of $N/4$ (see Figure \ref{fig:nonmassivecase}). Then the items intersecting one of these pieces must have a profit of at least $\frac{1}{4}p(\overline{\OPT}_{SF}^{\mathrm{ver}})$. By deleting a random vertical strip of width $3\epsilon N$, we get a packing of profit at least $(1/4-O(\epsilon))p(\overline{\OPT}_{SF}^{\mathrm{ver}})$ into a box of width $(1/4-2\epsilon)N$ and height $h_b$. We now rotate this box and place it below the item $i_{\ell}$ (see Figure \ref{fig:nonmassivecase}). Since the height of the bottom arm is at least $(1/4-\epsilon)N$, in both cases, we get an empty horizontal strip of height $\epsilon N$, and pack items of $\OPT_{SF}$ having a profit of at least $(1-O(\epsilon))\max\{p(\overline{\OPT}_{SF}^{\text{Hor}}),\frac{1}{4}p(\overline{\OPT}_{SF}^{\mathrm{ver}})\} \ge (1-O(\epsilon))\cdot\frac{1}{5}p(\overline{\OPT}_{SF})\ge (\frac{1}{20}-O(\epsilon))p(\OPT_{SF})$. Finally in the empty strip, we pack the items of $\OPT_{LT}\cup \OPT_{ST}$ which occupy a height of at most $\epsilon^4N$ (Lemma \ref{lem:bound-thin-items}), and utilize the remaining height of $(\epsilon - \epsilon^4)N$ for resource augmentation (Lemma \ref{lem:resource-augmentation}).
\end{proof}

Combining Lemmas \ref{lem:lcpacking-focs}(i), \ref{lem:new-processing}, \ref{lem:res-contraction}, \ref{lem:randomstrip} and \ref{lem:one-by-twenty-sf}, we show the following result.

\begin{lemma}
\label{lem:optlflone}
    If $|\OPT_{LF_{\ell}}|=1$, there exists a container packing of profit at least $(127/190-\epsilon) p(\OPT)$.
\end{lemma}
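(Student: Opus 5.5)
The plan is to take a nonnegative linear combination of the five lower bounds on $p(\optcont)$ named right before the statement, exactly as in the proofs of Lemmas~\ref{lem:optlflempty} and~\ref{lem:optlflmorethantwo}. Throughout, $\optcont$ is the most profitable container packing. The one change is that we split $p(\OPT_{LF}) = p(\OPT_{LF_s}) + p(\OPT_{LF_\ell})$, since Lemmas~\ref{lem:randomstrip} and~\ref{lem:one-by-twenty-sf} treat these two parts differently.

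Dropping the common factor $(1-\eps)$ (Lemma~\ref{lem:one-by-twenty-sf} even gives the full $p(\OPT_{LF_\ell})$, so weakening it is harmless), the five bounds read:
\begin{align*}
p(\optcont) &\ge p(\OPT_{LF}) + \tfrac12 p(\OPT_{SF}) + \tfrac12 p(\OPT_{ST}) && \text{[Lemma \ref{lem:lcpacking-focs}(i)]}\\
p(\optcont) &\ge p(\OPT_{LF}) + p(\OPT_{SF}) + \tfrac12 p(\OPT_{LT}) && \text{[Lemma \ref{lem:new-processing}]}\\
p(\optcont) &\ge p(\OPT_{LT}) + p(\OPT_{ST}) + \tfrac12 p(\OPT_{LF}) + \tfrac12 p(\OPT_{SF}) && \text{[Lemma \ref{lem:res-contraction}]}\\
p(\optcont) &\ge \tfrac12 p(\OPT_{LF_s}) + \tfrac14 p(\OPT_{LF_\ell}) + \tfrac34 p(\OPT_{SF}) + p(\OPT_{LT}) + p(\OPT_{ST}) && \text{[Lemma \ref{lem:randomstrip}]}\\
p(\optcont) &\ge p(\OPT_{LF_\ell}) + p(\OPT_{LT}) + p(\OPT_{ST}) + \tfrac1{20} p(\OPT_{SF}) && \text{[Lemma \ref{lem:one-by-twenty-sf}]}
\end{align*}

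To find the multipliers $a,b,c,d,e$, require $a+b+c+d+e=190$ and make each of the five coefficients (on $\OPT_{LF_s},\OPT_{LF_\ell},\OPT_{SF},\OPT_{LT},\OPT_{ST}$) at least $127$. Solve the system with all five constraints tight:
\begin{itemize}
\item Comparing the $\OPT_{LT}$ and $\OPT_{ST}$ coefficients forces $a=b$.
\item Comparing the $\OPT_{LF_s}$ and $\OPT_{LF_\ell}$ coefficients forces $e=d/4$.
\item Comparing the $\OPT_{LF_s}$ and $\OPT_{SF}$ coefficients gives $a=21d/40$.
\item The remaining two equations then give $d=80$ and $c=6$.
\end{itemize}
So the candidate weights are $a=b=42$, $c=6$, $d=80$, $e=20$, with sum $190$.

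Checking each coefficient:
\begin{itemize}
\item $\OPT_{LF_s}$: $42+42+3+40=127$.
\item $\OPT_{LF_\ell}$: $42+42+3+20+20=127$.
\item $\OPT_{SF}$: $21+42+3+60+1=127$.
\item $\OPT_{LT}$: $21+6+80+20=127$.
\item $\OPT_{ST}$: $21+6+80+20=127$.
\end{itemize}

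Adding the five inequalities with these weights gives $190\,p(\optcont) \ge (1-\eps)\cdot 127\, p(\OPT)$. Here we use that $\OPT_{LF_s},\OPT_{LF_\ell},\OPT_{SF},\OPT_{LT},\OPT_{ST}$ partition the items that survive the corridor decomposition, and that those items carry profit at least $(1-O(\eps))p(\OPT)$. Rescaling $\eps$ then yields profit at least $(127/190-\eps)p(\OPT)$.

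The only real work is finding the multipliers, which the tight-system solve above handles. Beyond that, the step needing care is confirming that every invoked lemma applies in this case. Lemma~\ref{lem:res-contraction} needs that no item is huge, which holds since we are in the no-huge-item setting. Lemma~\ref{lem:one-by-twenty-sf} needs $|\OPT_{LF_\ell}|=1$, which is exactly the hypothesis.
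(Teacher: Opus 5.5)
Your proof is correct and matches the paper's argument: the paper also adds the bounds of Lemmas \ref{lem:lcpacking-focs}(i), \ref{lem:new-processing}, \ref{lem:res-contraction}, \ref{lem:randomstrip}, and \ref{lem:one-by-twenty-sf} with weights $42, 42, 6, 80, 20$, obtaining $190\,p(\optcont)\ge(1-\epsilon)\cdot 127\,p(\OPT)$. Your derivation of these multipliers from the tight linear system, and your check of when each lemma applies, are details the paper leaves implicit.
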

\begin{proof}
    As before, let $\optcont$ denote the maximum profitable container packing. We have the following lower bounds on $p(\optcont)$.
    \begin{adjustwidth}{-0.5cm}{-1cm}
\begin{alignat*}{2}
42p(\optcont) &\ge (1-\epsilon)(42p(\OPT_{LF})+21p(\OPT_{SF})+21p(\OPT_{ST})) &\quad& \text{[Lemma \ref{lem:lcpacking-focs}(i)]} \\
42p(\optcont) &\ge (1-\epsilon)\left(42p(\OPT_{LF})+42p(\OPT_{SF})+21p(\OPT_{LT})\right) &\quad& \text{[Lemma \ref{lem:new-processing}]} \\
6p(\optcont)  &\ge (1-\epsilon)\big(6p(\OPT_{LT}) + 6p(\OPT_{ST}) + 3p(\OPT_{LF}) \\
              &\qquad\qquad\quad  + 3p(\OPT_{SF})\big) &\quad& \text{[Lemma \ref{lem:res-contraction}]} \\    
80p(\optcont) &\ge (1-\epsilon)\big(40p(\OPT_{LF_s}) + 20p(\OPT_{LF_{\ell}}) + 60p(\OPT_{SF}) \\
              &\qquad\qquad\quad + 80p(\OPT_{LT}) + 80p(\OPT_{ST})\big) &\quad& \text{[Lemma \ref{lem:randomstrip}]} \\
20p(\optcont) &\ge (1-\epsilon)\big(20p(\OPT_{LF_{\ell}}) + 20p(\OPT_{LT}) + 20p(\OPT_{ST})\\
              &\qquad\qquad\quad  + p(\OPT_{SF})\big) &\quad&  \text{[Lemma \ref{lem:one-by-twenty-sf}]} \\
\end{alignat*}
\end{adjustwidth}
    Adding the five inequalities above, we obtain
    \begin{align*}
        190p(\optcont) &\ge (1-\epsilon)(127p(\OPT_{LF})+127p(\OPT_{SF})+127p(\OPT_{LT})+127p(\OPT_{ST})) \\
        &= (127-127\epsilon)p(\OPT),
    \end{align*}
    which completes the proof.
\end{proof}

We are now ready to prove \Cref{lem:no-huge-item}.

\nohugeitem*
\begin{proof}
    If $\OPT_{LF_{\ell}}=\emptyset$, \Cref{lem:optlflempty} gives a $(10/7+\epsilon)$-approximation. Otherwise, if $\OPT_{LF_{\ell}}$ has at least two items, \Cref{lem:optlflmorethantwo} yields a $(16/11+\eps)$-approximation. Finally, for the case when $\OPT_{LF_{\ell}}$ consists of a single item, \Cref{lem:optlflone} achieves an approximation ratio of $(190/127+\eps)$. Therefore, the worst-case approximation ratio of our algorithm is $(190/127+\eps)$.
\end{proof}

\paragraph{Handling small items.}
\dk{We discuss now how to repack the items of $\OPT'_{\text{small}}$ \ak{(see handling of small  items in \Cref{sec:non-rotation-improved}), where we are left to pack  $\OPT'_{\text{small}}$ with area at most $O(\eps^3)N^2$}. Observe that in the packings constructed in Lemmas \ref{lem:res-contraction}, \ref{lem:randomstrip} and \ref{lem:one-by-twenty-sf}, we obtain an empty strip of height $\Omega(\eps)N$ and width $N$ or vice versa. Also, in Lemma \ref{lem:delete-and-repack-st}, after discarding the least profitable item in $\OPT_{LF_{\ell}}$ and packing the items of $\OPT_{ST}$, we still have an empty strip of height $\Omega(\eps)N$ and width at least $(1/2+\eps)N$ (or vice versa). Since $a(\OPT'_{\text{small}})\le O(\eps^3)N^2$ and the width and height of each item in $\OPT'_{\text{small}}$ is trivially bounded by $\epssmall N$, in each case, we can pack all items in $\OPT'_{\text{small}}$ using NFDH inside one half of the strip, and use the remaining half for resource augmentation.}

\subsection{There is a huge item \texorpdfstring{$\hugeitem$}{i*} with $N-w(\hugeitem) > 2\epsilon^2 (N-h(\hugeitem))$}
\label{sec:vertical-arms-large}
Let $w=N-w(\hugeitem)$ and $h=N-h(\hugeitem)$, so that $w\le h$. As in the proof of Lemma \ref{lem:one-by-twenty-sf}, we divide the region of the knapsack not occupied by the item $\hugeitem$ into the top, bottom, left, and right arms. First, we show that if the item $\hugeitem$ is discarded, then it is possible to obtain a container packing of the remaining items.

\begin{lemma}
\label{lem:delete-big-item}
    There exists a container packing of profit at least $(1-\epsilon)(p(\OPT)-p(\hugeitem))$.
\end{lemma}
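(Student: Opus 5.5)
Once $\hugeitem$ is removed, every remaining item of $\OPT$ lies in one of the four arms around $\hugeitem$ or crosses an arm boundary. The four arms have thicknesses summing to $w$ (left plus right) and $h$ (top plus bottom), with $w>2\eps^2 h$ and $w\le h$. The plan is to free an empty strip of thickness $\Omega(\eps)N$ in the whole knapsack while losing only an $\eps$-fraction of the profit of $\OPT\setminus\{\hugeitem\}$, and then apply resource augmentation (Lemma~\ref{lem:resource-augmentation}) to obtain a container packing with $O_\eps(1)$ containers.

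\textbf{Step 1: compress the arms into one region.} The top arm is a box of size $N\times h_t$ and the bottom arm a box of size $N\times h_b$, with $h_t+h_b=h$. Stacking them gives a single box $N\times h$ that packs all items lying entirely in these arms. Similarly, the left and right arms are boxes of height $N$ whose widths sum to $w$; placed side by side they form a box $w\times N$. Items of $\OPT\setminus\{\hugeitem\}$ that cross between arms lie in one of the four corner regions, and those regions belong to both a horizontal and a vertical arm. I would assign each such item to one arm containing it, for instance by cutting the corners as in Figure~\ref{fig:arms}, so that every item sits inside exactly one arm.

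\textbf{Step 2: build the new layout.} Since $\hugeitem$ is discarded, we may place the $N\times h$ box at the bottom of the knapsack. The $w\times N$ box is rotated into a $N\times w$ box and placed directly above it. This uses height $h+w\le 2h$. Because $\hugeitem$ is huge, $h\le(1/2-\eps)N$, so $h+w\le(1-2\eps)N$. That leaves an empty horizontal strip of height at least $2\eps N$ spanning the full width. Applying Lemma~\ref{lem:resource-augmentation} to the $N\times(h+w)$ region with augmentation parameter $\eps$ then yields a container packing with $O_\eps(1)$ containers of profit at least $(1-O(\eps))(p(\OPT)-p(\hugeitem))$. Rescaling $\eps$ gives the lemma.

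\textbf{Main obstacle.} The delicate step is Step 1: every item must lie entirely within the arm to which it is assigned. Any item not inside the top or bottom arm must lie in $[0,N]\times[y_b,y_b+h(\hugeitem)]$ minus $\hugeitem$, that is, inside the left or right arm restricted to that height range. Any item intersecting the top band fully, or partially, is fine once I define the arms as follows. The top and bottom arms are the full-width bands above and below $\hugeitem$. The left and right arms are the regions beside $\hugeitem$ within its vertical extent. An item straddling the line through the top edge of $\hugeitem$ while lying to its left lies in neither arm. My first attempt to handle these is the random-strip trick: shift the cut line by a random offset in a range of size $\eps h$, and delete the items crossing it. However, such items can be tall, so the loss is not obviously small. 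If that fails, I would simply put straddling items into the left or right arms extended to the full height $N$: the side boxes have height $N$ anyway, so taking them as full-height columns $[0,x_l]\times[0,N]$ and $[x_r,N]\times[0,N]$ and letting the top and bottom arms be only the parts between $x_l$ and $x_r$ removes all crossing. The top and bottom pieces then have width at most $N$ and the side columns height $N$, so the rearrangement in Step 2 still works unchanged. This makes the assumption $w>2\eps^2h$ unnecessary for this particular lemma.
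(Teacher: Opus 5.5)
There is a genuine gap in the step you yourself flag as the main obstacle: your final resolution of it is wrong. Taking full-height side columns $[0,x_l]\times[0,N]$ and $[x_r,N]\times[0,N]$, and restricting the top and bottom arms to $[x_l,x_r]$, does not remove all crossings. It only trades horizontal cut lines for vertical ones. Take any item lying above $\hugeitem$ whose horizontal extent is not contained in $[x_l,x_r]$, for example a horizontal item of width close to $N$. It crosses $x=x_l$ or $x=x_r$, so it lies in none of your four pieces and would be lost, and such items can carry almost all of the profit. Relatedly, the claim that every item not inside the top or bottom arm lies in $[0,N]\times[y_b,y_b+h(\hugeitem)]$ is false: such an item only \emph{meets} that band. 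This is exactly why your first definition produced straddling items. The random-strip detour is not needed either.

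The missing observation is that you never need a partition of the region. Keep the top and bottom arms as full-width bands and the left and right arms as full-height columns, overlapping in the corners. Every item $i\neq\hugeitem$ falls into one of two cases:
\begin{itemize}
\item It lies entirely above the top edge or entirely below the bottom edge of $\hugeitem$. Then it is contained in the full-width top or bottom band.
\item Its $y$-extent meets that of $\hugeitem$. Then disjointness from $\hugeitem$ forces its $x$-extent to avoid $(x_l,x_r)$, so it is contained in a full-height side column.
\end{itemize}
Send the first group to the $N\times h$ box and the second to the $w\times N$ box. Your Step~2 uses the two boxes as separate copies, so the shared corners are irrelevant: each group was non-overlapping in $\OPT$ and lies entirely in its own region. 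With this fix, Step~2 goes through, and you are right that $w>2\eps^2h$ is not used.

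The paper relies on exactly this dichotomy but moves only one group. It rotates the top and bottom items into a box of width $h<(1/2-\eps)N$ and height $N$. It drops this box into the full-height column vacated by $\hugeitem$, which has width $>(1/2+\eps)N$, and leaves the side items in place. This yields a vertical free strip of width $2\eps N$. Your variant relocates and stacks both groups to height $h+w\le(1-2\eps)N$. It is equally valid, but it needs bounds on both $h$ and $w$ (which you get from $w\le h$) and a bit more rearrangement.
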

\begin{proof}
    Note that since $h < (1/2-\epsilon)N$, the items completely lying in the top and bottom arms can be rotated and packed inside a box $B$ of height $N$ and width $(1/2-\epsilon)N$. We discard the item $\hugeitem$, and pack the box $B$ in its place, keeping all the items in the left and right arms at their original positions. Since $w(\hugeitem) > (1/2+\epsilon)N$, there is still an empty vertical strip of width $2\epsilon N$ inside the knapsack. We use this empty strip for resource augmentation (Lemma \ref{lem:resource-augmentation}) and obtain a container packing of profit at least $(1-\epsilon)(p(\OPT)-p(\hugeitem))$.
\end{proof}

We next present several ways of packing the item $\hugeitem$ together with some other subset of items. For this, we apply the corridor decomposition lemma (Lemma \ref{lem:corr-decomposition}) with $\hugeitem$ as an untouchable item. Although in \cite{galvez2021approximating}, the untouchable items were all included in the set $\OPT_{LF}$, in our case we put all untouchable items except the item $\hugeitem$ in the set $\OPT_{LF}$. From Lemmas \ref{lem:lcpacking-focs}(i) and \ref{lem:new-processing}, we directly get the following bounds on the profit of a container packing.

\begin{lemma}
\label{lem:lf-sf-st}
    There exist container packings of profit at least
    \begin{enumerate}[label=(\roman*)]
        \item $(1-\epsilon)(p(\OPT_{LF})+\frac{1}{2}(p(\OPT_{SF})+p(\OPT_{ST})))+p(\hugeitem)$, and
        \item $(1-\epsilon)(p(\OPT_{LF})+p(\OPT_{SF})+\frac{1}{2}p(\OPT_{LT}))+p(\hugeitem)$.
    \end{enumerate}
\end{lemma}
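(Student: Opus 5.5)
Both parts follow by rerunning the proofs of Lemma~\ref{lem:lcpacking-focs}(i) and Lemma~\ref{lem:new-processing} on the corridor decomposition produced by Lemma~\ref{lem:corr-decomposition} with $\hugeitem$ in the untouchable set $I'$, while keeping track of $\hugeitem$ separately. By Lemma~\ref{lem:corr-decomposition}, $\hugeitem$ is not in $I_{\text{corr}}^{\text{cross}}$. Moreover, the corridor that contains it is exactly the rectangular region occupied by $\hugeitem$. So I will treat this region as one fixed box that holds only $\hugeitem$. No processing strategy ever touches it: no line is drawn through it, it is never labelled thin, and it is never shrunk. The only change from \cite{galvez2021approximating} is bookkeeping. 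There all untouchable items are counted in $\OPT_{LF}$, whereas here every untouchable item except $\hugeitem$ goes to $\OPT_{LF}$. Since $\hugeitem$ is packed in full in every strategy, it contributes exactly $p(\hugeitem)$, with no $(1-\eps)$ factor, to the profit of each resulting packing.

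First I check that the two processing strategies still give \emph{container} packings when $\hugeitem$ is present, so that rotations play no role; this is the point already noted at the start of \Cref{sec:no-huge-item}. For (ii), I redo the alternate corridor processing of Lemma~\ref{lem:new-processing} on the remaining corridors, i.e.\ those other than the box of $\hugeitem$. Its two ingredients are Lemma~\ref{obs:obtuse-piece-length-new} and the fact that every two vertical long subcorridors are separated by a horizontal long one. Both rely only on the subcorridors lying inside $K$, so they remain true. Each resulting box becomes a container, $\hugeitem$ gets its own container, and the small items are reinserted as in the ``Handling small items'' paragraph of \Cref{sec:non-rotation-improved}. Those steps lose only $O(\eps)$ profit of the non-untouchable items, which gives bound (ii).

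For (i), I take the strategy of \cite{galvez2021approximating} that labels the short subcorridors alternately odd and even, deletes one class, and processes the other class last. This yields a container packing containing $\OPT_{LF}$, the untouchable items, and half of $\OPT_{SF}\cup\OPT_{ST}$. The guarantee for this step is the first bound of Lemma~\ref{lem:lcpacking-focs}, and the packing it produces uses only containers, not an $\Ls$. With $\hugeitem$ again held in its own fixed container, this gives (i).

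The main obstacle I expect is confirming that $\hugeitem$ is never among the items discarded along the way. These include items cut by the geometric or equidistant lines, items crossed by the paths built during corridor processing, items lying in a minimum-profit strip, and items in a deleted short subcorridor. Each of these discards is defined only for items inside subcorridors of width at most $\epslarge N$. The region of $\hugeitem$ is excluded from them, and the corridor partition from Lemma~\ref{lem:corr-decomposition} is built, using the $O_{\eps,\epslarge}(1)$ extra lines, so that it avoids $\hugeitem$. So I need to state explicitly that every path and line stays out of the box of $\hugeitem$. Once that holds, the $(1-\eps)$ factor applies only to the other items, and both inequalities follow.
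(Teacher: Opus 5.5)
Your proposal is correct and follows the paper's route: apply Lemma~\ref{lem:corr-decomposition} with $\hugeitem$ as an untouchable item, count $\hugeitem$ separately instead of in $\OPT_{LF}$, and read both bounds off the container packings behind Lemma~\ref{lem:lcpacking-focs}(i) and Lemma~\ref{lem:new-processing}. Your explicit check that no line, path, or deletion ever enters the corridor of $\hugeitem$ just spells out what the paper leaves implicit when it says the bounds follow ``directly''.
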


Now we present another restructuring of the optimal packing from which we can obtain almost the full profit of the thin items, together with a constant fraction of the profit of $\OPT_{SF}$.

\begin{lemma}
\label{lem:one-by-twelve-sf}
    There exists a container packing of profit at least $(1-\epsilon)(p(\OPT_{LT})+p(\OPT_{ST})+\frac{1}{12}p(\OPT_{SF}))+p(\hugeitem)$.
\end{lemma}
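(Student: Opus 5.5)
We mirror the proof of Lemma \ref{lem:one-by-twenty-sf}, with $\hugeitem$ playing the role of $i_\ell$. We keep $\hugeitem$ in place and discard all of $\OPT_{LF}$. From $\OPT_{SF}$ we keep only a constant fraction, placed in the region around $\hugeitem$. Inside that region we then free an empty strip of thickness $\Omega(\eps)N$. In this strip we pack $\OPT_{LT}\cup\OPT_{ST}$, which fits into a box of size $\eps^4N\times N$ by Lemma \ref{lem:bound-thin-items}, and we use the rest of the strip for resource augmentation via Lemma \ref{lem:resource-augmentation}.

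\emph{Step 1: make room and collect $\OPT_{SF}$ into one arm.} Shift $\hugeitem$ so that it touches the top-left corner of $K$. This turns the four arms into a single bottom strip of height $h$ and a single right strip of width $w$. Since $w\le h$ and $h<(1/2-\eps)N$, each of these strips has its short dimension below $N/2$. Items of $\OPT_{SF}$ lying entirely inside one arm are moved along with that arm's region. Items straddling a corner are charged to one of the arms, so every item of $\OPT_{SF}$ is assigned to one of four arms. Some arm then receives profit at least $\frac14 p(\OPT_{SF})$. Since the top/bottom arms can be merged into the bottom strip and the left/right arms into the right strip, in the worst case we start from profit $\frac14 p(\OPT_{SF})$ in one of these two strips.

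\emph{Step 2: free a strip while keeping profit.} We follow Lemma \ref{lem:one-by-twenty-sf}. Split the chosen items into those lying parallel to the long side of the strip and those lying orthogonal to it. For the parallel items, delete a random strip of thickness $\Theta(\eps)\cdot(\text{arm thickness})$. Each such item survives with probability $1-O(\eps)$, because its thin side has length at most $\epssmall N$. For the orthogonal items, each has length at most $(1/2+2\epslarge)N$ along the arm's short direction. Cut the arm into a constant number of pieces and keep the most profitable piece. The choice is 1 of 2 pieces in the worst case, or 1 of 3 to leave room for the $\eps$-slack. Rotate the kept piece so that these items become parallel to the long side, and place it inside the arm. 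Balancing the two options as in the $\max\{x,\frac14 y\}\ge \frac15(x+y)$ step, but with thirds, gives $\max\{x,\frac13y\}\ge\frac14(x+y)$. Together with the $\frac14$ from Step 1 this yields $\frac1{16}$. To reach the claimed $\frac1{12}$ we need a slightly better split: either pigeonhole over 3 arms instead of 4, using that in this case the right arm ($w$) can be absorbed, or use a factor-$\frac13$ combination. We will tune these constants.

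\emph{Main obstacle.} The critical point is the case $w$ tiny compared to $h$ (down to $2\eps^2h$). Then the right arm is too thin to host an $\Omega(\eps)N$ strip, and the strip must be created in the bottom arm, whose height $h$ may itself be small. The freed strip must have thickness at least $\eps^4N$ plus an augmentation margin. So it has to be created \emph{relative} to the arm thickness, and the thin items of $\OPT_{LT}$ have to be recut so that they fit into an arm of thickness $h$ rather than $N$. We expect to resolve this using the hypothesis $w>2\eps^2h$, which guarantees that both arms have thickness comparable up to $\eps^2$ factors. Lemma \ref{lem:bound-thin-items} then applies after rescaling $\eps_{\mathrm{box}}$, since thin items have total thickness $\eps^4N$, and we must check this is at most $\eps\cdot w$. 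If it fails, we fall back on putting the thin items in a strip carved out of $\hugeitem$'s slack by shifting it.
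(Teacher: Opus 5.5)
There is a genuine gap, located exactly at what you call the main obstacle, and tuning constants will not fix it. Here $h=N-h(\hugeitem)$ and $w=N-w(\hugeitem)$ can both be far below $\eps^3N$. So no strip of thickness $\eps^4N$ can be freed across any arm, your check ``$\eps^4N\le\eps w$'' fails in general, and $\hugeitem$ has no slack to carve from.

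The missing idea is to split the thin items according to the direction in which their thinness is measured, instead of putting them all into one freed strip.
\begin{itemize}
\item Both sides of $\hugeitem$ exceed $N/2$. Hence every long subcorridor is either horizontal and inside the top/bottom arm, or vertical and inside the left/right arm.
\item Consequently the items of $\OPT_{LT}$, together with the items of $\OPT_{ST}$ coming from subcorridors parallel to their arm, have total thickness at most $\eps_{\text{box}}c_{\eps,\epslarge}h\le\eps^4h\le\eps^2w/2$ (respectively $\eps^4w$). This is the only place where $w>2\eps^2h$ is used.
\item After rotating the horizontal ones, these items fit into a vertical strip of width $2\eps^2w$ at the right boundary. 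Widen this strip to $\eps w\le w_r$ to apply Lemma \ref{lem:resource-augmentation}.
\item The remaining items $\OPT_{ST_{\text{orth}}}$ come from vertical subcorridors inside the top/bottom arms and horizontal ones inside the left/right arms. They may be as thick as $\eps_{\text{box}}\epslarge N$. However, their length is at most the thickness of their arm. So they can be placed side by side \emph{along} the thickest arm, using only $\eps N$ of its length $N$.
\end{itemize}

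Your treatment of $\OPT_{SF}$ breaks for the same reason. Rotating a piece of width $N/3$ or $N/4$ into the arm, as in Lemma \ref{lem:one-by-twenty-sf}, needs arm thickness $\Omega(N)$. That held there ($h_b>(1/4-\eps)N$) but not here, and even then you only reach $\frac1{16}$. Also, a random strip of thickness $\Theta(\eps)h$ hits a parallel item with probability $O(\eps)$ because its thin side is at most $\epssmall$ times the height of its grid cell, not because it is at most $\epssmall N$.

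In fact no free space needs to be created inside the $\OPT_{SF}$ packing at all.
\begin{itemize}
\item Take an arm containing profit at least $\frac14p(\OPT_{SF})$.
\item Cut it by the line through its middle, perpendicular to its long side. Keep the best of the three groups (left of the line, right of it, crossing it); it has profit at least $\frac1{12}p(\OPT_{SF})$.
\item The crossing items all meet the line and hence are stacked along it. Every short item has length at most $(1/2+2\epslarge)N$. So each group fits into a box of length $(1/2+2\epslarge)N$ whose thickness equals that arm's.
\item Put this box, rotated if necessary, into the thickest arm and lengthen it to $(1/2+\eps)N$ for resource augmentation. The rest of that arm's length holds the $\OPT_{ST_{\text{orth}}}$ box.
\end{itemize}

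Your shift of $\hugeitem$ to the top-left corner is harmless. It even makes the bottom strip (thickness $h\ge w$) automatically the thickest arm, but it does not substitute for either of the two ideas above.
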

\begin{proof}
    First, we fix the item $\hugeitem$ in its original position inside the optimal packing. Let $h_t, h_b, w_{\ell},w_r$ be the widths of the top, bottom, left and right arms around $i^*$, respectively, and assume w.l.o.g.~that $h_b\ge h_t$ and $w_r \ge w_{\ell}$. Since we are in the case when $w\ge 2\epsilon^2h$, it holds that $w_r > \epsilon^2 h$. In the following, we shall also assume that the bottom arm is the thickest among the four arms, i.e., $h_b \ge w_r$; the case when the right arm is the thickest is similar and will be discussed at the end.

    Consider the items of $\OPT_{LT}$. Since the item $\hugeitem$ has both of its dimensions exceeding $N/2$, the horizontal (resp. vertical) items in $\OPT_{LT}$ must completely lie in the top and bottom (resp. left and right) arms. After processing the subcorridors with parameter $\epsilon_{\text{box}}$, the height of the horizontal items of $\OPT_{LT}$ inside each subcorridor is at most $\epsilon_{\text{box}}h$. Since the number of subcorridors is at most $c_{\eps,\epslarge}$, the total height of the horizontal items of $\OPT_{LT}$ is at most $\epsilon_{\text{box}}h\cdot c_{\eps,\epslarge} \le \epsilon^4 h \le \epsilon^2 w/2$, where the first inequality holds by our choice of $\eps_{\text{box}}$, and the second inequality holds since we are in the case when $w>2\epsilon^2 h$. Similarly, the total width of the vertical items can be bounded by $\epsilon^2w/2$. We rotate the horizontal items by 90 degrees, so that the total width of $\OPT_{LT}$ is now at most $\epsilon^2 w$. These will be packed together with a subset of items from $\OPT_{ST}$ at the right boundary of the knapsack.

    \begin{figure}
    \centering
    \includegraphics[width=0.9\linewidth]{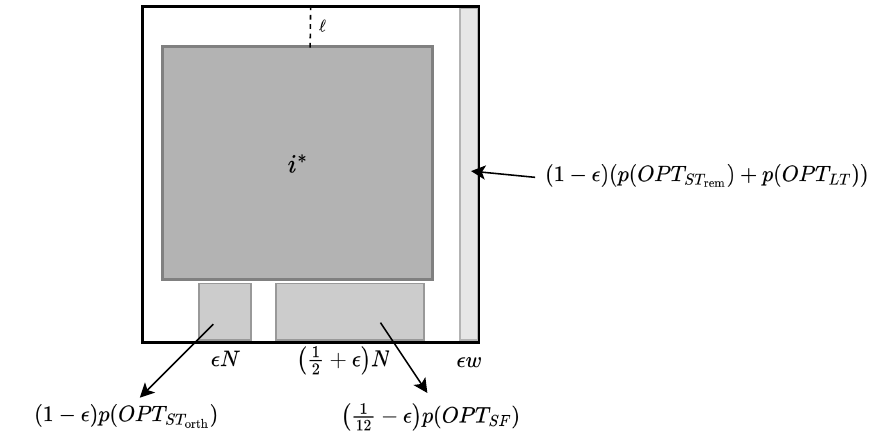}
    \caption{\dk{Packing corresponding to Lemma \ref{lem:one-by-twelve-sf}.}}
    \label{fig:massive-case}
\end{figure}

    Consider now the items of $\OPT_{ST}$. We classify them into two categories as follows. Let $\OPT_{ST_{\text{orth}}}\subseteq \OPT_{ST}$ be those items coming from vertical subcorridors completely lying in the top and bottom arms, and horizontal subcorridors completely lying in the left and right arms. After rotating the items of $\OPT_{ST_{\text{orth}}}$ coming from the left and right arms by 90 degrees, the total width of the items in $\OPT_{ST_{\text{orth}}}$ can be bounded by $\epsilon_{\text{box}}N\cdot c_{\epsilon,\epsilon_{\text{large}}} \le \epsilon^4 N$. Notice that the width $w_r$ of the right arm may be much smaller than $\epsilon^4 N$, and therefore it might not be possible to pack the items of $\OPT_{ST_{\text{orth}}}$ in the right arm -- in fact, it may even be the case that not a single item from the bottom arm fits into the right arm. However, since the bottom arm was assumed to be the thickest, the items of $\OPT_{ST_{\text{orth}}}$ can be packed into a box of width $\epsilon^4 N$ in the bottom arm, and by allowing an additional width of $(\epsilon-\eps^4) N$ for resource augmentation (Lemma \ref{lem:resource-augmentation}), we can obtain a container packing of these items (see Figure \ref{fig:massive-case}).

    Next let $\OPT_{ST_{\text{rem}}} := \OPT_{ST}\setminus \OPT_{ST_{\text{orth}}}$ be the remaining items of $\OPT_{ST}$. Similar to $\OPT_{LT}$, their total width (after rotating the horizontal items) can be bounded by $\epsilon_{\text{box}}h\cdot c_{\epsilon, \epsilon_{\text{large}}} \le \epsilon^2 w$. Thus, the total width of the items of $\OPT_{LT}\cup \OPT_{ST_{\text{rem}}}$ is at most $2\epsilon^2 w$, and therefore they can be packed into a strip of the same width at the right boundary of the knapsack. We increase the width of the strip to $\epsilon w$, so that now there is sufficient space for resource augmentation (Lemma \ref{lem:resource-augmentation}) in order to obtain a container packing (see Figure \ref{fig:massive-case}).

    It remains to pack (a subset of) items from $\OPT_{SF}$. For this, we first note that out of the top, bottom, left, and right arms, there must be an arm such that the items of $\OPT_{SF}$ completely lying inside the arm have a profit of at least $p(\OPT_{SF})/4$. Assume that this holds for the top arm (the other cases are symmetric). Consider the vertical line segment $\ell$ passing through the middle of the arm, as shown in Figure \ref{fig:massive-case}. We classify the items of $\OPT_{SF}$ lying in the top arm into three groups -- those lying to the left of $\ell$, those lying to the right of $\ell$, and those crossing the segment $\ell$. Clearly, one of these groups will have a profit of at least $p(\OPT_{SF})/12$ and the items in the group can be completely packed into a box of width $(1/2+2\epslarge)N$ (since the width of each short item is at most $(1/2+2\epslarge)N$). We place this box in the bottom arm (possibly after rotation if the box comes from one of the vertical arms), noting that this is always feasible by our assumption that the bottom arm is the thickest. Finally, we increase the width of the box to $(1/2+\epsilon)N$ so that there is a width of at least $(\epsilon-2\epslarge)N > \epsilon N/2$ for resource augmentation (Lemma \ref{lem:resource-augmentation}). See Figure \ref{fig:massive-case} for the final packing inside the knapsack.

    In the above argumentation, the bottom arm was assumed to be the thickest among the four arms around the item $\hugeitem$. This was essential to ensure that items from $\OPT_{ST_{\text{orth}}}$ and $\OPT_{SF}$ could be packed in the bottom arm, possibly after rotation. For the case when the right arm is the thickest, we employ a similar argumentation as before, and pack the items from $\OPT_{ST_{\text{orth}}}$ and $\OPT_{SF}$ in the right arm instead. Further, since $h_b\ge h/2 \ge \epsilon w$, the strip of width $\epsilon w$ that packs items from $\OPT_{LT}\cup \OPT_{ST_{\text{rem}}}$ can be packed at the bottom boundary of the knapsack, and we are done.

    \dk{Finally we pack the items of $\OPT'_{\text{small}}$. Again, we assume for simplicity that the bottom arm is the thickest; the case when the right arm is the thickest is similar. Observe that since $i^*$ is an untouchable item, any cell of the non-uniform grid formed by extending the edges of all untouchable items must completely lie inside one of the eight cells formed by extending the edges of $i^*$. Also note that one of the sides of each of these eight cells has a length of at most $h_b$, where $h_b$ is the height of the bottom arm. Therefore, each item of $\OPT'_{\text{small}}$ has one side of length at most $\epssmall\cdot h_b$ (and the length of the other side is trivially at most $\epssmall N$). Therefore, since $a(\OPT'_{\text{small}})\le O(\eps^3)\cdot(N^2-a(i^*))$, and the area of the bottom arm is at least $\frac{1}{4}(N^2-a(i^*))$, the items of $\OPT'_{\text{small}}$ can easily be packed using NFDH inside a box of width $\eps N$ and height $h_b$, which can be placed in the bottom arm.}
\end{proof}

Combining Lemmas \ref{lem:lf-sf-st} and \ref{lem:one-by-twelve-sf}, we obtain the following.

\begin{lemma}
\label{lem:container-packing-guarantee-1}
    There exists a container packing of profit at least $(13/25-\epsilon)(p(\OPT_{LF})+p(\OPT_{SF})+p(\OPT_{LT})+p(\OPT_{ST}))+p(\hugeitem)$.
\end{lemma}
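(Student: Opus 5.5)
This follows the pattern of Lemmas \ref{lem:optlflempty}, \ref{lem:optlflmorethantwo} and \ref{lem:optlflone}: take a nonnegative combination of the profit guarantees of Lemma \ref{lem:lf-sf-st}(i), Lemma \ref{lem:lf-sf-st}(ii) and Lemma \ref{lem:one-by-twelve-sf}, and compare with $\optcont$, the most profitable container packing. Each of the three bounds contains the term $p(\hugeitem)$ with coefficient exactly one. So if the weights sum to the normalizing constant, the huge item contributes $p(\hugeitem)$ to the right-hand side with coefficient one, as the statement requires. I will look for weights $\alpha,\beta,\gamma\ge 0$ with $\alpha+\beta+\gamma=25$, and require the coefficients of $p(\OPT_{LF})$, $p(\OPT_{SF})$, $p(\OPT_{LT})$ and $p(\OPT_{ST})$ in the summed inequality each to be at least $13$.

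The coefficients in $25\,p(\optcont)$ are:
\begin{itemize}
\item $p(\OPT_{LF})$: $\alpha+\beta$;
\item $p(\OPT_{SF})$: $\frac{\alpha}{2}+\beta+\frac{\gamma}{12}$;
\item $p(\OPT_{LT})$: $\frac{\beta}{2}+\gamma$;
\item $p(\OPT_{ST})$: $\frac{\alpha}{2}+\gamma$.
\end{itemize}
Setting the $LT$ and $ST$ coefficients equal forces $\alpha=\beta$. The $LF$ condition then gives $2\alpha\ge 13$, and the $LT$ condition gives $\frac{\alpha}{2}+\gamma\ge 13$. Together with $2\alpha+\gamma=25$, this yields $\alpha=\beta=8$ and $\gamma=9$.

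Checking these weights:
\begin{itemize}
\item $LF$: $16$;
\item $SF$: $4+8+\frac{9}{12}=12.75$;
\item $LT$ and $ST$: $4+9=13$.
\end{itemize}
The $SF$ coefficient falls short of $13$. Since the inequality is tight only at the $LT$/$ST$/$LF$ balance, I will next solve the system exactly: make all four coefficients equal to $c$ and maximize $c/(\alpha+\beta+\gamma)$. If $13/25$ is really attainable, some LP choice must give $SF\ge 13$. Alternatively, one can use slack in $LF$, bounding $16\,p(\OPT_{LF})\ge 13\,p(\OPT_{LF})$, and then tune $\alpha\ne\beta$. For example, take $\beta=\alpha+t$ with $t>0$, which raises $SF$ and $LT$ while lowering $ST$. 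I will solve the small LP maximizing $\min(\text{coefficients})$ subject to $\alpha+\beta+\gamma=25$. If the optimum is exactly $13$, those weights give the proof. Once weights are fixed, the summation is routine: multiply each lemma's inequality by its weight and add. The $(1-\eps)$ factors combine into $(1-\eps)$. Dividing by $25$ gives $p(\optcont)\ge (1-\eps)\frac{13}{25}\sum p(\cdot)+p(\hugeitem)$, and after rescaling $\eps$ this is the claimed $(13/25-\eps)$ bound.

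The main obstacle is finding the exact weights. The naive symmetric choice above leaves the $SF$ coefficient at $12.75$. So the crux is the LP balancing the $\frac{1}{12}$ contribution of Lemma \ref{lem:one-by-twelve-sf} against the halves in Lemma \ref{lem:lf-sf-st}. If the LP optimum turned out slightly below $13/25$, I would re-examine whether an additional bound should enter the combination, namely Lemma \ref{lem:delete-big-item}, which has no $p(\hugeitem)$ term and therefore would not fit the stated form. I expect the LP to resolve it without that.
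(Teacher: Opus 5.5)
Your framework is the paper's: a nonnegative combination of Lemma \ref{lem:lf-sf-st}(i), Lemma \ref{lem:lf-sf-st}(ii) and Lemma \ref{lem:one-by-twelve-sf} with weights summing to $25$. Your coefficient table is correct. But the proposal does not yet prove the lemma. You never exhibit weights that work: the only choice you test gives an $SF$ coefficient of $12.75$, and you leave open whether $13/25$ is attainable at all.

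The misstep is equating the $LT$ and $ST$ coefficients, which forces $\alpha=\beta$. Under that constraint the system is infeasible: with $\gamma=25-2\alpha$ and $\alpha\le 8$ from the $ST$ condition, the $SF$ coefficient is $\frac{4\alpha}{3}+\frac{25}{12}\le 12.75$. Your fallback of making all four coefficients equal fails too. It gives $\alpha=\beta$ and $\gamma=\frac{3\alpha}{2}$, hence $SF=\frac{13\alpha}{8}<2\alpha=LF$ for every $\alpha>0$.

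The fix is to let $LT$ carry slack and make $LF$, $SF$, $ST$ tight. Equating $LF$ and $SF$ gives $\gamma=6\alpha$. Equating $ST$ and $LF$ then gives $\beta=\frac{11\alpha}{2}$. Normalizing to $\alpha+\beta+\gamma=25$ yields $(\alpha,\beta,\gamma)=(2,11,12)$. The resulting coefficients are $LF=13$, $SF=1+11+1=13$, $LT=\frac{11}{2}+12=\frac{35}{2}$ and $ST=1+12=13$. This is exactly the combination the paper uses.

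Moreover, $13/25$ is the exact optimum of your LP, so no further bound is needed. For all $\alpha,\beta,\gamma\ge 0$,
\[
\tfrac{1}{25}(\alpha+\beta)+\tfrac{12}{25}\Bigl(\tfrac{\alpha}{2}+\beta+\tfrac{\gamma}{12}\Bigr)+\tfrac{12}{25}\Bigl(\tfrac{\alpha}{2}+\gamma\Bigr)=\tfrac{13}{25}(\alpha+\beta+\gamma),
\]
so the minimum of the $LF$, $SF$, $ST$ coefficients never exceeds $\frac{13}{25}(\alpha+\beta+\gamma)$.

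Lemma \ref{lem:delete-big-item} would be out of place here anyway, as you noted, because it lacks the $p(\hugeitem)$ term. With the weights $(2,11,12)$ the rest of your write-up goes through unchanged: add the weighted inequalities, divide by $25$, and use $(1-\eps)\frac{13}{25}\ge\frac{13}{25}-\eps$.
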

\begin{proof}
    Let $\OPT_{\text{cont}}$ be the maximum profitable container packing. We have the following guarantees on $\optcont$.
    \begin{alignat*}{2}
        2p(\optcont) &\ge (1-\epsilon)(2p(\OPT_{LF})+p(\OPT_{SF})+p(\OPT_{ST}))+2p(\hugeitem) &\quad& \text{[Lemma \ref{lem:lf-sf-st}(i)]} \\
        11p(\optcont) &\ge (1-\epsilon)\left(11p(\OPT_{LF})+11p(\OPT_{SF})+\frac{11}{2}p(\OPT_{LT})\right) + 11p(\hugeitem) &\quad& \text{[Lemma \ref{lem:lf-sf-st}(ii)]} \\
        12p(\optcont) &\ge (1-\epsilon)(12p(\OPT_{LT})+12p(\OPT_{ST})+p(\OPT_{SF})) + 12p(\hugeitem) &\quad& \text{[Lemma \ref{lem:one-by-twelve-sf}]}
    \end{alignat*}
    Adding the three inequalities above, we get
    \begin{align*}
        25p(\optcont) &\ge (1-\epsilon)\left(13p(\OPT_{LF})+13p(\OPT_{SF})+\frac{35}{2}p(\OPT_{LT})+13p(\OPT_{ST})\right) + 25p(\hugeitem) \\
        &\ge (13-13\epsilon)(p(\OPT_{LF})+p(\OPT_{SF})+p(\OPT_{LT})+p(\OPT_{ST})) + 25p(\hugeitem),
    \end{align*}
    completing the proof.
\end{proof}

Finally, we combine Lemmas \ref{lem:delete-big-item} and \ref{lem:container-packing-guarantee-1} to achieve an overall approximation ratio of $(37/25+\eps)$, thus proving \Cref{lem:container-packing=000020exists}. 

\smallhugeitem*
\begin{proof}
    Letting $\optcont$ be the maximum profitable container packing, Lemma \ref{lem:container-packing-guarantee-1} gives us that $p(\optcont)\ge (13/25-\epsilon)p(\OPT)+(12/25+\epsilon)p(\hugeitem)$, and therefore
    \[ 25p(\optcont) \ge (13-25\epsilon)p(\OPT)+(12+25\epsilon)p(\hugeitem)\]
    Also, from Lemma \ref{lem:delete-big-item}, we have
    \[ 12p(\optcont) \ge (1-\epsilon)(12p(\OPT)-12p(\hugeitem))\]
    Adding the above two inequalities gives us the desired guarantee stated in the lemma.
\end{proof}

\subsection{\lc-packings}
\label{sec:lc*-packings}

We now formally define the notion of \lc-packings that will be the main technical ingredient in order to obtain an improved approximation ratio than $1.5$ under the conditions of \Cref{lem:LC-packing-hugeitem}. It is an adaptation of L\&C packings used in \cite{galvez2021approximating}. We shall then present an algorithm to compute the most profitable \lc-packing using a dynamic program. Finally we show that there exists \lc-packings of high profit.

\begin{figure}
    \centering
    \includegraphics[width=0.4\linewidth]{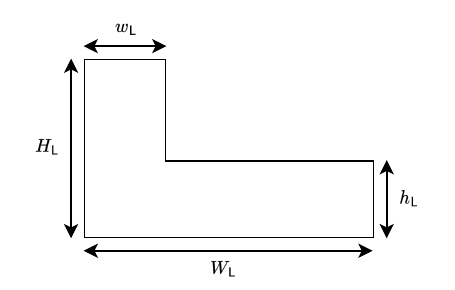}
    \caption{An $\Ls$ corridor.}
    \label{fig:l-corridor}
\end{figure}

An \emph{$\Ls$-corridor
}is an $\Ls$-shaped region $\lcor=((0,W_{\lcor})\times(0,h_{\lcor}))\cup((0,w_{\lcor})\times(0,H_{\lcor}))$
for values $H_{\lcor},W_{\lcor},h_{\lcor},w_{\lcor}\in\{1,...,N\}$
with $w_{\lcor}\le W_{\lcor}$ and $h_{\lcor}\le H_{\lcor}$ (see Figure \ref{fig:l-corridor}). We say
that $(0,W_{\lcor})\times(0,h_{\lcor})$ is the \emph{horizontal arm
}of $\lcor$ and $(0,w_{\lcor})\times(0,H_{\lcor})$ is the \emph{vertical
arm} of $\lcor$.

\begin{definition}[\lc-packing]
 \label{def:l-packing} Consider a packing of a set of items $I'\subseteq I$
inside $K$, a set of containers~$\C$, and an $\Ls$-corridor $\lcor$.
They form an \emph{\lc-packing} if the containers in $\C$ and $\lcor$
are pairwise disjoint and satisfy the following properties:
\begin{enumerate}
\renewcommand{\labelenumi}{(\theenumi)}
\item each item $i\in I'$ is packed inside $\lcor$ or one of the boxes
in $\C$,
\item  the bottom edge of the horizontal arm of $\lcor$ coincides with
the bottom boundary of $K$,
\item $W_{\Ls}=N$, $H_{\Ls}\le N/2$, $w_{\Ls}\le\epsilon N$, and $h_{\Ls}\le\epsilon N$,
\item each item $i\in I$ packed in the horizontal (respectively, vertical) arm
of $\lcor$ satisfies $w^{*}(i)>N/2$ (respectively, $h^{*}(i)>\frac{1}{2}H_{\Ls}$),
where $w^{*}(i)$ and $h^{*}(i)$ are the width and height of $i$
in the packing (i.e., $w^{*}(i)=h(i)$ and $h^{*}(i)=w(i)$ if $i$
is rotated), and
\item let $I'\aw{'}$ be the set of items packed in the vertical arm of the $\Ls$-corridor;
either $h^{*}(i)>w^{*}(i)$ for each $i\in I'\aw{'}$ or $w^{*}(i)\ge h^{*}(i)$ for each
$i\in I'\aw{'}$.
\end{enumerate}
\end{definition}

One important consequence of Properties (2), (3), and (4) \aw{of} Definition~\ref{def:l-packing}
is that \aw{even in the rotational case, each} input item $i\in I$ can be packed either only in the
horizontal arm, or only in the vertical arm, or in none of them in
\emph{any} \lc-packing. \aw{Also, we can easily compute
a corresponding partition of the items. This would not have worked using the
definition of L\&C packings from \cite{galvez2021approximating}.}
\begin{lemma}
\label{lem:partition-items}
In polynomial time we can compute a partition $I=I_{H}\dot{\cup}I_{V}\dot{\cup}I_{R}$
such that
\begin{itemize}
\item each item $i\in I_{H}$ cannot be packed in the vertical arm of $\lcor$
in any L\&C{*}-packing but in the horizontal arm of some L\&C{*}-packing,
however, only with one of its two possible orientations,
\item each item $i\in I_{V}$ cannot be packed in the horizontal arm of
$\lcor$ in any L\&C{*}-packing but in the vertical arm of some L\&C{*}-packing,
and
\item each item $i\in I_{R}$ cannot be packed in $\lcor$ in any L\&C{*}-packing.
\end{itemize}
\end{lemma}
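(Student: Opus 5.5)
We classify each item purely from its two side lengths, using Properties (2)--(4) of Definition~\ref{def:l-packing}. For an item $i$, let $a(i)=\max\{w(i),h(i)\}$ and $b(i)=\min\{w(i),h(i)\}$ be its longer and shorter side. The plan is to write down, for each arm, which orientations are feasible in \emph{some} \lc-packing, and then show that the two arms exclude each other.

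\textbf{Horizontal arm.} An item placed in the horizontal arm has $w^*(i)>N/2$ and $h^*(i)\le h_{\Ls}\le\eps N$. It therefore fits in some \lc-packing if and only if $a(i)>N/2$ and $b(i)\le\eps N$; take $W_\Ls=N$ and $h_\Ls$ large enough. Since $\eps<1/2$, the condition $\eps N<N/2$ forces the orientation in which the long side is horizontal. So orientation is unique in the horizontal arm, which is the extra clause required for $I_H$. (If $a(i)=b(i)$ the condition already fails.)

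\textbf{Vertical arm.} An item placed in the vertical arm satisfies $H_\Ls/2<h^*(i)\le H_\Ls\le N/2$ and $w^*(i)\le w_\Ls\le\eps N$. Both of its dimensions are therefore at most $N/2$, so $a(i)\le N/2$. Feasibility in some \lc-packing means that some orientation has height at most $N/2$ and width at most $\eps N$. We can then choose $H_\Ls=h^*(i)$, which satisfies the factor-$1/2$ requirement, and $w_\Ls=\eps N$. Property~(5) constrains only the whole set of items in the arm, not a single item, so it does not affect whether an individual item can appear there. Hence the criterion is $a(i)\le N/2$ and $b(i)\le\eps N$.

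\textbf{Disjointness and the partition.} The horizontal criterion needs $a(i)>N/2$ and the vertical one needs $a(i)\le N/2$, so no item satisfies both. This is the key point: the condition $H_\Ls\le N/2$ in Property~(3) is exactly what makes the two criteria disjoint. We therefore set
\[
I_H=\{i: a(i)>N/2,\ b(i)\le\eps N\},\qquad I_V=\{i: a(i)\le N/2,\ b(i)\le\eps N\},\qquad I_R=I\setminus(I_H\cup I_V).
\]
Each membership test takes constant time, so the partition is computed in linear time.

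The main obstacle is checking that the "can be packed in some \lc-packing" direction is not affected by other items or by the containers. We handle it by exhibiting, for each $i$, an \lc-packing that contains only $i$ in the corresponding arm and has no containers. The parameters are as follows.
\begin{itemize}
\item For the horizontal arm, set $h_\Ls=\max(1,b(i))\le \eps N$, $w_\Ls=1$, and $H_\Ls=h_\Ls$.
\item For the vertical arm, set $H_\Ls=a(i)$ and $w_\Ls=\max(b(i),1)\le\eps N$.
\end{itemize}
These choices satisfy Properties (2)--(5), using integrality of the coordinates and $\eps N\ge 1$.
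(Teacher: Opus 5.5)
Your proof is correct and follows the paper's argument. $I_H$ is defined exactly as in the paper (longer side $>N/2$, shorter side $\le\eps N$), Property (4) forces its orientation, and disjointness from $I_V$ comes from $H_\Ls\le N/2$ in Property (3). There are two small differences: the paper defines $I_V$ relative to a given $H_\Ls$ via $\max\{h(i),w(i)\}\in(\frac{1}{2}H_\Ls,H_\Ls]$, whereas your $H_\Ls$-independent criterion ($\max\{w(i),h(i)\}\le N/2$, $\min\{w(i),h(i)\}\le\eps N$) matches the lemma's ``any/some \lc-packing'' quantifiers literally; and you verify the existence direction with explicit single-item \lc-packings, which the paper leaves implicit.
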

\begin{proof}
    Let $I_H \subseteq I$ be the set of items $i\in I$ for which $\max \{h(i),w(i)\}>N/2$ but
    $\min \{h(i),w(i)\}\le \eps N$; these items can potentially be packed into the horizontal arm of the $\lcor$ and into the containers. By Property (4) of Definition \ref{def:l-packing}, if an item $i\in I_H$ is packed in the horizontal arm of an $\Ls$, it must be oriented such that $w^*(i)=\max \{h(i),w(i)\}$.
Let $I_V\subseteq I$ be the set of items
$i\in I$ for which $\max \{h(i),w(i)\} \in (\frac{1}{2}H_{\Ls},H_{\Ls}]$ and
    $\min \{h(i),w(i)\}\le \eps N$.
Since $H_{\Ls}\le N/2$, for each item $i\in I_V$ we have that $w(i)\le N/2$ and $h(i)\le N/2$. By Property (4) of Definition~\ref{def:l-packing} we are not allowed to pack these items in the horizontal arm of the $\Ls$. Finally, let $I_R:= I \setminus (I_H\cup I_V)$ be the remaining items, which can only be packed into the containers.
\end{proof}

\begin{figure}
    \centering
    \includegraphics[width=0.5\linewidth]{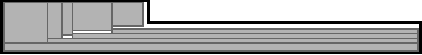}
    \caption{Items in the vertical arm may also fit after rotation.}
    \label{fig:vertical-l-arm}
\end{figure}

In contrast to $I_{H}$, there can be items $i\in I_{V}$ that could
fit inside (the vertical arm) of $\lcor$ with or without rotation.
This may happen, e.g., if $i$ is almost a square and if $H_{\lcor}$ and $h_{\lcor}$
are almost identical, see Figure~\ref{fig:vertical-l-arm}.
Inherently, the DP of \cite{galvez2021approximating} no longer works if some item $i\in I_{H}\cup I_{V}$ \aw{may} be packed inside $\lcor$ in both of its possible orientations.
However, due to Property (5)~of
Definition~\ref{def:l-packing} we are able to guess the orientation of all items in $I_{V}$ in an optimal L\&C{*}-packing, as there are
only two options for this. Now the orientation of each item is fixed and we extend the dynamic program in \cite{galvez2021approximating}
to solve the subproblem of finding the most profitable packing inside $\lcor$ and \aw{the} $O_{\eps}(1)$ \aw{additional} containers.

\begin{lemma}
\label{lem:compute-LC-packing-optimal}
For any $c\in\N$, there is
an $(1+\eps)$-approximation algorithm with a running time of $n^{O_{\eps}(c)}$
for computing the most profitable L\&C{*}-packing with at most $\aw{c}$
containers (and an $\Ls$-corridor).
\end{lemma}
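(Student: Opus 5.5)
We compute the best \lc-packing by guessing its global structure and then running a dynamic program for the $\Ls$ jointly with a GAP-type assignment for the containers. We enumerate the parameters $H_{\Ls},h_{\Ls},w_{\Ls}$ (with $W_{\Ls}=N$). Each ranges over $N$ values, but we round them down to values from a polynomial-size candidate set, namely sums of item side lengths; by standard arguments (push items down and left inside the $\Ls$) this loses nothing.

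We also guess the binary orientation flag for $I_V$ allowed by Property~(5). Finally we guess the $c$ containers' positions, sizes, and labels from the $n^{O_\eps(c)}$ candidates of Lemma~\ref{lem:computer-container-packing} / Lemma~\ref{lem:resource-augmentation}. After this, by Lemma~\ref{lem:partition-items}, every item of $I_H$ has a fixed orientation in the horizontal arm, and every item of $I_V$ has a fixed orientation in the vertical arm. Every item of $I_R$ may only go into containers, where either orientation is tried as in the GAP reduction.

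For the $\Ls$ we adapt the DP of \cite{galvez2021approximating}. Order $I_H$ by decreasing width $w^*$ and $I_V$ by decreasing height $h^*$. Since horizontal-arm items have $w^*>N/2$ and vertical-arm items have $h^*>\frac12 H_{\Ls}$, there is a canonical packing: horizontal items are stacked from the bottom, longest first, and vertical items are placed from the left, tallest first. In this packing, feasibility reduces to a monotone staircase condition. The $k$-th vertical item from the left must fit above the total height of horizontal items whose width exceeds its $x$-offset, and symmetrically for horizontal items against the vertical items.

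The DP state is (index in the sorted $I_H$, index in the sorted $I_V$, current total height of chosen horizontal items, current total width of chosen vertical items). Heights and widths are rounded, so the number of states is polynomial. Each transition decides whether the next item goes into the $\Ls$ or is left for the containers.

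To combine the two parts, note that the same item must not be used both in the $\Ls$ and in a container. We therefore run the GAP LP/rounding of Lemma~\ref{lem:computer-container-packing} with the $\Ls$ treated as an extra special bin. Concretely, we guess the $O_\eps(1)$ most profitable items of the $\Ls$ exactly and round the rest. Then we use the DP to enumerate profiles (for each container, a profit/size budget rounded to $(1+\eps)$-powers, giving polynomially many profiles) and take the best combination. This follows the same argument as in \cite{galvez2021approximating}, and the running time is $n^{O_\eps(c)}$.

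The step I expect to be the main obstacle is proving that the canonical sorted packing inside the $\Ls$ is without loss of generality once orientations are fixed. In particular, items of $I_V$ all share a common orientation rule (Property~(5)), so sorting by $h^*$ is consistent. However, $h^*>\frac12 H_{\Ls}$ together with $h_{\Ls}\le\eps N$ must be used to show that no vertical item can be placed above another. I would first try an exchange argument: any two vertical items in the arm are separated by a vertical line because each is taller than half the arm, and this gives the sorted left-to-right order.
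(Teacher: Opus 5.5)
There is a genuine gap in how you couple the $\Ls$ with the containers. Your DP only decides whether an item enters the $\Ls$ or is ``left for the containers''. The leftover set therefore depends on the DP path, and the GAP profit on that complement is not something the DP can optimize. So you cannot solve the two parts separately and then ``take the best combination''.

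Making the $\Ls$ an extra GAP bin does not repair this. Feasibility inside the $\Ls$ is the two-arm staircase condition, not a single additive capacity. Guessing a few profitable $\Ls$ items and enumerating container profiles still does not stop an item from being used twice.

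Rounding container fill levels to powers of $(1+\eps)$ is also unsound. Rounding up at every insertion can compound to a factor $(1+\eps)^m$ after $m$ insertions, and rounding down loses feasibility.

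The missing idea, which is what the paper does, is a single DP over the sorted items of $I_H\cup I_V$. Its state carries $(t_{\Ls},r_{\Ls})$, the current fill level of every container receiving such items, and a discretized used-area counter $k_C$ for each area container. A transition sends the widest remaining $h_i$ or the tallest remaining $v_j$ either to one specific container or into the $\Ls$, using the guillotine-cut argument. Disjointness is then automatic, and since $c$ is constant the state space is $n^{O(c)}$.

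Items of $I_R$ can never enter the $\Ls$, so they are handled separately by the constant-bin GAP PTAS (Lemma~\ref{lem:GAP-lemma}). They go into guessed portions of the containers: each horizontal or vertical container is split into a part for $I_H\cup I_V$ and a part for $I_R$. For each area container, the area $a_{\mathrm{rem}}(C)$ used by $I_R$ is guessed to within $a(C)/n^2$. The area containers are finally packed with NFDH after deleting a least profitable group of items.

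A second gap is your claim that restricting $H_{\Ls},h_{\Ls},w_{\Ls}$ and the DP values to ``sums of item side lengths'' loses nothing. The set of such subset sums is exponential in general, so it does not give a polynomial state space. What is needed is the shifting argument of \cite{galvez2021approximating}, i.e., Lemma~\ref{lem:restricted-l-packing}. It restricts the top and right edge positions of items, both in the arms of the $\Ls$ and in the containers, to polynomial-size sets $\TT_{\Ls},\RR_{\Ls},\TT_C,\RR_C$, at the cost of an $O(\eps)$-fraction of the profit. This loss is exactly why the statement is a $(1+\eps)$-approximation, and these same sets are what make container fill levels legitimate DP coordinates.

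By contrast, the step you flag as the main obstacle is the routine part. Once Lemma~\ref{lem:partition-items} and the guessed orientation flag from Property~(5) fix all orientations, the sorted structure inside the $\Ls$ is the one from \cite{galvez2021approximating}. Your separation argument, that two items taller than $H_{\Ls}/2$ cannot be stacked, is the right reason for it.
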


In order to prove \Cref{lem:compute-LC-packing-optimal}, let $I=I_H\dot{\cup}I_V\dot{\cup}I_R$ be the partition of the input items given by Lemma \ref{lem:partition-items}. We first guess the orientation of the items of $I_V$, i.e., whether $h(i)>w(i)$ holds or $w(i)\ge h(i)$ holds for all items $i$ packed in the vertical arm of the $\lcor$. Let $\OPT_{L\& C^*}$ denote the optimal \lc-packing.
Let $\CC^h, \CC^v$ and $\CC^a$ denote the horizontal, vertical, and area containers, respectively in $\OPT_{L\& C^*}$. Consider a horizontal container $C\in \CC^h$. We permute the items inside $C$ so that the items of $\Ihor\cup \Iver$ all lie above the items of $\Irem$. Then we split $C$ into two containers at the boundary between the two sets of items. Let $\CC^h_{\text{skew}}$ be the set of containers that pack items from $\Ihor\cup \Iver$ and $\CC^h_{\text{rem}}$ be the set of containers inside which items from $\Irem$ are packed. Doing an analogous procedure for the vertical containers, we obtain the sets $\CC^v_{\text{skew}}$ and $\CC^v_{\text{rem}}$.

We call an \lc-packing to be \textit{restricted} if there exist sets $\mathcal{T}_{\mathsf{L}}, \mathcal{R}_{\mathsf{L}}$ corresponding to the $\lcor$; $\mathcal{T}_C$ for each horizontal container $C \in \CC^h_{\text{skew}}$; and $\mathcal{R}_C$ for each vertical container $C\in \CC^v_{\text{skew}}$, which can be computed in polynomial time based on the input, such that
    \begin{enumerate}[label=(\roman*)]
        \item for each item $i$ packed in the horizontal (resp. vertical) arm of the $\lcor$, the distance between the top (resp. right) edge of $i$ from the bottom (resp. left) edge of the $\lcor$ lies in $\mathcal{T}_{\mathsf{L}}$ (resp. $\mathcal{R}_{\mathsf{L}}$),
        \item for each horizontal (resp. vertical) container $C \in \CC^h_{\text{skew}}$ (resp. $\CC^v_{\text{skew}}$), the distance between the top (right) edge of any item packed in $C$ and the bottom (resp. left) edge of $C$ lies in $\mathcal{T}_C$ (resp. $\mathcal{R}_C$).        
    \end{enumerate}
    
    The idea behind the above definition is to restrict the possible positions of the items inside the containers to polynomial-sized sets. Since an item can potentially be packed in either the $\lcor$ or into one of the containers, our DP table needs to have a cell for each possible configuration of the amount of empty space remaining inside the arms of the $\lcor$ and the containers. Let \optr~denote the optimum restricted \lc-packing. Then we show the following lemma.

    \begin{lemma}
    \label{lem:restricted-l-packing}
        We have $p(\OPT_{\text{r-}L\&C^*})\ge (1-O(\epsilon))p(\OPT_{L\&C^*})$.
    \end{lemma}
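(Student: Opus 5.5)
We take the optimal \lc-packing $\OPT_{L\&C^*}$ and modify it in three parts: the horizontal arm, the vertical arm, and each container in $\CC^h_{\text{skew}}\cup\CC^v_{\text{skew}}$. In each part we lose only an $O(\eps)$-fraction of profit and end up with every item's top (resp.\ right) edge at a coordinate from a polynomial-size candidate set. The natural candidate sets are the sums of at most $1/\eps$ (or $O_\eps(1)$) item heights, rounded in the style of the standard technique from \cite{galvez2021approximating}. Since the DP there already establishes this for L\&C-packings without rotations, the plan is to reuse its argument. The one change is that, for the vertical arm, we fix the guessed orientation given by Property~(5) of Definition~\ref{def:l-packing}. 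With this orientation fixed, each item in $\Ihor\cup\Iver$ has a fixed height and width in any arm in which it may appear (Lemma~\ref{lem:partition-items}).

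\textbf{Containers in $\CC^h_{\text{skew}}$.} Items are stacked, so we sort them by height and apply a linear-grouping type argument. We delete the most profitable $O(1/\eps)$ items only if needed, or else we shift using the least profitable of $1/\eps$ groups. A cleaner route is the following. We push all items down, so the top edge of item $j$ equals the sum of the heights of the items below it. We partition the stack into $1/\eps$ consecutive groups of roughly equal height, remove the least profitable group, and use the freed $\eps h(C)$ height as slack. We then round each top edge up to the next value in a set of the form $\{\text{(sum of} \le 1/\eps \text{ large heights)} + k\cdot \eps^2 h(C)/n\}$. All such values lie in a polynomial set computable from the input by guessing $h(C)$, which ranges over sums of item heights. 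The accumulated rounding error is at most $n\cdot \eps^2h(C)/n$, which is absorbed by the slack. The same argument applies symmetrically to $\CC^v_{\text{skew}}$.

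\textbf{The $\lcor$.} For the horizontal arm, items are long ($w^*>N/2$), so they are stacked in the arm essentially as in a horizontal container: no two of them overlap in $x$-projection by more than trivially, and each vertical line meets all of them. We treat the horizontal arm as a stack in a height-$h_{\lcor}$ container and apply the same grouping and rounding procedure, which yields $\mathcal T_{\lcor}$. By Property~(4), items in the vertical arm have height greater than $H_{\lcor}/2$, so every horizontal line at height in $(h_{\lcor},H_{\lcor}/2]$ hits all of them. They therefore behave as a side-by-side stack of width at most $w_{\lcor}$, and the same procedure yields $\mathcal R_{\lcor}$.

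\textbf{Main obstacle.} The difficulty is the interaction at the corner of the $\lcor$. Shrinking or rounding the horizontal arm must not push items into the vertical arm's region, and vice versa. We handle this as in \cite{galvez2021approximating}. We first delete the least profitable $\eps$-fraction group in each arm, which creates slack of $\eps h_{\lcor}$ and $\eps w_{\lcor}$ respectively, and only then round. Rounding top edges up by less than the slack keeps every item of the horizontal arm below the original height $h_{\lcor}$ at the positions it previously occupied. Pushing right edges of the vertical arm toward the boundary likewise stays within $w_{\lcor}$. Since the two arms are processed in their own coordinates and the L-shape is monotone, no new overlap arises. Summing the losses gives $p(\OPT_{\text{r-}L\&C^*})\ge(1-O(\eps))p(\OPT_{L\&C^*})$.
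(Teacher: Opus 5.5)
Your overall plan matches the paper's. Its proof just invokes the shifting procedure of \cite{galvez2021approximating} on each arm of $\lcor$ and applies the same procedure inside each container of $\CC^h_{\text{skew}}\cup\CC^v_{\text{skew}}$. The problems are in the concrete arguments you give in place of that citation.

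For a stack, $1/\eps$ consecutive groups of roughly equal height need not exist. If some item is taller than $\eps h(C)$, the cheapest group may free almost nothing. Keeping tall items exact, as your candidate set suggests, does not help, because your mesh $\eps^2h(C)/n$ is tied to $h(C)$ rather than to the items being rounded. Take one item of height $(1-\eps^4)h(C)$ and $n-1$ equal items of total height $\eps^4h(C)$, with the big item and the small items each carrying half the profit. The small items need pairwise distinct tops of the form $\{0,(1-\eps^4)h(C)\}+k\,\eps^2h(C)/n$. They must lie in ranges of total length $\eps^4h(C)$, which contain only $O(\eps^2n+1)$ such values. So you lose roughly half the profit.

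In a container, where the stacking order is free, a correct version ties the mesh to an item height instead.
\begin{itemize}
\item If $C$ holds fewer than $1/\eps$ items, every top is a sum of fewer than $1/\eps$ heights and no rounding is needed.
\item Otherwise, delete the cheapest of its $1/\eps$ tallest items, of height $h_0$; this loses at most an $\eps$-fraction of the profit.
\item Put the remaining tall items at the bottom at exact positions. Stack the rest above them, with tops rounded up to multiples of $h_s/n$, where $h_s\le h_0$ is the largest height among the rest.
\end{itemize}
The accumulated error is at most $h_s\le h_0$. All tops lie in $\{\sum_{i\in S}h(i)+k\,h(j)/n : |S|<1/\eps,\ j\in I,\ 0\le k\le 2n^2\}$, which has size $n^{O(1/\eps)}$. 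This also avoids guessing $h(C)$; note that sums of item heights do not form a polynomial-size set.

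For $\lcor$ there is a second, more substantive gap: the two arms are not independent containers. They share the corner $(0,w_{\lcor})\times(0,h_{\lcor})$, where horizontal and vertical items interleave. Keeping horizontal items below $h_{\lcor}$ and vertical items within $w_{\lcor}$ does not rule out collisions between the arms there.

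In your scheme, every horizontal item below the deleted group is rounded up (unless its top already is a candidate), i.e., it moves up. Definition~\ref{def:l-packing} allows a vertical item to stand on such a horizontal item inside the corner. Then any positive upward move creates an overlap, even though both items stay inside their arms. The actual content of the lemma is that the per-arm restriction can be done without ever moving an item into space that items of the other arm may occupy. The paper takes this from \cite{galvez2021approximating}. ``Processing the arms in their own coordinates'' and ``monotonicity of the L-shape'' do not establish it. If you rely on the citation, say so and drop the incorrect justification.

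Two of your geometric statements are also inverted:
\begin{itemize}
\item Items of the horizontal arm pairwise overlap in their $x$-projections, since all cross $x=N/2$. That is why they form a stack.
\item In the vertical arm, only the line $y=H_{\lcor}/2$ is guaranteed to meet all items. The interval $(h_{\lcor},H_{\lcor}/2]$ may even be empty.
\end{itemize}
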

    \begin{proof}
        In \cite{galvez2021approximating}, a \textit{shifting} procedure was presented that can be applied individually to each arm of the $\lcor$ to restrict the possible positions of the items to appropriate polynomial-sized sets $\TT_{\mathsf{L}}$ and $\RR_{\mathsf{L}}$. We can apply the same procedure inside the containers of $\CC^h_{\text{skew}}$ and $\CC^v_{\text{skew}}$ to obtain the sets $\TT_C$ and $\RR_C$, respectively, and are done.        
    \end{proof}

We are now ready to prove Lemma \ref{lem:compute-LC-packing-optimal}.

\begin{proof}[Proof of Lemma \ref{lem:compute-LC-packing-optimal}]
    Thanks to Lemma \ref{lem:restricted-l-packing}, it suffices to compute the optimal restricted \lc-packing. We do this using dynamic programming.
    Note that there are only polynomially-many choices for the dimensions of the $\lcor$ -- \dk{the height $H_{\Ls}$ of the vertical arm can be written as the sum of the height of some item lying in the vertical arm and a value in $\RR_{\Ls}$}, and the thickness $w_{\Ls}$ and $h_{\Ls}$ of the two arms correspond to values lying in the sets $\TT_{\Ls}$ and $\RR_{\Ls}$. We first guess the positions of the $\lcor$ and the $O_{\epsilon}(1)$ containers inside the knapsack.
    
    We first pack the items of $\Irem$. To this end, for each area container $C\in \CC^a$, we guess $a_{\text{rem}}(C)$ to be the total area occupied by items from $\Irem$, rounded up to the nearest integer multiple of $a(C)/n^2$ -- clearly there are only polynomially many choices for $a_{\text{rem}}(C)$. We now build an instance of the Generalized Assignment Problem (GAP). We define an item $R_i$ for each rectangle $i\in \Irem$, with profit $p(i)$. For each horizontal container $C\in \CC^h_{\text{rem}}$, we create a bin $R_C$ of size $s(R_C)= h(C)$. The size $s(R_i,R_C)$ for item $R_i$ w.r.t.~bin $R_C$ is defined as follows: if $R_i$ does not fit inside the container of width $w(C)$ and height $h(C)$ in either of the two possible orientations, we define $s(R_i, R_C)=\infty$; else we consider the feasible orientation for which the shorter side of $i$ is vertical, and we define $s(R_i,R_C)$ to be the height of $i$ in this orientation. Symmetrically, we define $s(R_C)$ and $s(R_i,R_C)$ for the vertical containers $C \in \CC^v_{\text{rem}}$. For an area container $C\in \CC^a$, we create a bin $R_C$ of size $s(R_C) = a_{\text{rem}}(C)$, and define the size of item $R_i$ w.r.t.~bin $R_C$ as $s(R_i,R_C)=a(i)$ if there exists an orientation of rectangle $i$ such that $w(i)\le \epsilon w(C)$ and $h(i) \le \epsilon h(C)$; otherwise we set $s(R_i,R_C)=\infty$. Applying Lemma \ref{lem:GAP-lemma} to this GAP instance, we can assign a set of items from $\Irem$ having profit at least $(1-\epsilon)p(\OPT_{\text{rem}})$ into these bins from the GAP instance.

    \begin{figure}
    \centering
    \includegraphics[width=0.6\linewidth]{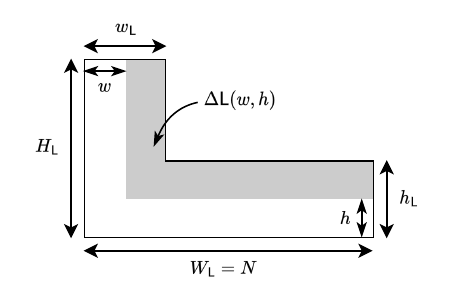}
    \caption{\dk{The shaded region corresponds to $\Delta\Ls(w,h)$.}}
    \label{fig:Lcor}
\end{figure}

    Now we shall pack items from $\Ihor \cup \Iver$ into the $\lcor$, the containers in $\CC^h_{\text{skew}}\cup \CC^{\aw{v}}_{\text{skew}}$, and in the remaining spaces in the area containers $\CC^a$. For this, we define certain regions inside the $\Ls$ and the containers, similar to \cite{galvez2021approximating}. Recall that the bottom left corner of the $\lcor$ lies at the origin $(0,0)$. For $w\in [0,w_\Ls]$ and $h\in [0,h_\Ls]$, let $\Delta\Ls(w,h) := ([w,w_\Ls]\times [h,H_\Ls])\cup ([w,N]\times [h,h_\Ls])$ (see Figure \ref{fig:Lcor}). \dk{Next, for a horizontal container $C$, we define $\Delta C(h)$ to be the region inside $C$ consisting of points lying at a distance of at most $h$ from the base of $C$. Formally, if the bottom left corner of $C$ lies at $(x_C,y_C)$, then $\Delta C(h):= [x_C,x_C+w(C)]\times [y_C,y_C+h]$.}
    Analogously, we define $\Delta C(w)$ for a vertical container $C$.

    Let $h_1, h_2,\ldots, h_{|\Ihor|}$ be the items of $\Ihor$ sorted in non-increasing order of width, and $v_1,v_2,\ldots,v_{|\Iver|}$ be the items of $\Iver$ sorted in non-increasing of height, breaking ties arbitrarily. Each DP cell is indexed by the following parameters:
    \begin{itemize}
        \item an integer $i\in [|\Ihor|]$,
        \item an integer $j\in [|\Iver|]$,
        \item values $t_{\Ls} \in \TT_\Ls$ and $r_\Ls \in \RR_\Ls$,
        \item values $t_C \in \TT_C$ for each $C\in \CC^h_{\text{skew}}$,
        \item values $r_C \in \RR_C$ for each $C\in \CC^v_{\text{skew}}$,
        \item non-negative integer values $k_C \in \left[\max\{\frac{a_{\text{rem}}(C)}{a(C)/n^2}-1,0\},n^2\right]$ for each $C\in \CC^a$.
    \end{itemize}
    $\DP[i,j,t_\Ls,r_\Ls,\{t_C\}_{C\in \CC^h_{\text{skew}}},\{r_C\}_{C\in \CC^v_{\text{skew}}}, \{k_C\}_{C\in \CC^a}]$ will store the maximum profit of a subset $\OPT' \subseteq \{h_i,\ldots,h_{|\Ihor|}\}\cup \{v_j,\ldots,v_{|\Iver|}\}$ such that
    \begin{enumerate}[label=(\roman*)]
        \item \dk{there is a subset $\OPT''\subseteq \OPT'$ that can be packed in the region $\Delta\Ls(r_\Ls,t_\Ls) \cup \bigcup_{C\in \CC^h_{\text{skew}}} \Delta C(t_C)\cup \bigcup_{C\in \CC^v_{\text{skew}}} \Delta C(r_C)$, and}
        \item \dk{there exists a partition $\{I_C\}_{C \in \CC^a}$ of the items of $\OPT'\setminus \OPT''$ such that $\sum_{i\in I_C} \left(\left\lfloor \frac{a(i)}{\frac{a(C)}{n^2}}\right\rfloor\cdot \frac{a(C)}{n^2}\right)\le (1-\frac{k_C}{n^2})a(C)$ holds for all $C \in \CC^a$.}
    \end{enumerate}
    \dk{The second condition in the above definition says that the sum of the areas of the assigned items rounded down to the nearest integer multiple of $a(C)/n^2$, should not exceed the free area of $(1-\frac{k_C}{n^2})a(C)$ remaining inside the area container.} In order to compute the above DP table entry, whose optimal solution is denoted by $\OPT'$, we consider the maximum among several cases:
    \begin{itemize}
        \item If $h_i\not\in \OPT'$, we go to the entry $\DP[i+1,j,t_\Ls,r_\Ls,\{t_C\}_{C\in \CC^h_{\text{skew}}},\{r_C\}_{C\in \CC^v_{\text{skew}}}, \{k_C\}_{C\in \CC^a}]$.
        \item If $v_j\not\in \OPT'$, we go to the entry $\DP[i,j+1,t_\Ls,r_\Ls,\{t_C\}_{C\in \CC^h_{\text{skew}}},\{r_C\}_{C\in \CC^v_{\text{skew}}}, \{k_C\}_{C\in \CC^a}]$.
        \item Assume that both $h_i,v_j \in \OPT'$.
        \begin{enumerate}[label=(\alph*)]
            \item If $h_i$ lies in some horizontal container $\widehat{C}$, then let $\overline{t_{\widehat{C}}}$ be the smallest value in $\TT_{\widehat{C}}$ such that $\overline{t_{\widehat{C}}} \ge h(h_i)+t_{\widehat{C}}$. Then $\DP[i,j,t_\Ls,r_\Ls,\{t_C\}_{C\in \CC^h_{\text{skew}}},\{r_C\}_{C\in \CC^v_{\text{skew}}}, \{k_C\}_{C\in \CC^a}] = p(h_i)+ \DP[i+1,j,t_\Ls,r_\Ls,\{t'_C\}_{C\in \CC^h_{\text{skew}}},\{r_C\}_{C\in \CC^v_{\text{skew}}}, \{k_C\}_{C\in \CC^a}]$, where $t'_C = \overline{t_C}$ if $C=\widehat{C}$, and $t'_C = t_C$, otherwise.
            \item The case when $h_i$ lies in a vertical container is analogous (note that we have to rotate $h_i$ in that case).
            \item For the case when $h_i$ lies in an area container $\widehat{C}$, we let $\overline{k(\widehat{C})}=\lfloor \frac{a(h_i)}{a(\widehat{C})/n^2}\rfloor$, and then $\DP[i,j,t_\Ls,r_\Ls,\{t_C\}_{C\in \CC^h_{\text{skew}}},\{r_C\}_{C\in \CC^v_{\text{skew}}}, \{k_C\}_{C\in \CC^a}] = p(h_i)+ \DP[i+1,j,t_\Ls,r_\Ls,\{t_C\}_{C\in \CC^h_{\text{skew}}},\\ \{r_C\}_{C\in \CC^v_{\text{skew}}}, \{k'_C\}_{C\in \CC^a}]$, where $k'(C) = k(C)+\overline{k(C)}$ if $C=\widehat{C}$, and $k'(C)=k(C)$, otherwise.
            \item Analogously, we handle the cases when $v_j$ lies in some horizontal, vertical, or area container.
            \item Finally, we have the case when both $h_i$ and $v_j$ lie inside the $\Ls$. Here, it must be the case that either there exists a horizontal guillotine cut separating $h_i$ from the rest of the packing inside the $\Ls$, or there exists a vertical guillotine cut separating $v_j$. In the former case, let $\overline{t_\Ls}$ be the smallest value in $\TT_\Ls$ such that $\overline{t_\Ls} \ge h(h_i)+t_\Ls$. Then $\DP[i,j,t_\Ls,r_\Ls,\{t_C\}_{C\in \CC^h_{\text{skew}}},\{r_C\}_{C\in \CC^v_{\text{skew}}}, \\ \{k_C\}_{C\in \CC^a}] = p(h_i) + \DP[i+1,j,\overline{t_\Ls},r_\Ls,\{t_C\}_{C\in \CC^h_{\text{skew}}},\{r_C\}_{C\in \CC^v_{\text{skew}}}, \{k_C\}_{C\in \CC^a}]$. The latter case is symmetric.
        \end{enumerate}
    \end{itemize}

    \dk{The DP computes a feasible packing of a subset of $I$ into the $\lcor$ and the horizontal and vertical containers, and an assignment of items into the area containers. We now compute a feasible packing inside the area containers. Note that since at most
    $n$ items are packed inside any area container $C$, the total area of the items $I_C$ assigned to $C$ is at most $(1+O(\frac{1}{n}))a(C)<(1+\epsilon)a(C)$. If $a(I_C) \le (1-2\epsilon)a(C)$, then we can directly pack ${I}_C$ into $C$ using Lemma \ref{lem:NFDH-guarantee}; otherwise we partition ${I}_C$ into groups of total area at least $2\epsilon\cdot a(C)$, i.e., we iteratively pick items into a group until their total area exceeds $2\epsilon\cdot a(C)$, and then restart the procedure to create another group (the last group may have a smaller total area). Since the area of each item is at most $\epsilon^2\cdot a(C)$, the number of groups is at least $\frac{1-2\epsilon}{2\epsilon+\epsilon^2}\ge \Omega(1/\epsilon)$. We delete the group having minimum profit among the ones with total area at least $2\epsilon\cdot a(C)$, and let ${I}'_C$ be the remaining items. Then $p({I}'_C) \ge (1-O(\epsilon))p(I_C)$ and $a({I}'_C)\le (1-2\epsilon)a(C)$, implying that ${I}'_C$ can be packed into $C$ using NFDH (Lemma \ref{lem:NFDH-guarantee}).}
    
    Since the computation at each DP cell only involves values stored at $O_{\epsilon}(1)$ other cells, it can be done in $O_{\epsilon}(1)$ time. Finally, since the number of DP cells is polynomially bounded, we are done.
\end{proof}

In the following subsection, we shall exploit the power of \lc-packings to obtain an approximation ratio strictly less than $1.5$ for the case when the conditions of \Cref{lem:LC-packing-hugeitem} are satisfied.

\subsection{There is a huge item \texorpdfstring{$\hugeitem$}{i*} with $N-w(\hugeitem) \le 2\epsilon^2 (N-h(\hugeitem))$}
\label{sec:large-huge-item}
We prove \Cref{lem:LC-packing-hugeitem} in this subsection. Recall that \lc-packings were defined only for a square knapsack (see Definition \ref{def:l-packing}). For convenience, in the following lemma we extend this definition (without any alteration to Properties (1)-(5)) to also apply to a rectangular knapsack, i.e., a knapsack of width $N$ and arbitrary height (not exceeding $N/2$).  Later, we shall identify a certain rectangular region inside $K$ and apply the lemma to the rectangular region, in order to obtain an \lc-packing inside the square knapsack.

\begin{restatable}{lemma}{arbitraryknapsack}
\label{lem:arbitrary-knapsack}
    Let $I$ be a set of items packed inside a rectangular box $\mathcal{R}$ with $w(\mathcal{R})=N$ and $h(\mathcal{R})\le N/2$. Then there exists a \lc-packing of a subset $I'\subseteq I$ inside the box such that $p(I')\ge (22/43-\epsilon)p(I)$.
\end{restatable}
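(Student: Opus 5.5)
We plan to mimic the analysis of \Cref{sec:no-huge-item}, now inside the box $\mathcal{R}$ of width $N$ and height $H:=h(\mathcal{R})\le N/2$.

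\textbf{Step 1: corridor decomposition.} We apply \Cref{lem:corr-decomposition} to the packing of $I$ inside $\mathcal{R}$, after rescaling $\mathcal{R}$ to a square. We then classify the surviving items into $I_{LF},I_{SF},I_{LT},I_{ST}$ exactly as in \Cref{sec:non-rotation-improved}, where long and short refer to the length of the subcorridor relative to the corresponding side of $\mathcal{R}$. Since $H\le N/2$, every vertical subcorridor has length at most $H\le N/2$. So long vertical items satisfy $h^*(i)>H/2$ and long horizontal items satisfy $w^*(i)>N/2$. This is exactly what Property (4) of Definition~\ref{def:l-packing} requires for an $\Ls$ with $W_{\Ls}=N$ and $H_{\Ls}=H\le N/2$.

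\textbf{Step 2: collect profit guarantees.} We list several \lc-packings (or plain container packings, which are special \lc-packings) inside $\mathcal{R}$.
\begin{enumerate}[label=(\alph*)]
\item The $L\&C$ constructions of \cite{galvez2021approximating}, i.e., Lemma~\ref{lem:lcpacking-focs}(i),(ii), rescaled to $\mathcal{R}$. The $\Ls$ used in (ii) collects long thin items along the bottom and left boundary of $\mathcal{R}$. Its arms have thickness at most $\eps N$, and its items satisfy Property (4) by Step 1. To enforce Property (5) we split the vertical-arm items by orientation and keep the more profitable half. Since a vertical arm contains only items from at most two vertical long subcorridors (left and right), this loss is confined to the vertical long thin items. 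We would therefore restate (ii) as $\frac34 p(I_{LT}^{hor})+\frac38 p(I_{LT}^{ver})+p(I_{ST})+\frac12 p(I_{SF})$, roughly.
\item \Cref{lem:new-processing}, which gives $p(I_{LF})+p(I_{SF})+\frac12 p(I_{LT})$.
\item A resource-contraction or random-strip bound in the spirit of \Cref{lem:res-contraction} and \Cref{lem:randomstrip}. Here we delete a random vertical strip of width $\eps N$: every item of width at most $N/2$ survives with probability about $1/2$, and items with $h\le \eps H$ survive with probability about $1$. We then repack all thin items, which fit in an $\eps^4$-thick strip by \Cref{lem:bound-thin-items}, and use the remaining space for resource augmentation.
\end{enumerate}

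\textbf{Step 3: combine by a weighted sum.} As in Lemmas~\ref{lem:optlflempty} and \ref{lem:optlflone}, we split $I_{LT}$ into its horizontal and vertical parts. We then solve the small LP that maximizes the worst case of the best packing over all profit distributions. This should give multipliers summing to $43$ on the left-hand side and coefficients of at least $22$ on every class on the right-hand side, hence $p(I')\ge(22/43-\eps)p(I)$.

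\textbf{Main obstacle.} The hardest part is the rotation issue in the vertical arm, namely Property (5) and the halving it forces. The loss must be limited to the vertical long thin items, and the other inequalities must still compensate so that the ratio $22/43$ is reached. I would first try to charge the halving only to items in the vertical arm. I would also use that vertical items have total width at most $\eps^4 N$, so after rotation they fit as a horizontal strip; this gives an alternative packing which, combined with the halved $\Ls$, recovers the required coefficients.
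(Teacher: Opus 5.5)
\textbf{There is a genuine gap in Step~3.} The proof consists of the weighted combination, and the inequalities you collect do not reach $22/43$. Write $H=h(\mathcal{R})$.

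Your uniformly averaged, orientation-halved version of (ii) is $\frac34p(I_{LT}^{hor})+\frac38p(I_{LT}^{ver})+p(I_{ST})+\frac12p(I_{SF})$. Together with Lemma~\ref{lem:lcpacking-focs}(i) and Lemma~\ref{lem:new-processing}, it guarantees only $11/23$: take $p(I_{LF})=7/23$, $p(I_{LT}^{ver})=p(I_{ST})=8/23$, and everything else $0$.

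Your bound (c) does not hold as described. A vertical strip of width $\eps N$ cannot receive horizontal long thin items, since their width exceeds $N/2\ge H$. Rotated vertical thin items need not fit into a horizontal strip of height $O(\eps H)$: their total width can be $\eps^4N$, which exceeds $\eps H$ once $H<\eps^3N$, and a single one may have width $\epssmall N>H$. The rotation trick in your last paragraph fails for the same reason.

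Now suppose you add the legitimate packings that keep all vertical (resp.\ all horizontal) long thin items plus $p(I_{ST})+\frac12p(I_{SF})$. Convex combinations of these with your averaged (ii) still give $I_{LT}$ a coefficient of at most $6/11$. The LP then stops at $34/67<22/43$, for instance at $p(I_{LF})=23/67$, $p(I_{LT}^{hor})=10/67$, $p(I_{LT}^{ver})=12/67$, $p(I_{ST})=22/67$.

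\textbf{The missing idea} is to stop averaging over which of the four boundary stacks $I_t,I_b,I_\ell,I_r$ of long thin items is deleted. Instead:
\begin{itemize}
\item Split each vertical stack separately by orientation into $I_\ell^{h>w}$, $I_\ell^{w\ge h}$, $I_r^{h>w}$, $I_r^{w\ge h}$.
\item For each such class $J$, the items of $I_t\cup I_b\cup J$ lie in a U-shape that can be rearranged into an $\Ls$. This keeps \emph{both} horizontal stacks in full and satisfies Property~(5) automatically.
\item Take these four packings plus three copies of the packing that drops $I_t\cup I_b$ and puts $I_\ell\cup I_r$ into containers inside a vertical strip of width $\eps N$. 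This gives $\frac47p(I_{LT})+p(I_{ST})+\frac12p(I_{SF})$.
\item Then $2\cdot$Lemma~\ref{lem:lcpacking-focs}(i)$\,+\,20\cdot$Lemma~\ref{lem:new-processing}$\,+\,21\cdot$(this) yields total weight $43$ against coefficient at least $22$ on every class.
\end{itemize}

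\textbf{Step~1 is also wrong.} You claim the items of $I_{LT}$ satisfy Property~(4), but a long subcorridor may contain items shorter than half its length. You need to split off the items of $I_{LT}$ with width at most $(\frac12+2\epslarge)N$ (resp.\ height at most $(\frac12+2\epslarge)H$). Pack them with $I_{ST}$ into a $(\frac12+2\epslarge)N\times\eps H$ box along the top and an $\eps N\times(\frac12+2\epslarge)H$ box along the right of $\mathcal{R}$. Half of $I_{SF}$ then goes into a central $(1-2\eps)N\times(1-2\eps)H$ box, obtained by deleting both a random horizontal and a random vertical strip.
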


For ease of presentation, we defer the proof of Lemma \ref{lem:arbitrary-knapsack} to \Cref{sec:arbitrary-knapsack-proof}, and prove Lemma \ref{lem:LC-packing-hugeitem} first. Consider the optimal packing $\OPT$ inside the knapsack. Note that the guarantee of Lemma \ref{lem:delete-big-item} continues to hold, i.e., if the item $\hugeitem$ is discarded, we can obtain a container packing of the remaining items.

As in \Cref{sec:vertical-arms-large}, we divide the region surrounding the huge item $\hugeitem$ into the top, bottom, left, and right arms. Let $h_t, h_b,w_{\ell},w_r$ denote the thickness of the top, bottom, left and right arms, respectively. Let $\OPT_{\text{top}}, \OPT_{\text{bottom}}$ be the items completely lying in the top and bottom arm, respectively. Among the remaining items, let $\OPT_{\text{left}}, \OPT_{\text{right}} \subseteq \OPT\setminus (\OPT_{\text{top}}\cup \OPT_{\text{bottom}})$ be the items lying in the left and right arm, respectively. Let
\begin{itemize}
    \item $\OPT_L^h := \{i\in \OPT_{\text{top}} \mid h_i > \epsilon^2h_t \text{ and } w_i > \epsilon^2 N\} \cup \{i\in \OPT_{\text{bottom}}\mid h_i > \epsilon^2 h_b \text{ and } w_i > \epsilon^2 N\}$,
    \item $\OPT_H^h := \{i\in \OPT_{\text{top}} \mid h_i \le \epsilon^2h_t \text{ and } w_i > \epsilon^2 N\} \cup \{i\in \OPT_{\text{bottom}}\mid h_i \le \epsilon^2 h_b \text{ and } w_i > \epsilon^2 N\}$,
    \item $\OPT_V^h := \{i\in OPT_{\text{top}} \mid h_i > \epsilon^2h_t \text{ and } w_i \le\epsilon^2 N\} \cup \{i\in \OPT_{\text{bottom}}\mid h_i > \epsilon^2 h_b \text{ and } w_i \le \epsilon^2 N\}$, and
    \item $\OPT_S^h := \{i\in \OPT_{\text{top}} \mid h_i \le \epsilon^2h_t \text{ and } w_i \le\epsilon^2N\} \cup \{i\in \OPT_{\text{bottom}}\mid h_i \le \epsilon^2 h_b \text{ and } w_i \le \epsilon^2N\}$.
\end{itemize}
Intuitively, these four sets consist of the items that are large, horizontal, vertical, and small compared to the dimensions of the arm where they lie, respectively. Analogously, the items in $\OPT_{\text{left}}\cup \OPT_{\text{right}}$ are classified into the sets $\OPT_L^v, \OPT_H^v,\OPT_V^v,\OPT_S^v$.

We now present three ways of restructuring the optimal packing and show that the best among the three, together with Lemma \ref{lem:delete-big-item} gives us a better than $3/2$-approximation. The first two of these will be container packings (Lemmas \ref{lem:rotate-vertical} and \ref{lem:klaus}) and the last one is an \lc-packing (Lemma \ref{lem:lc-packing}).

\begin{lemma}
\label{lem:rotate-vertical}
    There exists a container packing of profit at least $(1-\epsilon)(p(\OPT_H^h)+p(\OPT_S^h)+p(\OPT_{\text{left}})+ p(\OPT_{\text{right}}))+p(\hugeitem)$.
\end{lemma}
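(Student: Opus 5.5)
We keep $\hugeitem$ at its original position and also keep every item of $\OPT_{\text{left}}\cup\OPT_{\text{right}}$ where it is. In the top and bottom arms we retain only $\OPT_H^h\cup\OPT_S^h$, i.e., the items that are thin (height at most $\epsilon^2h_t$ resp. $\epsilon^2h_b$) relative to the arm in which they lie, and discard $\OPT_L^h\cup\OPT_V^h$. The plan is to repack $\OPT_H^h\cup\OPT_S^h$ into $O_\eps(1)$ containers inside the top and bottom arms, losing only an $\epsilon$-fraction of their profit. We then convert the left and right arms (which together with $\hugeitem$ form a vertical band structure) into containers as well.

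\textbf{Step 1 (shrinking the horizontal arms).} Consider the bottom arm, of height $h_b$ and width $N$. Draw $1/\epsilon-1$ equidistant horizontal lines, splitting it into strips of height $\epsilon h_b$. Every item of $\OPT_H^h\cup\OPT_S^h$ in this arm has height at most $\epsilon^2 h_b$. Hence the items crossing a line have total area at most $\frac{1}{\epsilon}\cdot 2\epsilon^2 h_b N$, i.e., they lie in a thin band around each line. Rather than deleting them outright, we use a standard averaging argument: choose the strip, together with its adjacent crossing items, of minimum profit and delete it. This loses at most $O(\epsilon)$ of the profit of $\OPT_H^h\cup\OPT_S^h$ in the arm and frees a full-width horizontal strip of height $\Omega(\epsilon h_b)$ inside the arm. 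We proceed analogously in the top arm.

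\textbf{Step 2 (resource augmentation in the arms).} The remaining items of the bottom arm now fit into a box of size $N\times(1-\Omega(\epsilon))h_b$, so the bottom arm provides an empty strip of relative thickness $\Omega(\epsilon)$. Applying Lemma~\ref{lem:resource-augmentation} to this box (of size $N\times h_b$) gives a container packing of $(1-O(\epsilon))$ of the profit using $O_\eps(1)$ containers inside the bottom arm; the same holds for the top arm. We apply the lemma separately to each arm, since the arms have different thicknesses, and this is exactly why $\OPT_H^h,\OPT_S^h$ were defined relative to $h_t$ and $h_b$.

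\textbf{Step 3 (left and right arms).} These arms have widths $w_\ell,w_r$ with $w_\ell+w_r\le 2\epsilon^2 h$, and they span the height of $\hugeitem$ (the items of $\OPT_{\text{left}}\cup\OPT_{\text{right}}$ lie outside the top and bottom arms). We want all of them without loss, so we cannot delete a strip in these arms. Instead we take the space from the top/bottom arms. After Step 2 the bottom arm still has a free horizontal strip of height $\Omega(\epsilon)h_b$, and since $h_b\ge h/2\gg w_\ell+w_r$, we can move $\hugeitem$ together with the left and right arms down (or shift columns) so that each side arm becomes a box of width $w_\ell$ resp. $w_r$ and height at most $N$ with slack $\Omega(\epsilon h_b)\ge\Omega(\epsilon)\cdot(\text{its width})/\epsilon^2$. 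Rotated, this is exactly the resource-augmentation situation of Lemma~\ref{lem:resource-augmentation} with augmented dimension along the long side. The difficulty is that augmentation in the long direction of a box of height $h(\hugeitem)$ requires extra height $\epsilon\, h(\hugeitem)\approx\epsilon N$, not $\epsilon h_b$. The fallback I would try first is to pack the side-arm items themselves into the freed vertical space: after rotating by $90^\circ$ they form a layer of thickness at most $2\epsilon^2h\le\epsilon^2\cdot$(free height of the bottom arm)/$\epsilon$. We therefore place both side arms, rotated, as horizontal boxes of height $w_\ell,w_r$ inside the freed strip of the bottom arm. Their lengths are at most $N-h_t-h_b\le N$, which fits the arm width $N$. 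With the remaining free height $\Omega(\epsilon h_b)-2\epsilon^2h\ge\Omega(\epsilon)(w_\ell+w_r)/\epsilon$, we apply Lemma~\ref{lem:resource-augmentation} in the thin direction to make them containers.

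\textbf{Main obstacle.} I expect the main obstacle to be Step 3: recovering the full profit of $\OPT_{\text{left}}\cup\OPT_{\text{right}}$ (with no profit loss) while converting these items into containers, which hinges on the relation $w\le2\epsilon^2h$ being strong enough to supply enough slack.
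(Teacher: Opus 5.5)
Once you adopt your Step~3 fallback, the argument is correct and is essentially the paper's proof: the paper temporarily removes $\OPT_{\text{left}}\cup\OPT_{\text{right}}$ and moves $\hugeitem$ to the top so that both horizontal arms lie in one $N\times h$ box. It then deletes a random horizontal strip of height $\epsilon h$, losing only $O(\epsilon)$ of $\OPT_H^h\cup\OPT_S^h$ since their heights are at most $\epsilon^2 h$, places the rotated side arms (total thickness $w\le 2\epsilon^2 h$) in that strip, and spends the remaining $(\epsilon-2\epsilon^2)h$ on resource augmentation; you do the same within the thicker horizontal arm. When writing it up, remove the side-arm items at the start instead of keeping them in place, because they can protrude into the corners of the horizontal arms and can be as tall as $N$, not $N-h_t-h_b$. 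Also reserve part of the freed strip rather than using all of it for augmentation in Step~2, and drop the ``no loss'' goal for the side arms, since the statement only asks for a $(1-\epsilon)$ fraction of their profit.
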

\begin{proof}
    \begin{figure}
    \centering
    \includegraphics[width=0.8\linewidth]{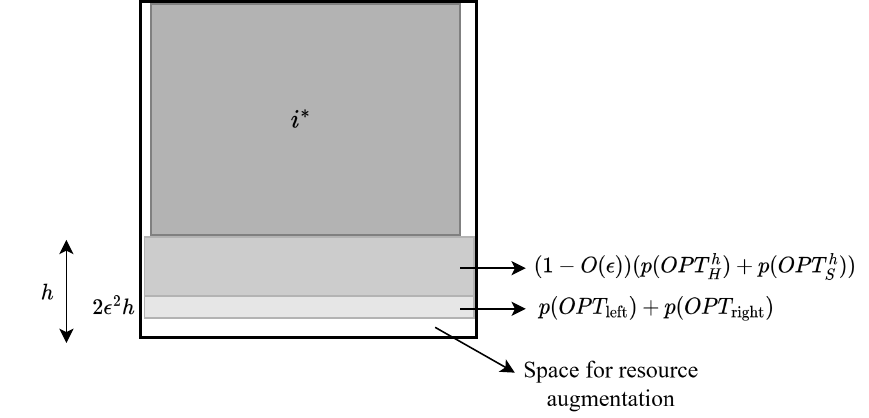}
    \caption{\dk{Packing corresponding to Lemma \ref{lem:rotate-vertical}.}}
    \label{fig:repack-vertical-arms}
\end{figure}
    We temporarily remove the items of $\OPT_{\text{left}}\cup \OPT_{\text{right}}$, and by repositioning the item $\hugeitem$ so that its upper edge touches the top edge of the knapsack, we may assume that the items in the top and bottom arms are packed inside a box of height $h$ and width $N$ below the item $\hugeitem$. We consider a random horizontal strip $\mathcal{S}$ of height $\epsilon h$ inside the box, and delete all items intersecting $\mathcal{S}$. Note that each item in $\OPT_H^h \cup \OPT_S^h$ survives with probability at least $1-O(\epsilon)$, hence we get a profit of at least $(1-O(\epsilon))(p(\OPT_H^h)+p(\OPT_S^h))$. Now since $w\le 2\epsilon^2 h$, we can pack the items of $\OPT_{\text{left}}\cup \OPT_{\text{right}}$ inside the strip $\mathcal{S}$. Finally, we use the remaining empty height of $(\epsilon - 2\epsilon^2)h$ for resource augmentation (Lemma \ref{lem:resource-augmentation}) in order to obtain a container packing with the desired profit guarantee (see Figure \ref{fig:repack-vertical-arms}).
\end{proof}

Next, we show a container packing of all the large and vertical items lying in the four arms. This is achieved by an application of the following inequality from \cite{jansen1999improved}.

\begin{lemma}[\cite{jansen1999improved}]
\label{lem:jansen-porkolab}
        Let $p_1\ge p_2\ge \ldots \ge p_n>0$ be a sequence of real numbers and $P=\sum_{i=1}^{n} p_i$. Let $c$ be a nonnegative integer and $\epsilon >0$. If $n = c^{\Omega(1/\epsilon)}$, then there is an integer $k \le c^{O(1/\epsilon)}$ such that $p_{k+1}+\ldots p_{k+ck} \le \epsilon P$.
\end{lemma}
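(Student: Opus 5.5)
The plan is a pigeonhole argument over geometrically growing, pairwise disjoint blocks of indices. The ordering $p_1\ge p_2\ge\ldots$ is not needed; only nonnegativity of the $p_i$ and the bound on $n$ are used.

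First dispose of the degenerate values of $c$. If $c=0$, the sum $p_{k+1}+\ldots+p_{k+ck}$ is empty, so $k=1$ works. For $c\ge 1$, set $k_0:=1$ and $k_{t+1}:=k_t+ck_t=(c+1)k_t$, so that $k_t=(c+1)^t$. Define the index blocks
\[
B_t:=\{k_t+1,\ldots,k_t+ck_t\}=\{k_t+1,\ldots,k_{t+1}\},\qquad t=0,1,\ldots,T-1,
\]
with $T:=\lceil 1/\epsilon\rceil+1$. By construction these blocks are consecutive and pairwise disjoint.

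Next, ensure that all these indices actually exist. The largest index used is $k_T=(c+1)^{T}\le (2c)^{1/\epsilon+2}$. This is $c^{O(1/\epsilon)}$ for $c\ge 2$, and a constant $2^{O(1/\epsilon)}$ for $c=1$; reading the hypothesis $n=c^{\Omega(1/\epsilon)}$ as $n\ge k_T$ with the appropriate constant in the exponent, every $B_t$ lies in $[n]$. (If one prefers not to rely on this, pad the sequence with zeros beyond $n$; this changes neither $P$ nor the inequality being proved.)

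Finally, apply pigeonhole. Since the blocks are disjoint and all $p_i>0$,
\[
\sum_{t=0}^{T-1}\sum_{i\in B_t}p_i\le P .
\]
There are $T>1/\epsilon$ blocks, so some $t^*<T$ satisfies $\sum_{i\in B_{t^*}}p_i\le P/T\le\epsilon P$. Taking $k:=k_{t^*}=(c+1)^{t^*}\le (c+1)^{1/\epsilon+1}=c^{O(1/\epsilon)}$ gives $p_{k+1}+\ldots+p_{k+ck}\le\epsilon P$, as required.

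The only point that needs care is matching the constants hidden in the $\Omega(\cdot)$ and $O(\cdot)$ exponents with $T$ and with the small cases $c\in\{0,1\}$; the combinatorial core is just the disjointness of the blocks.
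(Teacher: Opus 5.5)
Your argument is correct. The paper does not prove this lemma; it only cites it from Jansen and Porkolab, and your proof is the standard one from that source: take the disjoint consecutive blocks $\{(c+1)^t+1,\ldots,(c+1)^{t+1}\}$, each of the form $\{k+1,\ldots,k+ck\}$ with $k=(c+1)^t$, and apply pigeonhole. As a cosmetic point, $\lceil 1/\epsilon\rceil$ blocks already suffice because $P/\lceil 1/\epsilon\rceil\le\epsilon P$; this gives the cleaner bounds $n\ge (c+1)^{\lceil 1/\epsilon\rceil}$ and $k\le (c+1)^{\lceil 1/\epsilon\rceil-1}$, and stating these bounds with base $c+1$ also removes your $c=1$ caveat.
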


\begin{lemma}
\label{lem:klaus}
    There exists a container packing of profit at least $(1-\epsilon)(p(\OPT_V^h)+p(\OPT_V^v))+p(\OPT_L^h)+p(\OPT_L^v)+p(\hugeitem)$.
\end{lemma}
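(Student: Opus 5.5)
We keep $\hugeitem$ and discard everything except the large and vertical items of the four arms. As in Lemma~\ref{lem:rotate-vertical}, we push $\hugeitem$ so that its top edge touches the top of $K$ and its left edge touches the left boundary. Then the top and bottom arms merge into one box of height $h$ and width $N$ below $\hugeitem$, and the left and right arms merge into a box of width $w\le 2\eps^2 h$ and height $N$. Since items lying completely in an arm keep their relative positions, the items of $\OPT_L^h\cup\OPT_V^h$ (respectively $\OPT_L^v\cup\OPT_V^v$) are still packed non-overlappingly inside these boxes, which we treat separately.

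\textbf{Large items.} Relative to the arm containing it, each item of $\OPT_L^h$ has height $>\eps^2$ times the arm thickness and width $>\eps^2N$. So each arm contains at most $1/\eps^4$ such items, and $|\OPT_L^h|,|\OPT_L^v|=O_\eps(1)$. Each gets its own container (a container holding one item), so no profit from them is lost, matching the exact terms $p(\OPT_L^h)+p(\OPT_L^v)$ in the statement. The difficulty is that the containers for the vertical items must avoid these constantly many large items. A uniform shifting argument would otherwise cost profit of large items.

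\textbf{Vertical items.} Consider the items of $\OPT_V^h$ in, say, the bottom arm. They have height $>\eps^2 h_b$ and width $\le\eps^2N$. Extend the vertical edges of all large items (and of $\hugeitem$) in that arm. This gives $O_\eps(1)$ vertical lines that cut the arm into $O_\eps(1)$ slabs. Inside each slab we use the standard technique for vertical items: draw $O_\eps(1)$ horizontal lines at heights that are multiples of $\eps^2 h_b$. A vertical item crosses at most $O(1/\eps^2)$ of these lines. We obtain "vertical boxes" of items by grouping, and then convert them to vertical containers via Lemma~\ref{lem:box-to-container} (with $\delta=\eps$). That lemma loses an $\eps$-fraction of profit plus a constant number $O_{\eps}(1)$ of items per box. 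Those constantly many items cannot simply be dropped in the weighted setting. This is where Lemma~\ref{lem:jansen-porkolab} enters.

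Order all items of $\OPT_V^h\cup\OPT_V^v$ by non-increasing profit. Lemma~\ref{lem:jansen-porkolab} gives $k\le c^{O(1/\eps)}$ such that the items ranked $k+1,\dots,k+ck$ have profit at most $\eps P$. We treat the top $k$ items like large items: each gets its own container, and their edges are added as extra cutting lines. This creates at most $c\cdot k$ newly cut items, where $c=O_\eps(1)$ bounds the number of items killed per cutting line. By the choice of $k$, the killed items have negligible profit, or they lie among the top $k$ and are kept. The items crossing slab boundaries are exactly those intersected by such lines, so they are charged in the same way. The left and right arms are handled symmetrically with roles swapped (there the vertical items are the ones parallel to the long side of the arm).

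\textbf{Main obstacle.} The hardest step is making the Jansen–Porkolab bookkeeping consistent, i.e., ensuring that the number of items killed depends only on $\eps$ and $k$ linearly with a fixed constant $c$. This requires a careful count of the lines induced by the $k$ protected items and the $O_\eps(1)$ large items, and of how many vertical items each line can cut. The count uses that vertical items have height $>\eps^2$ times the arm thickness, so any horizontal line hits at most $1/\eps^2$ of them per unit... per slab width bounded via disjointness. If this bound is not fully uniform, we instead apply the grid only along the vertical coordinates, where each line cuts at most $N/(\text{min width})$ items. The fallback is to iterate the argument as in the untouchable-item iteration of Section~\ref{sec:non-rotation-improved}.
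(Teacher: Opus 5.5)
Your high-level plan matches the paper's: make $\hugeitem$, the $O(1/\eps^4)$ items of $\OPT_L^h\cup\OPT_L^v$ and the $k$ most profitable items of $\OPT_V^h\cup\OPT_V^v$ untouchable, kill $O(k)$ further items, and choose $k$ via Lemma~\ref{lem:jansen-porkolab}. But the step that makes this work is left open, and the constructions you suggest for it fail. That step is killing at most $c\cdot k$ items, with $c=O_\eps(1)$ fixed before $k$ is chosen, while never cutting an untouchable item.

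First, full-length lines that cut an arm into slabs can pass through other large or protected items. This holds whether the lines come from edges of large items or of the $k$ protected items. Those items must be kept entirely, so ``they lie among the top $k$ and are kept'' is not available.

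Second, your cuts and counts go in the wrong direction. Items of $\OPT_V^h\cup\OPT_V^v$ have widths bounded only from above, so a horizontal line can cross arbitrarily many of them. Every item of $\OPT_V^h$ in the bottom arm has height $>\eps^2 h_b$, so it crosses one of your lines at multiples of $\eps^2 h_b$. Your claim that a horizontal line hits at most $1/\eps^2$ of them is false. A $90^\circ$-rotated treatment of the left and right arms (horizontal cuts there) would likewise kill unboundedly many items of $\OPT_V^v$. Neither Lemma~\ref{lem:jansen-porkolab} nor your iteration fallback helps without a bound of the form $c$ times the number of untouchable items.

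The missing observation is that a \emph{vertical} line meets at most $2/\eps^2$ items of $\OPT_V^h$ and at most $1/\eps^2$ items of $\OPT_V^v$. The first bound holds because, in each horizontal arm, their heights exceed an $\eps^2$-fraction of the arm thickness. The second holds because their heights exceed $\eps^2 N$. This is why the two sets are grouped together.

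So the paper uses only vertical cuts, in the original packing:
\begin{itemize}
\item It extends the left and right edges of every untouchable item up and down, but only until they hit another untouchable item or the boundary of $K$.
\item It kills the vertical items crossed by these segments: $O(1/\eps^2)$ per segment, and $O(k)$ segments once $k\ge 1/\eps^5$.
\item This partitions the free space into $O_\eps(k)$ rectangular cells that contain no untouchable item.
\item Inside each cell it draws $1/\eps-1$ equidistant vertical lines and kills the $O_\eps(1)$ crossed items.
\item It deletes the least profitable strip and applies Lemma~\ref{lem:resource-augmentation} in the freed width.
\end{itemize}
Only then is Lemma~\ref{lem:jansen-porkolab} invoked, with $c=O(1/\eps^4)$.

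Finally, drop your initial rearrangement. With $\hugeitem$ in the top-left corner, the boxes $[0,N]\times[0,h]$ and $[N-w,N]\times[0,N]$ overlap in a $w\times h$ corner. Containers built in them ``separately'' therefore need not be disjoint, and nothing needs to be moved in the first place.
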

\begin{proof}
\begin{figure}
    \centering
    \includegraphics[width=0.3\linewidth]{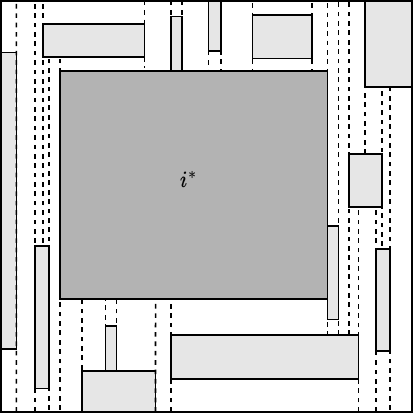}
    \caption{Extending the vertical edges of untouchable items partitions the remaining region of the knapsack into cells}
    \label{fig:klaus-trick}
\end{figure}
    Consider the packing of the items in $\OPT_L^h \cup \OPT_L^v \cup \OPT_V^h \cup \OPT_V^v$ and the item $\hugeitem$ inside the knapsack. Note that $|\OPT_L^h|,|\OPT_L^v| \le 2/\epsilon^4$. Let $k \ge 1/\epsilon^5$ be a constant depending on $\epsilon$ to be determined later. We mark the item $\hugeitem$, all items in $\OPT_L^h \cup \OPT_L^v$ and the $k$ most profitable items in $\OPT_V^h \cup \OPT_V^v$ as \textit{untouchable} (i.e., we create a separate container for each such item). Next, we extend the left and right edges of each untouchable item until it touches the top/bottom edge of another untouchable item or the top/bottom boundary of the knapsack (see Figure \ref{fig:klaus-trick}).
    We discard all items in $\OPT_V^h \cup \OPT_V^v$ that are intersected by these lines. Note that the number of such lines is bounded by $2(k+4/\epsilon^4+1) \le 4k$ and each line intersects at most $2/\epsilon^2$ items of $\OPT_V^h$ and at most $1/\epsilon^2$ items of $\OPT_V^v$. Thus, the number of discarded items is $O(1/\epsilon^2)k$. Next, observe that the region of the knapsack not occupied by the untouchable items is now partitioned into at most $O(1/\epsilon^2)k$ \textit{cells} that contain items from $\OPT_V^h \cup \OPT_V^v$. We consider any such cell $C$ and let $w(C)$ denote the width of $C$. We draw $1/\epsilon-1$ equidistant vertical lines inside $C$ that partition $C$ into strips of width $\epsilon\cdot w(C)$, and discard all items intersecting these lines. The number of such items is bounded by $O(1/\epsilon^2)$. We now discard items in the strip having minimum profit, and use the resulting empty space of width $\epsilon\cdot w(C)$ inside the cell for resource augmentation (Lemma \ref{lem:resource-augmentation}) to get a container packing. Thus, for each cell, we discard at most $O(1/\epsilon^2)$ items and save an $(1-\epsilon)$-fraction of the remaining profit. Hence, the total number of discarded items is bounded by $O(1/\epsilon^4)k$.
    Invoking Lemma \ref{lem:jansen-porkolab} with $c=O(1/\epsilon^4)$ and $p_i$'s being the profits of the items in $\OPT_V^h \cup \OPT_V^v$ in non-increasing order (breaking ties arbitrarily), gives a value of $k$ such that the profit of the discarded items is bounded by $\epsilon\cdot (p(\OPT_V^h)+p(\OPT_V^v))$. Thus, we obtain the desired profit guarantee of the lemma.
\end{proof}

The packings in Lemmas \ref{lem:rotate-vertical} and \ref{lem:klaus} were container packings. As shown in Lemma \ref{lem:1.5-hardness}, it is not possible to beat factor $3/2$ using only container packings. In the following lemma, we exploit the power of \lc-packings in order to pack the horizontal and small items in the vertical arms into $O_{\epsilon}(1)$ containers together with an \lc-packing of a subset of the items packed in the horizontal arms.

\begin{lemma}
\label{lem:lc-packing}
    There exists a \lc-packing of profit at least $(1-\epsilon)(p(\OPT_H^v)+p(\OPT_S^v)) + (22/43-\epsilon)(p(\OPT_{\text{top}})+ p(\OPT_{\text{bottom}}))+p(\hugeitem)$.
\end{lemma}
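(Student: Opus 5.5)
We keep $\hugeitem$ but move it to the top of the knapsack, so that its top edge coincides with the top boundary of $K$ and its left edge with the left boundary. Since $w(\hugeitem)\ge h(\hugeitem)$ and $N-w(\hugeitem)\le 2\eps^2 h$ with $h=N-h(\hugeitem)<(1/2-\eps)N$, the items of $\OPT_{\text{top}}\cup\OPT_{\text{bottom}}$ can be packed inside the rectangular box $\mathcal{R}:=[0,N]\times[0,h]$ below $\hugeitem$. For this we translate the top arm down so that it sits directly on top of the bottom arm; both arms span the full width $N$, so the translated packing is feasible. Then $w(\mathcal{R})=N$ and $h(\mathcal{R})=h\le N/2$, and Lemma~\ref{lem:arbitrary-knapsack} yields an \lc-packing inside $\mathcal{R}$ of a subset of $\OPT_{\text{top}}\cup\OPT_{\text{bottom}}$ of profit at least $(22/43-\eps)(p(\OPT_{\text{top}})+p(\OPT_{\text{bottom}}))$. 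We view $\mathcal{R}$ as the lower part of $K$. The $\Ls$-corridor of this packing has its horizontal arm on the bottom boundary of $K$ and vertical arm height $H_\Ls\le h\le N/2$, and Properties (1)--(5) refer only to absolute sizes and are inherited. So together with a container for $\hugeitem$ this is an \lc-packing inside $K$, provided we still find room for the items of the vertical arms.

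Next we place $\OPT_H^v\cup\OPT_S^v$. These items lie in the left and right arms, whose total width is $w\le 2\eps^2 h$. Horizontal items there have height at most $\eps^2 h(\text{arm})$-type bounds relative to the arm's length, and small items are small in both dimensions. We rotate all of them by $90^\circ$. After rotation they lie in two boxes of height $w_\ell$ and $w_r$ and width at most $N$, i.e., together in a box of height at most $2\eps^2 h$ and width $N$. To make room for this box, we first carve out free space inside $\mathcal{R}$: before applying Lemma~\ref{lem:arbitrary-knapsack}, we delete the items intersecting a random horizontal strip of height $\eps h$ in the box holding the top and bottom arms. As in Lemma~\ref{lem:rotate-vertical}, every item is hit with probability $O(\eps)$, except items of height more than about $\eps h$, so this random-strip step must be handled carefully; this is the main obstacle, discussed below.

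The cleaner alternative is to avoid deleting items of $\OPT_{\text{top}}\cup\OPT_{\text{bottom}}$ altogether. We apply Lemma~\ref{lem:arbitrary-knapsack} to the box $[0,N]\times[0,(1-\eps)h]$ obtained by shrinking: Lemma~\ref{lem:weighted-shrinking}-style arguments do not directly apply to tall items, so I would instead exploit the slack in the proof of Lemma~\ref{lem:arbitrary-knapsack}, whose packings already lose a constant fraction and presumably free an $\eps$-strip at negligible extra cost. The freed horizontal strip of height $\eps h\ge 2\eps^2 h+(\eps-2\eps^2)h$ then holds the rotated $\OPT_H^v\cup\OPT_S^v$ box together with slack for resource augmentation (Lemma~\ref{lem:resource-augmentation}). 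This converts the box into $O_\eps(1)$ containers with profit $(1-\eps)(p(\OPT_H^v)+p(\OPT_S^v))$. The strip is placed between $\mathcal{R}$'s content and $\hugeitem$, so it is disjoint from the $\Ls$.

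The remaining items, $\OPT_L^v$ and $\OPT_V^v$, are simply dropped, since the statement does not count them. Small items from the vertical arms are covered by the NFDH and area-container argument inside the same strip. The main difficulty is to create the $\Theta(\eps h)$ free strip without losing more than a $(1-O(\eps))$ factor on the $22/43$ guarantee. My plan is to check that the proof of Lemma~\ref{lem:arbitrary-knapsack} is robust to a height reduction of $\mathcal{R}$ by a factor $1-\eps$. Otherwise, I would fall back to taking the height-$\eps h$ strip out of the container part of that packing, using a random-strip deletion only on the container-packed items.
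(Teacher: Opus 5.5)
There is a genuine gap, and it sits at the step you flag yourself: freeing a horizontal strip of height $\Theta(\eps h)$ inside $\mathcal{R}$ for the rotated items of $\OPT_H^v\cup\OPT_S^v$. Neither of your routes does this at cost $(1-O(\eps))$ relative to the $22/43$ guarantee, and in general it cannot be done.

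\begin{itemize}
\item \emph{Random strip in $\mathcal{R}$.} A random strip of height $\eps h$ hits every item whose height is a constant fraction of $h$ with constant probability. Such items (in $\OPT_L^h\cup\OPT_V^h$) may carry all the profit.
\item \emph{Robustness of Lemma~\ref{lem:arbitrary-knapsack} to height $(1-\eps)h$.} This is false. Take $h_t=0$ and let the bottom arm be filled by a single item of size $N\times h$. It is not huge, since $h\le(1/2-\eps)N$, and rotating it does not help, since $N>h$. This item can carry all of $p(\OPT_{\text{top}})+p(\OPT_{\text{bottom}})$, yet it fits in no box of height $(1-\eps)h$.
\item \emph{Your fallback.} A random strip applied only to container-packed items deletes this item too, because it sits in a container.
\end{itemize}

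So in this instance your construction cannot pack the vertical-arm items and still retain $(22/43-\eps)(p(\OPT_{\text{top}})+p(\OPT_{\text{bottom}}))$, although the lemma asserts both.

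The missing idea is that $\OPT_H^v\cup\OPT_S^v$ never has to leave the vertical arms.

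\begin{itemize}
\item By definition each such item has height at most $\eps^2N$ (an absolute bound, not one relative to the arm). It also does not lie entirely in the top or bottom arm. Hence the ones in the left arm lie in a box of width $w_\ell$ and height $h(\hugeitem)+2\eps^2N$.
\item Because these items are so short, deleting those hit by a random horizontal strip of height $\eps N$ costs only an $O(\eps)$-fraction of their profit.
\item Shift the survivors above the strip down by $\eps^2N$ and those below it up by $\eps^2N$. They then fit in height $h(\hugeitem)$, with an empty strip of height $(\eps-2\eps^2)N$ left for Lemma~\ref{lem:resource-augmentation}.
\item Do the same in the right arm. Then $\hugeitem$ and the two arm boxes form an $N\times h(\hugeitem)$ box, which you place at the top of $K$.
\item The full $N\times h$ box below holds the translated top and bottom arms, exactly as in your first paragraph. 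Hand it to Lemma~\ref{lem:arbitrary-knapsack} unchanged.
\end{itemize}

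No space ever needs to be freed inside $\mathcal{R}$. The random strip is applied where it is cheap, to items of height at most $\eps^2N$ in the vertical arms, rather than where it is expensive.
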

\begin{proof}
    Consider the packing of the items of $\OPT_H^v \cup \OPT_S^v$ lying in the left arm. Note that by our classification of items, no such item completely lies in the top or bottom arm (otherwise they would have been placed in the set $\OPT_{\text{top}}$ or $\OPT_{\text{bottom}}$). Therefore, all these items can be completely packed into a box of width $w_{\ell}$ and height $h(\hugeitem)+2\epsilon^2N$, since the height of each such item is bounded by $\epsilon^2N$. We draw a random horizontal strip $\mathcal{S}$ of height $\epsilon N$ inside the box and discard all items intersecting $\mathcal{S}$. Clearly, the profit of the discarded items is at most $O(\epsilon)\cdot(p(\OPT_H^v)+p(\OPT_S^v))$. Now we push down each remaining item lying above the strip $\mathcal{S}$ by $\epsilon^2N$ and push up the items lying below $\mathcal{S}$ by $\epsilon^2N$, so that the remaining items now fit inside a box of width $w_{\ell}$ and height at most $h(\hugeitem)$. Observe that there is still an empty horizontal strip of height $(\epsilon - 2\epsilon^2)N$ inside the box, which we utilize for resource augmentation (Lemma \ref{lem:resource-augmentation}) in order to obtain a container packing. Doing an analogous procedure in the right arm, we obtain a container packing of profit at least $(1-O(\epsilon))(p(\OPT_H^v)+p(\OPT_S^v))+p(\hugeitem)$ inside a box $B$ of width $N$ and height $h(\hugeitem)$.

    Next we reposition the items inside the knapsack so that the upper boundary of the box $B$ now touches the top edge of the knapsack, and the items of $\OPT_{\text{top}}\cup \OPT_{\text{bottom}}$ are packed below $B$ inside a box of width $N$ and height $h$. Thus, by applying Lemma \ref{lem:arbitrary-knapsack}, we obtain an \lc-packing inside the knapsack with the desired profit guarantee of the lemma.
\end{proof}

Combining Lemmas \ref{lem:rotate-vertical}, \ref{lem:klaus}, and \ref{lem:lc-packing}, we obtain the following.

\begin{lemma}
\label{lem:keep-massive}
    There exists an \lc-packing of profit at least $(44/87-\epsilon)(p(\OPT_{\text{top}})+p(\OPT_{\text{bottom}})+p(\OPT_{\text{left}})+p(\OPT_{\text{right}}))+p(\hugeitem)$.
\end{lemma}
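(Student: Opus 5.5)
The plan is to take a convex combination of the three packings from Lemmas \ref{lem:rotate-vertical}, \ref{lem:klaus} and \ref{lem:lc-packing}, exactly as in the earlier averaging arguments (e.g., the proof of Lemma \ref{lem:optlflone}). First I group the relevant profits into four quantities. Let
\[
A := p(\OPT_H^h)+p(\OPT_S^h),\qquad B := p(\OPT_L^h)+p(\OPT_V^h),
\]
\[
C := p(\OPT_H^v)+p(\OPT_S^v),\qquad D := p(\OPT_L^v)+p(\OPT_V^v).
\]
By the classification of items in the arms, $p(\OPT_{\text{top}})+p(\OPT_{\text{bottom}}) = A+B$ and $p(\OPT_{\text{left}})+p(\OPT_{\text{right}}) = C+D$. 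In this notation the three lemmas give, up to $(1-\epsilon)$ factors and always with the full term $p(\hugeitem)$:
\begin{itemize}
\item Lemma \ref{lem:rotate-vertical}: a container packing of profit $A + C + D$;
\item Lemma \ref{lem:klaus}: a container packing of profit $B + D$;
\item Lemma \ref{lem:lc-packing}: an \lc-packing of profit $C + \frac{22}{43}(A+B)$.
\end{itemize}

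Next I choose weights $x,y,z\ge 0$ with $x+y+z=1$, multiplying the first, second and third inequality by $x$, $y$ and $z$ respectively. The resulting coefficients are:
\begin{itemize}
\item $A$: $x+\frac{22}{43}z$;
\item $B$: $y+\frac{22}{43}z$;
\item $C$: $x+z$, which dominates the coefficient of $A$ and so needs no separate attention;
\item $D$: $x+y$.
\end{itemize}
By symmetry between $A$ and $B$ I take $x=y$. Equating $2x = x+\frac{22}{43}z$ gives $x=\frac{22}{43}z$, and the normalization $2x+z=1$ then gives $z=\frac{43}{87}$ and $x=y=\frac{22}{87}$. Every coefficient becomes at least $\frac{44}{87}$.

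Concretely, I would multiply the three bounds by $22$, $22$ and $43$ and add them. This yields
\[
87\, p(\OPT_{\text{best}}) \ge (1-O(\epsilon))\bigl(44(A+B+D) + 65\,C\bigr) + 87\,p(\hugeitem),
\]
where the coefficient of $C$ is $22+43=65\ge 44$. Since every term is nonnegative, I can drop $C$'s surplus and divide by $87$, which gives the claimed bound $(44/87-\epsilon)(A+B+C+D)+p(\hugeitem)$.

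The only conceptual point, and the step I would check most carefully, is that the maximum is taken over \lc-packings, while two of the three inputs are plain container packings. I would argue that any container packing with $O_\epsilon(1)$ containers is itself an \lc-packing with an empty (or degenerate, e.g. unit-size unused) $\Ls$-corridor, so all three bounds apply to the best \lc-packing. If degenerate corridors are disallowed by Definition \ref{def:l-packing}, I would state the lemma as a bound on the better of the best container packing and the best \lc-packing, which is all that is needed downstream. A second detail is that the $(1-\epsilon)$ and $(22/43-\epsilon)$ factors combine into a single $44/87-O(\epsilon)$; rescaling $\epsilon$ absorbs this. Beyond these two points the argument is routine arithmetic.
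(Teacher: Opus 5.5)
Your proposal is correct and is the paper's proof: it multiplies the bounds of Lemmas \ref{lem:rotate-vertical}, \ref{lem:klaus} and \ref{lem:lc-packing} by $22$, $22$ and $43$, adds them, and obtains coefficient $44$ on every part except $\OPT_H^v\cup\OPT_S^v$. That part gets $65$, and the surplus is simply dropped. On your conceptual point, the paper implicitly treats a container packing as an \lc-packing with a degenerate $\Ls$-corridor, which is your first resolution.
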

\begin{proof}
    Let $\OPT_{L\& C^*}$ be the maximum profitable \lc-packing. We have the following bounds.
    \begin{adjustwidth}{-0.5cm}{-1cm}
\begin{alignat*}{2}
22p(\OPT_{L\& C^*}) 
    &\ge (1-\epsilon)\big(22p(\OPT_H^h) + 22p(\OPT_S^h) + 22p(\OPT_{\text{left}}) \\
    &\qquad\qquad\quad  + 22p(\OPT_{\text{right}})\big) + 22p(\hugeitem) &\quad& \text{[Lemma \ref{lem:rotate-vertical}]}\\    
22p(\OPT_{L\& C^*}) 
    &\ge (1-\epsilon)\big(22p(\OPT_V^h) + 22p(\OPT_V^v)\big) + 22p(\OPT_L^h)\\
    &\qquad\qquad\quad + 22p(\OPT_L^v) + 22p(\hugeitem) &\quad& \text{[Lemma \ref{lem:klaus}]} \\    
43p(\OPT_{L\& C^*}) 
    &\ge (1-O(\epsilon))\big(43p(\OPT_H^v) + 43p(\OPT_S^v) + 22p(\OPT_{\text{top}})\\
    &\qquad\qquad\quad + 22p(\OPT_{\text{bottom}})\big) + 43p(\hugeitem) &\quad& \text{[Lemma \ref{lem:lc-packing}]}
\end{alignat*}
\end{adjustwidth}
    Since $\OPT_{\text{top}}\cup \OPT_{\text{bottom}} = \OPT_L^h \cup \OPT_H^h \cup \OPT_V^h \cup \OPT_S^h$ and $\OPT_{\text{left}}\cup \OPT_{\text{right}} = \OPT_L^v \cup \OPT_H^v \cup \OPT_V^v \cup \OPT_S^v$, adding the above inequalities gives
    \[ 87p(\OPT_{L\& C^*})\ge (1-O(\epsilon))(44p(\OPT_{\text{top}})+44p(\OPT_{\text{bottom}})+44p(\OPT_{\text{left}})+44p(\OPT_{\text{right}}))+87p(\hugeitem),\]
    which completes the proof.
\end{proof}

Finally, combining Lemma \ref{lem:delete-big-item} together with Lemma \ref{lem:keep-massive} proves \Cref{lem:LC-packing-hugeitem}.

\LCpackinghugeitem*
\begin{proof}
    Again letting $\OPT_{L\& C^*}$ denote the maximum profitable \lc-packing, from Lemma \ref{lem:keep-massive}, we have $p(\OPT_{L\& C^*})\ge (44/87-\epsilon)(p(\OPT)-p(\hugeitem))+p(\hugeitem)$, and therefore
    \[ 87p(\OPT_{L\& C^*})\ge (44-87\epsilon)p(\OPT)+(43+87\epsilon)p(\hugeitem)\]
    Also, from Lemma \ref{lem:delete-big-item}, we have
    \[43p(\OPT_{L\& C^*})\ge (1-\epsilon)(43p(\OPT)-43p(\hugeitem))\]
    Adding the above two inequalities, we get $130p(\OPT_{L\& C^*})\ge (87-O(\epsilon))p(\OPT)$, thus completing the proof.
\end{proof}

It remains to prove \Cref{lem:arbitrary-knapsack}.

\subsubsection{Proof of \Cref{lem:arbitrary-knapsack}}
\label{sec:arbitrary-knapsack-proof}

We apply the corridor decomposition framework (Lemma \ref{lem:corr-decomposition}) to the packing inside $\mathcal{R}$, thus obtaining a partition of $\RR$ into $O_{\epsilon, \epsilon_{\text{large}}}(1)$ corridors (there is a subtle technicality here -- since Lemma \ref{lem:corr-decomposition} was defined for a square knapsack, we cannot directly apply it to $\RR$; so we first scale $\RR$ together with the packing inside it so that $h(\RR)=N$, apply Lemma \ref{lem:corr-decomposition} to this scaled packing, and then rescale $\RR$ back to its original size). Note that the height of any horizontal subcorridor is bounded by $\epsilon_{\text{large}}\cdot h(\mathcal{R})$ and the width of any vertical subcorridor is at most $\epsilon_{\text{large}}N$. Similar to \Cref{sec:corr-decomposition-appendix}, the length of a horizontal (resp. vertical) subcorridor is defined as the length of the shorter horizontal (resp. vertical) edge bounding the subcorridor. A horizontal (resp. vertical) subcorridor $C$ is said to be long if its length exceeds $N/2$ (resp. $\frac{1}{2}h(\RR)$). We define $I_{LT}$ to be the thin items lying in long subcorridors, and let $I_{ST}$ be the remaining thin items in short subcorridors. Analogously, we classify the fat items into $I_{LF}$ and $I_{SF}$. Let $\OPT_{L\&C^*}$ denote the maximum profitable \lc-packing. Note that the guarantees of Lemmas \ref{lem:lcpacking-focs} and \ref{lem:new-processing} on $p(\OPT_{L\& C^*})$ continue to hold.

    \begin{figure}
    \centering
    \includegraphics[width=0.6\linewidth]{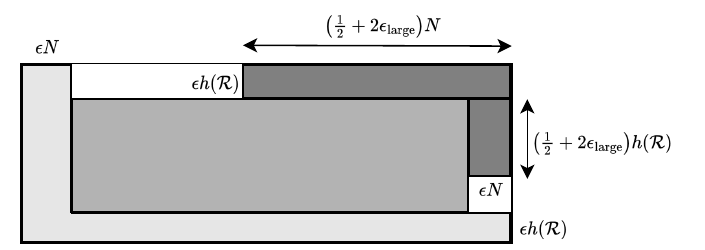}
    \caption{The light gray region corresponds to the boundary $\Ls$. Items of $I_{LT_{\text{short}}}\cup I_{ST}$ are packed in the dark gray boxes. The gray box packs items of $I_{SF}$.}
    \label{fig:L-packing}
\end{figure}

Recall that the packings in Lemmas \ref{lem:lcpacking-focs} and \ref{lem:new-processing} are in fact container packings, i.e., they correspond to the case of a degenerate \lcor. Let $I_{LT_{\text{long}}}\subseteq I_{LT}$ be the set of horizontal items with width more than $(\frac{1}{2}+2\epslarge)N$ and vertical items of height more than $(\frac{1}{2}+2\epslarge)h(\mathcal{R})$. Let $I_{LT_{\text{short}}} = I_{LT}\setminus I_{LT_{\text{long}}}$.
    \begin{lemma}
    \label{lem:pack-st}
        There exists a container packing of a subset of items in $I_{LT_{\text{short}}}\cup I_{ST}$ of profit at least $(1-\epsilon)p(I_{LT_{\text{short}}}\cup I_{ST})$ into two boxes of dimensions $(\frac{1}{2}+2\epslarge)N\times \epsilon h(\RR)$ and $\epsilon N \times (\frac{1}{2}+2\epslarge)h(\RR)$.
    \end{lemma}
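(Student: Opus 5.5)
We will split $I_{LT_{\text{short}}}\cup I_{ST}$ by orientation and pack each part into one of the two boxes. Horizontal items of $I_{LT_{\text{short}}}\cup I_{ST}$ (those from horizontal subcorridors) go into the box of size $(\frac12+2\epslarge)N\times \epsilon h(\RR)$. Vertical items go into the box of size $\epsilon N\times(\frac12+2\epslarge)h(\RR)$. No rotation is needed, so orientations are preserved.

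\textbf{Length bounds.} For horizontal items, the length bound follows from the definitions. An item of $I_{ST}$ lies in a short horizontal subcorridor of length at most $N/2$. Since the subcorridor is a maximal rectangle whose longer edge exceeds its shorter edge by at most twice the corridor width, the item's width is at most $(\frac12+2\epslarge)N$. An item of $I_{LT_{\text{short}}}$ has width at most $(\frac12+2\epslarge)N$ by the very definition of $I_{LT_{\text{long}}}$. Symmetrically, vertical items have height at most $(\frac12+2\epslarge)h(\RR)$; here we use that the width of vertical subcorridors after rescaling is at most $\epslarge h(\RR)$, or we simply absorb it, as in the bound in Lemma \ref{lem:bound-thin-items}.

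\textbf{Thickness bounds.} As in Lemma \ref{lem:bound-thin-items}, each thin item in a subcorridor lies inside the bottommost strip of thickness $\eps_{\text{box}}$ times the subcorridor's width. There are at most $c_{\eps,\epslarge}$ subcorridors. Hence the thin horizontal items in each subcorridor can be stacked into a strip of height at most $\eps_{\text{box}}\epslarge h(\RR)$, and the total over all subcorridors is at most $\eps^4 h(\RR)$. Likewise, the vertical thin items have total width at most $\eps^4 N$.

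\textbf{Packing and resource augmentation.} Stacking the horizontal thin items of each subcorridor on top of one another (each strip is packed as in its subcorridor, translated) gives a packing into a box of width $(\frac12+2\epslarge)N$ and height $\eps^4 h(\RR)$. This leaves an empty portion of height $(\eps-\eps^4)h(\RR)\ge \frac{\eps}{2}\cdot\eps^4 h(\RR)$. We then apply Lemma \ref{lem:resource-augmentation} with $\epsilon_{\text{ra}}=\eps$ to the items in the $\eps^4h(\RR)$-high sub-box, augmented in height. This yields a container packing with $O_\eps(1)$ containers inside the $\epsilon h(\RR)$-high box, losing at most an $O(\eps)$ fraction of the profit. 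The vertical items are handled symmetrically.

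\textbf{Main obstacle.} The delicate step is verifying that each thin strip of a subcorridor is truly a rectangle of width at most the claimed bound. The private region and its boundary curves may extend the strip horizontally into the bend region. We handle this by noting that the bottommost region $\RR_1$ has width at most the longer edge of the subcorridor, which is at most the length plus $2\epslarge N$. Rescaling $\eps$ absorbs the constants.
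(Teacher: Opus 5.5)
Your overall route is the paper's: split $I_{LT_{\text{short}}}\cup I_{ST}$ by orientation, bound the total thickness of the thin strips by $\eps_{\text{box}}\,c_{\eps,\epslarge}\,h(\RR)\le\eps^4 h(\RR)$ (resp.\ $\eps^4 N$), pack into a box of length $(\frac12+2\epslarge)N$ (resp.\ $(\frac12+2\epslarge)h(\RR)$), and use the spare thickness for Lemma~\ref{lem:resource-augmentation}. However, the concrete packing step you give fails for part of the items.

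You translate each thin strip as a whole. In your ``main obstacle'' paragraph you bound a strip's width by the longer edge of its subcorridor, i.e.\ by its length plus $2\epslarge N$. That bound is at most $(\frac12+2\epslarge)N$ only for \emph{short} subcorridors. The items of $I_{LT_{\text{short}}}$ come from \emph{long} subcorridors, whose thin strips can be almost $N$ wide. Such a strip can contain several of these items side by side, e.g.\ two items of width $0.4N$ at the same height. Then the translated strip does not fit into a box of width $(\frac12+2\epslarge)N$. The per-item width bound from the definition of $I_{LT_{\text{long}}}$ does not help once you insist on moving whole strips. The same problem occurs for vertical items from long vertical subcorridors.

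The fix is short. Cut each thin strip of height $t$ coming from a long subcorridor along its vertical midline.
\begin{itemize}
\item Items lying entirely to the left or entirely to the right of the midline fit in half-strips of width at most $N/2$.
\item Items crossing the midline are all met by one vertical line, so their heights sum to at most $t$. Each has width at most $(\frac12+2\epslarge)N$, so they can be stacked in a box of that width and height $t$.
\end{itemize}
This triples the required height, to at most $3\eps^4 h(\RR)\le\frac{\eps}{2}h(\RR)$. That still leaves enough room for resource augmentation, and the vertical items are handled symmetrically. The paper itself asserts the packing into height $\eps^4 h(\RR)$ without detail, so this is exactly the step that needs an argument.

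A minor slip as well: the amount by which a short vertical subcorridor's longer edge exceeds its length is controlled by the heights of the adjacent horizontal subcorridors (each at most $\epslarge h(\RR)$). It is not controlled by the width of vertical subcorridors.
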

    \begin{proof}
        Recall that the items of $I_{LT}\cup I_{ST}$ were obtained by processing the corridors with parameter $\epsilon_{\text{box}}$ (i.e., the items lying inside a region whose height was only $\epsilon_{\text{box}}$-fraction of the subcorridor were marked as thin). 
        Therefore, the total height of the horizontal items in $I_{LT_{\text{short}}}\cup I_{ST}$ is at most $\epsilon_{\text{box}}h(\RR)\cdot c_{\epsilon,\epsilon_{\text{large}}} \le \epsilon^4\cdot h(\RR)$, and they can be packed into a box of width $(\frac{1}{2}+2\epslarge)N$ and height $\epsilon^4\cdot h(\RR)$. We increase the height of the box to $\epsilon h(\RR)$ and utilize the extra height of $(\epsilon-\epsilon^4)\cdot h(\RR)$ in order to obtain a container packing using Lemma \ref{lem:resource-augmentation}. In an analogous way, the total width of the vertical items in $I_{LT_{\text{short}}}\cup I_{ST}$ can be bounded by $\epsilon^4 N$, and there exists a container packing of (a subset of) these items into a box of height $(\frac{1}{2}+2\epslarge)h(\mathcal{R})$ and width $\epsilon N$.
    \end{proof}

    \begin{lemma}
    \label{lem:pack-sf}
        There exists a container packing of a subset of items in $I_{SF}$ of profit at least $(1/2-O(\epsilon))p(I_{SF})$ inside a box of dimensions $(1-2\epsilon)N\times (1-2\epsilon)h(\RR)$.
    \end{lemma}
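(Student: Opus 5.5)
We plan to split $I_{SF}$ by orientation and use a random-strip deletion in each dimension together with a halving argument. Every item of $I_{SF}$ lies in a short subcorridor. Hence a horizontal item has width at most $(\frac12+2\epslarge)N$ and height at most $\epssmall h(\RR)$. A vertical item has height at most $(\frac12+2\epslarge)h(\RR)$ and width at most $\epssmall N$. Let $I_{SF}^{H}$ and $I_{SF}^{V}$ denote the horizontal and vertical items of $I_{SF}$.

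We first treat $I_{SF}^{H}$, starting from its original packing inside $\RR$.
\begin{enumerate}
\item Delete all items meeting a random horizontal strip of height $3\epsilon h(\RR)$. Each item is hit with probability $O(\epsilon)$, so in expectation only an $O(\epsilon)$-fraction of profit is lost.
\item Shift the items above the strip down. This packs the survivors into height $(1-3\epsilon)h(\RR)$ and leaves an empty band of height $\epsilon h(\RR)$.
\item Draw the vertical line $x=N/2$. The survivors lying entirely on the left of it, and those lying entirely on the right, each fit into width $N/2$. Those crossing it fit into an interval of width $(1+4\epslarge)N/2$ centered at $x=N/2$, since each has width at most $(\frac12+2\epslarge)N$.
\item Among these three groups, we need a subset of profit about half of the total that fits into width $(1-2\epsilon)N$.
\end{enumerate}

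Step 4 is the main obstacle. The naive three-way pigeonhole gives only $1/3$. To get $1/2$, we intend to compare two packings. Packing A keeps the left items and the crossing items. Packing B keeps the right items and the crossing items.

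Both A and B fit in width about $(\frac12+\frac12+2\epslarge)N$, which is slightly too wide. We fix this by additionally deleting a random vertical strip of width $3\epsilon N$ inside the left (respectively right) half. Since horizontal items there have width at most $N/2$, such an item is hit with probability only $\le \frac{1}{2}+O(\epsilon)$. This is too lossy, so instead we use the simpler pair: left $\cup$ right, which fits within width $N$, versus crossing alone. Then
\[
\max\{p(\mathrm{left})+p(\mathrm{right}),\,p(\mathrm{cross})\}\ge \tfrac12 p(I_{SF}^{H}).
\]
The left $\cup$ right option still needs $2\epsilon N$ of slack. We obtain it by deleting a random vertical strip of width $3\epsilon N$ and shifting. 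This costs only $O(\epsilon)$ in expectation for the left and right groups, because we may choose the strip uniformly over $[0,N]$ and each item is hit with probability proportional to its width plus $3\epsilon N$ over $N$. That is fine only up to constant factors, so we must be careful. Alternatively, we give up the left/right split and instead keep a random window of width $(1-2\epsilon)N$ inside $[0,N]$. Then an item of width at most $(\frac12+2\epslarge)N$ survives with probability at least roughly $\frac{(1-2\epsilon)N-w}{(2\epsilon)N}$-type bounds, which again is not obviously $1/2$.

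We will therefore commit to the cleanest route: delete a random vertical strip of width $2\epsilon N$ chosen cyclically, combined with the fact that every item has width at most $(\frac12+2\epslarge)N$. An item survives with probability $1-\frac{w+2\epsilon N}{N}\ge \frac12-O(\epsilon)$, giving $(\frac12-O(\epsilon))p(I_{SF}^{H})$ with both dimensions shrunk. The cyclic wrap-around is handled by cutting at the strip and swapping the two pieces, which keeps the items intact because items meeting the strip were deleted.

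We apply the same argument to $I_{SF}^{V}$ with the roles of the axes exchanged, and take the more profitable of $I_{SF}^{H}$ and $I_{SF}^{V}$. That choice alone would lose another factor of two, so instead we apply both random strips to all of $I_{SF}$ simultaneously: one horizontal strip of height $2\epsilon h(\RR)$ and one vertical strip of width $2\epsilon N$. A horizontal item survives the vertical strip with probability $\ge\frac12-O(\epsilon)$ and the horizontal strip with probability $1-O(\epsilon)$; a vertical item is symmetric. Hence each item survives with probability $\frac12-O(\epsilon)$. After the cut-and-swap in both directions, the survivors lie in $(1-2\epsilon)N\times(1-2\epsilon)h(\RR)$. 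Finally we apply Lemma~\ref{lem:box-to-container} to convert this into a container packing. The resource augmentation slack it needs is obtained by applying one further strip deletion of width $\epsilon$ in the long direction, at cost $O(\epsilon)$.
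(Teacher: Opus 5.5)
Your final committed route is the paper's argument. You delete every item of $I_{SF}$ that meets a random horizontal strip or a random vertical strip. Each item is thin in one dimension and has length at most $(\frac12+2\epslarge)$ times the corresponding side of $\RR$ in the other, so a union bound gives survival probability at least $\frac12-O(\epsilon)$ for every item. Collapsing the two emptied strips then shrinks both dimensions. Cyclic versus non-cyclic placement of the strips makes no difference. The abandoned attempts before that should simply be removed. In particular, the claim that a random vertical strip costs only $O(\epsilon)$ on the left/right groups is false for items of width close to $N/2$.

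The last step, however, fails as written, for two reasons.

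\emph{First}, Lemma~\ref{lem:box-to-container} is the wrong tool here.
\begin{itemize}
\item It is stated for a set of items that are all horizontal or all vertical.
\item Its argument relies on each cut line meeting only $O(1)$ items. Your survivors mix both orientations, and the tall items cross many of its horizontal cut lines, so this breaks.
\item It also discards a set $I''_B$ of $O_{\delta,\epslarge}(1)$ items whose profit is not controlled. That is affordable in the cardinality case but not here, where you must retain $(\frac12-O(\epsilon))p(I_{SF})$ of weighted profit.
\end{itemize}
Once you have an empty strip, invoke Lemma~\ref{lem:resource-augmentation} directly. It loses only an $O(\epsilon)$-fraction and discards nothing else.

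\emph{Second}, the slack for resource augmentation cannot come from an additional, independently placed strip. Whatever its orientation, such a strip is crossed by the items elongated across it with probability about $\frac12$, however thin the strip is. For those items, survival would drop to roughly $\frac14$ rather than lose $O(\epsilon)$. The fix is what the paper does: widen one of the strips you already delete, e.g.\ use a vertical strip of width $3\epsilon N$. This changes only the $O(\epsilon)$ terms, and the survivors then fit in $(1-3\epsilon)N\times(1-2\epsilon)h(\RR)$. The extra $\epsilon N$ of width is then used for Lemma~\ref{lem:resource-augmentation}, which yields the required $(1-2\epsilon)N\times(1-2\epsilon)h(\RR)$ box.
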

    \begin{proof}
        \dk{Consider the packing of the items of $I_{SF}$ inside the box $\RR$. Let $\SM^h$ be a random horizontal strip of height $2\epsilon\cdot h(\RR)$, and $\SM^v$ be a random vertical strip of width $3\epsilon N$ inside $\RR$. We delete items of $I_{SF}$ intersecting either $\SM^h$ or $\SM^v$. Let $I'_{SF}$ be the items that are not deleted. Each horizontal item in $I_{SF}$ is intersected by $\SM^h$ with probability at most $O(\epsilon)$ and by $\SM^v$ with probability at most $1/2+O(\epsilon)$. Therefore, it survives with probability at least $1/2-O(\epsilon)$. In a similar way, each vertical item in $I_{SF}$ survives with probability at least $1/2-O(\epsilon)$. Hence the expected profit of $I'_{SF}$ is at least $(1/2-O(\epsilon))p(I_{SF})$, and $I'_{SF}$ can be packed inside a $(1-3\epsilon)N\times (1-2\epsilon)h(\RR)$ box. By increasing the width of this box by $\epsilon N$, we can obtain a container packing of (a subset of) $I'_{SF}$ using Lemma \ref{lem:resource-augmentation}.}
    \end{proof}

    \dk{We place the boundary \lcor-corridor aligned with the left and bottom boundary of $\RR$, with the vertical arm having a width of $\epsilon N$, and the horizontal arm having a height of $\epsilon h(\RR)$, as shown in Figure \ref{fig:L-packing}. Next, we place the two boxes that pack items from $I_{LT_{\text{short}}}\cup I_{ST}$ given by Lemma \ref{lem:pack-st}, aligned with the top and right boundary of the knapsack. This leaves an empty rectangular region of width $(1-2\epsilon)N$ and height $(1-2\epsilon)h(\mathcal{R})$ at the interior of the knapsack where we place the box that packs items of $I_{SF}$ (Lemma \ref{lem:pack-sf}).}

    \dk{We next pack items from $\ILTlong$ into the boundary \lcor-corridor. Consider the packing of $\ILTlong$ in the optimal solution. For each horizontal item $i\in \ILTlong$ that has no vertical item to its top (resp. bottom), we shift $i$ up (resp. down) until it touches another horizontal item or the top (resp. bottom) boundary of $\RR$. We iterate this process as long it is possible to move some horizontal item. We then perform a symmetric process for the vertical items of $\ILTlong$. At the end, the items of $\ILTlong$ are packed in four \textit{stacks} at the boundaries of the knapsack. Recall that we need to ensure the condition that either $h(i)>w(i)$ or $w(i)\ge h(i)$ holds for all items $i$ packed in the vertical arm of the \lcor.}

    To this end, we let $I_t, I_b, I_{\ell}, I_r$ be the items of $\ILTlong$ lying in the top, bottom, left and right stacks, respectively. We further classify $I_{\ell}$ into two categories: let $I_{\ell}^{h>w} := \{i \in I_{\ell} \mid h(i) > w(i)\}$, and $I_{\ell}^{w\ge h} := I_{\ell}\setminus I_{\ell}^{h>w}$. Analogously the items in $I_r$ are classified into $I_r^{h>w}$ and $I_r^{w\ge h}$. \dk{Note that since the items in $I_{\ell}\cup I_r$ occupied only an $\epsilon_{\text{box}}$-fraction of the width of a subcorridor, and the number of subcorridors is bounded by $c_{\epsilon,\epslarge}$, the total width of $I_{\ell}\cup I_r$ is at most $\epsilon_{\text{box}}N\cdot c_{\eps,\epslarge} \le \epsilon^4N$ by our choice of $\eps_{\text{box}}$. Therefore, if $I_t \cup I_b$ is discarded, we can pack the items in $I_{\ell}\cup I_r$ one beside the other inside a box of width $\epsilon^4N$. Since the width of the vertical arm of the \lcor~is $\epsilon N$, we can use the remaining width of $(\epsilon-\epsilon^4)N$ in order to obtain a container packing of a subset of $I_{\ell}\cup I_r$ of profit at least $(1-\epsilon)p(I_{\ell}\cup I_r)$ (Lemma \ref{lem:resource-augmentation}), which we can place inside the vertical arm of the \lcor. This gives the following lemma.}

    \begin{lemma}
    \label{lem:all-vertical}
        We have $p(\OPT_{L\& C^*})\ge (1-\epsilon)(p(I_{\ell})+p(I_r)+p(I_{ST}\cup I_{LT_{\text{short}}})+\frac{1}{2}p(I_{SF}))$.
    \end{lemma}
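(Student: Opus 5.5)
We build a single \lc-packing inside $\RR$ that simultaneously contains the four pieces collected by the lemma, using the layout of Figure~\ref{fig:L-packing}. The boundary $\lcor$ is aligned with the left and bottom edges of $\RR$, with vertical arm of width $\epsilon N$ and horizontal arm of height $\epsilon h(\RR)$. We choose $H_{\Ls}$ as large as the free space allows; in particular we make $H_{\Ls}$ at least the height of the items placed in the vertical arm. Since $h(\RR)\le N/2$, Property~(3) of Definition~\ref{def:l-packing} holds. The horizontal arm stays empty; only the vertical arm is used.

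\textbf{Step 1 (the vertical arm).} Discard $I_t\cup I_b$. As argued just before the statement, the items of $I_\ell\cup I_r$ have total width at most $\epsilon^4N$. They can therefore be placed side by side in a box of width $\epsilon^4N$. The remaining width $(\epsilon-\epsilon^4)N$ of the vertical arm is used for resource augmentation (Lemma~\ref{lem:resource-augmentation}), which retains profit $(1-\epsilon)p(I_\ell\cup I_r)$. The result lies in the vertical arm, viewed as a rectangular region.

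\textbf{Step 2 (the remaining pieces).} Place the two boxes of Lemma~\ref{lem:pack-st} along the top and right boundaries of $\RR$. They contain $(1-\epsilon)p(I_{ST}\cup I_{LT_{\text{short}}})$. Place the $(1-2\epsilon)N\times(1-2\epsilon)h(\RR)$ box of Lemma~\ref{lem:pack-sf}, with profit $(1/2-O(\epsilon))p(I_{SF})$, in the interior. Disjointness follows from the dimensions:
\begin{itemize}
\item the two strips of thickness $\epsilon N$ and $\epsilon h(\RR)$ are taken by the $\Ls$ arms on the left and bottom;
\item two strips of the same thicknesses are taken by the boxes on the right and top;
\item the boxes have lengths $(\tfrac12+2\epslarge)$ times the side of $\RR$, so they do not collide with each other or with the arms.
\end{itemize}
All of these are container packings, so together with the $\Ls$ they form a candidate \lc-packing, and summing the profits gives the claimed bound.

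\textbf{Main obstacle: the vertical arm's defining conditions.} The difficulty is verifying that the content of the vertical arm satisfies Properties~(4) and~(5). For Property~(4), we need every item in the arm to have $h^*(i)>\frac12 H_{\Ls}$. Items of $I_\ell\cup I_r\subseteq \ILTlong$ are vertical with height more than $(\frac12+2\epslarge)h(\RR)$, and $H_{\Ls}\le h(\RR)$, so this holds.

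Property~(5) is harder, because $I_\ell\cup I_r$ may mix items with $h>w$ and items with $w\ge h$. Resource augmentation only repacks items into vertical containers, so the original orientations are kept. If Property~(5) is read strictly (all items in the arm have the same orientation type), we split: pack only $I_\ell^{h>w}\cup I_r^{h>w}$ or only $I_\ell^{w\ge h}\cup I_r^{w\ge h}$ in the arm. The other class goes into spare space:
\begin{itemize}
\item Items with $w\ge h$ have height more than $h(\RR)/2$, hence width above $h(\RR)/2$, and width at most $\epsilon^4 N$.
\item So they fit, rotated, into a horizontal container inside the slack of the top box of Lemma~\ref{lem:pack-st}.
\item That box has height $\epsilon h(\RR)$, but only $\epsilon^4 h(\RR)$ of it is used.
\item Rotated, the total height of these items is at most $\epsilon^4 N$. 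If this is too much for the slack, we shrink $\epsilon_{\text{box}}$.
\end{itemize}
If this fails, the fallback is to rotate all items of the arm into a uniform orientation, which is feasible because they are long and thin. I expect this orientation bookkeeping, rather than the profit accounting, to require the most care.
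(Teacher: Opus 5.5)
Your Steps 1 and 2 are exactly the paper's proof. You discard $I_t\cup I_b$, stack $I_\ell\cup I_r$ in width $\epsilon^4N$, use the rest of the $\epsilon N$-wide left strip for resource augmentation (Lemma~\ref{lem:resource-augmentation}), and place the boxes of Lemmas~\ref{lem:pack-st} and~\ref{lem:pack-sf} as in Figure~\ref{fig:L-packing}.

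What is wrong is the ``main obstacle'' part. The obstacle does not exist. Lemma~\ref{lem:resource-augmentation} outputs a \emph{container packing}, so the surviving items of $I_\ell\cup I_r$ sit in $O_\epsilon(1)$ containers of the L\&C$^*$-packing, not inside the corridor $\lcor$. The corridor itself can be taken degenerate, just as the paper regards container packings as L\&C$^*$-packings. Properties~(4) and~(5) of Definition~\ref{def:l-packing} constrain only items packed in $\lcor$ itself, so they are vacuous here. Even if you insisted on using the arm, there is an easy fix. Keep one orientation class in a narrowed arm, and put the other class, unrotated, into its own vertical container beside it in the same $\epsilon N$-wide strip. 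This works because their total width is only $\epsilon^4N$.

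The repair you actually propose fails.
\begin{itemize}
\item An item $i\in I_\ell\cup I_r$ with $w(i)\ge h(i)$ lies in $\ILTlong$, so $h(i)>(\frac12+2\epslarge)h(\RR)$. After rotation its height is $w(i)\ge h(i)>h(\RR)/2$. This is far more than the height $\epsilon h(\RR)$ of the top box of Lemma~\ref{lem:pack-st}, so not even one such item fits into that box's slack.
\item Shrinking $\epsilon_{\text{box}}$ only reduces the \emph{sum} of the widths, not the size of individual items, so it cannot help.
\item The ``rotate everything to a uniform orientation'' fallback also fails. Rotating such an item gives it height $w(i)$, which is only bounded by roughly $\epsilon^4N$. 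When $h(\RR)$ is small, $w(i)$ can exceed $h(\RR)$, so the rotated item need not even fit into $\RR$.
\end{itemize}
Replace that whole paragraph by the observation that these items are in containers, not in $\lcor$.
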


    \begin{lemma}
    \label{lem:delete-vert-arm}
        We have the following lower bounds on $p(\OPT_{L\& C^*})$.
        \begin{itemize}
            \item $p(\OPT_{L\& C^*}) \ge (1-\epsilon)(p(I_t)+p(I_b)+p(I_{j}^{h>w})+p(I_{ST}\cup I_{LT_{\text{short}}})+\frac{1}{2}p(I_{SF}))$,
            \item $p(\OPT_{L\& C^*}) \ge (1-\epsilon)(p(I_t)+p(I_b)+p(I_{j}^{w\ge h})+p(I_{ST}\cup I_{LT_{\text{short}}})+\frac{1}{2}p(I_{SF}))$,
        \end{itemize}
        for $j\in \{\ell,r\}$.
    \end{lemma}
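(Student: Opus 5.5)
We build the \lc-packing with the same layout used for Lemma~\ref{lem:all-vertical}. The boundary $\lcor$ has a vertical arm of width $\epsilon N$ and a horizontal arm of height $\epsilon h(\RR)$. The two boxes of Lemma~\ref{lem:pack-st} (profit $(1-\epsilon)p(I_{ST}\cup I_{LT_{\text{short}}})$) sit along the top and right boundary of $\RR$. The box of Lemma~\ref{lem:pack-sf} (profit $(1/2-O(\epsilon))p(I_{SF})$) occupies the interior. The difference from Lemma~\ref{lem:all-vertical} is what goes into the $\lcor$. Here we keep the horizontal stacks $I_t\cup I_b$ and the single vertical class $I_j^{h>w}$ (respectively $I_j^{w\ge h}$). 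We discard all of $I_{j'}$ for $j'\neq j$, together with the other orientation class of $I_j$. Property~(5) of Definition~\ref{def:l-packing} then holds automatically, since all items placed in the vertical arm share one orientation type.

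For the horizontal arm, every item of $I_t\cup I_b$ is a horizontal item of $\ILTlong$. Its width therefore exceeds $(\frac12+2\epslarge)N>N/2$, and its height is at most $\epsilon_{\text{box}}$ times the height of its subcorridor. Summing over at most $c_{\epsilon,\epslarge}$ subcorridors, the total height of $I_t\cup I_b$ is at most $\epsilon^4 h(\RR)$. We stack these items in the horizontal arm, which has height $\epsilon h(\RR)$ and width $W_{\Ls}=N$. The leftover height $(\epsilon-\epsilon^4)h(\RR)$ gives room for resource augmentation (Lemma~\ref{lem:resource-augmentation}) if a container structure is wanted; alternatively we simply stack them, since the $\lcor$ only needs Property~(4). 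Property~(4) holds because $w^*(i)>N/2$.

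For the vertical arm, the items of $I_j^{h>w}$ have height greater than $(\frac12+2\epslarge)h(\RR)$ and total width at most $\epsilon^4 N$. We set $H_{\Ls}=h(\RR)\le N/2$, which satisfies Property~(3). Each such item then has $h^*(i)>\frac12 H_{\Ls}$, and the items are placed side by side in the vertical arm of width $\epsilon N$. The same holds for $I_j^{w\ge h}$, using their unrotated orientation as they appear in the stack.

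The main obstacle is the corner where the two arms overlap. The vertical stack items occupy height up to $h(\RR)$, but the horizontal arm takes the bottom $\epsilon h(\RR)$. A vertical item of height close to $h(\RR)$ would therefore collide with the horizontal arm, or would have to intrude into the top box of Lemma~\ref{lem:pack-st}. I would handle this in the original stacked configuration. In $\RR$, the left stack items coexist with the top and bottom stacks, so the vertical items fit between them. The item heights are then already at most $h(\RR)$ minus the thicknesses of $I_t$ and $I_b$ at that $x$-range. After compressing $I_t$ and $I_b$ into the horizontal arm, the vertical items fit in the region above it. If a loss is unavoidable, I would shrink via a random horizontal strip of height $O(\epsilon)h(\RR)$. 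This affects only an $O(\epsilon)$ fraction of the profit of the long items, because they are skewed in the other direction. The vertical arm also coexists with the top box of width $(\frac12+2\epslarge)N$, which sits to the right of $x=\epsilon N$, so there is no conflict. Summing the profits gives the two bounds.
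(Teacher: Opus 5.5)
Your layout and profit accounting match the paper's: the boundary $\Ls$ of widths $\epsilon N$ and $\epsilon h(\RR)$ with $H_{\Ls}=h(\RR)$, the boxes of Lemmas~\ref{lem:pack-st} and~\ref{lem:pack-sf}, and Property~(5) obtained by keeping a single class $J\in\{I_j^{h>w},I_j^{w\ge h}\}$. The gap is in the one non-routine step, which the paper handles by citing \cite{galvez2021approximating}: $I_t\cup I_b\cup J$ occupies a U-shaped boundary region, and that lemma rearranges it into an $\Ls$.

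You assert that after ``compressing'' $I_t\cup I_b$ into the horizontal arm, the items of $J$ fit above it, but you give no argument, and for a generic compression this is false. Your invariant is correct: in the U-packing, $h(v)$ plus the total height of items of $I_t\cup I_b$ crossing the vertical line at $x$ is at most $h(\RR)$, for every $x$ in the range of $v\in J$. Compression need not preserve it. For example, an item of $I_t$ lying at $[0.3N,0.9N]$, entirely to the right of $v$, ends up underneath $v$ if you left-align the horizontal items, and then $v$ overflows $h(\RR)$. Your fallback does not help either. Items of $J$ have height greater than $h(\RR)/2$, so a random horizontal strip hits each of them with probability above $1/2$. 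You would lose about half of $p(J)$, not an $O(\epsilon)$ fraction; the ``skewed'' property applies only to $I_t\cup I_b$.

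The step can be closed as follows (reflect $\RR$ first if $j=r$).
\begin{itemize}
\item Place $I_t\cup I_b$ right-aligned at $x=N$ in the horizontal arm, sorted by non-increasing width from bottom to top. This fits, since the total height is at most $\epsilon^4 h(\RR)$.
\item For $x<N/2$, the items covering $x$ are then exactly those with $w(i)\ge N-x$, and they form the bottom part of the stack.
\item Each such item already crossed the line at $x$ in the U-packing: its left end was at most $N-w(i)\le x$, and its right end at least $w(i)>N/2>x$.
\item Hence the new profile height at $x$ is at most $h(\RR)-h(v)$ for every $x$ in the range of $v$, and the profile is non-decreasing in $x$.
\item Keep each $v\in J$ at its original $x$-interval. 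These intervals are disjoint, since two items of height greater than $h(\RR)/2$ cannot be stacked. Set $v$ on the profile value at its right end; it then stays below $h(\RR)$ and inside the vertical arm.
\end{itemize}
With this argument, which is the content of the cited U-to-$\Ls$ lemma, the rest of your proof goes through.
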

    \begin{proof}
        For any set $J\in \{I_{\ell}^{h>w},I_{\ell}^{w\ge h}, I_r^{h>w}, I_r^{w\ge h}\}$, observe that items in $I_t\cup I_b\cup J$ are packed inside a U-shaped region at the boundary of $\RR$. Using a result of \cite{galvez2021approximating}, they can be rearranged into an \lcor. In the remaining area of $\RR$, we can pack a profit of $(1-\epsilon)(p(I_{ST}\cup I_{LT_{\text{short}}})+\frac{1}{2}p(I_{SF}))$ as discussed before (see Figure \ref{fig:L-packing}).
    \end{proof}

    Combining Lemmas \ref{lem:all-vertical} and \ref{lem:delete-vert-arm}, we obtain the following.

    \begin{lemma}
    \label{lem:lt-st-sf-pack}
        We have $p(\OPT_{L\& C^*})\ge (1-\epsilon)(\frac{4}{7}p(I_{LT})+p(I_{ST})+\frac{1}{2}p(I_{SF}))$.
    \end{lemma}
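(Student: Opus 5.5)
The plan is to take a convex combination of the five lower bounds on $p(\OPT_{L\& C^*})$ given by Lemma \ref{lem:all-vertical} and Lemma \ref{lem:delete-vert-arm}. We then check that every part of $I_{LT}$ gets coefficient at least $4/7$, while $I_{ST}$ and $I_{SF}$ keep coefficients $1$ and $\frac12$.

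First we fix notation. By the stacking argument preceding Lemma \ref{lem:all-vertical}, the set $\ILTlong$ is the disjoint union $I_t\cup I_b\cup I_\ell\cup I_r$. Moreover, $I_\ell=I_\ell^{h>w}\,\dot\cup\, I_\ell^{w\ge h}$ and $I_r=I_r^{h>w}\,\dot\cup\, I_r^{w\ge h}$. Also $I_{LT}=\ILTlong\,\dot\cup\,\ILTshort$. Write
\[
a:=p(I_t)+p(I_b),\qquad x_1:=p(I_\ell^{h>w}),\quad x_2:=p(I_\ell^{w\ge h}),\quad x_3:=p(I_r^{h>w}),\quad x_4:=p(I_r^{w\ge h}),
\]
and $S:=p(I_{ST}\cup\ILTshort)+\frac12 p(I_{SF})$.

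Then Lemma \ref{lem:all-vertical} reads $p(\OPT_{L\& C^*})\ge(1-\eps)(x_1+x_2+x_3+x_4+S)$. The four bounds of Lemma \ref{lem:delete-vert-arm} (two choices of $j$ times two orientation classes) read $p(\OPT_{L\& C^*})\ge(1-\eps)(a+x_k+S)$ for $k=1,\dots,4$.

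We multiply the first inequality by $3$ and each of the four others by $1$, and add; the total weight is $7$. This gives
\[
7p(\OPT_{L\& C^*})\ge(1-\eps)\bigl(4a+4(x_1+x_2+x_3+x_4)+7S\bigr).
\]
Here $a$ receives weight $4\cdot 1$, each $x_k$ receives $3+1=4$, and $S$ receives $7$. Dividing by $7$, the coefficient of $p(\ILTlong)=a+\sum_k x_k$ is $4/7$. The remaining terms are
\[
S=p(I_{ST})+p(\ILTshort)+\tfrac12 p(I_{SF}),
\]
where we use that $I_{ST}$ and $\ILTshort$ are disjoint. Since $p(\ILTshort)\ge\frac47 p(\ILTshort)$, we obtain $p(\OPT_{L\& C^*})\ge(1-\eps)\bigl(\frac47 p(I_{LT})+p(I_{ST})+\frac12p(I_{SF})\bigr)$.

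The only potential obstacle is choosing the weights. The ratio $3:1$ comes from balancing the coefficient $4\beta$ of $a$ against the coefficient $\alpha+\beta$ of each $x_k$, subject to $\alpha+4\beta=1$; this forces $\beta=\frac17$ and $\alpha=\frac37$. Beyond that, we only need to confirm the disjointness of the decomposition of $I_{LT}$, which holds by construction of the stacks.
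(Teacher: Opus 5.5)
Your proposal is correct and is exactly the paper's argument. The paper adds three times the bound of Lemma~\ref{lem:all-vertical} to the four bounds of Lemma~\ref{lem:delete-vert-arm}. This gives $I_t$, $I_b$ and each of the four orientation classes of $I_\ell, I_r$ weight $4$ and the remaining term weight $7$. The paper then uses $p(\ILTshort)\ge\frac47 p(\ILTshort)$ and the disjointness of $I_{ST}$ and $\ILTshort$ to conclude, just as you do.
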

    \begin{proof}
        Adding the four inequalities (corresponding to $j\in \{\ell,r\}$) in Lemma \ref{lem:delete-vert-arm} with 3 times the inequality in Lemma \ref{lem:all-vertical}, we get
        \begin{align*}
            7p(\OPT_{L\& C^*}) &\ge (1-\epsilon)\left(4p(I_{t})+4p(I_b)+4p(I_{\ell})+4p(I_r)+7p(I_{ST}\cup I_{LT_{\text{short}}})+\frac{7}{2}p(I_{SF})\right) \\
            &\ge (1-\epsilon)\left(4p(I_{LT})+7p(I_{ST})+\frac{7}{2}p(I_{SF})\right),
        \end{align*}
        and we are done.       
    \end{proof}

    Combining Lemmas \ref{lem:lcpacking-focs}, \ref{lem:new-processing} and \ref{lem:lt-st-sf-pack}, we prove Lemma \ref{lem:arbitrary-knapsack}.

\arbitraryknapsack*
\begin{proof}
        We have
        \begin{alignat*}{2}
            2p(\OPT_{L\&C^*}) &\ge (1-\epsilon)(2p(I_{LF})+p(I_{SF})+p(I_{ST})) &\quad& \text{[Lemma \ref{lem:lcpacking-focs}(i)]} \\
            20p(\OPT_{L\& C^*}) &\ge (1-\epsilon)\left(20p(I_{LF})+20p(I_{SF})+10p(I_{LT})\right) &\quad& \text{[Lemma \ref{lem:new-processing}]} \\
            21p(\OPT_{L\& C^*}) &\ge (1-\epsilon)\left(12p(I_{LT})+21p(I_{ST})+\frac{21}{2}p(I_{SF})\right) &\quad& \text{[Lemma \ref{lem:lt-st-sf-pack}]}
        \end{alignat*}
        Adding the above inequalities, we obtain
        \begin{align*}
            43p(\OPT_{L\& C^*}) &\ge (1-\epsilon)\left(22p(I_{LF})+ \frac{63}{2}p(I_{SF})+22p(I_{LT})+22p(I_{ST})\right) \\
            &\ge (22-22\epsilon)p(I),
        \end{align*}
        which completes the proof.
    \end{proof}

\paragraph{Handling small items.}
\dk{Similar to \Cref{sec:non-rotation-improved}, we have to repack a set of items $\OPT'_{\text{small}}$ consisting of the small items that did not completely lie inside the corridors, and the small items that were discarded while processing the corridors. For the packings corresponding to Lemmas \ref{lem:lcpacking-focs}(i) and \ref{lem:new-processing} (which are container packings), we pack the small items as described in \Cref{sec:non-rotation-improved}. It only remains to handle the packings corresponding to Lemma \ref{lem:lt-st-sf-pack}. Recall that for packing items of $I_{SF}$ in Lemma \ref{lem:pack-sf}, we discarded items of $I_{SF}$ intersecting a random horizontal strip of height $2\eps\cdot h(\RR)$ and a random vertical strip of width $3\eps N$. Then the remaining items of $I_{SF}$ could be packed inside a box of dimensions $(1-3\eps)N\times (1-2\eps)h(\RR)$. We can increase the width of this box by $\eps N$ and use a width of $\eps N/2$ for resource augmentation. This still leaves an empty region $\SM$ of width $\eps N/2$ and height $(1-2\eps)h(\RR)$. Since $a(\OPT'_{\text{small}})\le \eps^3N\cdot h(\RR)$ and each item in $\OPT'_{\text{small}}$ has width at most $\epssmall N$ and height at most $\epssmall h(\RR)$, they can be easily packed using NFDH inside the region $\SM$.}

\section{Hardness}
\label{sec:finehard}
In this section, we prove Theorem \ref{theorem:hardness-of-2dkr}. Given a collection of $n$ integers, the $k$-\textsc{Sum} problem asks to determine whether there are $k$ among them that sum to $0$. The $k$-\textsc{Sum} conjecture states the following.

\begin{conjecture}[$k$-\textsc{Sum} conjecture, \cite{abboud2013exact}]
    There does not exist a $k\ge 2$, a $\delta >0$, and a randomized algorithm that succeeds (with high probability) in solving $k$-\textsc{Sum} in time $O(n^{\lceil k/2\rceil-\delta})$.
\end{conjecture}

We reduce a variant of the $k$-\textsc{Sum} problem to 2DK and 2DKR. We denote this variant by $k$-\textsc{PartSum};  it is defined as follows. In $k$-\textsc{PartSum} we are given a multiset $\AM$ of $n$ positive integers (in particular, they are not necessarily pairwise different) and an integer $k$.
We want to find a (multi)-subset $\overline{T} \subseteq A$ with $|\overline{T}|=k$ such that it can be split into two (multi)-subsets $T_1$ and $T_2$ such that the sum of the numbers in both multi-subsets is equal. 
First we show that $k$-\textsc{PartSum} cannot be solved in $O(n^{\frac{k-1}{2}-\delta})$ time, for any odd $k\ge 3$ and $\delta >0$ assuming the $k$-\textsc{Sum} conjecture.

\begin{lemma}
 Assuming the $k$-\textsc{Sum} conjecture, for any odd $k\ge 3$ and $\delta>0$, there does not exist an algorithm for $k$-\textsc{PartSum} running in time $O(n^{\frac{k-1}{2}-\delta})$.    
\end{lemma}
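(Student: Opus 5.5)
We give a fine-grained reduction from $k$-\textsc{Sum} (with $k$ odd) to $k$-\textsc{PartSum}. Since $\lceil k/2\rceil=\frac{k+1}{2}$, an $O(n^{\frac{k-1}{2}-\delta})$ algorithm for $k$-\textsc{PartSum} composed with an $O(n)$ (or near-linear) reduction would solve $k$-\textsc{Sum} in $O(n^{\frac{k-1}{2}-\delta}\cdot \mathrm{polylog})$ time. This is far below $n^{\lceil k/2\rceil-\delta}$, contradicting the conjecture. The reduction may even blow up the instance by a constant factor or a constant number of copies; any $n^{1+o(1)}$ overhead is affordable, because the slack is a full factor of $n$.

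\textbf{Step 1: shift to positive integers and encode the sign split.} Start with a $k$-\textsc{Sum} instance $a_1,\dots,a_n$. The goal is to force any balanced split of a $k$-subset to correspond to $\sum_{j\in S} a_j=0$. To do this, I would use a ``tagged'' encoding with a large base $M$ (say $M$ exceeding $2k\max|a_j|$ times a safety factor). I would create two copies of each number, a ``left'' element $x_j^+ = M^2 + M + a_j$ (shifted positive, e.g.\ adding $k\max|a_j|$ where needed) and a ``right'' element. A cleaner option is to exploit that $k$ is odd: add one special element $s$ and look for $k$-\textsc{Sum} solutions using $k-1$ original numbers plus $s$. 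Concretely, I would first reduce from the standard fact that $k$-\textsc{Sum} is equivalent to its multi-colored / ``$k-1$ numbers summing to a target'' variant. Then I would build elements $b_j := L + a_j$ for a large $L$, together with a single element $t := (k-1)L$ (with multiplicity $1$ enforced by giving it a unique high-order digit if needed). A balanced split of a $k$-subset $\{t\}\cup\{b_{j_1},\dots,b_{j_{k-1}}\}$ into $T_1=\{t\}$ and $T_2=$ the rest holds exactly when $\sum a_{j_i}=0$.

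\textbf{Step 2: rule out spurious splits.} This is the main obstacle. I must show that any $k$-subset admitting an equal split must have the shape above. I would add high-order digits: give $t$ weight $(k-1)M^{2}$ in a top digit and each $b_j$ weight $M^{2}$. The total top-digit mass of a $k$-set is then at most $2(k-1)M^2$. Splitting it evenly forces $t$ to be present and alone on one side against exactly $k-1$ of the $b_j$'s, using the fact that the lower-order parts are bounded by $M^2/(4k)$, so no carries occur. A subset without $t$ has $k$ copies of $M^2$ with $k$ odd, so it cannot be split evenly in the top digit, again because there are no carries. This is exactly where oddness of $k$ is used. To handle negative $a_j$, I would use $L+a_j$ with $L>\max|a_j|$ inside the middle digit. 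Multiset repetition of $b_j$ is avoided by using distinct indices or by a color digit. Target versions with repeated use of a number are handled by the standard colour-coding reduction, which only costs a constant (depending on $k$) factor.

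\textbf{Step 3: conclude.} The constructed instance has $n+1$ numbers of bit length $O(\log n + \log\max|a_j|)$ and is built in linear time. Running the hypothetical algorithm therefore decides ($k-1$)-target-\textsc{Sum}, and hence $k$-\textsc{Sum}, in $O(n^{\frac{k-1}{2}-\delta})$ time. This contradicts the $k$-\textsc{Sum} conjecture, which forbids $O(n^{\frac{k+1}{2}-\delta'})$ algorithms. If the intermediate variant only carries the lower bound for $k-1$ terms, that gives $n^{\lceil (k-1)/2\rceil}=n^{\frac{k-1}{2}}$, which still matches the claimed exponent exactly.
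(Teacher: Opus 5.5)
Your construction is the paper's. Your elements are $b_j=M^2+L+a_j$ and $t=(k-1)M^2+(k-1)L$, so with $M':=M^2+L$ they are exactly $b_j=M'+a_j$ and $t=(k-1)M'$. Your digit argument also matches the paper's case analysis: a $k$-set without $t$ cannot balance because $k$ is odd, and with $t$ present, $t$ must stand alone against $k-1$ shifted elements. The paper uses a single scale $M'=k\max|a|+1$. The residual $|\sum \pm a_j|\le k\max|a|<M'$ already forces both sides to contain the same number of $M'$ units, so your separate top digit is unnecessary but harmless.

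The problem is the framing, and it affects the conclusion as you state it. An instance with one special element $t=(k-1)L$ encodes only ``$k-1$ numbers summing to $0$'', which is $(k-1)$-\textsc{Sum}. It is not a standard fact that $k$-\textsc{Sum} is equivalent to a fixed-target $(k-1)$-variant. The usual reduction tries each of the $n$ elements as the $k$-th summand, which means $n$ different targets. So two of your claims are unjustified:
\begin{itemize}
\item the sentence ``decides $(k-1)$-target-\textsc{Sum}, and hence $k$-\textsc{Sum}, in $O(n^{\frac{k-1}{2}-\delta})$ time'';
\item the opening claim of a single near-linear reduction from $k$-\textsc{Sum}, which is not what you built.
\end{itemize}

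The valid conclusion is the one in your final sentence, and it is exactly how the paper argues. Apply the conjecture to $k-1$, which is even for odd $k\ge 3$. Then $\lceil (k-1)/2\rceil=(k-1)/2$ is precisely the required exponent.

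Alternatively, you can actually spend the factor-$n$ slack you mention. For each $\ell$, remove $a_\ell$ and use $t_\ell=(k-1)L-a_\ell$; the same scale argument still rules out spurious splits. The resulting $n$ calls give an $O(n^{\frac{k+1}{2}-\delta})$ algorithm for $k$-\textsc{Sum}. This contradicts the conjecture at $k$ itself, and either route proves the lemma.
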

\begin{proof}
    Let $k$ be an odd integer and suppose we are given an instance of $(k-1)$-\textsc{Sum} whose input numbers are the multiset 
    $\AM=\{a_1,\ldots,a_n\}$. We consider the $k$-\textsc{PartSum} instance $\overline{\AM}=\{a_1+M',\ldots, a_n+M'\}\cup \{(k-1)M'\}$, where $M'=k\cdot (\max_{a\in \AM} |a|)+1$. Note that all values in $\overline{\AM}$ are strictly positive. We show that $k-1$ numbers in $\AM$ sum to $0$ if and only if there exist disjoint subsets $T_1$ and $T_2$ of $\overline{\AM}$ of equal sum and such that $|T_1|+|T_2|=k$.

    Suppose there exists a (multi-)subset $T=\{a_{i_1}, \ldots,a_{i_{k-1}}\}\subset \AM$ such that $a_{i_1}+\ldots + a_{i_{k-1}}=0$. Then we define  $T_1=\{a_{i_1}+M',\ldots, a_{i_{k-1}}+M'\}$ and $T_2 = \{(k-1)M'\}$, and we are done. 
    Conversely, suppose there exists $\overline{T} \subseteq \overline{\AM}$ with $|\overline{T}|=k$ such that it can be split into two subsets $T_1$ and $T_2$ of equal sum.
    W.l.o.g.~let $|T_2|\ge |T_1|$, and since $k$ is odd, we have that $|T_2|-|T_1|\ge 1$. Now if $(k-1)M' \not\in T_1\cup T_2$, we can find $a_{i_1}, \ldots, a_{i_{|T_1|}}$ and $a_{j_1},\ldots, a_{j_{|T_2|}}$ such that $a_{i_1}+\ldots + a_{i_{|T_1|}}+|T_1|M' = a_{j_1}+\ldots + a_{j_{|T_2|}}+|T_2|M'$, and therefore $(a_{i_1}+\ldots + a_{i_{|T_1|}})-(a_{j_1}+\ldots + a_{j_{|T_2|}}) = (|T_2|-|T_1|)M' \ge M'$.
This is a contradiction since the absolute value of the LHS in the above equality is at most $k\cdot (\max_{a\in \AM} |a|)<M'$. Therefore $(k-1)M'\in T_1\cup T_2$. Suppose that $(k-1)M' \in T_1$.  If $T_2$ consists of at most $k-2$ elements, they sum to at most $(k-2)\cdot (\max_{a\in \AM} |a|) + (k-2)M' < (k-1)M'$ \aw{which is} a contradiction. Hence we must have $|T_2|=k-1$ and $T_1= \{(k-1)M'\}$. \aw{Assuming that} $T_2=\{a_{i_1}+M',\ldots, a_{i_{k-1}}+M'\}$, \aw{we have that} $a_{i_1}+\ldots + a_{i_{k-1}}=0$. Finally, if $(k-1)M' \in T_2$, by a similar argument, we would have that $|T_1|=k-1$ and $T_2 =\{(k-1)M'\}$, which would contradict our assumption that $|T_2|\ge |T_1|$. 

    Since our reduction takes $O(n)$ time only, the existence of an $O(n^{\frac{k-1}{2}-\delta})$-time algorithm for $k$-\textsc{PartSum} would imply an algorithm with the same running time bound for $(k-1)$-\textsc{Sum}, contradicting the $k$-\textsc{Sum} conjecture.
\end{proof}

For our reduction, we assume that $k$ is a sufficiently large odd integer ($k\ge 9$ suffices; \aw{note that if $k=O(1)$ we can simply solve the instance in time $n^{O(1)}$ to optimality within our reduction and then map it to trivial yes- or no-instance}) and let $\AM$ be an instance of $k$-\textsc{PartSum}.
We create an instance of 2DK or 2DKR as follows. Let $M:= \max_{a\in \AM} a$.
We consider a square knapsack of side length $N:= 2Mk^4$. For each $a\in \AM$, we create two rectangles $R_a$ and $R'_a$ with $w(R_a)=\frac{N}{k}+a$ and $h(R_a)=\frac{N}{2}-a$, and $w(R'_a)=\frac{N}{k}-a$ and $h(R'_a)=\frac{N}{2}+a$, respectively.
We assign a profit of $1$ to each rectangle \aw{which yields an instance in the cardinality case}. Let $X:= \{R_a\}_{a\in \AM}$ and $X':= \{R'_a\}_{a\in \AM}$. We first make a few simple observations about the dimensions of these rectangles.

\begin{lemma}
\label{lem:rectangle-properties}
The following statements hold for each $i\in X\cup X'$.
\begin{enumerate}[label=(\roman*)]
    \item $w(i)+h(i)=(\frac{1}{2}+\frac{1}{k})N$.
    \item $w(i) \in ((\frac{1}{k}-\frac{1}{k^4})N,(\frac{1}{k}+\frac{1}{k^4})N)$ and $h(i) \in ((\frac{1}{2}-\frac{1}{k^4})N,(\frac{1}{2}+\frac{1}{k^4})N)$.
    \item The area of $i$ \aw{is contained in the interval} $((\frac{1}{2k}-\frac{1}{k^4})N^2,(\frac{1}{2k}+\frac{1}{k^4})N^2)$.
\end{enumerate}
\end{lemma}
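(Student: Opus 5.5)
The plan is a direct computation from the definitions, using only the bound $|a|\le M$ for every $a\in\AM$ and the choice $N=2Mk^4$, which gives $M=N/(2k^4)$. The three parts will be proved in order, each from the previous one.

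For (i), add the two side lengths. For $R_a$ we get $w+h=(\frac{N}{k}+a)+(\frac{N}{2}-a)$, and for $R'_a$ we get $(\frac{N}{k}-a)+(\frac{N}{2}+a)$. In both cases the $\pm a$ terms cancel and the sum is $(\frac12+\frac1k)N$.

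For (ii), observe that every width has the form $\frac{N}{k}\pm a$ and every height has the form $\frac N2\mp a$. Since the elements of $\AM$ are positive integers, $0<a\le M=\frac{N}{2k^4}<\frac{N}{k^4}$. Hence each side deviates from $\frac Nk$ (respectively $\frac N2$) by strictly less than $\frac{N}{k^4}$. Both intervals are open, and this holds because of the factor $2$ in $N$.

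For (iii), expand the area as $w(i)h(i)=(\frac Nk+s)(\frac N2-s)$ with $s=\pm a$. This equals $\frac{N^2}{2k}+s(\frac N2-\frac Nk)-s^2$. We bound the deviation from $\frac{N^2}{2k}$ by
\[
|s|\Bigl(\frac N2-\frac Nk\Bigr)+s^2\le \frac{N}{2k^4}\cdot\frac N2+\frac{N^2}{4k^8}=\frac{N^2}{4k^4}+\frac{N^2}{4k^8}<\frac{N^2}{k^4}.
\]
This places the area in the stated open interval.

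The only step needing care is (iii), where the cross term has to be bounded rather than dropped. Even so it follows easily from $|s|\le N/(2k^4)$, so there is no real obstacle anywhere in the proof.
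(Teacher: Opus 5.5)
Your proof is correct. The paper states this lemma as a ``simple observation'' without writing out a proof, and your direct computation from $0<a\le M=\frac{N}{2k^4}<\frac{N}{k^4}$ is exactly the intended verification: cancellation of $\pm a$ for (i), deviations of $\pm a$ for (ii), and bounding $|s|(\frac N2-\frac Nk)+s^2$ for (iii). Part (iii) could also be read off by multiplying the endpoint bounds from (ii), as the paper implicitly does in Lemma~\ref{lem:free-area}, but that is only a cosmetic difference.
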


\begin{corollary}
\label{cor:bound-on-opt}
    Any feasible packing contains at most $2k$ rectangles.
\end{corollary}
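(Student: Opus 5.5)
The plan is a pure area argument, using part (iii) of Lemma \ref{lem:rectangle-properties}. Every rectangle in $X\cup X'$ has area strictly greater than $(\frac{1}{2k}-\frac{1}{k^4})N^2$. Rotating a rectangle does not change its area, so this lower bound holds in both 2DK and 2DKR.

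Suppose a feasible packing contains $m$ rectangles. The rectangles are pairwise non-overlapping inside the knapsack, so their total area is at most $N^2$. This gives $m\cdot(\frac{1}{2k}-\frac{1}{k^4})N^2 < N^2$, i.e.
\[
m < \frac{1}{\frac{1}{2k}-\frac{1}{k^4}} = \frac{2k}{1-\frac{2}{k^3}} = 2k\cdot\Bigl(1+\frac{2}{k^3-2}\Bigr) = 2k+\frac{4k}{k^3-2}.
\]

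For $k\ge 9$ (indeed already for $k\ge 3$) we have $\frac{4k}{k^3-2}<1$. Hence $m<2k+1$, and since $m$ is an integer, $m\le 2k$, which is the claim.

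The only step that needs any care is this final numerical check. The slack term $\frac{4k}{k^3-2}$ has to be strictly below $1$, so that the strict inequality on $m$ rounds down to $2k$. That holds comfortably because the error term in part (iii) is $O(N^2/k^4)$, which is much smaller than the main term $N^2/(2k)$. I do not expect any other obstacle.
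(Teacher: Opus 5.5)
Your proof is correct and follows the paper's own proof: an area bound from Lemma~\ref{lem:rectangle-properties}(iii), with the paper checking directly that $(2k+1)(\frac{1}{2k}-\frac{1}{k^4})N^2>N^2$ for $k\ge 3$ instead of solving for $m$. Your algebra $m<2k+\frac{4k}{k^3-2}$ and your observation that rotations preserve area (so the argument covers both 2DK and 2DKR) are both sound.
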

\begin{proof}
    Follows directly from Lemma \ref{lem:rectangle-properties}(iii), since the area of $2k+1$ rectangles would exceed $(2k+1)(\frac{1}{2k}-\frac{1}{k^4})N^2>N^2$ \aw{since $k\ge 3$.}
\end{proof}

\aw{Let $\OPT$ denote an optimal packing for the constructed instance of 2DK or 2DKR, respectively.}
We first establish \aw{that yes-instances of $k$-\textsc{PartSum} are mapped to instances of 2DK or 2DKR for which $|\OPT|=2k$.}

\begin{lemma}
    If there exist values $a_1,a_2,\ldots, a_k \in \AM$ such that $\sum_{j=1}^{m} a_j=\sum_{j={m+1}}^{k} a_j$ holds for some index $m\in [k-1]$, then $|\OPT|=2k$ \aw{and there is an optimal solution that does not rotate any item}.
\end{lemma}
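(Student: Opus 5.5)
The upper bound $|\OPT|\le 2k$ is Corollary~\ref{cor:bound-on-opt}. It therefore suffices to exhibit an explicit packing of $2k$ rectangles that uses no rotations.

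Given $a_1,\dots,a_k$ with $\sum_{j\le m}a_j=\sum_{j>m}a_j=:S$, I choose the $2k$ rectangles $R_{a_j},R'_{a_j}$ for $j\in[k]$. If the $a_j$ are drawn from the multiset $\AM$ with multiplicity, each occurrence gives its own pair, so these are distinct items. The plan is to split the knapsack into a bottom half and a top half, separated by a horizontal monotone staircase, and to form $k$ vertical columns of widths close to $N/k$.

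\textbf{Bottom row.} For $j=1,\dots,m$ place $R_{a_j}$, and for $j=m+1,\dots,k$ place $R'_{a_j}$. All of them sit on the bottom boundary, side by side from left to right. The total width is
\[
\sum_{j\le m}\Bigl(\tfrac Nk+a_j\Bigr)+\sum_{j>m}\Bigl(\tfrac Nk-a_j\Bigr)=N+S-S=N,
\]
so the row fits exactly.

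\textbf{Top row.} In column $j$, above the bottom item, place the complementary rectangle: $R'_{a_j}$ for $j\le m$ and $R_{a_j}$ for $j>m$. Each top item touches the top boundary of $K$, and the top items are again placed side by side from the left. Their total width is also $N$ by the same computation.

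In each column the two heights sum to $(\tfrac N2-a_j)+(\tfrac N2+a_j)=N$. This explains why I pair each $R_{a_j}$ with its own $R'_{a_j}$: the bottom and top pieces complement each other exactly. The difficulty is that the widths differ. In column $j\le m$ the bottom item is wider by $a_j$ than the top item, and in columns $j>m$ it is narrower. Consequently the vertical boundaries of the top row are not aligned with those of the bottom row, and I need to check that no top item overlaps a bottom item.

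\textbf{Verifying disjointness (the main obstacle).} The top-row boundaries sit at $x$-positions $T_j=j\frac Nk-\sum_{i\le j,\,i\le m}a_i+\sum_{m<i\le j}a_i$. The bottom-row boundaries sit at $B_j=j\frac Nk+\sum_{i\le j,\,i\le m}a_i-\sum_{m<i\le j}a_i$. Hence $B_j-T_j=2D_j$, where $D_j=\sum_{i\le \min(j,m)}a_i-\sum_{m<i\le j}a_i\ge 0$. The quantity $D_j$ increases up to $j=m$ and then decreases back to $0$ at $j=k$. Take any $x$, and let the bottom item covering $x$ belong to column $j$ and the top item covering $x$ belong to column $j'$; I need the heights to sum to at most $N$. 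Since $B_j\ge T_j$, we always have $j'\ge j$. The range of $x$ lies in $(B_{j-1},B_j)$, and $T_{j}\le B_j$, so $j'$ can exceed $j$. I therefore plan to use the slack from $k$ being large instead. Each height deviates from $N/2$ by at most $M=N/(2k^4)$, while the lengths involved are up to $kM$. Exact complementarity fails in general, so instead I would shrink every height slightly: take the rectangles at their true heights, and note that the bottom heights $\frac N2-a_j$ and the top heights $\frac N2-a_{j'}$ (for the $R'$ in the top row) vary.

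A cleaner alternative that I will try first avoids all of this: \textbf{reorder the columns}. Put all $m$ columns with $R_{a_j}$ at the bottom first, and order them so that the horizontal offset between the rows never forces a tall top item over a tall bottom item. Concretely, a top item of height $\frac N2+a$ overlapping a bottom item of height $\frac N2+a'$ is the only bad case, and a bottom item of height $\frac N2+a'$ is an $R'_{a'}$ in a column $j>m$. I will argue that the offset $2D_j$ shifts the top items left relative to the bottom items. Hence over a column $j>m$, the top item is either its own partner $R_{a_j}$ (height $\frac N2-a_j$) or some $R_{a_{j'}}$ with $j'>m$ (height $\frac N2-a_{j'}$). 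I then arrange the columns $j>m$ in order of nondecreasing $a_j$ and the columns $j\le m$ symmetrically, so that each overlap pairs heights summing to at most $N$. Checking this ordering inequality is the step I expect to require care.
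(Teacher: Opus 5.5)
Your final construction is the paper's construction. Bottom row: $R_{a_1},\dots,R_{a_m},R'_{a_{m+1}},\dots,R'_{a_k}$. Top row: $R'_{a_1},\dots,R'_{a_m},R_{a_{m+1}},\dots,R_{a_k}$. Ordering: $a_1\ge\dots\ge a_m$ and $a_{m+1}\le\dots\le a_k$, nothing rotated. The paper checks feasibility item by item. It places the top row first, noting that its bottom edges form an increasing staircase. Then it places $R_{a_1},\dots,R_{a_{m-1}}$ from the left and $R'_{a_k},\dots,R'_{a_{m+1}}$ from the right, using $w(R_{a_j})>w(R'_{a_j})$. Finally it fits $R_{a_m}$ into the gap. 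Your offset identity $B_j-T_j=2D_j\ge 0$ (which uses positivity of the $a_i$) gives a more uniform check, and it is a good route. You should, however, drop the detour about ``slack from $k$ being large''. Both rows tile the width exactly and paired heights sum exactly to $N$, so there is no slack, and shrinking heights is not available.

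Your diagnosis of the dangerous overlaps is also backwards. A tall top item ($R'_{a_{j'}}$, $j'\le m$) over a tall bottom item ($R'_{a_j}$, $j>m$) would require $j'<j$, which your observation $j'\ge j$ already rules out. The overlaps that actually need the ordering are tall-over-short.
\begin{itemize}
\item If $j\le m$, the bottom item has height $\tfrac N2-a_j$. The top item above it is either $R_{a_{j'}}$ with $j'>m$, which is harmless, or $R'_{a_{j'}}$ with $j\le j'\le m$, of height $\tfrac N2+a_{j'}$. The latter is fine precisely because $a_{j'}\le a_j$.
\item If $j>m$, the bottom item has height $\tfrac N2+a_j$. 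The top item is $R_{a_{j'}}$ with $j'\ge j$, of height $\tfrac N2-a_{j'}$, which is fine precisely because $a_{j'}\ge a_j$.
\end{itemize}
Every top item touches the top edge and every bottom item touches the bottom edge. So heights summing to at most $N$ over every common abscissa gives disjointness. Together with the upper bound $|\OPT|\le 2k$, this completes the proof. The step you flagged as ``requiring care'' is therefore only two lines once $j'\ge j$ is established.
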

\begin{proof}
    \begin{figure}
    \centering
    \includegraphics[width=0.3\linewidth]{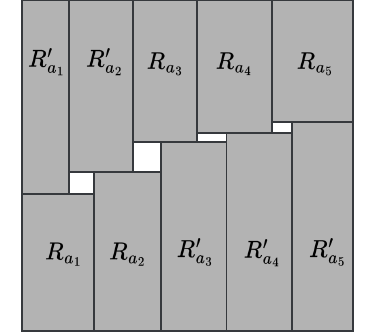}
    \caption{\dk{Optimal packing for $k=5$ and $m=2$.}}
    \label{fig:hard-instance}
\end{figure}
    We assume w.l.o.g.~that $a_1\ge \ldots \ge a_m$ and $a_{m+1} \le \ldots \le a_k$. We construct a feasible packing of the rectangles $\{R_{a_j},R'_{a_j}\}_{j\in [k]}$.
    Observe that $\sum_{j=1}^{m} w(R'_{a_j})+\sum_{j=m+1}^{k} w(R_{a_j})=N$ and similarly $\sum_{j=1}^{m} w(R_{a_j})+\sum_{j=m+1}^{k} w(R'_{a_j})=N$. Also $h(R_{a_j})+h(R'_{a_j})=N$, for all $j\in [k]$. For any rectangle $i$, we shall let $\textrm{left}(i)$ (resp. $\textrm{right}(i)$) denote the $x$-coordinate of the left (resp. right) edge of $i$ (assuming the bottom left corner of the knapsack lies at the origin). We place the rectangles $R'_{a_1},\ldots, R'_{a_m}$ followed by $R_{a_{m+1}},\ldots, R_{a_k}$ from left to right in this order, \aw{pushed to the left as much as possible,} so that the top edge of each rectangle touches the top boundary of the knapsack\aw{, see Figure \ref{fig:hard-instance}}. Note that $h(R'_{a_1})\ge h(R'_{a_2})\ge \ldots \ge h(R'_{a_m})\ge h(R_{a_{m+1}})\ge h(R_{a_{m+2}})\ge \ldots \ge h(R_{a_k})$, and therefore the bottom edges of these rectangles form a monotonically increasing curve from the left to the right knapsack boundary. Next, starting from the left boundary of the knapsack we place the rectangles $R_{a_1},\ldots,R_{a_{m-1}}$ from left to right in this order with the bottom edge of each \aw{rectangle} touching the bottom boundary of the knapsack. Observe that this is feasible since $w(R_{a_j})>w(R'_{a_j})$ holds for all $j\in [k]$, and therefore, \aw{for each $j\in [k]$ we can argue that} after the rectangles $R_{a_1},\ldots, R_{a_{j-1}}$ have been placed, we will have $\textrm{right}(R_{a_{j-1}})>\textrm{right}(R'_{a_{j-1}})$, and thus the available vertical space at the $x$-coordinate $\textrm{right}(R_{a_{j-1}})$ is at least $N-h(R'_{a_j})=h(R_{a_j})$, implying that $R_{a_j}$ can be packed.  Next, starting from the right knapsack boundary, we place the rectangles $R'_{a_k},R'_{a_{k-1}}\ldots, R'_{a_{m+1}}$ from right to left in this order, such that the bottom edge of each rectangle touches the bottom boundary of the knapsack. Again since $w(R_{a_j})>w(R'_{a_j})$ holds, the packing is feasible by a similar argument as before. 
    It remains to pack $R_{a_m}$. We make the following two observations about the packing constructed \aw{until} now. 
    \begin{itemize}
        \item $\textrm{right}(R_{a_{m-1}}) > \textrm{left}(R'_{a_m})$; \aw{this holds} because $\textrm{right}(R_{a_{m-1}})=\sum_{j=1}^{m-1} w(R_{a_j})>\sum_{j=1}^{m-1} w(R'_{a_j})=\textrm{left}(R'_{a_m}).$
        \item $\textrm{left}(R'_{a_{m+1}})>\textrm{right}(R'_{a_m})$; \aw{this holds} because $\textrm{left}(R'_{a_{m+1}})=N-\sum_{j=m+1}^{k} w(R'_{a_j}) > N- \sum_{j=m+1}^{k} w(R_{a_j}) = \textrm{right}(R'_{a_m})$, \aw{using that $\sum_{j=1}^{m} w(R'_{a_j})+\sum_{j=m+1}^{k} w(R_{a_j})=N$}.
    \end{itemize}
    The \aw{two properties} above together imply that the rectangle $R_{a_m}$ can be packed touching the bottom knapsack boundary, with the left edge of $R_{a_m}$ lying at the $x$-coordinate $\textrm{right}(R_{a_{m-1}})$. Letting $\textrm{top}(i), \textrm{bottom}(i)$ denote the $y$-coordinates of the top and bottom edges of \aw{each} rectangle $i$, observe that $\textrm{top}(R_{a_m})=h(R_{a_m})=N-h(R'_{a_m})=\textrm{bottom}(R'_{a_m})$, and therefore the top edge of $R_{a_m}$ touches the bottom edge of $R'_{a_m}$. Also for any \aw{$j\in \{m+1,\ldots, k\}$}, we have $\textrm{bottom}(R_{a_j})=N-h(R_{a_j})\ge N-h(R'_{a_m})= \textrm{bottom}(R'_{a_m})$ and thus $R_{a_m}$ does not intersect with the rectangle $R_{a_j}$.  Therefore, the rectangles $\{R_{a_j},R'_{a_j}\}_{j\in [k]}$ can be packed inside the knapsack and we have $|\OPT|=2k$.
\end{proof}

Next, our goal is to \aw{prove the following lemma}, which would complete the reduction.

\begin{lemma}
\label{lem:only-if-direction}
    If $|\OPT|=2k$, then there exist values $a_1,a_2,\ldots, a_k \in \AM$ and index $m\in [k-1]$ such that $\sum_{j=1}^{m} a_j=\sum_{j={m+1}}^{k} a_j$.
\end{lemma}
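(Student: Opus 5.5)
Assume $|\OPT|=2k$ and fix an optimal packing $P$. The plan is to show that $P$ consists of exactly $k$ ``tall'' items stacked in pairs in $k$ columns, and then to read off the partition from two width equations.

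\textbf{Step 1: area slack and orientation.} By Lemma~\ref{lem:rectangle-properties}(iii), the $2k$ packed items have total area more than $(1-2/k^3)N^2$. So the wasted area is less than $2N^2/k^3$. Each item has dimensions $\approx N/k$ by $\approx N/2$, up to additive $N/k^4$. Call an item \emph{upright} if its long side (about $N/2$) is vertical, and \emph{lying} otherwise.

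\textbf{Step 2: the tight vertical line.} Consider a vertical line at $x$. It meets at most two upright items, since three would need height above $3N/2-3N/k^4>N$. It meets at most $k$ lying items by height. The key claim is that no item is lying, or more robustly, that every vertical line (outside a set of small measure) is crossed by exactly two upright items whose heights sum to $N$.

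To prove this, integrate occupied height over $x$; this equals the total area. If lying items existed, a line crossing one lying item (height about $N/k$) plus upright items could fill at most about $N/2+N/k$, since one more upright item would overflow. Such a line wastes about $N/2-N/k$ unless several lying items are stacked. Stacking $m$ lying items plus one upright item fills about $N/2+mN/k$, and $m$ would have to reach about $k/2$. A mixed region, though, forces waste at its horizontal extremities: lying items have width about $N/2$, while the columns have width about $N/k$, so the misalignments accumulate. I expect this case analysis to be the main obstacle. My plan is a cleaner counting argument: the sum of $w+h$ over items is exactly $2k(\frac12+\frac1k)N=(k+2)N$, so consider the projections onto both axes. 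I would try to show that lying and upright items cannot coexist without waste of order $N^2/k^2$, which exceeds $2N^2/k^3$. The all-lying configuration is symmetric to the all-upright one by rotating the knapsack, so it costs nothing extra.

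\textbf{Step 3: extracting the sums.} With all items upright, almost every vertical line meets exactly two items, whose heights sum to at least $N-O(N/k^3)$ (from the area slack) and at most $N$. Heights are integers of the form $N/2\pm a$. Two items whose heights sum to less than $N$ would, given integrality, waste at least one unit along their common $x$-range of width about $N/k$. I will choose the scale so that this exceeds the slack; if the bound as stated fails at unit precision, I would use integrality together with the specific gaps. Hence stacked pairs satisfy $h_1+h_2=N$.

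Split the items into top items and bottom items (each line meets one of each). Then
\[\sum_{\text{top}} w=\sum_{\text{bottom}} w=N,\]
with $k$ items in each row, since the total width is $2N$ and each row has width at most $N$. Widths are $N/k+a$ for $R_a$ and $N/k-a$ for $R'_a$. Thus in each row the signed sum of the $a$-values, positive for $R$ and negative for $R'$, is zero.

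Take the top row. Let $T_1$ be the values $a$ with $R_a$ in the top row and $T_2$ those with $R'_a$ in the top row. Then $\sum T_1=\sum T_2$ and $|T_1|+|T_2|=k$. Both parts are nonempty, since all $a>0$. This gives $a_1,\dots,a_k$ and an index $m\in[k-1]$ as claimed.
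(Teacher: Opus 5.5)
Both substantive steps of your proposal have genuine gaps, and the intermediate claim in Step~3 is actually false.

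\textbf{Step 2 (excluding mixed orientations).} This is the heart of the lemma; the paper needs Lemma~\ref{lem:orientation} and three auxiliary lemmas for it. You only outline a hope, and the route you suggest cannot work.

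First, $w^*(i)+h^*(i)=w(i)+h(i)$ is invariant under rotation. So $\sum_i (w^*(i)+h^*(i))=(k+2)N$, and the projection identities derived from it, hold for every packing regardless of orientations.

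Second, the claim itself is false unless you use that $k$ is odd, and your sketch never does. Take $k=2p$ and let $\AM$ contain $p$ copies each of $A$ and $A/p$. Place:
\begin{itemize}
\item $p$ upright copies of $R'_A$ on $[0,\frac{N}{2}-pA]\times[0,\frac{N}{2}+A]$;
\item $p$ upright copies of $R_A$ on $[\frac{N}{2}-pA,N]\times[0,\frac{N}{2}-A]$;
\item a stack of $p$ lying copies of $R'_{A/p}$ on $[0,\frac{N}{2}+\frac{A}{p}]\times[\frac{N}{2}+A,N]$;
\item a stack of $p$ lying copies of $R_{A/p}$ on $[\frac{N}{2}+\frac{A}{p},N]\times[\frac{N}{2}-A,N]$.
\end{itemize}
This is a feasible packing of $2k$ items using both orientations, so no general waste bound of order $N^2/k^2$ exists.

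The paper's proof turns on parity. It shrinks all items to identical rectangles, leaving free area at most $2N^2/k^3$. A horizontal line meeting both orientations then meets one lying item and at most $\frac{k-1}{2}$ upright ones (Lemma~\ref{lem:bound-intersected-rectangles}). Under gravity, a line carrying an upright bottom edge and crossing no lying item supports a full layer of $k$ upright items (Lemma~\ref{lem:structural-property-of-line}). After moving that layer to the bottom, the lines $x=N/3$ and $x=2N/3$ each cross at most $\frac{k-1}{2}$ lying items. This forces an upright item above the layer, and hence a second full upright layer.

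\textbf{Step 3 (stacked pairs and row widths).} Integrality cannot give $h_1+h_2=N$ for stacked pairs. A unit gap over a width of about $N/k$ wastes only about $N/k$, far below the slack $2N^2/k^3$. You cannot rescale, because $N=2Mk^4$ is fixed by the reduction.

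Moreover, the claim fails in actual optimal packings. In the paper's packing for a yes-instance, the top item $R_{a_{m+1}}$ starts at $x=\frac{mN}{k}-\sum_{j\le m}a_j$. At that position the bottom row still consists of items $R_{a_j}$ with $j\le m$, and two $R$-items have heights summing to less than $N$.

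Consequently ``the total width is $2N$'' is unsupported. Area arguments only give each row width $N-O(N/k^3)$, which is much coarser than the exact equality needed for $\sum T_1=\sum T_2$.

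The missing idea is that an inequality under the right matching suffices:
\begin{itemize}
\item Shift every item to touch the top or the bottom boundary.
\item Pair the $j$-th tallest top item with the $(k+1-j)$-th tallest bottom item.
\item If their heights summed to more than $N$, then the $j$ tallest top items and the $k+1-j$ tallest bottom items would be pairwise $x$-disjoint. That is impossible, since any $k+1$ items are wider than $N$.
\end{itemize}
The paper obtains the same matching by sorting via Lemma~\ref{lem:sorted-packing-lemma}. Lemma~\ref{lem:rectangle-properties}(i) then gives combined width at least $2N/k$ for each matched pair. Summing, the widths total at least $2N$, so each row has width exactly $N$. From there, your extraction of $T_1,T_2$ from the top row is correct and matches the paper.
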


\aw{Assume that $|\OPT|=2k$.} As we shall see, the main difficulty in analyzing 2DKR arises from item rotations; the argumentation for 2DK is much simpler. We shall first establish that if $|\OPT|=2k$, then the optimal packing for 2DKR either rotates all items or none.
For an item $i$, let $w^*(i)$ and $h^*(i)$ denote the width and height of $i$ in the packing of $\OPT$. We say that rectangle $i$ is \textit{oriented vertically} if $h^*(i)=h(i)$ and $w^*(i)=w(i)$ holds \aw{(note that then $h^*(i)>w^*(i)$)},
and \textit{oriented horizontally} otherwise.

\begin{lemma}
\label{lem:orientation}
    $\OPT$ does not contain both a horizontally oriented and a vertically oriented rectangle.
\end{lemma}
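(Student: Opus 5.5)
Assume for contradiction that $\OPT$ (with $|\OPT|=2k$) contains a vertically oriented rectangle $v$ (width $\approx N/k$, height $\approx N/2$) and a horizontally oriented rectangle $u$ (width $\approx N/2$, height $\approx N/k$). The plan is an area/length counting argument based on Lemma~\ref{lem:rectangle-properties}. The total area of the $2k$ packed items is at least $2k(\frac{1}{2k}-\frac{1}{k^4})N^2 = N^2-\frac{2}{k^3}N^2$. Hence the knapsack has almost no free space, namely at most $\frac{2}{k^3}N^2$. The contradiction will come from showing that mixing the two orientations forces a waste of order $N^2/k^2$.

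First I classify the vertically oriented items by the horizontal line $y=N/2$. Every vertically oriented item has height strictly between $(\frac12-\frac1{k^4})N$ and $(\frac12+\frac1{k^4})N$. Therefore it either crosses the line $y=N/2$, or lies entirely in the lower or upper half, touching that line up to a slack of $N/k^4$. Symmetrically, each horizontally oriented item crosses $x=N/2$ or lies in the left or right half. Next I consider the vertical line $\ell$ through $v$ and the horizontal line $\ell'$ through $u$. Along $\ell$, a vertical segment of length $N$ must be covered almost completely by item heights; otherwise, sweeping over a thin neighbourhood would produce free area. Item heights are either $\approx N/2$ (vertically oriented) or $\approx N/k$ (horizontally oriented). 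Since $N=2Mk^4$ and the perturbations are tiny, I want to show that any column containing a horizontally oriented item and a vertically oriented item cannot be filled up to within $O(N/k^4)$: the combination $N/2 + j\cdot N/k$ with $j$ horizontals gives $N$ only if $j=k/2$, and $k$ is odd. So every column through a mixed configuration has a gap of at least roughly $N/(2k)$.

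The main step is to convert these per-column gaps into an area bound. I would integrate the free height over all $x\in[0,N]$. The free area equals $\int_0^N(N-\sum_i h^*(i)\mathbf 1[x\in \text{proj}(i)])\,dx$, and I need a set of $x$ of measure $\Omega(N/k)$ on which the gap is $\Omega(N/k)$. Using $u$ gives such a set: in every column under the projection of $u$ (length $\approx N/2$), $u$ contributes height $\approx N/k$. The remaining items in that column have heights $\approx N/2$ or $\approx N/k$. The reachable totals are $N/k\cdot(1+j)+ N/2\cdot t$, and hitting $N\pm O(N/k^3)$ requires $t=1$ with $1+j=k/2$, which is impossible because $k$ is odd, or $t=0$ with $1+j=k$. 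So every column under $u$ is either gapped by $\Omega(N/k)$, or contains only horizontally oriented items, about $k$ of them stacked. I apply the symmetric argument to rows through $v$. The obstacle is excluding the possibility that all columns under $u$ are purely horizontal while all rows through $v$ are purely vertical. For that I would count items: if some column is a stack of $\approx k$ horizontally oriented items, each of width $\approx N/2$, and some row is a stack of $\approx k$ vertically oriented items, these two stacks must cross in the knapsack. Their intersection would require an item that is both, which is a contradiction. With the gap established, the free area is at least $\Omega(N/k)\cdot\Omega(N/k)=\Omega(N^2/k^2)>\frac{2}{k^3}N^2$ for $k\ge 9$. This contradicts the near-tightness, so all items have the same orientation.
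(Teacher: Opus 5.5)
Your route differs from the paper's. The paper shrinks the items to identical rectangles, applies gravity, and shows that a horizontal line containing the bottom edge of a vertically oriented item and meeting no horizontally oriented item is the base of a band of $k$ vertically oriented items. It then swaps bands and counts via the lines $x=N/3$ and $x=2N/3$.

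Your per-line fact is correct. It is the parity statement of Lemma~\ref{lem:bound-intersected-rectangles}: a column (or row) meeting both orientations contains one item of the long orientation and at most $\frac{k-1}{2}$ of the other, so its uncovered length is at least $\frac{N}{2k}-O(\frac{N}{k^3})$. The gap is in how you finish.

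\textbf{The single-line crossing does not give a contradiction.} Take one purely horizontal column $x=x_0$ and one purely vertical row $y=y_0$. Each line may have uncovered length up to about $N/k^3$. The point $(x_0,y_0)$ can then be an uncovered point: four items can form a pinwheel around it (the item above extending right, the one to the right extending down, and so on). So no item is forced to carry both orientations.

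\textbf{Your final estimate does not follow.} The bound ``free area $\ge\Omega(N/k)\cdot\Omega(N/k)$'' needs a set of mixed columns of measure $\Omega(N/k)$, and nothing in your case distinction produces one. The negation of ``all columns under $u$ are purely horizontal'' is only that some set of mixed columns exists, possibly of tiny measure.

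\textbf{The missing step.} Apply the crossing idea to sets of positive measure, and use the area bound in the opposite direction.
\begin{itemize}
\item Integrating the uncovered length over the mixed columns under $u$ shows that their total measure is at most $\frac{2N^2/k^3}{N/(2k)-O(N/k^3)}=O(N/k^2)$.
\item Hence the set $X$ of columns under $u$ that meet only horizontally oriented items has measure at least $(\frac12-O(\frac1{k^2}))N$. Symmetrically, the set $Y$ of rows through $v$ that meet only vertically oriented items has measure at least $(\frac12-O(\frac1{k^2}))N$.
\item No point of $X\times Y$ lies in an item. Such an item would meet a column of $X$ and a row of $Y$, and so would have both orientations.
\item Therefore the free area is at least $|X|\cdot|Y|\approx N^2/4>\frac{2N^2}{k^3}$, a contradiction. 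For $k\ge 9$ one gets $|X|,|Y|\ge 0.44N$.
\end{itemize}

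With this repair your argument is a complete proof, and a considerably shorter one than the paper's. It needs no shrinking, gravity, band lemma or swapping. The oddness of $k$ enters only through the mixed-line bound, just as in the paper.
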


In order to prove Lemma \ref{lem:orientation}, we assume for the sake of contradiction that 
$\OPT$ contains both a horizontally oriented and a vertically oriented rectangle. We replace each rectangle by a smaller rectangle of dimensions $(\frac{1}{k}-\frac{1}{k^4})N \times (\frac{1}{2}-\frac{1}{k^4})N$, that completely lies inside the original rectangle (note that this is always possible due to Lemma \ref{lem:rectangle-properties}(ii)). Let $\optshr$ denote the packing of these $2k$ identical rectangles (with different orientations) inside the knapsack. We have the following guarantee on the free area, i.e., the area not occupied by any rectangle, inside the knapsack.
    \begin{lemma}
    \label{lem:free-area}
        The area not occupied by the items in $\OPT_{\mathrm{shrink}}$ inside the knapsack is at most $\frac{2N^2}{k^3}$.
    \end{lemma}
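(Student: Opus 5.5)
This is a pure area count; the orientation assumption plays no role. Each shrunk rectangle has width $(\frac{1}{k}-\frac{1}{k^4})N$ and height $(\frac{1}{2}-\frac{1}{k^4})N$. Rotation does not change area, so the total area of the $2k$ shrunk rectangles is
\[
2k\Bigl(\tfrac{1}{k}-\tfrac{1}{k^4}\Bigr)\Bigl(\tfrac{1}{2}-\tfrac{1}{k^4}\Bigr)N^2 .
\]
The free area is $N^2$ minus this quantity, and the task is to show it is at most $\frac{2N^2}{k^3}$.

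Expanding the product gives
\[
\Bigl(\tfrac{1}{k}-\tfrac{1}{k^4}\Bigr)\Bigl(\tfrac{1}{2}-\tfrac{1}{k^4}\Bigr)=\tfrac{1}{2k}-\tfrac{1}{k^5}-\tfrac{1}{2k^4}+\tfrac{1}{k^8}.
\]
Multiplying by $2k$, the total item area is $\bigl(1-\frac{2}{k^4}-\frac{1}{k^3}+\frac{2}{k^7}\bigr)N^2$. Hence the free area equals
\[
\Bigl(\tfrac{1}{k^3}+\tfrac{2}{k^4}-\tfrac{2}{k^7}\Bigr)N^2 .
\]
Since $\frac{2}{k^4}\le\frac{1}{k^3}$ for $k\ge 2$, and in particular for $k\ge 9$, this is at most $\frac{2N^2}{k^3}$.

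The shrunk rectangles are pairwise interior-disjoint and lie inside the knapsack, because each one sits inside its original rectangle in $\OPT$. So subtracting their areas from $N^2$ is legitimate. The only thing to verify is that the shrunk dimensions fit inside every original rectangle. This holds by Lemma~\ref{lem:rectangle-properties}(ii), since both the width lower bound $(\frac{1}{k}-\frac{1}{k^4})N$ and the height lower bound $(\frac{1}{2}-\frac{1}{k^4})N$ are attained or exceeded by every original item. There is no real obstacle beyond this arithmetic.
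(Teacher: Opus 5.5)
Your proof is correct and uses the same area count as the paper. The only difference is that you expand the product exactly, obtaining free area $\bigl(\frac{1}{k^3}+\frac{2}{k^4}-\frac{2}{k^7}\bigr)N^2$. The paper instead lower-bounds $(\frac{1}{k}-\frac{1}{k^4})(\frac{1}{2}-\frac{1}{k^4})$ by $\frac{1}{2k}-\frac{1}{k^4}$, so the total item area exceeds $(1-\frac{2}{k^3})N^2$.
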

    \begin{proof}
        The total area of the $2k$ (shrinked) rectangles is $2k(\frac{1}{k}-\frac{1}{k^4})(\frac{1}{2}-\frac{1}{k^4})N^2 > 2k(\frac{1}{2k}-\frac{1}{k^4})N^2=(1-\frac{2}{k^3})N^2$. Therefore, the free area is at most $\frac{2N^2}{k^3}$.
    \end{proof}

    Next, we show the following lemma, which will be useful in our argumentation later.

    \begin{lemma}
    \label{lem:bound-intersected-rectangles}
        Assume that a horizontal (resp. vertical) line $\ell$ intersects both a horizontally oriented and a vertically oriented rectangle. Then $\ell$ does not intersect any other horizontally (resp. vertically) oriented rectangle. Also, the total number of vertically (resp. horizontally) oriented rectangles intersected by $\ell$ is bounded by $\frac{k-1}{2}$.
    \end{lemma}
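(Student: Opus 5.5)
We work in $\optshr$, where every rectangle has dimensions $a\times b$ (in some orientation) with $a:=(\frac1k-\frac1{k^4})N$ and $b:=(\frac12-\frac1{k^4})N$. A horizontally oriented rectangle therefore has width $b$ and height $a$, and a vertically oriented one has width $a$ and height $b$. It suffices to treat a horizontal line $\ell$; the vertical case follows by exchanging the roles of the axes. The plan is to count lengths along $\ell$: the rectangles it meets are pairwise interior-disjoint, so their intersections with $\ell$ are disjoint intervals whose lengths sum to at most $N$.

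\emph{First claim.} Suppose $\ell$ meets a vertically oriented rectangle, which contributes length $a$, and at least two horizontally oriented rectangles, each contributing length $b$. The total length along $\ell$ is then at least
\[
2b+a=\Bigl(1-\tfrac{2}{k^4}+\tfrac1k-\tfrac1{k^4}\Bigr)N>N
\]
for $k\ge 3$, a contradiction. Hence $\ell$ meets exactly one horizontally oriented rectangle.

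\emph{Second claim.} Let $v$ be the number of vertically oriented rectangles met by $\ell$. We have $b+va\le N$, which gives
\[
v\le \frac{N-b}{a}=\frac{\frac12+\frac1{k^4}}{\frac1k-\frac1{k^4}}=\frac{k}{2}\cdot\frac{1+2/k^4}{1-1/k^3}.
\]
The right-hand side is only slightly larger than $k/2$: the excess is $\frac k2\cdot\frac{2/k^4+1/k^3}{1-1/k^3}=O(1/k^2)<\frac12$ for $k\ge 3$. Since $v$ is an integer and $k$ is odd, this yields $v\le \frac{k-1}{2}$.

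The only step requiring care is confirming that this rounding is strict, namely that the excess over $k/2$ stays below $1/2$, so that the floor drops all the way to $\frac{k-1}{2}$. For $k\ge 3$ this is a routine check. Beyond that, the two bounds are pure interval-length counting along $\ell$.
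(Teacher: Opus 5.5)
Your proof is correct and follows essentially the same approach as the paper. Both arguments count the lengths of the disjoint intersections of $\ell$ with the rectangles, showing $2b+a>N$ for the first claim and $b+\frac{k+1}{2}a>N$ for the second; the only cosmetic difference is that the paper plugs in $\frac{k+1}{2}$ vertically oriented rectangles directly (via the lower bounds of Lemma~\ref{lem:rectangle-properties}(ii)), whereas you solve for $v$ and round down using the parity of $k$.
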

    \begin{proof}
        Consider a horizontal line $\ell$ intersecting both a horizontally oriented and a vertically oriented rectangle. If $\ell$ intersects another horizontally oriented rectangle, from Lemma \ref{lem:rectangle-properties}(ii), the width of the knapsack would be at least $2(\frac{1}{2}-\frac{1}{k^4})N+(\frac{1}{k}-\frac{1}{k^4})N>N$, which is a contradiction. If $\ell$ intersects at least $\frac{k+1}{2}$ vertically oriented rectangles, the total width of the horizontally oriented rectangle and $\frac{k+1}{2}$ vertically oriented rectangles would exceed $(\frac{1}{2}-\frac{1}{k^4})N+\frac{k+1}{2}(\frac{1}{k}-\frac{1}{k^4})N>N$, where the inequality follows since $k\ge 9$. This gives a contradiction again. Therefore, $\ell$ intersects at most $\frac{k-1}{2}$ vertically oriented rectangles, and we are done.
    \end{proof}

    Consider the packing of $\optshr$. We push each rectangle down as much as possible, i.e., apply gravity to the packing of $\optshr$. We show the following lemma.
    
    \begin{lemma}
    \label{lem:structural-property-of-line}
        If a horizontal line $\ell$ contains the bottom edge of a vertically oriented rectangle and does not intersect any horizontally oriented rectangle, then the region $[0,N]\times [y_{\ell},y_{\ell}+(\frac{1}{2}-\frac{1}{k^4})N]$ contains exactly $k$ vertically oriented rectangles placed one beside the other, and does not contain any horizontally oriented rectangle, where $y_{\ell}$ denotes the $y$-coordinate of the line $\ell$.
    \end{lemma}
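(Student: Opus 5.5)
Let $a:=(\frac{1}{k}-\frac{1}{k^4})N$ and $b:=(\frac{1}{2}-\frac{1}{k^4})N$ be the short and long sides of the shrunk rectangles, so vertically oriented ones are $a\times b$ and horizontally oriented ones are $b\times a$. Let $\ell$ be as in the statement, at height $y_\ell$, and let $v_0$ be a vertically oriented rectangle whose bottom edge lies on $\ell$. The plan is to sweep the horizontal lines $y=y_\ell+t$ for $t\in(0,b)$ and show that each of them meets exactly $k$ vertically oriented rectangles and no horizontal one. This then forces these $k$ rectangles to be the same $k$ rectangles all spanning exactly $[y_\ell,y_\ell+b]$.

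\textbf{Step 1: no horizontally oriented rectangle enters the slab.} Suppose some horizontally oriented rectangle $r$ meets the open slab $[0,N]\times(y_\ell,y_\ell+b)$. Since $\ell$ meets no horizontally oriented rectangle, $r$ lies entirely above $\ell$, so its bottom edge is at height in $(y_\ell,y_\ell+b)$. The horizontal line just above that bottom edge meets $r$. It also meets $v_0$ (at least when $r$'s bottom is below $v_0$'s top, which holds here since $v_0$ spans $[y_\ell,y_\ell+b]$). By Lemma~\ref{lem:bound-intersected-rectangles}, this line then meets at most $\frac{k-1}{2}$ vertically oriented rectangles and no other horizontal one. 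So along this line the occupied width is at most $b+\frac{k-1}{2}a\approx(\frac34-\frac1{2k})N$, and an empty horizontal gap of width $\Omega(N)$ remains.

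To turn this gap into too much free area, I would use gravity. In the slab region just above $\ell$, every rectangle present has been pushed down, so it rests either on $\ell$-level supports or on other rectangles. I want the gap to persist for a vertical extent of $\Omega(N/k)$, or at least $\gg N/k^2$, which would give free area $\gg 2N^2/k^3$ and contradict Lemma~\ref{lem:free-area}. The cleanest version: consider the line at height $y_\ell+\delta$ for $\delta$ tiny. By assumption no horizontal rectangle crosses $\ell$, so this line meets only vertical rectangles together with horizontal rectangles whose bottoms are at least $y_\ell$. I expect that exhibiting a persistent gap there is the main obstacle. A cleaner route is to count columns. A vertical line meets at most one horizontally oriented rectangle together with at most... — that bound is the symmetric form of Lemma~\ref{lem:bound-intersected-rectangles}. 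Integrating the free length over $x$ and over $y$ should give the free area $\Omega(N^2/k^2)$ whenever the slab contains a horizontal rectangle, contradicting Lemma~\ref{lem:free-area}.

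\textbf{Step 2: counting the vertical rectangles.} Once the slab contains only vertically oriented rectangles, each horizontal line in the slab meets at most $\lfloor N/a\rfloor=k$ of them, since $(k+1)a>N$. If some line $y=y_\ell+t$ met only $k-1$, then the free length on that line would be at least $N-(k-1)a\approx N/k$. I then argue that this deficiency persists over a height range $\Omega(N)$. Vertical rectangles have height $b\approx N/2$, the slab has height $b$, and by gravity every rectangle starting inside the slab rests on something. The set of rectangles meeting a line changes only at top and bottom edges. A rectangle that starts at height $y_\ell+t'>y_\ell$ leaves uncovered a column of width $a$ below $y_\ell+t'$, unless something sits underneath it; but what could sit underneath must end inside the slab and hence must have started below $\ell$, and its width is exactly $a$. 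Tracking this, I would show that the free area inside the slab is at least $\min(t',b-t')\cdot a$. This is $\Omega(N^2/k)$ unless all of the $t'$ are tiny, and tiny offsets are excluded by gravity combined with integrality of the shrunk layout.

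Hence all $k$ rectangles occupy exactly $[y_\ell,y_\ell+b]$, they sit side by side, and the free area bound of Lemma~\ref{lem:free-area} is respected, which gives the statement.
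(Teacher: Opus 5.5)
Both steps have genuine gaps. Neither step is completed, and the devices you propose for closing them do not work.

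\textbf{Step 1.} The arithmetic is off. We have $b+\frac{k-1}{2}a<(1-\frac{1}{2k})N$, so the free length on such a line is only about $N/(2k)$, not $\Omega(N)$.

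More seriously, you never produce a height interval of length $\Omega(N/k)$ on which every horizontal line meets both a horizontally oriented rectangle and $v_0$. For a horizontally oriented $r$ with bottom at $y_r$, that interval is $(y_r,\min\{y_r+a,y_\ell+b\})$. It is arbitrarily short when $r$ only grazes the top of the slab. Integrating free length over it then gives only about $N(y_\ell+b-y_r)/(2k)$, which Lemma~\ref{lem:free-area} does not forbid. Your ``column counting'' alternative is left as a hope.

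The missing idea is to use gravity \emph{before} the area count, as the paper does. Take the lowest horizontally oriented rectangle meeting the slab. If its bottom were strictly above $\ell$, its support would have its top strictly inside the slab. That support would have to be one of three things:
\begin{itemize}
\item a horizontally oriented rectangle crossing $\ell$, which the hypothesis excludes;
\item a lower horizontally oriented rectangle in the slab, which minimality excludes;
\item a vertically oriented rectangle crossing $\ell$.
\end{itemize}
The third case does not occur where the lemma is applied. At $y_{\min}$ no vertically oriented rectangle starts lower, and nothing crosses the top of the completely filled region $\RR$. The paper's phrase ``otherwise it would have been pushed down so that its bottom edge touches $\ell$'' implicitly uses this.

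So that rectangle has its bottom exactly on $\ell$, which is precisely the case your Step 1 drops by writing $(y_\ell,y_\ell+b)$. Every horizontal line at height in $(y_\ell,y_\ell+a)$ then meets it and $v_0$. By Lemma~\ref{lem:bound-intersected-rectangles}, each such line has free length at least $N/(2k)$. Integrating over $y$ gives free area at least $aN/(2k)>N^2/(4k^2)\ge 2N^2/k^3$, a contradiction. The gaps do not need to stay at a fixed $x$-position for this.

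\textbf{Step 2.} ``Gravity combined with integrality'' does not exclude small offsets. Integrality only gives $t'\ge 1$. Your bound $\min(t',b-t')\,a$ is below $2N^2/k^3$ whenever $t'<2N/k^2$, so Lemma~\ref{lem:free-area} yields no contradiction.

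The offsets are excluded structurally instead. Once no horizontally oriented rectangle meets the slab, consider a vertically oriented rectangle whose bottom is strictly inside the slab. It would rest on a rectangle whose top is strictly inside the slab. That support cannot be horizontally oriented, since it would meet the slab. So it would be a vertically oriented rectangle crossing $\ell$, which again does not occur.

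Hence every vertically oriented rectangle meeting the slab has its bottom on $\ell$ and spans the whole slab. If there were at most $k-1$ of them, the slab would contain free area at least $(N-(k-1)a)\,b>N^2/(4k)$, contradicting Lemma~\ref{lem:free-area}.
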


    \begin{proof}
        Let $i^*$ be a vertically oriented rectangle whose bottom edge is contained in $\ell$. Suppose that there is a horizontally oriented rectangle $i$ whose bottom edge is also contained in $\ell$. Let $\overline{\ell}$ be the horizontal line $y=y_{\ell}+\frac{N}{k^4}$. Then $\overline{\ell}$ intersects both $i$ and the vertically oriented rectangle $i^*$, and therefore by Lemma \ref{lem:bound-intersected-rectangles}, $\overline{\ell}$ can intersect at most $\frac{k-1}{2}$ vertically oriented rectangles. Hence, the width of the portion of $\overline\ell$ intersected by rectangles is at most $(\frac{1}{2}-\frac{1}{k^4})N+(\frac{k-1}{2})(\frac{1}{k}-\frac{1}{k^4})N< (1-\frac{1}{2k})N$. Hence, the total width of the non-intersected portions on the line $\overline\ell$ is at least $\frac{N}{2k}$. \dk{Observe that above each such portion, a height of $(\frac{1}{k}-\frac{1}{k^4})N-\frac{N}{k^4} = (\frac{1}{k}-\frac{2}{k^4})N$ does not intersect any other rectangle. }
        The free area inside the knapsack is then at least $\frac{N}{2k}(\frac{1}{k}-\frac{2}{k^4})N>\frac{N^2}{4k^2}\ge \frac{2N^2}{k^3}$, where the second inequality holds since $k\ge 9$. This contradicts Lemma \ref{lem:free-area}. Hence, no horizontally oriented item has its bottom edge contained in $\ell$.

        \dk{Since the items of $\optshr$ were pushed down as much as possible, there can be no horizontally oriented rectangle intersecting the region $[0,N]\times [y_{\ell},y_{\ell}+(\frac{1}{2}-\frac{1}{k^4})N]$ (otherwise it would have been pushed down so that its bottom edge touches $\ell$).}        
        Now if the number of vertically oriented rectangles whose bottom edge is contained in $\ell$ is less than $k$, we would obtain a free width of at least $N-(k-1)(\frac{1}{k}-\frac{1}{k^4})N>\frac{N}{k}$ on $\overline\ell$ and therefore a free area of at least $\frac{N}{k}(\frac{1}{2}-\frac{2}{k^4})N > \frac{N^2}{4k}$, again contradicting Lemma \ref{lem:free-area}. Hence, we must have exactly $k$ vertically oriented rectangles lying on $\overline\ell$. In particular, since we pushed down all rectangles in $\optshr$ as much as possible, this implies that the region  $[0,N]\times [y_{\ell},y_{\ell}+(\frac{1}{2}-\frac{1}{k^4})N]$ does not contain any horizontally oriented rectangles at all. This completes the proof of the lemma.
    \end{proof}

    We are now ready to prove Lemma \ref{lem:orientation}.

    \begin{figure}
    \centering
    \includegraphics[width=0.4\linewidth]{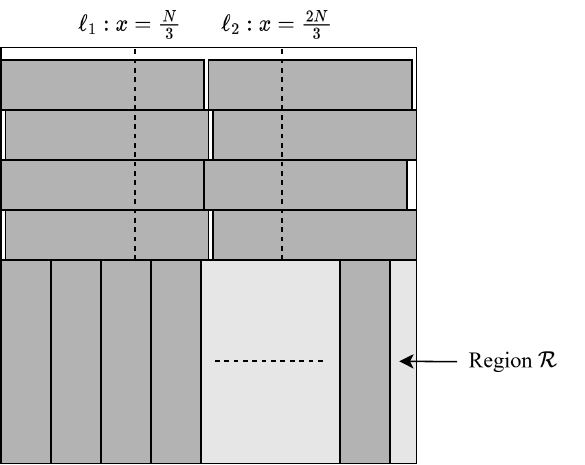}
    \caption{Figure for Lemma \ref{lem:orientation}.}
    \label{fig:single-orientation}
\end{figure}

    \begin{proof}[Proof of Lemma \ref{lem:orientation}]
        Let $i^*$ be a vertically oriented rectangle with the minimum $y$-coordinate of the bottom edge, and let $y_{\min}$ be the $y$-coordinate of the bottom edge of $i^*$. Let $\RR_1:=[0,N]\times [0,y_{\min}]$. Since all rectangles below the line $y=y_{\min}$ are horizontally aligned, each having a height of $(\frac{1}{k}-\frac{1}{k^4})N$, it follows that $y_{\min}$ is a multiple of $(\frac{1}{k}-\frac{1}{k^4})N$ and in particular, no horizontally oriented rectangle is intersected by $y=y_{\min}$. Then by Lemma \ref{lem:structural-property-of-line}, the region $\RR_2:= [0,N]\times [y_{\min},y_{\min}+(\frac{1}{2}-\frac{1}{k^4})N]$ contains exactly $k$ vertically oriented rectangles. We swap the regions $\RR_1$ and $\RR_2$ inside $K$, so that now the region $\RR:= [0,N]\times [0,(\frac{1}{2}-\frac{1}{k^4})N]$ consists of $k$ vertically oriented rectangles. We assume w.l.o.g.~that these $k$ rectangles inside $\RR$ are pushed as much to the left as possible. Consider the vertical lines $\ell_1\colon x=N/3$ and $\ell_2 \colon x=2N/3$. Let $n_1$ and $n_2$ be the number of horizontally oriented rectangles intersected by $\ell_1$ and $\ell_2$, respectively. Notice that any horizontally oriented rectangle must intersect at least one of $\ell_1$ or $\ell_2$. Since both $\ell_1$ and $\ell_2$ intersect some vertically oriented rectangle in $\RR$
        (as the total width of the rectangles inside $\RR$ is $k(\frac{1}{k}-\frac{1}{k^4})N>\frac{2N}{3}$
        and neither $N/3$ nor $2N/3$ is an integral multiple of the width of the vertical rectangles), they can intersect at most $\frac{k-1}{2}$ horizontally oriented rectangles each due to Lemma \ref{lem:bound-intersected-rectangles}, and thus both $n_1$ and $n_2$ are bounded from above by $\frac{k-1}{2}$. Hence, the number of horizontally oriented rectangles in the region $K\setminus \RR$ is at most $k-1$, and so there must be at least one vertically oriented rectangle $R^v$ in $K\setminus \RR$. Let
        \aw{$\ell$ be the horizontal line with $y$-coordinate
        $y=(\frac{1}{2}-\frac{1}{k^4})N$, i.e.,} the horizontal line passing through the top boundary of $\RR$. We push down the rectangles in the region $K\setminus \RR$ as much as possible. Observe that now if the bottom edge of $R^v$ is not contained in $\ell$, then there must exist a horizontally oriented rectangle $R^h$ in $K\setminus \RR$ such that the bottom edge of $R^v$ touches the top edge of $R^h$. Then we can find a vertical line that intersects the horizontally oriented rectangle $R^h$, the vertically oriented rectangle $R^v$, and a vertically oriented rectangle lying in $\RR$, contradicting Lemma \ref{lem:bound-intersected-rectangles}. Hence, the bottom edge of $R^v$ must be contained in $\ell$. But then Lemma \ref{lem:structural-property-of-line} implies that there must be $k$ vertically oriented rectangles whose bottom edges are contained in $\ell$. Together with the $k$ rectangles inside $\RR$, this gives $2k$ vertically oriented rectangles inside the knapsack, contradicting our assumption that there is at least one horizontally oriented rectangle.
    \end{proof}

Before proving Lemma \ref{lem:only-if-direction}, we state a rectangle packing lemma from \cite{nadiradze2016approximating}, which will be useful in the proof.

\begin{lemma}[see Lemma 4.3 in \cite{nadiradze2016approximating}]
\label{lem:sorted-packing-lemma}
    Suppose there exists a packing of a collection of rectangles $\TT\dot{\cup} \B$ inside the knapsack, where each rectangle in $\TT$ (resp. $\B$) touches the top (resp. bottom) boundary of the knapsack. Then there exists a feasible packing of $\TT\dot{\cup} \B$, where the rectangles of $\TT$ (resp. $\B$) are stacked one next to the other from left to right (resp. right to left) sorted non-increasingly by height, starting from the left (resp. right) boundary of the knapsack.
\end{lemma}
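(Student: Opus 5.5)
The natural approach is to reduce feasibility to a condition on two one-dimensional height profiles, and then show that sorting preserves the measures of their superlevel sets. Since every rectangle of $\TT$ touches the top boundary, the rectangles of $\TT$ pairwise have disjoint horizontal projections. Otherwise two of them would share an $x$-coordinate, and both contain the points just below $y=N$ there, so they would overlap. The same holds for $\B$. Hence we can define profile functions $f_\TT,g_\B\colon[0,N]\to[0,N]$. We set $f_\TT(x)$ to be the height of the rectangle of $\TT$ whose (open) horizontal projection contains $x$, and $0$ if there is none; $g_\B$ is defined analogously for $\B$. A rectangle of $\TT$ and one of $\B$ intersect if and only if their open projections share some $x$ at which the two heights sum to more than $N$. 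So a placement of $\TT\dot\cup\B$ with top/bottom-touching rectangles is feasible if and only if (a) the projections within $\TT$ are pairwise disjoint, (b) the same holds within $\B$, and (c) $f_\TT(x)+g_\B(x)\le N$ for all $x$.

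I will then compare with the sorted packing, with profiles $f^*,g^*$. The total width of $\TT$ is at most $N$ by disjointness, so stacking $\TT$ from the left boundary fits inside $[0,N]$, and likewise for $\B$ from the right; conditions (a) and (b) hold by construction. Because $\TT$ is sorted non-increasingly by height from the left, $f^*$ is non-increasing. For every threshold $s\ge 0$ the set $\{x: f^*(x)>s\}$ is therefore (up to endpoints) the interval $[0,A(s))$, where $A(s)$ is the total width of the rectangles of $\TT$ with height greater than $s$. This total width is exactly the measure of $\{x: f_\TT(x)>s\}$ in the original packing. Symmetrically, $\{x: g^*(x)>t\}=(N-B(t),N]$, where $B(t)$ is the measure of $\{g_\B>t\}$.

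The key step is the following argument. Suppose, for contradiction, that the sorted packing violates (c): there is a point $x$ in the open projections with $a:=f^*(x)$, $b:=g^*(x)$ and $a+b>N$. Pick $s\in(N-b,a)$, which is nonempty because $a>N-b$. Then $x\in[0,A(s))\cap(N-B(N-s),N]$, so $A(s)+B(N-s)>N$. However, in the original packing (c) implies that the sets $\{f_\TT>s\}$ and $\{g_\B>N-s\}$ are disjoint: a point in both would satisfy $f_\TT+g_\B>N$. Both sets lie in $[0,N]$, so their measures satisfy $A(s)+B(N-s)\le N$, which is a contradiction. Hence the sorted arrangement is feasible.

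The only delicate point is handling boundaries correctly, i.e. working with open rectangles and open projections so that touching edges are not counted as overlaps. I expect this to be routine once the superlevel-set formulation is fixed. The main conceptual step is the reformulation through the profiles and their superlevel-set measures, which avoids any exchange argument.
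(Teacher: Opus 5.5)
Your argument is correct. The paper gives no proof of this statement and simply imports it as Lemma~4.3 of Nadiradze--Wiese, so your level-set argument is a self-contained alternative rather than a reconstruction of the paper's reasoning.

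\textbf{What your argument actually proves.} It is the sup-norm rearrangement inequality: the non-increasing rearrangement $f^*$ of the top profile and the non-decreasing rearrangement $g^*$ of the bottom profile satisfy $\sup(f^*+g^*)\le\sup(f_\TT+g_\B)$. The disjointness of $\{f_\TT>s\}$ and $\{g_\B>N-s\}$ in the original packing is exactly the right certificate for this.

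\textbf{Comparison with an exchange-style proof.} Your route compares the final sorted configuration with the original packing in one global step, so no intermediate configurations need checking. A local swap of two adjacent top rectangles, moving the taller one to the left, can raise the top profile above a tall bottom rectangle. Arguments of that type therefore need extra care about how the two families are reordered; your approach sidesteps this entirely.

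\textbf{One detail worth stating explicitly.} Your threshold automatically satisfies $s\in(0,N)$ and $N-s\in(0,N)$, because $a,b\le N$. This is what guarantees that $A(s)$ and $B(N-s)$ really are the measures of the superlevel sets. For a negative threshold the identity would fail, since the superlevel set would be all of $[0,N]$.

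The citation buys brevity. Your proof buys self-containment, and it makes clear that the result does not depend on how ties among equal heights are broken.
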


We are now ready to prove Lemma \ref{lem:only-if-direction}. Note that by Lemma \ref{lem:orientation}, we can assume w.l.o.g.~that the rectangles in $\OPT$ are all vertically oriented. Hence, the following proof works both for 2DK and 2DKR.

\begin{proof}[Proof of Lemma \ref{lem:only-if-direction}]
    Since any vertical line intersects at most two rectangles, we can shift all rectangles vertically so that each of them touches either the top or bottom boundary of the knapsack. Since the width of each rectangle is at least $(\frac{1}{k}-\frac{1}{k^4})N$ (Lemma \ref{lem:rectangle-properties}(ii)), it follows that both the top and bottom knapsack boundaries are touched by exactly $k$ rectangles each. Applying Lemma \ref{lem:sorted-packing-lemma} to this packing, we obtain another feasible packing of $\OPT$ where the rectangles touching the top (respectively, bottom) boundary are sorted non-increasingly by height from left to right (respectively, right to left).

    Let $\Rs_{t_1},\Rs_{t_2},\ldots,\Rs_{t_k}$ (resp. $\Rs_{b_1},\Rs_{b_2},\ldots, \Rs_{b_k}$) be the rectangles from left to right touching the top (resp. bottom) boundary. Then notice that for any $j\in [k]$, if there exists a horizontal line $\ell$ intersecting both $\Rs_{t_j}$ and $\Rs_{b_j}$, then $\ell$ must also intersect the rectangles $\Rs_{t_{1}},\ldots,\Rs_{t_{j-1}}$ and $\Rs_{b_{j+1}},\ldots, \Rs_{b_{k}}$. Thus $\ell$ intersects $k+1$ rectangles from $X\cup X'$, a contradiction, since the total width of $k+1$ rectangles would exceed $N$. Hence it holds that $h(\Rs_{t_j})+h(\Rs_{b_j}) \le N$, for all $j\in [k]$. From Lemma \ref{lem:rectangle-properties}(i), we then have $w(\Rs_{t_j})+w(\Rs_{b_j}) = (\frac{N}{2}+\frac{N}{k}-h(\Rs_{t_j}))+(\frac{N}{2}+\frac{N}{k}-h(\Rs_{b_j}))\ge \frac{2N}{k}$. Summing over $j\in [k]$, we obtain
    \begin{equation}
        \label{eqn:sum-of-widths}
        \sum_{j\in [k]} w(\Rs_{t_j})+\sum_{j\in [k]} w(\Rs_{b_j})\ge 2N.
    \end{equation}
    But since the rectangles $\{\Rs_{t_j}\}_{j\in [k]}$ (respectively, $\{\Rs_{b_j}\}_{j\in [k]}$) all touch the top (respectively, bottom) boundary of the knapsack, we have $\sum_{j\in [k]} w(\Rs_{t_j}) \le N$ (resp. $\sum_{j\in [k]} w(\Rs_{b_j}) \le N$), and therefore from \eqref{eqn:sum-of-widths}, it must be the case that $\sum_{j\in [k]} w(\Rs_{t_j}) = \sum_{j\in [k]} w(\Rs_{b_j})=N$.

    Note that by our construction of the rectangles, we have $w(R'_a)<\frac{N}{k}<w(R_{\overline{a}})$ and $h(R'_a)>\frac{N}{2}>h(R_{\overline{a}})$, for any $a,\overline{a} \in \AM$. Since $\sum_{j\in [k]} w(\Rs_{t_j})=N$, there must be at least one rectangle each from $X$ and $X'$. As the rectangles $\{\Rs_{t_j}\}$ were sorted non-increasingly by height, there must exist an $m \in [k]$ such that $\Rs_{t_1}, \ldots, \Rs_{t_m} \in X'$ and $\Rs_{t_{m+1}},\ldots, \Rs_{t_k} \in X$. 
     Note that we can exclude $m=0$ since among $\Rs_{t_1}, \ldots, \Rs_{t_m}$ there is at least one rectangle each from $X$ and $X'$. 
    Let $a_1,\ldots,a_k \in \AM$ be such that $\Rs_{t_j}$ corresponds to the rectangle $R'_{a_j}$, for $j\in \{1,\ldots,m\}$, and to $R_{a_j}$, for $j\in \{m+1,\ldots,k\}$. Since $\sum_{j\in [k]} w(\Rs_{t_j})=N$, we obtain $\sum_{j=1}^{m} (\frac{N}{k}-a_j) + \sum_{j=m+1}^{k} (\frac{N}{k}+a_j) =N$, and therefore $\sum_{j=1}^{m} a_j = \sum_{j=m+1}^{k} a_j$, thus completing the proof.
\end{proof}

Suppose there exists a $(1+\epsilon)$-approximation for 2DK or 2DKR with a running time of $O(n^{\frac{1}{9\eps}})$ time. Observe that since for our instance $|\OPT|\le 2k$, running the $(1+\epsilon)$-approximation with $\epsilon=\frac{1}{3k}$ would return an optimal packing. This would give us an algorithm for $k$-\textsc{PartSum} with a running time of $O(n^{3k/9})=O(n^{k/3})$, violating the $k$-\textsc{Sum} conjecture. This proves Theorem \ref{theorem:hardness-of-2dkr}.

\bibliographystyle{alpha}
\bibliography{ref}

\appendix

\section{Technical Tools}

In this section, we describe two simple packing subroutines. We also describe the Generalized Assignment Problem (GAP) in \Cref{subsec:gap}.

\subsection{Next-Fit-Decreasing-Height (NFDH)}
\label{subsec:nf}
NFDH is a shelf-based packing algorithm introduced in \cite{coffman1980performance}. Given a set of items $I$ and a rectangular region $C$, the algorithm first sorts the items in non-increasing order of height and then places the items sequentially from left to right on the base of $C$ until the next item no longer fits. At this point, a new shelf is started by drawing a horizontal line at the height of the first (i.e., the tallest) item on the current shelf, and the process is repeated until no further items can be packed. See \Cref{fig:nfdh}.

\begin{figure}
    \centering
    \includegraphics[width=0.3\linewidth]{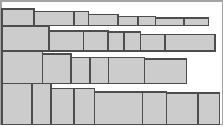}
    \caption{An example packing using NFDH with four layers.}
    \label{fig:nfdh}
\end{figure}

\begin{lemma}[\cite{galvez2021approximating}]
\label{lem:NFDH-guarantee}
    Let $C$ be a rectangular region of height $h$ and width $w$. Assume that, for some given parameter $\epsilon\in (0,1)$, for each $i\in I$ one has $w(i)\le \epsilon\cdot w$ and $h(i)\le \epsilon\cdot h$. Then NFDH is able to pack in $C$ a subset $I'\subseteq I$ of area at least $a(I')\ge \min\{a(I),(1-2\epsilon)wh\}$ in $O(n\log n)$ time. In particular, if $a(I)\le (1-2\epsilon)wh$, all items in $I$ are packed.
\end{lemma}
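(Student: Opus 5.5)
This is the NFDH area guarantee. I plan to prove it by the classical shelf argument: charge the area of each closed shelf to the height of the next shelf. I treat NFDH in its usual form, where items are sorted non-increasingly by height, shelves are opened bottom to top, and each shelf is filled left to right. The algorithm stops at the first item that fits neither on the current shelf nor on a new shelf above it. If every item is packed, the claim holds trivially, so I assume some item $i^\circ$ is rejected. Let the shelves be $S_1,\dots,S_t$ with heights $H_1\ge H_2\ge\dots\ge H_t$, where $H_j$ is the height of the first (tallest) item placed on $S_j$. The running time of $O(n\log n)$ comes from the sort followed by one linear pass.

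First I show that each closed shelf is nearly full in width. Shelf $S_j$ for $j<t$ was closed because the next item did not fit horizontally, so the items on $S_j$ have total width greater than $w-\epsilon w=(1-\epsilon)w$. Every item on $S_{j+1}$ comes later in the sorted order, so each has height at most the height of any item on $S_j$. In particular, every item on $S_j$ has height at least $H_{j+1}$. This gives $a(S_j)\ge(1-\epsilon)w\,H_{j+1}$. For the last shelf I use the rejection event: $i^\circ$ did not fit on $S_t$, so either $S_t$ already has width more than $(1-\epsilon)w$, or there was not enough vertical room. Combining these cases carefully is the delicate point. A new shelf failed to open, so $\sum_{j\le t}H_j+h(i^\circ)>h$, and hence $\sum_{j\le t}H_j>(1-\epsilon)h$. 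I also treat $h(i^\circ)$ as a virtual $H_{t+1}$ and, if needed, assume w.l.o.g. that $S_t$ was closed for width reasons. (In the standard algorithm, an item that does not fit on the current shelf always opens a new shelf, so the last shelf is also width-full whenever the vertical room runs out.) With this, every shelf $S_j$, $j\le t$, has area at least $(1-\epsilon)wH_{j+1}$.

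Summing over shelves gives
\[
a(I')\ \ge\ (1-\epsilon)w\sum_{j=2}^{t+1}H_j\ =\ (1-\epsilon)w\Big(\sum_{j=1}^{t}H_j+h(i^\circ)-H_1\Big)\ >\ (1-\epsilon)w\,(h-\epsilon h).
\]
Here I use $\sum_{j\le t}H_j+h(i^\circ)>h$ and $H_1\le\epsilon h$. The right-hand side is $(1-\epsilon)^2wh\ge(1-2\epsilon)wh$. So whenever NFDH rejects an item, the packed area is at least $(1-2\epsilon)wh$; otherwise it packs everything. This is exactly $a(I')\ge\min\{a(I),(1-2\epsilon)wh\}$, and the "in particular" clause follows immediately.

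The main obstacle I expect is making the last-shelf bookkeeping rigorous: the terminating item may be rejected for width or height reasons, and the virtual shelf indexed by $h(i^\circ)$ has to line up with the telescoping sum. I would handle this by fixing the standard convention that the algorithm stops only when $i^\circ$ cannot start a new shelf. That guarantees the current shelf was closed for width reasons, and the argument above then goes through.
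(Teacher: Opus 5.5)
Your argument is correct. It is the standard shelf-charging proof: shelf $S_j$ is charged $(1-\epsilon)wH_{j+1}$, and the rejected item serves as $H_{t+1}$. This is essentially the argument behind this lemma in \cite{galvez2021approximating}; the paper itself only cites the result and gives no proof. The ``w.l.o.g.'' for the last shelf is unnecessary: since $h(i^\circ)\le H_t$, the item $i^\circ$ always fits vertically on $S_t$, so its rejection there is necessarily a width failure.
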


\subsection{Steinberg's Algorithm}
\label{subsec:stein}
Steinberg's algorithm is another commonly used area-based packing algorithm for rectangles. Informally, it states that a given collection of rectangles can be packed into a rectangular region whose area is twice the total area of the input rectangles, under certain mild assumptions on the dimensions of the input rectangles.
\begin{lemma}[\cite{steinberg1997strip}]
\label{lem:steinberg-original}
    Let $C$ be a rectangular region of height $h$ and width $w$, and let $I$ be a collection of rectangles. Let $w_{\max} = \max_{i\in I} w(i)$, and $h_{\max} = \max_{i\in I} h(i)$. If it holds that $w_{\max}\le w$, $h_{\max} \le h$, and $2\sum_{i\in I} w(i)h(i) \le wh - (2w_{\max}-w)_+(2h_{\max}-h)_+$, then it is possible to pack all rectangles in $I$ into $C$ in $O\left(\frac{n\log^2 n}{\log\log n}\right)$ time. Here $x_+=\max(x,0)$.
\end{lemma}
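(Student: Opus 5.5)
The statement is Steinberg's lemma, which the paper only cites. I would reproduce Steinberg's recursive argument: strong induction on $|I|$ with the invariant
\[2a(I)\le wh-(2w_{\max}-w)_+(2h_{\max}-h)_+ ,\]
where $a(I)=\sum_{i\in I}w(i)h(i)$. The base cases are $|I|\le 1$, or $a(I)=0$, which are trivial. In the inductive step I want to split off a subcollection, pack it into a sub-rectangle of $C$, and recurse on the rest in the complementary sub-rectangle. This needs two things. First, a case analysis in which every branch preserves the invariant on both pieces. Second, every step must cut $C$ by a single guillotine line, so that the recursion and the running time are easy to control.

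First I would separate two regimes. In the first, some item is wide, $w(i)\ge w/2$, or symmetrically tall. Take all items with $w(i)\ge w/2$, sort them by decreasing width, and stack them at the bottom of $C$ until either the stack height exceeds about $h/2$ or the items run out. The area of these items is at least $(w/2)\cdot(\text{stack height})$. With the invariant this shows the stack fits. It also shows that the leftover strip of height $h'$ satisfies the invariant for the remaining items. The key inequality to check is
\[2a(\text{rest})\le wh'-(2w'_{\max}-w)_+(2h'_{\max}-h')_+ ,\]
which follows because the removed area is at least $w\cdot(\text{stack height})/2$ and the correction term can only decrease. If instead the wide items exceed half the height, Steinberg's trick is to place enough of them to consume height, then split the remaining region vertically and use that the leftover items are narrow. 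In that case the correction term vanishes and the plain bound $2a\le w''h''$ suffices.

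In the second regime, every item has $w(i)<w/2$ and $h(i)<h/2$, so the invariant reads $2a(I)\le wh$. Sort the items by decreasing height and take the shortest prefix $I_1$ with $a(I_1)\ge wh/4$ minus a slack; this is possible since each item is small relative to $wh$. Then pack $I_1$ into a vertical strip $[0,w_1]\times[0,h]$ with $w_1$ chosen so that $2a(I_1)\le w_1h$. Pack the rest into $[w_1,w]\times[0,h]$ by recursion. The condition $2a(I\setminus I_1)\le (w-w_1)h$ holds because the areas add up. The remaining danger is that $w_1$ or $w-w_1$ is smaller than some item's width, which would violate the hypothesis of the induction. To avoid it I would use the half-size bounds and choose the split through a halving argument: either the two halves each satisfy the invariant, or one half contains an item of width at least the remaining width. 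The latter case sends us back to the first regime applied to the subproblem.

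I expect the main obstacle to be the bookkeeping in the first regime: verifying that after removing the stack of wide items, the correction term $(2w_{\max}-w)_+(2h_{\max}-h)_+$ for the new sub-rectangle still dominates what is needed. Steinberg handles this with roughly five carefully chosen subcases (P1 through P5, together with their transposes). I would reproduce those subcases, checking each inequality by expressing everything in terms of the area removed. For the running time, each recursion level costs $O(\log n)$ amortized with balanced search trees over sorted widths and heights. The stated bound $O(n\log^2 n/\log\log n)$ comes from a refined implementation, and for our purposes any polynomial bound suffices.
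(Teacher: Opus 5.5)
The paper gives no proof of this statement. It quotes Steinberg and uses the result as a black box, so citing it is the intended route. Your reconstruction has a genuine gap, already in the steps you spell out.

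\textbf{First regime.} You never use the correction term $(2w_{\max}-w)_+(2h_{\max}-h)_+$; you only assert that it ``can only decrease''. Yet the lemma is false without it: a $w\times\frac{h}{10}$ item and a $\frac{w}{10}\times h$ item have $2a(I)=0.4\,wh$ but cannot be packed together. The check you omit is the induction hypothesis's requirement $h'_{\max}\le h'$ for the items left above the stack, and it can fail under the lemma's assumptions.

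Take the unit square and the items (width$\times$height) $0.65\times0.1$, $0.5\times0.3$, $0.1\times0.65$ and $0.3\times0.5$.
\begin{itemize}
\item The hypothesis holds, since $2a(I)=0.86\le 1-0.3^2$.
\item The wide items run out at stack height $0.4<\frac12$, and the remaining $1\times0.6$ strip cannot hold the $0.1\times0.65$ item.
\item Symmetrically, stacking the tall items leaves a $0.6\times 1$ strip that cannot hold the $0.65\times0.1$ item.
\end{itemize}
The instance is packable: put the two wide items at the bottom left, the $0.3\times0.5$ item on top of them, and the $0.1\times0.65$ item immediately to their right. So your step fails in both orientations, and your ``more than half the height'' branch does not apply. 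Wide and tall items that coexist like this are exactly where the correction term has to enter, and you give no argument for that case.

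\textbf{Second regime.} You need $w(i)\le w_1$ for $i\in I_1$ and $w(i)\le w-w_1$ for the rest. Your fallback, ``back to the first regime'', is circular: a sub-instance with an item wider than its region violates the lemma's own hypothesis $w_{\max}\le w$.

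In fact no choice of cut works in general. Three $0.4\times0.4$ squares in the unit square satisfy the hypothesis: all sides are below $\frac12$ and $2a(I)=0.96$. Yet every vertical cut with squares on both sides puts two squares into a strip of width at most $0.6$, while the hypothesis there needs width at least $2a/h=0.64$. Horizontal cuts fail symmetrically. So no guillotine cut yields two sub-instances that both satisfy the hypothesis.

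A correct proof must therefore sometimes place several items explicitly and recurse only on the remainder. In this example, two squares go side by side in a $1\times0.4$ strip. Each such step needs its own inequality. These steps, together with the treatment of the wide/tall conflict above, are the substance of Steinberg's case analysis, and you defer all of it to ``reproduce P1--P5''. As written, the proposal is an outline rather than a proof.
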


\subsection{Generalized Assignment Problem (GAP)}
\label{subsec:gap}
GAP is a generalization of the classical one-dimensional knapsack problem. Here we are given a collection of $k$ knapsacks with associated capacities  $\{c_j\}_{j\in [k]}$, and a set of $n$ items where each item $i$ has a given size $s_{ij}$ and a given profit $p_{ij}$ for each knapsack $j$. The goal is to find a maximum profitable packing of a subset of items, satisfying the capacity constraints, i.e., for each knapsack $j$, the total size of the items assigned to $j$ must not exceed $c_j$. The general case of GAP admits a tight $(\frac{e}{e-1}+\eps)$-approximation \cite{fleischer2011tight}. However, for the special case when $k=O(1)$, there exists a PTAS.
\begin{lemma}[\cite{galvez2021approximating}]
\label{lem:GAP-lemma}
    There exists an algorithm for the Generalized Assignment Problem (GAP) with $k$ knapsacks that runs in $n^{O(k/\epsilon^2)}$ time and returns a solution that has profit at least $(1-\epsilon)p(\OPT)$, for any fixed $\epsilon >0$.
\end{lemma}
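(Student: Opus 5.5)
We prove Lemma~\ref{lem:GAP-lemma} in the standard way for knapsack-type problems with a constant number of bins: guess the high-profit part of the optimum by enumeration, then handle the rest with an LP and rounding. Fix an optimal solution $\OPT$, and for each knapsack $j$ let $\OPT_j$ be the items it assigns to $j$.

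\textbf{Guessing step.} For every $j$ we guess the set $G_j\subseteq\OPT_j$ of the $\min\{|\OPT_j|,1/\epsilon\}$ items of largest profit $p_{ij}$. The total guess has at most $k/\epsilon$ items, and each is paired with a knapsack index. So there are $(nk)^{O(k/\epsilon)}$ candidate guesses, and we try all of them. Given a guess, let $t_j$ be the smallest profit in $G_j$. Every item of $\OPT_j\setminus G_j$ then has $p_{ij}\le t_j\le \epsilon\, p(G_j)$ whenever $|G_j|=1/\epsilon$; if $|G_j|<1/\epsilon$, then $\OPT_j=G_j$ is already fully guessed. We fix the guessed assignments and reduce each capacity to $c_j':=c_j-\sum_{i\in G_j}s_{ij}$. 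For the remaining items, we allow $i$ to go into knapsack $j$ only if $p_{ij}\le t_j$ and $s_{ij}\le c'_j$.

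\textbf{LP step.} We solve the assignment LP. It has variables $x_{ij}\ge 0$, constraints $\sum_j x_{ij}\le 1$ for each item $i$ and $\sum_i s_{ij}x_{ij}\le c'_j$ for each knapsack $j$, and it maximizes $\sum p_{ij}x_{ij}$. The LP optimum is at least $p(\OPT)-\sum_j p(G_j)$, because $\OPT\setminus\bigcup_j G_j$ is feasible for it. Take a basic optimal solution. It has at most $n+k$ tight linearly independent constraints, so at most $2k$ variables are fractional; more precisely, at most $k$ items are assigned fractionally. We drop all fractionally assigned items. Each dropped item is assigned fractionally to some knapsacks $j$, and its value $p_{ij}$ there is at most $t_j$. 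A counting argument over the basic solution charges each knapsack $j$ at most $O(1)$ dropped items (we can reduce to at most one per knapsack by the standard bipartite-forest structure of the support). The loss is therefore at most $\sum_j O(t_j)\le O(\epsilon)\sum_j p(G_j)$. The integral part is feasible and has profit at least
\[
(1-O(\epsilon))\,p(\OPT).
\]
We rescale $\epsilon$ to get the factor $1-\epsilon$.

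\textbf{Main obstacle.} The delicate step is bounding the rounding loss per knapsack. The relevant structure is the support graph of a basic solution of a GAP LP (items on one side, knapsacks on the other): each connected component has at most as many edges as vertices, so it is a tree or a tree plus one edge. Fractional items are the item-vertices of degree $\ge 2$. We orient each component so that each fractional item is charged to a distinct knapsack it touches. If the forest argument is awkward to carry out, we will instead enumerate, for each knapsack, which single fractional item to keep. Running time: each guess costs $n^{O(k/\epsilon)}$, and one LP solve costs polynomial time. To reach the stated $n^{O(k/\epsilon^2)}$ bound with room to spare, we may also guess the $1/\epsilon^2$ most profitable items per knapsack, which makes the charging argument easier to absorb.
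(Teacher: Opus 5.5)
The paper does not prove this lemma; it imports it from \cite{galvez2021approximating}, so your argument has to stand on its own. Your framework is the standard one: guess the high-profit part of $\OPT$, solve the assignment LP on the rest under a profit cap, and drop the fractionally assigned items of a basic optimum. Your count of at most $k$ fractional items is also correct.

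The gap is the loss bound $\sum_j O(t_j)\le O(\epsilon)\sum_j p(G_j)$. Dropping a fractional item $i$ costs $\sum_j p_{ij}x_{ij}$. That is bounded only by $\max\{t_j : x_{ij}>0\}$, not by the threshold of the knapsack your matching charges $i$ to. If you instead charge $i$ to the knapsack carrying its profit, one knapsack can collect up to $k-1$ items. A knapsack vertex of the support pseudoforest can have $k-1$ fractional neighbours, each with almost all of its LP mass on it.

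For instance, let knapsack $1$ have $t_1=T$ and residual room for $k-\frac{3}{2}$ unit-size items $a_1,\dots,a_{k-1}$ of profit $T$. Let knapsack $\ell+1$ have a tiny threshold and be admissible for $a_\ell$ but for no other $a_m$. It contains a denser filler that leaves room for exactly a $\delta=\frac{1}{2(k-1)}$-fraction of $a_\ell$, which has a tiny positive profit there. The unique LP optimum sets $x_{a_\ell,1}=1-\delta$ and $x_{a_\ell,\ell+1}=\delta$ for every $\ell$, so every $a_\ell$ is dropped and about $(k-\frac{3}{2})T$ of LP value is lost. With $|G_1|=1/\epsilon$ items of profit $T$, this guess returns about $T/\epsilon$, while $\OPT\approx T/\epsilon+(k-2)T$.

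So with your parameters the provable loss is only $k\max_j t_j\le k\epsilon\,p(\OPT)$. Your $1/\epsilon^2$-per-knapsack variant gives $k\epsilon^2 p(\OPT)$, which is $O(\epsilon)\,p(\OPT)$ only when $k=O(1/\epsilon)$. Keeping one fractional item per knapsack does not help either, because rounding it up can violate the capacity.

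The repair is easy and makes the pseudoforest unnecessary:
\begin{itemize}
\item Guess the $k/\epsilon$ items of $\OPT$ with the largest realized profit, together with their knapsacks.
\item Let $t$ be the smallest of these profits, and admit a pair $(i,j)$ to the LP only if $p_{ij}\le t$.
\item If $\OPT$ has fewer than $k/\epsilon$ items, it is fully guessed. Otherwise $t\le\frac{\epsilon}{k}\,p(G)$.
\item Since $\sum_j x_{ij}\le 1$, each dropped item loses at most $\sum_j t\,x_{ij}\le t$. The at most $k$ dropped items therefore cost at most $\epsilon\,p(\OPT)$.
\end{itemize}
This takes $(nk)^{O(k/\epsilon)}$ guesses, which is within the stated $n^{O(k/\epsilon^2)}$.

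Guessing $k/\epsilon$ items per knapsack also works, but it costs $n^{O(k^2/\epsilon)}$, which exceeds the stated bound when $k>1/\epsilon$. With per-knapsack thresholds you must also close every knapsack with $|G_j|<1/\epsilon$ to non-guessed items. There $t_j$ is not small relative to $p(G_j)$, so a fractional item placed there could cost a constant fraction of the profit.
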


\end{document}